\documentclass[10pt, twoside]{IEEEtran} \onecolumn \linespread{1.2}

\usepackage{graphicx}
\usepackage{amsmath, amssymb, amsthm, algorithm, algorithmic}
\usepackage{epsfig,color,epstopdf}
\usepackage{bm}
\usepackage{datetime}

\usepackage{caption}
\usepackage{subcaption}

\usepackage{tikz}
\usetikzlibrary{decorations.pathreplacing}

\usepackage{enumerate}
\usepackage[normalem]{ulem}

\usepackage{color}

\newtheorem{theorem}{Theorem}[section]
\newtheorem{lem}[theorem]{Lemma}
\newtheorem{sigmodel}[theorem]{Model}
\newtheorem{corollary}[theorem]{Corollary}
\newtheorem{definition}[theorem]{Definition}
\newtheorem{remark}[theorem]{Remark}

\newtheorem{fact}[theorem]{Fact}

\newcommand{\ds}{\displaystyle}

\newcommand{\R}{\mathbb{R}}

\renewcommand{\Pr}{\mathbb{P}}

\newcommand{\bi}{\begin{itemize}}
\newcommand{\ei}{\end{itemize}}
\newcommand{\ben}{\begin{enumerate}}
\newcommand{\een}{\end{enumerate}}

\newcommand{\bean}{\begin{eqnarray*} }
\newcommand{\eean}{\end{eqnarray*} }
\newcommand{\bea}{\begin{eqnarray} }
\newcommand{\eea}{\end{eqnarray} }
\newcommand{\nn}{\nonumber}
\newcommand{\ba}{\begin{array} }
\newcommand{\ea}{\end{array} }
\newcommand{\beq}{\begin{equation}}
\newcommand{\eeq}{\end{equation}}

\newcommand{\vect}[2]{\left[\begin{array}{cccccc}
     #1 \\
     #2
   \end{array}
  \right]
  }

\renewcommand\thetheorem{\arabic{section}.\arabic{theorem}}

\newcommand{\wt}{\bm{w}_t}
\newcommand{\mt}{\bm{m}_t}
\newcommand{\xt}{\bm{x}_t}
\newcommand{\xhatt}{\hat{\bm{x}}_t}
\newcommand{\lt}{\bm{\ell}_t}
\renewcommand{\l}{\bm{\ell}}

\newcommand{\e}{\bm{e}}
\newcommand{\bnu}{\bm{\nu}}
\newcommand{\lhatt}{\hat{\bm{\ell}}_t}
\newcommand{\et}{\bm{e}_t}
\newcommand{\Pt}{\bm{P}_t}
\newcommand{\Pj}{\bm{P}_{(j)}}

\newcommand{\Pjs}{\bm{P}_{(j),*}}

\newcommand{\rmnew}{\mathrm{new}}
\newcommand{\new}{\mathrm{new}}

\newcommand{\Pjnew}{\bm{P}_{(j),\mathrm{new}}}
\newcommand{\at}{\bm{a}_t}

\newcommand{\atnew}{\bm{a}_{t,\mathrm{new}}}
\newcommand{\I}{\bm{I}}
\newcommand{\Lamt}{\bm{\Lambda}_t}
\newcommand{\Lamtnew}{\bm{\Lambda}_{t,\mathrm{new}}}

\newcommand{\Lamjex}{\bm{\Lambda}_{(j)}}
\newcommand{\Lamj}{\bm{\Lambda}_{(j)}}
\newcommand{\T}{\mathcal{T}}
\newcommand{\J}{\mathcal{J}}

\newcommand{\cs}{\text{cs}}

\newcommand{\bigo}{\mathcal{O}}

\newcommand{\old}{\mathrm{old}}

\newcommand{\lhat}{\hat{\bm{\ell}}}

\renewcommand{\P}{\bm{P}}
\renewcommand{\b}{\bm{b}}
\newcommand{\Phat}{\hat{\bm{P}}}

\newcommand{\Span}{\operatorname{range}}
\newcommand{\del}{\mathrm{del}}

\newcommand{\add}{\mathrm{add}}

\newcommand{\rank}{\operatorname{rank}}
\newcommand{\E}{\mathbb{E}}

\newcommand{\calc}{\mathcal{C}}

\newcommand{\cov}{\operatorname{Cov}}

\newcommand{\train}{\mathrm{train}}
\newcommand{\thresh}{\mathrm{thresh}}  
\newcommand{\That}{\hat{\mathcal{T}}}

\newcommand{\SE}{\mathrm{SE}}

\newcommand{\that}{{\hat{t}}}
\newcommand{\rmend}{\mathrm{end}}
\newcommand{\M}{\bm{\mathcal{M}}}
\newcommand{\range}{\operatorname{range}}
\newcommand{\llceil}{\left\lceil}
\newcommand{\rrceil}{\right\rceil}

\newcommand{\uhat}{{\hat{u}}}
\newcommand{\lammin}{{\lambda}^-}
\newcommand{\lamtrain}{{\hat{\lambda}_{\train}^-}}
\newcommand{\jhat}{j}

\newcommand{\rrho}{\rho}
\newcommand{\Ijkt}{\tilde{\mathcal{I}}_{j,k}}
\newcommand{\ghatp}{\hat{g}^+}

\newcommand{\nut}{\bm{\nu}_t}
\newcommand{\bmG}{\bm{G}}
\renewcommand{\L}{\bm{L}}
\renewcommand{\S}{\bm{S}}

\newcommand{\undet}{\mathrm{undet}}




\newcommand{\ITt}{{\bm{I}_{\mathcal{T}_t}}}
\newcommand{\invterm}{[(\bm{\Phi}_{t})_{\mathcal{T}_t}{}'(\bm{\Phi}_{t})_{\mathcal{T}_t}]^{-1}}
\newcommand{\ttau}{{\tilde{\tau}}}

\newcommand{\bb}{q}
\newcommand{\bbb}{q_2}


\begin{document}
\title{Online (and Offline) Robust PCA: Novel Algorithms and Performance Guarantees}
\author{
Jinchun~Zhan, Brian~Lois, and Namrata~Vaswani
\thanks{J. Zhan and N. Vaswani are with the ECE department at Iowa State University. B. Lois was with the Mathematics and ECE departments at Iowa State when this work was done. He is currently with AT\&T Big Data in Plano, TX. Email: \{jzhan,blois,namrata\}@iastate.edu.
A shorter version of this work will appear in the proceedings of AISTATS 2016.
}
}

\maketitle


\begin{abstract}
In this work, we study the online robust principal components' analysis (RPCA) problem. In recent work, RPCA has been defined as a problem of separating a low-rank matrix (true data), $\L:=[\ell_1, \ell_2, \dots \ell_{t}, \dots , \ell_{t_{\max}}]$, and a sparse matrix (outliers), $\S:=[x_1, x_2, \dots x_{t}, \dots, x_{t_{\max}}]$, from their sum, $\bm{M}:=\L + \S$. A more general version of this problem is to recover $\L$ and $\S$ from $\bm{M}:=\L + \S + \bm{W}$ where $\bm{W}$ is the matrix of unstructured small noise/corruptions. An important application where this problem occurs is in video analytics in trying to separate sparse foregrounds (e.g., moving objects) from slowly changing backgrounds. While there has been a large amount of recent work on solutions and guarantees for the batch RPCA problem, the online problem is largely open.``Online" RPCA is the problem of doing the above on-the-fly with the extra assumptions that the initial subspace is accurately known and that the subspace from which $\lt$ is generated changes slowly over time.

We develop and study a novel ``online" RPCA algorithm based on the recently introduced Recursive Projected Compressive Sensing (ReProCS) framework. Our algorithm improves upon the original ReProCS algorithm and it also returns even more accurate offline estimates. The key contribution of this work is a correctness result (complete performance guarantee) for this algorithm under reasonably mild assumptions. By using extra assumptions -- accurate initial subspace knowledge, slow subspace change, and clustered eigenvalues -- we are able to remove one important limitation of batch RPCA results and two key limitations of a recent result for ReProCS for online RPCA. To our knowledge, this work is among the first few correctness results for online RPCA. Most earlier results were only partial results, i.e., they required an assumption on intermediate algorithm estimates.%
\end{abstract}

\section{Introduction}
Principal Components Analysis (PCA) is a tool that is frequently used for dimension reduction. Given a matrix of data, PCA computes a small number of orthogonal directions that contain most of the variability of the data. PCA for relatively noise-free data is easily accomplished via singular value decomposition (SVD). 
The robust PCA (RPCA) problem, which is the problem of PCA in the presence of outliers, is much harder. In recent work, Cand\`{e}s et al. \cite{rpca} posed it as a problem of separating a low-rank matrix, $\L$, (true data) and a sparse matrix, $\S$, (outliers\footnote{Since an outlier is something that occurs occasionally, it is well modeled using a sparse matrix of corruptions.}) from their sum, $\bm{M}:=\L+\S$. They proposed a convex program called principal components' pursuit (PCP) that provided a provably correct batch solution to this problem under mild assumptions. The same program was also analyzed in Chandrasekharan et al. \cite{rpca2} and later in Hsu et al. \cite{rpca_zhang}. Since these works, there has been a large amount of work on batch RPCA methods and performance guarantees. The more general case, $\bm{M}:=\L + \S + \bm{W}$ where $\bm{W}$ is unstructured small noise/corruptions, has also been studied in later works, e.g., \cite{stablePCP}.

When RPCA needs to be solved in a recursive fashion for sequentially arriving data vectors it is referred to as incremental or recursive or dynamic or ``online" RPCA. 
``Online" RPCA assumes that (i) a short sequence of outlier-free (sparse component free) data vectors is available or that there is another way to get an estimate of the initial subspace of the true data (without outliers); and that (ii) the subspace from which $\lt$ is generated is either fixed or changes slowly over time. We put ``online" in quotes here to stress that the ``online" problem formulation uses extra assumptions beyond what are used by batch RPCA.
An important application where the RPCA problem occurs is one of separating a video sequence into foreground and background layers \cite{rpca}. Video layering is a key first step to simplifying many video analytics and computer vision tasks, e.g., video surveillance (to track moving foreground objects), background video recovery and subspace tracking in the presence of frequent foreground occlusions or low-bandwidth mobile video chats or video conferencing (can transmit only the foreground layer).
In videos, the foreground typically consists of one or more moving persons or objects and hence is a sparse image. The background images (in a static camera video) usually change only gradually over time, e.g., moving lake waters or moving trees in a forest, and the changes are global \cite{rpca}. Hence they are well modeled as being dense and lying in a low-dimensional subspace that is fixed or slowly changing. We show an example in Fig. \ref{expt}. In many videos, it is also valid to assume that a short initial sequence is available without any foreground objects, i.e., (i) holds.
Other RPCA applications include recommendation system design, survey data analysis \cite{rpca}, anomaly detection in dynamic social (or computer) networks \cite{mateos_anomaly} or dynamic magnetic resonance imaging (MRI) based region-of-interest tracking \cite{candes_mri}. In many of these, an online solution is desirable.%


\subsection{Problem Definition} \label{probdef}
At time $t$ we observe a data vector $\mt \in \R^n$ that satisfies
\bea
\mt = \lt + \xt + \wt, \ \text{for} \  t = t_{\train}+1, t_{\train}+2, \dots, t_{\max}.
\label{orpca_eq}
\eea
For $t=1,2,\dots, t_\train$, $\xt=0$, i.e., $\mt = \lt + \wt$.
Here  $\lt$ is a vector that lies in a low-dimensional subspace that is fixed or slowly changing in such a way that the matrix $\bm{L}_t:=[\l_1, \l_2, \dots, \lt]$ is a low-rank matrix for all but very small values of $t$; $\xt$  is a sparse (outlier) vector; and $\wt$ is small modeling error or noise. We use $\T_t$ to denote the support set of $\xt$ and we use $\P_t$ to denote a basis matrix for the subspace from which $\lt$ is generated.
For $t > t_\train$, the goal of online RPCA is to recursively estimate $\lt$ and its subspace $\Span(\P_t)$,  and $\xt$ and its support, $\T_t$, as soon as a new data vector $\mt$ arrives or within a short delay\footnote{By definition, a subspace of dimension $r>1$ cannot be estimated immediately since it needs at least $r$ data points to estimate}.
Sometimes, e.g., in video analytics, it is often also desirable to get an improved offline estimate of $\xt$ and $\lt$ when possible. 
We show that this is an easy by-product of our solution approach.

The initial $t_\train$ outlier-free measurements are used to get an accurate estimate of the initial subspace via PCA. For video, this assumption corresponds to having a short initial sequence of background-only images, which can often be obtained.
%
%

In many applications, it is actually the sparse outlier $\xt$ that is the quantity of interest. The above problem can thus also be interpreted as one of {\em online sparse matrix recovery in large but structured noise $\lt$ and unstructured small noise $\wt$}.
%
The unstructured noise, $\wt$, often models the modeling error. For example, when some of the corruptions/outliers are small enough to not significantly increase the subspace recovery error, these can be included into $\wt$ rather than $\xt$.
Another example is when the $\lt$'s form an approximately low-rank matrix.

\begin{figure}[t!]
\centering
\includegraphics[width=\linewidth]{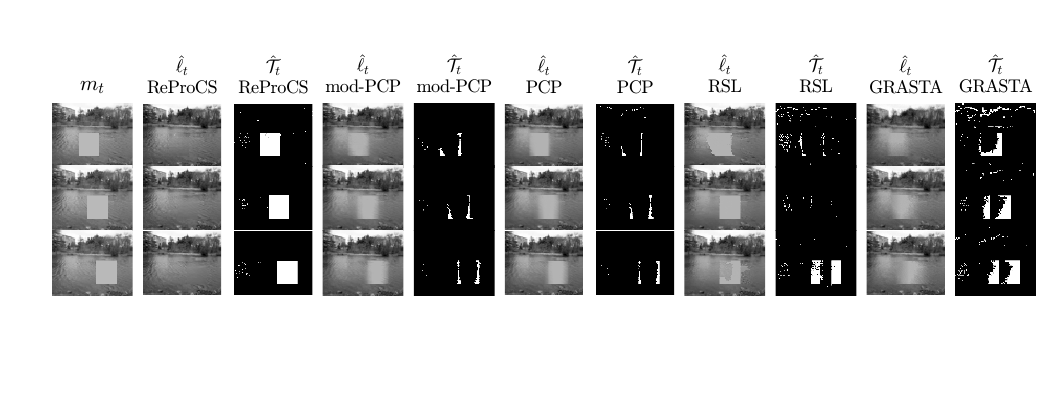} \vspace{-0.9in}
\caption{\small{The first column shows the video of a moving rectangular object against moving lake waters' background. The object and its motion are simulated while the background is real. In the next two columns, we show the recovered background ($\lhatt$) and the recovered foreground support ($\That_t$) using  Automatic ReProCS-cPCA (labeled ReProCS in the figure). The algorithm parameters are set differently for the experiments (see Sec. \ref{expts}) than in our theoretical result. Notice that the foreground support is recovered mostly correctly with only a few extra pixels and the background appears correct too (does not contain the moving block). The quantitative comparison is shown later in Fig. \ref{sims}.
The next few columns show background and foreground-support recovery using some of the existing methods discussed in Sec. \ref{relwork}.}}
\label{lake_view} \label{expt}
\end{figure}

\subsection{Related Work} \label{relwork}
Solutions for online RPCA have been analyzed in recent works  \cite{rrpcp_perf}, \cite{xu_nips2013_1}, \cite{rrpcp_icassp15,rrpcp_isit15}. The work of \cite{rrpcp_perf} introduced the Recursive Projected Compressive Sensing (ReProCS) algorithmic framework and obtained a partial result for it. Another approach for online RPCA (defined differently from above) and a partial result for it were provided in \cite{xu_nips2013_1}. We use the term {\em partial result} to refer to a performance guarantee that depends on intermediate algorithm estimates satisfying certain properties. We will see examples of this in Sec. \ref{discussion} when we discuss the above results.
In very recent work \cite{rrpcp_icassp15,rrpcp_isit15}, a {\em correctness result} for ReProCS was obtained. The term {\em correctness result} refers to a complete performance guarantee, i.e., a  guarantee that only puts assumptions on the input data (here $\mt$) and/or on the algorithm initialization, but not on intermediate algorithm estimates.%

Other somewhat related work includes \cite{xu_nips2013_2} (online PCA with contaminated data that is not modeled as being sparse) and \cite{zhan_pcp_jp} (modified-PCP, a piecewise batch method). All the above results are discussed Sec. \ref{discussion}.

Some other works, such as \cite{grass_undersampled}(GRASTA), \cite{sequentialSVD} (adaptive-iSVD), \cite{Li03anintegrated} (incremental Robust Subspace Learning) or \cite{xu2013gosus} (GOSUS), \cite{mateos_anomaly,mateos_anomaly2}, \cite{chouvardas2015robust}, \cite{songimage}, \cite{vinjamuri2015sparse}
only provide an online RPCA algorithm without guarantees. We do not discuss these here. As demonstrated by the experimental comparisons shown in \cite{rrpcp_tsp} and in
\cite[Fig 6]{zhan_pcp_jp}, when the outlier support is large and changes in a correlated fashion over time, ReProCS-based algorithms significantly outperform most of these, besides also outperforming batch methods such as PCP and robust subspace learning (RSL) \cite{rpca,Torre03aframework}. This is also evident from Fig. \ref{expt} and Fig. \ref{sims}.
%


\subsection{Contributions}
In this work we develop and study an algorithm based on the ReProCS idea introduced and studied in \cite{rrpcp_perf,rrpcp_icassp15,rrpcp_isit15}. We call it Automatic ReProCS with cluster PCA (ReProCS-cPCA). This is an improved ReProCS algorithm compared to the ones studied in previous work. (1) It is able to automatically detect subspace changes within a short delay; is able to correctly estimate the number of directions added or deleted; and is also able to correctly estimate the clusters of eigenvalues along the existing directions. This is important because it is impractical to assume that a subspace change time or the exact number of added or removed directions is known. Additionally, these estimates themselves are relevant for applications such as understanding dynamic social networks' structural changes in the presence of outliers. While many heuristics exist to detect sudden subspace changes, we provide an approach for correctly detecting slow subspace changes within a short delay. (2) Moreover it is able to accurately estimate both the newly added subspace as well as the newly deleted subspace. The latter is done by re-estimating the current subspace using an approach called cluster PCA (cPCA). The basic cPCA idea was introduced in \cite{rrpcp_perf}. The current work uses that idea to develop an automatic algorithm. The cPCA step ensures that the estimated subspace dimension does not keep increasing with time. (3) The current algorithm also returns more accurate offline estimates.
The algorithms studied in \cite{rrpcp_perf,rrpcp_icassp15} could not do (1) and (3). The algorithms studied in \cite{rrpcp_icassp15,rrpcp_isit15} did not do (2) and (3).


The {\em main} contribution of this work is a correctness result (complete performance guarantee) for the proposed algorithm under relatively mild assumptions on $\lt$, $\xt$, and $\wt$. To our knowledge, this and \cite{rrpcp_icassp15,rrpcp_isit15} are the first correctness results for online RPCA. The result obtained here removes two key limitations of \cite{rrpcp_icassp15,rrpcp_isit15}. (1) First, we obtain a result for the case where the $\lt$'s can be correlated over time (follow an autoregressive (AR) model) where as the result of \cite{rrpcp_icassp15,rrpcp_isit15} needed mutual independence of the $\lt$'s. This models mostly static backgrounds in which changes are only due to independent variations at each time, e.g., light flickers. However, a large class of background image sequences change due to factors that are correlated over time, e.g., moving waters. This can be better modeled using an AR model.
(2) Second, with one extra assumption -- that the eigenvalues of the covariance matrix of $\lt$ are clustered for a period of time after the previous subspace change has stabilized -- we are able to remove another significant limitation of \cite{rrpcp_icassp15,rrpcp_isit15}. That result needed the rank of $\L$ to grow as $\bigo(\log n)$ while our result allows it to grow as $\bigo(n)$. Batch methods such as PCP allow the rank to grow almost linearly with $n$. The clustered eigenvalues assumption is valid for data that has variability at different scales - large scale variations would result in the first (largest eigenvalues') cluster and the smaller scale variations would form the later clusters.

Because we use extra assumptions -- accurate initial subspace knowledge, slow subspace change, and clustered eigenvalues -- we are able to remove an important limitation of batch methods \cite{rpca,rpca2,rpca_zhang}. As we explain in Sec. \ref{discussion}, our result requires an order-wise looser bound on the number of time instants for which a particular index $i$ can be outlier-corrupted compared to these results. In other words, it allows significantly more correlated changes of the outlier support over time. This is important in practice, e.g., in video, foreground objects do not randomly jump around; in social networks, once an anomalous pattern starts to occur, it remains on many of the same edges for a while. The clustered eigenvalues assumption is discussed above. Accurate initial subspace knowledge and slow subspace change were discussed earlier (just above Sec. \ref{probdef}).

The novelty in the proof techniques used in this work is summarized in Sec. \ref{novelty}. The proof relies on the $\sin \theta$ theorem \cite{davis_kahan} (that bounds the effect of a perturbation on a Hermitian matrix's top eigenvectors) and the matrix Azuma inequality \cite{tail_bound}.

\subsection{Notation}


We use the interval notation $[a, b]$ to mean all of the integers between $a$ and $b$, inclusive, and similarly for $[a,b)$ etc.
 For a set $\mathcal{T}$, $|\mathcal{T}|$ denotes its cardinality and $\bar{\mathcal{T}}$ denotes its complement set. We use $\emptyset$ to denote the empty set.

We use $'$ to denote a vector or matrix transpose.
The  $l_p$-norm of a vector and the induced $l_p$-norm of a matrix are denoted by $\| \cdot \|_p$.
For a vector $\bm{x}$ and set $\T$, $\bm{x}_{\mathcal{T}}$ is a smaller vector containing the entries of $\bm{x}$ indexed by entries in $\mathcal{T}$.
We use $\I$ to denote the identity matrix.
Define $\I_{\mathcal{T}}$ to be an $n \times |\mathcal{T}|$ matrix of those columns of the identity matrix indexed by entries in $\mathcal{T}$. For a matrix $\bm{A}$, define $\bm{A}_{\mathcal{T}} := \bm{AI}_{\mathcal{T}}$.
For matrices $\bm{P}$, $\bm{Q}$ where the columns of $\bm{Q}$ are a subset of the columns of $\bm{P}$, $\bm{P} \setminus \bm{Q}$ refers to the matrix of columns in $\bm{P}$ and not in $\bm{Q}$.
For a matrix $\bm{H}$, $\bm{H}\overset{\mathrm{EVD}}= \bm{U\Lambda U}'$ denotes its reduced eigenvalue decomposition.
For Hermitian matrices $\bm{A}$ and $\bm{B}$, the notation $\bm{A}\preceq\bm{B}$ means that $\bm{B}-\bm{A}$ is positive semi-definite. 

For a matrix $\bm{A}$, the restricted isometry constant (RIC) $\delta_s(\bm{A})$ is the smallest real number $\delta_s$ such that
\[
(1-\delta_s)\|\bm{x}\|_2^2 \leq \|\bm{Ax}\|_2^2 \leq (1+\delta_s) \|\bm{x}\|_2^2
\]
for all $s$-sparse vectors $\bm{x}$ \cite{candes_rip}.  A vector $\bm{x}$ is $s$-sparse if it has $s$ or fewer non-zero entries.

We refer to a matrix with orthonormal columns as a {\em basis matrix}. Thus, for a basis matrix $\bm{P}$, $\bm{P}'\bm{P} = \bm{I}$.
For basis matrices $\Phat$ and $\P$, $\mathrm{dif}(\Phat,\P):= \|(\I - \Phat \Phat') \P \|_2$ quantifies error between their range spaces.%

\subsection{Paper organization}
This paper is organized as follows. We discuss the data models and the main results for the proposed algorithm in Sec. \ref{models}. The Automatic ReProCS-cPCA algorithm is developed in Sec. \ref{algo_sec}. The stepwise algorithm is summarized in Algorithm \ref{reprocsdet}. The proof outline of our main result is given in Sec. \ref{outline}. This section also helps understand the algorithm better and explains the novelty in the proof techniques. The lemmas for proving the main result, the proof of the main result and the proofs of the main lemmas are given in Sec. \ref{pf_thm1_cor}. The key lemmas needed to prove the main lemmas are proved in Sec. \ref{3_pfs} (lemmas for analyzing the projection-PCA based subspace addition step) and in Sec. \ref{3_pfs_del} (lemmas for analyzing the cluster PCA based subspace deletion step). These are the long sections that contain the new proofs that rely on the matrix Azuma inequality \cite{tail_bound}. This is needed because the $\lt$'s are now correlated over time. Simulation experiments comparing the proposed algorithm to some existing batch and online RPCA algorithms are described in Sec. \ref{expts}. Conclusions are given in Sec. \ref{conclude}.

\section{Data models and main results} \label{models}
In this section, we give the data models and correctness results for our proposed algorithm, Automatic ReProCS-cPCA, and for its simplification, Automatic ReProCS. The algorithm itself is developed in Sec \ref{algo_sec} and the complete stepwise algorithm is summarized in Algorithm \ref{reprocsdet}. We give below the model on the outlier support sets $\T_t$,  the model on $\lt$, and the denseness assumption. Using these, we state the result for Automatic ReProCS in Sec. \ref{noclust_result}. In Sec. \ref{clust_result}, we state the clustering assumption and give the correctness result for Automatic ReProCS-cPCA. The results are discussed in Sec. \ref{discussion}.


\subsection{Model on the outlier support set, $\T_t$} 
We give here one simple and practically relevant special case of the most general assumptions (Model \ref{general_model}) on the outlier support sets $\T_t$. It requires that the $\T_t$'s have {\em some} changes over time and have size less than $s$. An example of this is a video application consisting of a foreground with a 1D object of length $s$ or less that remains static for at most $\beta$ frames at a time. When it moves, it moves {\em downwards (or upwards, but always in one direction)} by at least $\frac{s}{\rrho}$ pixels, and at most $\frac{s}{\rho_2}$ pixels. Once it reaches the bottom of the scene, it disappears. The maximum motion is such that, if the object were to move at each frame, it still does not go from the top to the bottom of the scene in a time interval of length $\alpha$. This is ensured if $\frac{s}{\rho_2} \alpha \le n$. Anytime after it has disappeared another object could appear. A visual depiction of this model is shown in Fig. \ref{supportfig}.
We have used this ``one object moving in one direction" example  to only explain the idea in a simple fashion. Instead, one could also have multiple moving objects and arbitrary motions, as long as the union of their supports follows the assumptions of Model \ref{sbyrho} below or those given later in Model \ref{general_model}. These models were introduced in \cite{rrpcp_isit15}.

\begin{sigmodel}[model on $\T_t$]\label{sbyrho}
Let $t^k$, with $t^k < t^{k+1}$, denote the times at which $\T_t$ changes and let $ \T^{[k]}$ denote the distinct sets.
For an integer $\alpha$, 
\ben
\item assume that $\T_t = \T^{[k]}$ for all times $t \in [t^k, t^{k+1})$ with $(t^{k+1} - t^k) < \beta$ and $|\T^{[k]}| \le s$;
\item let $\rrho$ be a positive integer so that for any $k$,
$
\T^{[k]} \cap \T^{[k+\rrho ]} = \emptyset;
$
assume that
$
{\rrho}^2 \beta \leq  0.0001 \alpha;
$

\item for any $k$,
$
\sum_{i=k+1}^{k+\alpha} \left|\T^{[i]} \setminus \T^{[i+1]}\right| \le n
$ and for any $k < i \le k+\alpha$,
$
(\T^{[k]} \setminus \T^{[k+1]}) \cap (\T^{[i]} \setminus \T^{[i+1]}) = \emptyset
$
(one way to ensure the first condition is to require that for all $i$, $|\T^{[i]} \setminus \T^{[i+1]}| \le \frac{s}{\rho_2}$ with $\frac{s}{\rho_2}\alpha \le n$).
\een
In this model, $k$ takes values $1,2, \dots $; the largest value it can take is $t_{\max}$. We set $\alpha$ in the Theorem.
\end{sigmodel}

\begin{figure}[t]
\centering
\begin{subfigure}{.5\linewidth}
\centering
\includegraphics[height=5cm, width=3.5cm]{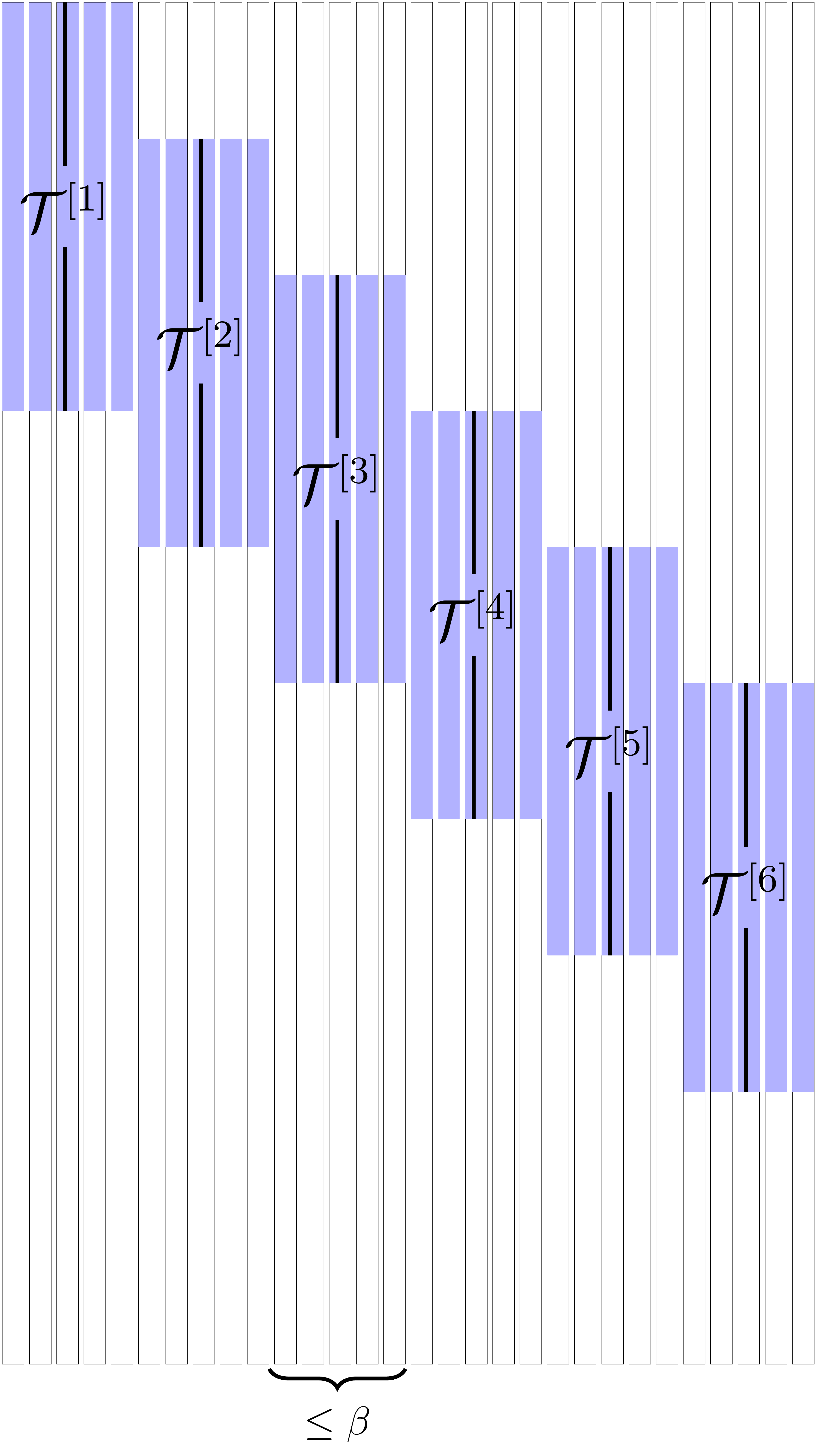}
\caption{$\rho=3$ and $\beta=5$ case}
\end{subfigure}
\begin{subfigure}{.3\linewidth}
\centering
\includegraphics[height=5cm, width=2cm]{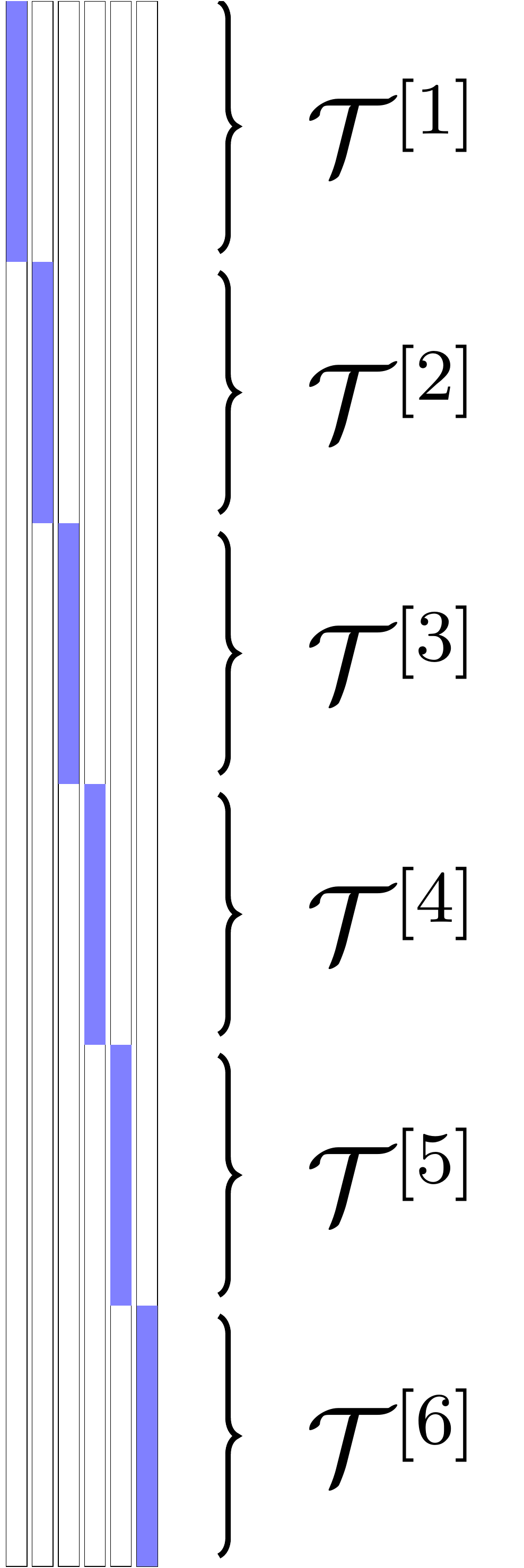}
\caption{$\rho=1$ and $\beta=1$ case}
\end{subfigure}
\caption{\small{Examples of Model \ref{sbyrho}. (a) shows a 1D object of length $s$ that moves by at least $s/3$ pixels at least once every $5$ frames (i.e., $\rho=3$ and $\beta=5$). (b) shows the object moving by $s$ pixels at every frame  (i.e., $\rho=1$ and $\beta=1$). (b) is an example of the best case for our result - the case with the smallest $\rho,\beta$ ($\T_t$'s  mutually disjoint)
}}
\vspace{-0.1in}
\label{supportfig}
\end{figure}



\subsection{Model on $\ell_t$}
A common model for data that lies in a low-dimensional subspace is to assume that, at all times, it is independent and identically distributed (iid) Gaussian with zero mean and a fixed {\em low-rank} covariance matrix $\bm\Sigma$. However this can be restrictive since, in many applications, data statistics change with time, albeit slowly. To model this perfectly, one would need to assume that $\lt$ is zero mean with covariance matrix $\bm{\Sigma}_t$ at time $t$. If $\bm{\Sigma}_t \overset{\mathrm{EVD}}{=} \P_t \Lamt \P_t'$, this means that both $\P_t$ and $\Lamt$ can change at each time $t$, though slowly. This is the most general model but it has an identifiability problem if the goal is to estimate the subspace from which $\lt$ was generated, $\Span(\P_t)$.
The subspace cannot be estimated with one data point. If it is $r$-dimensional, it needs at least $r$ data points. So, if $\P_t$ changes at each time, it is not clear how one can estimate all the subspaces. To resolve this issue, a general enough but tractable option is to assume that $\P_t$ is piecewise constant with time and $\Lamt$ can change at each time. To ensure that $\bm{\Sigma}_t$ changes ``slowly", we assume that, when $\P_t$ changes, the eigenvalues along the newly added directions are small initially for the first $d$ frames, and after that they can increase gradually or suddenly to any large value. One precise model for this is specified next.

The model below assumes boundedness of $\lt$. This is more practically valid than the usual Gaussian assumption since most sensor data or noise is bounded. We also replace independence of $\lt$'s by an AR model with independent perturbations $\bnu_t$ and we place the above assumptions on $\bnu_t$. As explained earlier, this is a more practical model and includes independence as a special case.%

\begin{sigmodel}[Model on $\lt$]\label{cor_model} Assume the following.
\ben
\item Let $\bm{\ell}_0=\bm{0}$ and for $t=1,2, \dots t_{\max}$, assume that
\[
\lt = b\bm{\ell}_{t-1} + \nut
\]
for a $b<1$. Assume that the $\nut$ are zero mean, mutually independent and bounded random vectors with covariance matrix
\[
\cov(\nut) = \bm{\Sigma}_{t} \overset{\mathrm{EVD}}{=} \bm{P}_t \bm{\Lambda}_t {\bm{P}_t}'.
\]


\item Let $t_1, t_2, \dots t_J$ denote the subspace change times. The basis matrices $\Pt$  change as
\begin{align*} \label{Pt_def}
\bm{P}_t =
%
\begin{cases}
 [(\bm{P}_{t-1} \bm{R}_t \setminus  \bm{P}_{t,\old})  \  \bm{P}_{t,\new}]  & \text{if} \ t = t_1, t_2, \dots \ t_J \\ 
\bm{P}_{t-1}  &   \text{otherwise.}
\end{cases}
\end{align*}
where $\bm{R}_t$ is a rotation matrix, $\P_{t_j,\new}$ and $\P_{t_j,\old}$ are basis matrices of size $n \times r_{j,\new}$ and $n \times r_{j,\old}$ respectively, $\bm{P}_{t_j,\old}$ contains a subset of columns of $\bm{P}_{t_j-1} \bm{R}_t$, and $\P_{t_j,\new}{}' \P_{t_j-1} = \bm{0}$ (new directions are orthogonal to previous subspace).

\item Define
\[
\lambda^-:= \lambda_{\min}\left( \frac{1}{t_{\train}}\sum_{t=1}^{t_{\train}} \Lamt \right) \ \text{and} \ \lambda^+:= \lambda_{\max}\left( \frac{1}{t_{\train}}\sum_{t=1}^{t_{\train}} \Lamt \right).
\]
The eigenvalues' matrices $\Lamt$ are such that (i) $\lambda_{\max}(\Lamt) \le \lambda^+ $ and (ii) for a $d< t_{j+1} - t_j$,
\begin{eqnarray} \label{anew_small}
&& 0< \lambda^- \leq  \lambda_{\new}^-  \leq \lambda_{\new}^+  \leq 3 \lambda^- \  \text{where} \  \nn \\
&& \lambda_{\new}^- := \min_j \min_{t\in [t_j,t_j + d]} \lambda_{\min}\left( {\bm{P}_{t_j,\new}}'\bm{\Sigma}_{t}\bm{P}_{t_j,\new} \right), \nn \\
&& \lambda_{\new}^+ := \max_j \max_{t\in [t_j,t_j + d]} \lambda_{\max}\left( {\bm{P}_{t_j,\new}}'\bm{\Sigma}_{t}\bm{P}_{t_j,\new} \right).
%
\end{eqnarray}

\item Assume that $d\geq(K+2)\alpha$. This also implies that $t_{j+1} - t_{j} > d\geq(K+2)\alpha$. We set $K$ and $\alpha$ in the Theorem. This along with \eqref{anew_small} quantifies ``slow subspace change".

\item Other assumptions: (i) define $t_0:= 1$ and assume that $t_\train \in [t_0, t_1)$;
(ii) for $j=0,1,2, \dots, J$, define
$
r_j: = \rank(\bm{P}_{t_j}) , \ r_{j,\new}:=\rank(\bm{P}_{t_j,\new}), \ r_{j,\old}:=\rank(\bm{P}_{t_j,\old})
$
Clearly, $r_j = r_{j-1}+r_{j,\new} - r_{j,\old}$. Assume that $r_{j,\new}$ is small enough compared to $r_{j,\old}$ so that $r_j \le r$  and $r_{j,\new} \le r_{\new}$ for all $j$ for constants $r$ and $r_\new$. Assume that $r+r_\new < \min(n, t_{j+1}-t_j)$ and $r_\new \leq r_0$. 

\item Since the $\bnu_t$'s are bounded random variables, there exists a $\gamma< \infty$ and a $\gamma_\new \le \gamma$ such that
\[
\max_t \| \P_t{}' \bnu_t \|_2 \le \gamma, \  \max_j  \max_{t \in[t_j,t_j+d]} \| \P_{t_j,\new}{}' \nut\|_{\infty} \le \gamma_\new.
\]
We assume an upper bound on $\gamma_\new$ in the Thoerem. 

\een
\end{sigmodel}

\begin{figure}
\includegraphics[width=\textwidth]{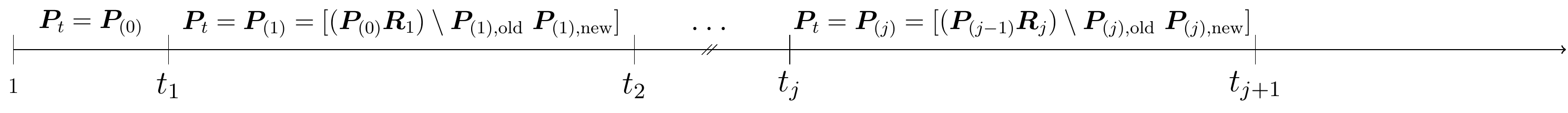}
\vspace{-0.1in}
\caption{A diagram of Model \ref{cor_model} \label{modelfig}}
\vspace{-0.1in}
\end{figure}

A visual depiction of Model \ref{cor_model} is shown in Figure \ref{modelfig}. The above model is similar to the ones introduced in \cite{rrpcp_perf,rrpcp_isit15}. Various low-rank and ``slow changing" models on $\bm{\Sigma}_t$ are special cases of the above model. One interesting special case is one that allows the variance along new directions to increase slowly as follows: for $t \in [t_j, t_{j}+d]$, let $\Lamtnew:= {\bm{P}_{t_j,\new}}'\bm{\Sigma}_{t}\bm{P}_{t_j,\new}$ and assume that $(\bm{\Lambda}_{t,\new})_{i,i} = (v_i)^{t-t_j} q_i  \lambda^-  \ \text{ for } \ i = 1,\dots,r_{j,\new}$. Here $q_i \ge 1$ and $v_i>1$. An upper bound on $v_i$ of the form $q_i (v_i)^d \leq 3$ ensures that \eqref{anew_small} holds.

\begin{remark} \label{new_eval_inc}
Model \ref{cor_model} requires the upper bound on the eigenvalues along the new directions to hold only for the first $d$ time instants after $t_j$. At any time $t > t_j+d$, the eigenvalues along $\P_{t_j,\new}$ could increase to any large value up to $\lambda^+$ either gradually or suddenly.
\end{remark}

%
%

The above model requires the directions to get deleted and added at the same set of times $t=t_j$. This is assumed for simplicity. In general, directions from $\Span(\P_{t_j-1})$ could get deleted at any other time as well. The lower bound in \eqref{anew_small}  requires the energy of $\lt$ along the new directions at {\em all} times $t \in [t_j, t_j+d]$ to be above $\lambda^-$. With very minor changes to the proof (of Lemma \ref{Ak_cor}), we can relax this to the following: we can let $\lambda_\new^-$ be the minimum eigenvalue along the new directions of any $\alpha$-frame {\em average} covariance matrix over the period $[t_j, t_j+d]$ and require this to be larger than $\lambda^-$. For video analytics, this translates to requiring that, after a subspace change, {\em enough (but not necessarily all)} background frames have ``detectable" energy along the new directions, so that the minimum eigenvalue of the average covariance along the new directions is above a threshold. For the recommendation systems' application, this means that the initial set of users may only be influenced by a few, say five, factors, but as more users come in to the system, {\em some (not necessarily all)} of them may also get influenced by a sixth factor (newly added direction).

There is a trade off between the upper bound on $\lambda_\new^+$ in \eqref{anew_small} in Model \ref{cor_model} above and the bound on ${\rrho}^2 \beta$ assumed in Model \ref{sbyrho}. Allowing a larger value of $\lambda_\new^+$ will require a tighter bound on $\rho^2\beta$. We chose one set of bounds, but many other pairs would also work. For video analytics, this means that if the background subspace changes are faster, then we also need the foreground objects to be moving more so we can `see' enough of the background behind them. 


\subsection{Denseness}
To separate sparse $\xt$'s from the $\lt$'s, the basis vectors for the subspace from which the $\lt$'s are generated cannot be sparse. We quantify this using an incoherence condition similar to \cite{rpca}.%
\begin{sigmodel}[Denseness] \label{dense_model}
Let $\mu$ be the smallest real number such that
$
\max_{i} \|{\bm{P}_{t_j}}' \I_i\|_2^2 \leq \frac{\mu r_j}{n}
\ and \
\max_{i} \|{\bm{P}_{t_j,\new}}' \I_i\|_2^2 \leq \frac{\mu r_{j,\new}}{n} \text{ for all } j
$
($\I_i$ is the $i^{\text{th}}$ column of the identity matrix; thus $\P'I_i$ is the $i$-th row of $\P$). 
Assume that
\[
2s r \mu\leq{0.09n} \text{ and  } 2s r_\new  \mu \leq  {0.0004n}.
\]
\end{sigmodel}

\begin{fact}
Model \ref{dense_model} is one way to ensure that $\|{\bm{P}_{t_j}}' \I_\T\|_2 \le 0.3$ and $\|{\bm{P}_{t_j,\new}}' \I_\T\|_2 \le 0.02$ for all sets $\T$ with $|\T|\le 2s$.
This follows using the fact that for an $r \times s$ matrix $M$, $\|M\|_2 \le \sqrt{s} \max_i \|M_i\|_2$ where $M_i$ is the $i$-th column vector of $M$.
\end{fact}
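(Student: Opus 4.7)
The Fact is an elementary consequence of the row-norm bounds in Model \ref{dense_model} together with the stated inequality $\|M\|_2 \le \sqrt{s}\max_i \|M_i\|_2$ for an $r \times s$ matrix $M$ (which in turn follows from $\|M\|_2^2 \le \|M\|_F^2 = \sum_i \|M_i\|_2^2 \le s \max_i \|M_i\|_2^2$). The plan is simply to apply this column-to-operator-norm inequality to the two matrices $\bm{P}_{t_j}{}'\I_{\T}$ and $\bm{P}_{t_j,\new}{}'\I_{\T}$ and then plug in the numerical bounds of Model \ref{dense_model}.

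For the first bound, fix $j$ and a set $\T$ with $|\T|\le 2s$, and set $M := \bm{P}_{t_j}{}'\I_\T$. This is an $r_j \times |\T|$ matrix whose $i$-th column (for $i \in \T$) is exactly $\bm{P}_{t_j}{}' \I_i$, the $i$-th row of $\bm{P}_{t_j}$ written as a column. By the first inequality of Model \ref{dense_model}, $\|\bm{P}_{t_j}{}'\I_i\|_2^2 \le \mu r_j/n$. Applying the column-to-spectral-norm bound with $|\T| \le 2s$ and using $r_j \le r$ gives
\[
\|\bm{P}_{t_j}{}'\I_\T\|_2 \;\le\; \sqrt{|\T|}\,\max_{i\in\T}\|\bm{P}_{t_j}{}'\I_i\|_2 \;\le\; \sqrt{\tfrac{2s\mu r_j}{n}} \;\le\; \sqrt{\tfrac{2s\mu r}{n}} \;\le\; \sqrt{0.09} \;=\; 0.3,
\]
where the final inequality uses the assumption $2sr\mu \le 0.09 n$.

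For the second bound, repeat the argument with $M := \bm{P}_{t_j,\new}{}'\I_\T$, which is $r_{j,\new} \times |\T|$. The second half of Model \ref{dense_model} gives $\|\bm{P}_{t_j,\new}{}' \I_i\|_2^2 \le \mu r_{j,\new}/n$, and using $r_{j,\new} \le r_\new$ together with $2sr_\new\mu \le 0.0004 n$ yields
\[
\|\bm{P}_{t_j,\new}{}'\I_\T\|_2 \;\le\; \sqrt{\tfrac{2s\mu r_{j,\new}}{n}} \;\le\; \sqrt{\tfrac{2s\mu r_\new}{n}} \;\le\; \sqrt{0.0004} \;=\; 0.02.
\]
Since $j$ and $\T$ (with $|\T|\le 2s$) were arbitrary, both bounds hold uniformly. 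There is essentially no obstacle here; the only thing to be careful about is the direction of the inequality $\|M\|_2 \le \sqrt{s}\max_i \|M_i\|_2$ (it is a column count, matching $s = |\T|$, not the row dimension $r_j$ or $r_{j,\new}$), which is precisely what lets us convert a per-row denseness bound into a joint bound on the operator norm of a block of at most $2s$ rows of $\bm{P}_{t_j}$ or $\bm{P}_{t_j,\new}$.
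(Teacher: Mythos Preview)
Your proof is correct and follows exactly the approach the paper indicates: apply the column-to-operator-norm inequality $\|M\|_2 \le \sqrt{s}\max_i\|M_i\|_2$ to $M=\bm{P}_{t_j}{}'\I_\T$ and $M=\bm{P}_{t_j,\new}{}'\I_\T$, then plug in the row-norm bounds and the numerical assumptions of Model~\ref{dense_model}. You have simply written out in full the two-line computation that the paper leaves implicit.
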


\subsection{Assumption on the unstructured noise $\wt$} 

\begin{sigmodel}\label{wt_model}
Assume that the noise $\wt$ is zero mean, mutually independent over time, and bounded with $\|\wt\|_2 \le \epsilon_w$. 
\end{sigmodel}

\subsection{Main result for Automatic ReProCS} \label{noclust_result} 
In this section, we give a correctness result for Automatic ReproCS, i.e., for Algorithm \ref{reprocsdet} with the cluster PCA (cPCA) step removed. This is exactly the algorithm studied in our earlier work \cite{rrpcp_isit15}. The result given in \cite{rrpcp_isit15} for it required mutual independence of the $\lt$'s over time. For the video application, this means that background changes at different times are due to independent causes, e.g., independent light flickers. This is often a restrictive assumption. The current result replaces this requirement with an autoregressive model which is a much better model for background changes due to correlated factors such as moving lake or sea waters.

The main idea of Automatic ReProCS is as follows.  It estimates the initial subspace as the top $r_0$ left singular vectors of $[\bm{m}_1,\bm{m}_2,\dots, \bm{m}_{t_\train}]$. 
At time $t$, if the previous subspace estimate, $\Phat_{t-1}$, is accurate enough, because of the ``slow subspace change" assumption, projecting $\mt = \xt + \lt + \wt$ onto its orthogonal complement nullifies most of $\lt$. Specifically, we compute $\bm{y}_t:= \bm{\Phi}_t \mt$ where $\bm{\Phi}_t:= \I - \Phat_{t-1}\Phat_{t-1}{}'$. Clearly, $\bm{y}_t = \bm{\Phi}_t \xt + \bm{b}_t$ with $\| \bm{b}_t \|_2$ being small. 
Thus recovering $\xt$ from $\bm{y}_t$ is a traditional sparse recovery problem in small noise \cite{candes_rip}. We recover $\xt$ by $l_1$ minimization with the constraint $\| \bm{y}_t - \bm{\Phi}_t x\|_2 \le \xi$ and estimate its support by thresholding using a threshold $\omega$. We use the estimated support, $\That_t$, to get an improved debiased estimate of $\xt$, denoted $\xhatt$, by least squares (LS) estimation on $\That_t$ \cite{dantzig}. We then estimate $\lt$ as $\lhat_t = \mt - \xhatt$. 
The estimates $\lhatt$ are used in the subspace estimation step which involves (i) detecting subspace change; and (ii) $K$ steps of projection-PCA, each done with a new set of $\alpha$ frames of $\lhatt$, to get an accurate enough estimate of the new subspace. This step is explained in detail later in Sec. \ref{algo_sec}. 
Automatic ReProCS has four algorithm parameters - $\alpha$, $K$, $\xi$, $\omega$ - whose values will be set in the result below.


%

\begin{theorem}\label{thm1_cor_nodel}
Consider Algorithm \ref{reprocsdet} without the cluster PCA step.
Assume that, for $t > t_\train$, $\mt = \lt + \wt + \xt$ and, for $t \le t_\train$,  $\mt = \lt + \wt$.
Pick a $\zeta$ that satisfies 
\[
\zeta \leq \min \left\{ \frac{10^{-4}}{(r_0+Jr_\new)^2} , \frac{0.003\lambda^-}{(r_0+Jr_\new)^2 \lambda^+} , \frac{1}{(r_0+Jr_\new)^3\gamma^2}, \frac{0.05\lambda^-}{(r_0+Jr_\new)^3\gamma^2} \right\}.
\]
Let $b_0=0.1$.
Suppose that the following hold.
\begin{enumerate}

\item enough initial training data is available: $t_\train \ge \frac{32 (2(r_0+Jr_\new) \gamma^2)^2}{ (1-b_0)^2 (0.001 r_\new\zeta \lambda^-)^2} (11 \log n + \log 8) $

\item algorithm parameters are set as:
\\ $\xi =  \xi_{\mathrm{cor}} := \epsilon_w + \frac{2\sqrt{\zeta} + \sqrt{r_{\new}}\gamma_{\new}}{1-b_0}$; $\omega = 7\xi$;  $K = \left\lceil \frac{\log(0.85r_\new\zeta)}{\log(0.2)}\right\rceil$;
\\ $\alpha = \alpha_{\add}$ where
$
\alpha_{\add}  \ge  32 \frac{1.2^2 (2\sqrt{\zeta} + \sqrt{r_\new}\gamma_\new + 2\epsilon_w)^4}{(1-b_0)^6} \frac{(1-b_0^2)^2}{(0.001 r_\new \zeta\lambda^-)^2} (11 \log n + \log ((52K+44)J) ) 
$

\item model on $\T_t$: Model \ref{sbyrho} holds; 

\item model on $\lt$:
\\ Model \ref{cor_model} holds with $\P_{t_j,\new}{}' [\P_0, \P_{t_1,\new}, \P_{t_2,\new}, \dots \P_{t_{j-1},\new}] = \bm{0}$, $b \le b_0=0.1$, and with $\sqrt{r_\new}\gamma_{\new}$ small enough so that $14 \xi \le \min_{t} \min_{i \in \T_t} |(x_t)_i|$;
\\ Model \ref{dense_model} (denseness) holds with $r$ replaced by $(r_0+Jr_\new)$.
\label{xmin_cond}

\item model on $\wt$: Model \ref{wt_model} holds with $\epsilon_w^2 \le 0.03\zeta  \lambda^-$ 

\item independence: Let $\T:=\{\T_{\tilde{t}}\}_{\tilde{t}=1,2,\dots,t_{\max}}$.
Assume that $\T,  \bm{w}_1, \bm{w}_2, \dots, \bm{w}_{t_{\max}}, \bnu_1, \bnu_2, \dots, \bnu_{t_{\max}}$ are mutually independent random variables. 
\label{indep_cond}


\end{enumerate}
Then, with probability  $\ge 1 - 2n^{-10}$, at all times $t$,
\ben
\item $\T_t$ is exactly recovered, i.e. $\hat{\T}_t = \T_t$ for all $t$;
\item  $\|\xt - \xhatt\|_2 \le  1.34 \left(2\sqrt{\zeta} +  \sqrt{r_\new}\gamma_{\rmnew}  + \epsilon_w  \right)$ and $\|\lhat_t - \lt \|_2 \le \|\xt - \xhatt\|_2 + \epsilon_w$;
\item the subspace error $\SE_t := \|( \I - \hat{\bm{P}}_t \hat{\bm{P}}_t{}' ) \Pt \|_2 \le 10^{-2} \sqrt{\zeta}$ for all $t\in[t_{j}+d,t_{j+1})$.
\item the subspace change time estimates satisfy $t_j \leq \hat{t}_j \leq t_j + 2\alpha$; and its estimates of the number of new directions are correct: $\hat{r}_{j,\new,k} = r_{j,\new}$ for $j=1,\dots,J$.

\een
\end{theorem}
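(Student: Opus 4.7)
The plan is to prove Theorem \ref{thm1_cor_nodel} by induction on the subspace change index $j$, carrying the invariant that at the end of the previous ``stable'' interval the subspace error satisfies $\SE_t \le 10^{-2}\sqrt{\zeta}$, that the supports $\T_{\tilde t}$ have been exactly recovered for all $\tilde t \le t$, and that the previous change times and ranks have been correctly estimated. The interval $[t_j,t_{j+1})$ is then split into the detection window $[t_j,\hat t_j]$, the $K$ projection-PCA windows of length $\alpha$, and the post-learning tail $[\hat t_j+K\alpha,t_{j+1})$. The base case $j=0$ uses the ordinary SVD of $[\bm m_1,\dots,\bm m_{t_\train}]$: since $\xt=0$ there, one concentrates $\tfrac{1}{t_\train}\sum_{t=1}^{t_\train}\lt\lt{}'$ around $\tfrac{1}{t_\train}\sum_t\bm\Sigma_t$ and invokes the Davis--Kahan $\sin\theta$ theorem \cite{davis_kahan}. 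Because the $\lt$'s are AR-correlated, the usual matrix Hoeffding inequality no longer applies; instead I would use the matrix Azuma inequality \cite{tail_bound} on the martingale differences $\lt\lt{}' - \E[\lt\lt{}'\mid\mathcal F_{t-1}]$, and the lower bound on $t_\train$ in the hypothesis is exactly calibrated to make this concentration event hold with probability at least $1-n^{-10}$.

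The inductive step begins with a per-frame sparse-recovery analysis given the current estimate $\Phat_{t-1}$ with $\mathrm{dif}(\Phat_{t-1},\Pt)$ small. Using denseness (Model \ref{dense_model}) and the Fact following it, together with the slow-change bound $\lambda_\new^+\le 3\lambda^-$, one first shows that the RIC $\delta_{2s}(\bm\Phi_t)$ of $\bm\Phi_t:=\I-\Phat_{t-1}\Phat_{t-1}{}'$ is at most $0.15$. Unrolling the AR recursion as $\lt=\sum_{\tau\le t}b^{t-\tau}\bnu_\tau$ and using $\|\P_{t_j,\new}{}'\bnu_\tau\|_\infty\le\gamma_\new$ and $\|\Pt{}'\bnu_\tau\|_2\le\gamma$, one then bounds $\|\bm\Phi_t\lt\|_2\le(\sqrt{r_\new}\gamma_\new+\sqrt{\zeta}\gamma)/(1-b_0)$. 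Together with $\|\wt\|_2\le\epsilon_w$ this gives $\|\bm\Phi_t(\lt+\wt)\|_2\le\xi_{\mathrm{cor}}$, so by standard compressed-sensing guarantees \cite{candes_rip} the intermediate $\ell_1$ estimate obeys $\|\tilde{\bm x}_t-\xt\|_2\le 7\xi_{\mathrm{cor}}$. The assumption $14\xi\le\min_{i\in\T_t}|(x_t)_i|$ together with the threshold $\omega=7\xi$ forces $\That_t=\T_t$, whereupon the LS-debiased $\xhatt$ satisfies the stated $1.34(2\sqrt{\zeta}+\sqrt{r_\new}\gamma_\new+\epsilon_w)$ bound and $\|\lhatt-\lt\|_2\le\|\xhatt-\xt\|_2+\epsilon_w$.

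The subspace update feeds $\lhatt=\lt+(\xt-\xhatt)$ into a projection-PCA block over windows of length $\alpha$. For the $k$-th block I would apply Davis--Kahan to $\tfrac{1}{\alpha}\sum_t \bm D_t \lhatt\lhatt{}'\bm D_t$, where $\bm D_t$ projects orthogonally to the current estimate of the old plus previously-learned-new directions, comparing it to its conditional expectation. The AR correlations again force the use of matrix Azuma rather than matrix Hoeffding; the almost-sure norm bound on the martingale increments follows from the $\lhatt$ error bound just proved, from $\|\lt\|_2\le\gamma/(1-b_0)$, and from the denseness bounds. A geometric contraction argument (decomposing the subspace error into its component along $\P_{t_j,\new}$ versus the stale directions, and using $\lambda_\new^+\le 3\lambda^-$ together with the eigen-gap) reduces the new-direction error by a factor at most $0.2$ per step, so the choice $K=\lceil\log(0.85r_\new\zeta)/\log 0.2\rceil$ restores the invariant at the $10^{-2}\sqrt\zeta$ level. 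For change detection and rank estimation a parallel Davis--Kahan/Azuma computation shows that before $t_j$ the projected eigenvalues stay below the detection threshold (no false alarm), while within $2\alpha$ frames after $t_j$ the signal $\ge\lambda_\new^-$ forces exactly $r_{j,\new}$ eigenvalues above it. A union bound over the $(52K+44)J$ bad events yields the $1-2n^{-10}$ failure probability.

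The main technical obstacle will be this projection-PCA analysis under the coupled AR model. The earlier ReProCS proofs \cite{rrpcp_perf,rrpcp_icassp15,rrpcp_isit15} exploited mutual independence of the $\lt$'s to invoke matrix Hoeffding directly; here the AR dynamics force a filtration-based decomposition and introduce a non-trivial drift of order $b\approx b_0$ from the $b\bm\ell_{t-1}$ term. Combined with the ``chicken-and-egg'' coupling (good subspace $\Rightarrow$ small $\|\xhatt-\xt\|$ $\Rightarrow$ small $\lhatt$ perturbation $\Rightarrow$ improved subspace), keeping this drift safely below the signal $\lambda_\new^-$ and propagating the error bounds through all $K\alpha$ frames without loss is what drives the choice of $b_0=0.1$, the sample size $\alpha\ge\alpha_{\add}$, and the bulk of the book-keeping in the subsequent sections.
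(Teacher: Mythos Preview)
Your high-level architecture (induction on $j$, base case via SVD $+$ Davis--Kahan, per-frame sparse recovery, then projection-PCA with matrix Azuma, detection/rank-estimation, union bound) matches the paper's. The paper in fact obtains Theorem \ref{thm1_cor_nodel} as a corollary of the full Theorem \ref{thm1_cor} by redefining $\bm P_{(j),*}:=[\bm P_0,\bm P_{t_1,\new},\dots,\bm P_{t_{j-1},\new}]$ and $\zeta_{j,*}^+:=(r_0+(j-1)r_\new)\zeta$ (Appendix \ref{proof_thm1_cor_nodel}), but the underlying analysis is exactly the one you describe.

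There is, however, a genuine gap in your contraction argument for projection-PCA. You attribute the $0.2$-per-step decay of $\zeta_{\new,k}$ solely to ``$\lambda_\new^+\le 3\lambda^-$ together with the eigen-gap''. That is not enough. After Davis--Kahan you must bound the perturbation $\|\tfrac{1}{\alpha}\sum_t \bm\Phi_0\lt\et{}'\|_2$, and since $\et = \bm I_{\T_t}[(\bm\Phi_t)_{\T_t}'(\bm\Phi_t)_{\T_t}]^{-1}\bm I_{\T_t}'\bm\Phi_{k-1}(\lt+\wt)-\wt$, the dominant piece is of the form $\tfrac{1}{\alpha}\sum_t \bm X_t\,\bm I_{\T_t}[(\bm\Phi_t)_{\T_t}'(\bm\Phi_t)_{\T_t}]^{-1}\bm I_{\T_t}'$ with $\|\bm X_t\|_2\lesssim \zeta_{\new,k-1}^+\lambda_\new^+$. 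For $k\ge 2$ you \emph{cannot} use denseness here, because $(\I-\Phat_*\Phat_*{}'-\Phat_{\new,k-1}\Phat_{\new,k-1}{}')\bm P_\new$ need not be dense. The paper's way out (see the outline in Section \ref{outline} and Lemma \ref{blockdiag1}) is Cauchy--Schwarz for sums of matrices plus the support-change model: Model \ref{sbyrho} forces $\tfrac{1}{\alpha}\sum_t \bm I_{\T_t}\bm A_t\bm I_{\T_t}'$ to be block-banded with norm $\le \rho^2 h^+(\phi^+)^2\le 10^{-4}\cdot 1.44$, and it is this extra $\sqrt{\rho^2 h^+}\approx 0.01$ factor that turns $\zeta_{\new,k-1}^+\lambda_\new^+$ into $\approx 0.036\,\zeta_{\new,k-1}^+\lambda^-$ and makes the recursion contract. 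Your proposal never invokes Model \ref{sbyrho} in the projection-PCA step, so as written the perturbation would sit at the same scale as the signal and the induction would stall.

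A second, smaller point: applying Azuma directly to $\lt\lt{}'-\E[\lt\lt{}'\mid\mathcal F_{t-1}]$ does not quite work, because $\E[\lt\lt{}'\mid\mathcal F_{t-1}]=b^2\bm\ell_{t-1}\bm\ell_{t-1}{}'+\bm\Sigma_t$ is itself random and its time-average is the very quantity you want to control. The paper instead expands $\lt=b^{t-t_0+1}\bm\ell_{t_0-1}+\sum_{\tau=t_0}^t b^{t-\tau}\bnu_\tau$ once over the whole window (Section \ref{general_decomp}), splitting $\tfrac{1}{\alpha}\sum_t\bN_t\lt\lt{}'\bM_t$ into a deterministic $\bm\ell_{t_0-1}$ piece (bounded via $1/\alpha\le (r_\new\zeta)^2$), cross-terms with zero conditional mean (Azuma applies cleanly), and a diagonal-in-$\tau$ piece whose conditional mean is the desired $\tfrac{1}{\alpha}\sum_t\sum_\tau b^{2(t-\tau)}\bN_t\bm\Sigma_\tau\bM_t$. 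You flag the AR drift as the main obstacle, which is right, but the fix is this $\bnu_\tau$-expansion rather than a naive martingale on $\lt\lt{}'$.
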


{\em Proof: } The above result follows as a corollary of the more general result, Theorem \ref{thm1_cor}, that is given below. For its proof, please see Appendix \ref{proof_thm1_cor_nodel}.

\begin{remark} \label{wt_model_weaker}
Consider condition \ref{indep_cond}). If it is not practical to assume that $\wt$'s are independent of $\T$ (e.g., if $\wt$ contains the smaller magnitude outlier entries and $\xt$ the larger ones and so $\T_t = \text{support}(\xt)$ cannot be independent of $\wt$), the following weaker assumption can be used with small changes to the proof (see Fact \ref{wt_model_weaker_proof} in Sec. \ref{general_decomp_w}). Let $Q:=\{\T,\{\wt\}_{t=1,2,\dots,t_{\max}}\}$. Assume that $Q, \bnu_1, \bnu_2, \dots, \bnu_{t_{\max}}$ are mutually independent.
\end{remark}

Theorem \ref{thm1_cor_nodel} says the following. If an accurate estimate of the initial subspace is available ($t_\train$ is large enough); the algorithm parameters are set appropriately;  the outlier support at time $t$, $\T_t$, has enough changes over time; $\lt$ follows an AR model with parameter $b \le b_0=0.1$ (i.e., the $\lt$'s are not too correlated over time); the low-dimensional subspace from which $\bnu_t$ is generated (this is also approximately the subspace from which $\lt$ is generated) is fixed or changes ``slowly" enough, i.e. (i) the delay between change times is large enough ($t_{j+1}-t_j > d \ge (K+2)\alpha$) and (ii) the eigenvalues along the newly added directions are small enough for $d$ frames after a subspace change; the basis vectors whose span defines the low-dimensional subspaces are dense enough; the noise $\wt$ is small enough; then, with high probability (whp), the error in estimating $\lt$ or $\xt$ will be bounded by a small value at all times $t$. Also, whp, the outlier support will be exactly recovered at all times; and the error in estimating the low-dimensional subspace will decay to a small constant times $\sqrt{\zeta}$ within a finite delay of a subspace change.  Moreover, subspace changes will get detected within a short delay, and the dimension of the newly added subspaces will get correctly estimated.

The condition ``$14 \xi \le \min_{t} \min_{i \in \T_t} |(x_t)_i|$" in condition \ref{xmin_cond}) can be interpreted either as another slow subspace change condition or as a requirement that the minimum magnitude nonzero entry of $\xt$ (the smallest magnitude outlier) be large enough compared to $\epsilon_w + \sqrt{r_\new} \gamma_\new$. Interpreted this way, it says the following. If $\lt$ is the true data, $\mt-\lt = \wt+\xt$ is the vector of corruptions with $\wt$ being the small corruptions and the nonzero entries of $\xt$ being the large ones (outliers). We need $\wt$ to be small enough to not affect subspace recovery error too much ($\|\wt\|_2 \le \epsilon_w \le \sqrt{0.03 \zeta \lambda^-}$) and we need the nonzero entries of $\xt$ to be large enough  to be detectable ($\min_{t} \min_{i \in \T_t} |(x_t)_i| \ge 14 \xi \approx 14(\epsilon_w + \sqrt{r_\new} \gamma_\new)$). 


\subsection{Eigenvalues' clustering assumption and main result for Automatic ReProCS-cPCA}  \label{clust_result}

The ReProCS algorithm studied above (which is the same as the one introduced in \cite{rrpcp_isit15}) does not include a step to delete old directions from the subspace estimate. As a result, its estimated subspace dimension can only increase over time. This necessitates a bound on the number of subspace changes, $J$. The bound is imposed by the denseness assumption - notice that Theorem \ref{thm1_cor_nodel} requires the bound in Model \ref{dense_model} to hold with $r$ replaced by $r_0+Jr_\new$. In this section, we relax this requirement by analyzing automatic ReProCS-cPCA (Algorithm \ref{reprocsdet}) which includes cluster PCA to delete the old directions from the subspace estimate. This is done by re-estimating the current subspace.

In order to be able to design an accurate algorithm to delete the old directions by re-estimating the current subspace, we need one of the following  for a period of $d_2$ frames within the interval $[t_j,t_{j+1})$. We either need the condition number of $\Lamt$ (or equivalently of $\bm\Sigma_{t}$) to be small, or we need a generalization of it: we need its eigenvalues to be ``clustered" into a few (at most $\vartheta$) clusters in such a way that the condition number within each cluster is small and the distance between consecutive clusters is large (clusters are well separated). 
The problem with requiring a small upper bound on the condition number of $\bm\Sigma_t$ is that it disallows situations where the $\lt$'s constitute large but structured noise. This is why the ``clustered" generalization is needed. This would be valid for data that has variations at different scales. For example, for data that has variations at two scales, there would be two clusters, the large scale variations would form the first cluster and the small scale ones the second cluster. These clusters would naturally be well separated.

Let $\vartheta$ denote the maximum number of clusters. As we will explain in Sec. \ref{algo_sec}, the subspace deletion via re-estimation step is done after the new directions are accurately estimated. As explained later, with high probability (whp), this will not happen until $t_j+K \alpha$. Thus, we assume that the clustering assumption holds for the period $[t_j+K \alpha+1, t_j+K \alpha+d_2]$ with $d_2 > (\vartheta+3)\alpha$ and $t_{j+1}-t_j > K \alpha + d_2$. In the algorithm, cluster PCA is done starting at $\that_j + K \alpha$.

\begin{sigmodel} \label{clust_model}
Assume the following.
\ben
\item Assume that $t_{j+1} - t_j > K \alpha + d_2$ for an integer $d_2 \ge (\vartheta+3)\alpha$ (where $\vartheta$ is defined below). Assume that for all $t \in [t_j+K \alpha+1, t_j+K \alpha+d_2]$, $\Lamt$ is constant; let $\Lamj$ be this constant matrix and assume that $\lambda_{\min}(\Lamj) \ge \lambda^-$.

\item Define a partition of the index set $\{1,2, \dots r_{j} \}$ into sets $\mathcal{G}_{j,1}, \mathcal{G}_{j,2}, \dots , \mathcal{G}_{j,\vartheta_j}$ as follows. Sort the eigenvalues of $\bm{\Lambda}_{(j)}$ in decreasing order of magnitude. To define $\mathcal{G}_{j,1}$, start with the first (largest) eigenvalue and keep adding smaller eigenvalues to the set. Stop when the ratio of the maximum to the minimum eigenvalue first exceeds $g^+ = 3$ or when there are no more nonzero eigenvalues.
Suppose this happens for the $i$-th eigenvalue. Then, define $\mathcal{G}_{j,1} = \{1,2, \dots i-1\}$. For  $\mathcal{G}_{j,2}$, start with the $i$-th eigenvalue and repeat the same procedure. Keep doing this until there are no more nonzero eigenvalues. Let $\vartheta_j$ denote the number of clusters for the $j$-th subspace and let $\vartheta:= \max_j \vartheta_j$.
Define
\[
\lambda_{j,k}^+:= {\max_{i\in\mathcal{G}_{j,k}}\lambda_i\left(\bm{\Lambda}_{(j)}\right)}, \ \ \lambda_{j,k}^-:= {\min_{i\in\mathcal{G}_{j,k}}\lambda_i\left(\bm{\Lambda}_{(j)}\right)}
\]

Assume that the clusters are well-separated, i.e., 
 \begin{equation}\label{chi}
\frac{\lambda_{j,k+1}^+}{\lambda_{j,k}^-} \le {\chi}^+ = 0.2
 \end{equation}
\een
\end{sigmodel}

\begin{fact}
The above way of defining the clusters is one way to ensure that the condition number of the eigenvalues within each cluster (ratio of the maximum to minimum eigenvalue of the cluster) is below $g^+ = 3$, i.e.,
for all $k=1,2,\dots, \vartheta_j$,
\begin{equation}\label{gjk}
\frac{\lambda_{j,k}^+}{\lambda_{j,k}^-} \le g^+ = 3.
\end{equation}
\end{fact}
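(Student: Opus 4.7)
The statement is essentially a direct restatement of the greedy construction of the partition $\{\mathcal{G}_{j,k}\}_{k=1}^{\vartheta_j}$, so the plan is simply to unpack what the stopping rule means and read off the claimed inequality. There is no probabilistic or analytic content here; the only work is to fix notation so the bookkeeping is unambiguous.

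First, I would enumerate the nonzero eigenvalues of $\Lamj$ in decreasing order as $\lambda_1 \ge \lambda_2 \ge \cdots \ge \lambda_{r_j} > 0$, so that the partition of Model~\ref{clust_model} produces contiguous blocks of indices. Explicitly, I would write $\mathcal{G}_{j,k} = \{p_k, p_k+1, \ldots, q_k\}$ where $p_1 = 1$, $p_{k+1} = q_k + 1$, and $q_{\vartheta_j} = r_j$. With this indexing,
\[
\lambda_{j,k}^+ \;=\; \lambda_{p_k}, \qquad \lambda_{j,k}^- \;=\; \lambda_{q_k}.
\]

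Next, I would invoke the stopping rule defining $q_k$. The construction starts cluster $\mathcal{G}_{j,k}$ at index $p_k$ (for which the intra-cluster ratio is $\lambda_{p_k}/\lambda_{p_k}=1$) and sequentially appends the next-smaller eigenvalue, stopping the first time the ratio exceeds $g^+=3$. Writing this stopping index as $q_k + 1$ (or $q_k = r_j$ if no more nonzero eigenvalues remain), the last index actually included in the cluster, namely $q_k$, must therefore satisfy $\lambda_{p_k}/\lambda_{q_k} \le g^+ = 3$; otherwise the cluster would have been terminated earlier. The corner case $p_k = q_k$ (singleton cluster) is trivial since the ratio equals $1 \le 3$.

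Substituting the identifications above into this inequality gives
\[
\frac{\lambda_{j,k}^+}{\lambda_{j,k}^-} \;=\; \frac{\lambda_{p_k}}{\lambda_{q_k}} \;\le\; g^+ \;=\; 3,
\]
which is precisely \eqref{gjk}, and the argument applies uniformly to every $k=1,2,\ldots,\vartheta_j$. The only ``obstacle,'' if one can call it that, is being careful about the off-by-one in the stopping index $i$ of Model~\ref{clust_model} (where the cluster ends at $i-1$, not $i$); once that convention is fixed, the fact is immediate from the definition.
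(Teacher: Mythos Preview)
Your proposal is correct and matches the paper's treatment: the paper states this as a Fact with no separate proof, since it follows immediately from the stopping rule in the construction of Model~\ref{clust_model}. Your careful handling of the indexing and the off-by-one in the stopping index is exactly the unpacking the paper leaves implicit.
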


A model similar to Model \ref{clust_model} was first introduced in \cite{rrpcp_perf} where the cluster PCA idea was introduced.

\begin{remark}
The case when, for the entire period  $[t_j+K \alpha+1, t_j+K \alpha+d_2]$, the condition number of $\bm\Sigma_t$ is below $g^+$ is a special case of Model \ref{clust_model} with $\vartheta_j=\vartheta=1$ and $\chi^+ = 0$.
\end{remark}

\begin{remark}\label{new_eval_inc_2}
Model \ref{cor_model} requires the eigenvalues along $\P_{t_j,\new}$ to be small for $t \in [t_j, t_j+d]$ with $d \ge (K+2)\alpha$ while  Model \ref{clust_model} requires all eigenvalues to be constant for $t \in [t_j+K \alpha+1, t_j+K \alpha+d_2]$.
Taken together, this means that for all $t \in [t_j, t_j+K \alpha+d_2]$, we are requiring that the eigenvalues along $\P_{t_j,\new}$ be small. However after $t=t_j+K \alpha+d_2$, there is no constraint on its eigenvalues until $t=t_{j+1}+K \alpha$ at which time Model \ref{clust_model} again requires all eigenvalues to be constant.
Thus, in the interval $[t_j+K \alpha+d_2+1, t_{j+1}+K \alpha]$, or in later intervals of the form  $[t_{j+j'}+K \alpha+d_2+1, t_{j+j'+1}+K \alpha]$ for any $j'>0$, the eigenvalues along $\P_{t_j,\new}$ could increase to any large value up to $\lambda^+$ either gradually or suddenly. Or they could also decrease to any small value.
\end{remark}


With small changes to the proof, one can relax the $\Lamt$ constant requirement to the following. Let $\mathrm{ClustInterval}$ denote the interval $[t_j+K\alpha+ 1, t_j+K\alpha +d_2]$ 
and let $t_0$ denote the first time instant of $\mathrm{ClustInterval}$.
Define a partition of the index set $\{1,2, \dots r_{j} \}$ into sets $\mathcal{G}_{j,1}, \mathcal{G}_{j,2}, \dots , \mathcal{G}_{j,\vartheta_j}$ as in Model \ref{clust_model} but by using $\bm\Lambda_{t_0}$ to replace $\Lamj$.
Assume that for all $k=1,2,\dots, \vartheta_j$,
$\lambda_{j,k}^- \le \min_{i \in \mathcal{G}_{j,k}} \min_{t \in \mathrm{ClustInterval}} \lambda_i( \bm{\Lambda}_{t}) \le \max_{i \in \mathcal{G}_{j,k}} \max_{t \in \mathrm{ClustInterval}} \lambda_i( \bm{\Lambda}_{t}) \le \lambda_{j,k}^+.$

At the cost of making our model more complicated, the requirement discussed in Remark \ref{new_eval_inc_2} can also be relaxed, i.e., we can allow the eigenvalues along $\P_{t_j,\new}$ to increase to a large value before imposing Model \ref{clust_model}. To do this we need to assume an upper bound on $d$. Suppose that $(K+2)\alpha \le d \le (K+3)\alpha$. Suppose also that we allow a period of $\Delta = 4 \alpha$ frames for the new eigenvalues to increase. We can assume Model \ref{clust_model} holds for the period $[t_j+K\alpha + 3\alpha+ \Delta+ 1, t_j+K\alpha + 3\alpha+ \Delta +d_2]$ with $d_2> (\vartheta+3)\alpha$. In addition, we would also need $t_{j+1}-t_j >(K+3)\alpha+\Delta+d_2$. With this, we would run the cluster PCA algorithm starting at $\that_j+K\alpha + 3\alpha+ \Delta$ instead of at $\that_j+K\alpha$ as we do now.

We give below a correctness result for Automatic ReproCS-cPCA (Algorithm \ref{reprocsdet}) that uses the above model. It has one extra parameter, $\ghatp$, other than the four used by Automatic ReProCS. $\ghatp$ is used to estimate the eigenvalue clusters automatically from an empirical covariance matrix computed using an appropriate set of $\lhatt$'s.


\begin{theorem}\label{thm1_cor}
Consider Algorithm \ref{reprocsdet}.
%
Assume that, for $t > t_\train$, $\mt = \lt + \wt + \xt$ and, for $t \le t_\train$,  $\mt = \lt + \wt$.
Pick a $\zeta$ that satisfies 
\[
\zeta \leq \min \left\{ \frac{10^{-4}}{(r+r_\new)^2} , \frac{0.003\lambda^-}{(r+r_\new)^2 \lambda^+} , \frac{1}{(r+r_\new)^3\gamma^2}, \frac{0.05\lambda^-}{(r+r_\new)^3\gamma^2} \right\}.
\]
Let $b_0=0.1$.
Suppose that the following hold.
\begin{enumerate}

\item enough initial training data is available: $t_\train \ge \frac{32 (2r \gamma^2)^2}{ (1-b_0)^2 (0.001 r_\new\zeta \lambda^-)^2} (11 \log n + \log 8)$

\item algorithm parameters are set as:
\\ $\xi =  \xi_{\mathrm{cor}} := \epsilon_w + \frac{2\sqrt{\zeta} + \sqrt{r_{\new}}\gamma_{\new}}{1-b_0}$; $\omega = 7\xi$;  $K = \left\lceil \frac{\log(0.85r_\new\zeta)}{\log(0.2)}\right\rceil$; $\ghatp: =  \frac{g^+ + 0.06}{1-0.06}  = 3.26$;
\\ $\alpha = \max\{\alpha_{\add},\alpha_{\del}\}$ where
$
\alpha_{\add}  \ge  32 \frac{1.2^4 (2\sqrt{\zeta} + \sqrt{r_\new}\gamma_\new + 2\epsilon_w)^4}{(1-b_0)^6} \frac{(1-b_0^2)^2}{(0.001 r_\new \zeta\lambda^-)^2} (11 \log n + \log ( (52K+44)J) )
$
and
$\alpha_{\del} \ge 32 \frac{1.2^4 r^2\gamma^4}{(1-b_0)^6} \frac{(1-b_0^2)^2}{(0.001 r_\new \zeta\lambda^-)^2} (11 \log n + \log ( (52\vartheta+36)J) );$


\item model on $\T_t$: Model \ref{sbyrho} holds; 


\item model on $\lt$:
\\ Model \ref{cor_model} holds with $b \le b_0=0.1$ and with $\sqrt{r_\new}\gamma_{\new}$ small enough so that $14 \xi \le \min_{t} \min_{i \in \T_t} |(x_t)_i|$;
\\ Model \ref{clust_model} holds with $|\mathcal{G}_{j,k}| \ge 0.15 (r+r_\new) $; 
\\ Model \ref{dense_model} (denseness) holds. 
\label{xmin_cond}

\item model on $\wt$: Model \ref{wt_model} holds with $\epsilon_w^2 \le 0.03\zeta  \lambda^-$ 
\label{wt_cond}

\item independence: Let $\T:=\{\T_{\tilde{t}}\}_{\tilde{t}=1,2,\dots,t_{\max}}$. Assume that $\T,  \bm{w}_1, \bm{w}_2, \dots, \bm{w}_{t_{\max}}, \bnu_1, \bnu_2, \dots, \bnu_{t_{\max}}$ are mutually independent random variables.

\end{enumerate}
Then, with probability  $\ge 1 - 3n^{-10}$, at all times $t$,
\ben
\item $\T_t$ is exactly recovered, i.e. $\hat{\T}_t = \T_t$ for all $t$;

\item $\|\xt - \xhatt\|_2 \leq 1.34 \left(2\sqrt{\zeta} +  \sqrt{r_\new}\gamma_{\rmnew}  + \epsilon_w  \right)$ and $\|\lhat_t - \lt \|_2 \le \|\xt - \xhatt\|_2 + \epsilon_w$;

\item the subspace error $\SE_t := \|( \I - \hat{\bm{P}}_t \hat{\bm{P}}_t{}' ) \Pt \|_2 \le 10^{-2} \sqrt{\zeta}$ for all $t\in[t_{j}+d,t_{j+1})$;

\item the subspace change time estimates given by Algorithm \ref{reprocsdet} satisfy $t_j \leq \hat{t}_j \leq t_j + 2\alpha$;

\item its estimates of the number of new directions are correct: $\hat{r}_{j,\new,k} = r_{j,\new}$ for $j=1,\dots,J$; 

\item  eigenvalue clusters are recovered exactly: $\hat{\mathcal{G}}_{j,k} = \mathcal{G}_{j,k}$ for all $j$ and $k$; thus its estimate of the number of deleted directions is also correct.
\een
\end{theorem}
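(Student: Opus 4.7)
The plan is to prove Theorem \ref{thm1_cor} by induction on the subspace-change index $j$, and within each $j$ on the $K$ projection-PCA (addition) iterations and then the $\vartheta_j$ cluster-PCA (deletion) iterations. Concretely, I would define ``good'' events $\Gamma_{j,k}$ asserting that at the end of the $k$-th subspace-update step after $\hat t_j$, the subspace error $\SE_t$ is bounded by a geometrically decaying sequence $\zetasp_k$, and that up to this time the outlier support has been recovered exactly and $\hat t_j\in[t_j,t_j+2\alpha]$, $\hat r_{j,\new,k}=r_{j,\new}$. The base case $\Gamma_{0,K}$ follows from the lower bound on $t_\train$ by a single matrix-Azuma concentration of $\frac{1}{t_\train}\sum_t \lt\lt'$ and the Davis–Kahan $\sin\theta$ theorem \cite{davis_kahan}; the induction closes with the bound $\SE_t\le 10^{-2}\sqrt{\zeta}$ after cluster-PCA.

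The inductive step first runs the ``sparse recovery works'' block assuming the current estimate $\Phat_{t-1}$ satisfies the $\zetasp_k$-bound. The denseness assumption (Model \ref{dense_model}) together with the stated Fact gives $\|\bm\Phi_t \I_\T\|_2\le 0.3$ for any $|\T|\le 2s$, so the $2s$-RIC of $\bm\Phi_t$ is below a constant strictly less than $\sqrt{2}-1$. Slow subspace change via the upper bound in \eqref{anew_small} plus the AR bound $b\le b_0$ controls $\|\bm\Phi_t\lt\|_2$ by $\xi_{\mathrm{cor}}$, so standard $\ell_1$ guarantees \cite{candes_rip} give $\|\xt-\hat\xt_{\cs}\|_2\le 7\xi$; the hypothesis $14\xi\le\min_{i\in\T_t}|(x_t)_i|$ makes thresholding at $\omega=7\xi$ yield $\hat\T_t=\T_t$; the LS debias then produces $\lhatt = \lt - \et$ with the explicit structure $\et = \I_{\T_t}\invterm (\bm\Phi_t)_{\T_t}'\bm{b}_t$, which is essential later because $\et$ is supported on $\T_t$ and linear in $\bm{b}_t$.

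The heart of the argument is the PCA analysis. For the addition step, I would decompose $\frac{1}{\alpha}\sum_{t\in\mathcal{I}_{j,k}}\lhatt\lhatt'$ into a signal term capturing $\P_{t_j,\new}$, a leakage term along the previously known directions, and cross terms $\mathcal H+\mathcal H'$ involving $\et$, exactly as in \cite{rrpcp_isit15}, but then apply the matrix Azuma inequality \cite{tail_bound} rather than matrix Hoeffding, since the AR model makes the relevant summands a bounded martingale difference sequence after conditioning on $\mathcal F_{t-1}$. Combining these concentration bounds with $\sin\theta$ and the $\lambda_\new^-$ lower bound shrinks $\zetasp_k$ by a factor of $0.2$ per iteration, so after $K=\lceil\log(0.85 r_\new\zeta)/\log 0.2\rceil$ steps $\SE_t\le\sqrt{r_\new}\zeta$. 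The detection-step claims ($t_j\le\hat t_j\le t_j+2\alpha$ and $\hat r_{j,\new,k}=r_{j,\new}$) follow from comparing $\lambda_{\max}$ and the count of large eigenvalues of $(\I-\Phat\Phat')\frac{1}{\alpha}\sum\lhatt\lhatt'(\I-\Phat\Phat')$ against the detection thresholds, which are separated from the before-change and after-change values by gaps of order $\lambda_\new^-$ under Model \ref{cor_model}.

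For cluster-PCA, under Model \ref{clust_model} the constancy of $\Lamj$ on the window $[t_j+K\alpha+1,t_j+K\alpha+d_2]$ lets me apply matrix Azuma to the $\alpha$-frame average $\frac{1}{\alpha}\sum\lhatt\lhatt'$ and then read off clusters via the ratio test at $\ghatp=(g^++0.06)/(1-0.06)$. The within-cluster condition number $g^+=3$ and between-cluster separation $\chi^+=0.2$ translate through the perturbation bound into an empirical eigengap strictly larger than $\ghatp$ between any two true clusters and strictly smaller than $\ghatp$ inside any true cluster, giving $\hat\calGjk=\calGjk$; cluster-by-cluster $\sin\theta$ on the between-cluster gap then yields $\|(\I-\hat{\bm Q}_{j,k}\hat{\bm Q}_{j,k}')\Pj\I_{\calGjk}\|_2=\mathcal O(\sqrt\zeta)$, and concatenating the $\vartheta_j$ estimates gives the final $10^{-2}\sqrt\zeta$ bound on $\SE_t$ and restores the correct rank $r_j$, closing the induction. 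The main obstacle, and the reason that significant new proof work is required beyond \cite{rrpcp_isit15}, is the switch from Hoeffding to Azuma under the AR model: one must track conditional second moments through the recursion $\lt=b\ell_{t-1}+\nut$, which introduces the $(1-b_0)$ and $(1-b_0^2)$ factors appearing in the $\alpha_{\add},\alpha_{\del}$ bounds, and ensure that the bounded-difference constants depend only on $\gamma,\gamma_\new,\epsilon_w$, not on the unbounded accumulated $\lt$. The secondary obstacle is the cluster-PCA eigen-gap analysis, which requires a more delicate $\sin\theta$ application than the addition step because the relevant gap is between consecutive clusters rather than between signal and zero.
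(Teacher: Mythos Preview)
Your proposal is essentially correct and follows the same overall architecture as the paper: nested ``good'' events indexed by $(j,k)$, induction closed by combining a sparse-recovery lemma (the paper's Lemma \ref{cslem_cor}), a $\sin\theta$-based bound (Lemma \ref{zetakbnd}), and matrix-Azuma concentration lemmas for the addition and deletion steps, with Lemma \ref{initsub_cor} supplying the base case.

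Two points deserve sharpening. First, your phrase ``the relevant summands are a bounded martingale difference sequence after conditioning on $\mathcal F_{t-1}$'' understates what is needed: because $\lt = b\ell_{t-1}+\nut$, the terms $\bN_t\lt\lt'\bM_t$ are \emph{not} martingale differences in $t$. The paper instead expands $\lt = b^{t-t_0+1}\ell_{t_0-1}+\sum_{\tau\ge t_0} b^{t-\tau}\nut[\tau]$, splits $\frac{1}{\alpha}\sum_t \bN_t\lt\lt'\bM_t$ into five pieces (term1, term21, term22, term23, term3 in Section \ref{general_decomp}), switches summation order so that each piece is summed over $\tau$ (or a reindexed $p$) and becomes a genuine martingale-difference sum, and only then applies Azuma. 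This reorganization, together with $b^\alpha\le r_\new\zeta$ and $\sum b^{t-\tau}\le 1/(1-b)$, is what produces the $(1-b_0)$ and $(1-b_0^2)$ factors you mention and keeps the bounded-difference constants in terms of $\gamma,\gamma_\new$ rather than $\|\lt\|_2$.

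Second, you do not say how the support-change model (Model \ref{sbyrho}) enters the PCA bounds. In the paper this is the step that upgrades the argument from a partial result to a correctness result: for $k\ge 2$ one cannot show $(\I-\Phat_*\Phat_*{}'-\Phat_{\new,k-1}\Phat_{\new,k-1}{}')\P_\new$ is dense, so the dominant cross term is handled by Cauchy--Schwarz for sums of matrices (Lemma \ref{CSmat}) together with the support-change lemma (Lemma \ref{blockdiag1}), which exploits the block-banded structure of $\frac{1}{\alpha}\sum_t \ITt\invterm^2\ITt'$ to extract the factor $\rho^2 h^+\le 10^{-4}$. Without this ingredient the decay $\zeta_{\new,k}^+\to 0.1\,\zeta_{\new,k-1}^+$ does not go through. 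Since you cite \cite{rrpcp_isit15} for the decomposition this may already be implicit in your plan, but it should be made explicit.
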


{\em Proof: } The proof outline is given in Section \ref{outline}. The proof is given in Sections \ref{pf_thm1_cor}, \ref{3_pfs}, \ref{3_pfs_del}.

\begin{remark}
Notice that the lower bound $|\mathcal{G}_{j,k}| \ge 0.15 (r+r_\new) $ can hold only if the number of clusters $\vartheta_j$ is at most 6. This is one choice that works along with the given bounds on other quantities such as $\rho^2 \beta$. It can be made larger if we assume a tighter bound on $\rho^2 \beta$ for example. But what will remain true is that our result requires the number of clusters to be $\bigo(1)$.
\end{remark}

\begin{remark} The independence assumption can again be replaced by the weaker one of Remark \ref{wt_model_weaker}.  \end{remark}

The extra assumption needed by the above result compared to Theorem \ref{thm1_cor_nodel} is the clustering one. Using this, ReProCS-cPCA is able to correctly estimate the current subspace. Thus,for $t \in [t_j, \that_j+\alpha]$, $\Phat_{t-1}$ is an accurate estimate of $\Span(\P_{t_j-1})$ where as when using ReProCS (and Theorem \ref{thm1_cor_nodel}), it is an estimate of $\Span([\P_0, \P_{t_1,\new}, \P_{t_2,\new}, \dots \P_{t_{j-1},\new}])$. Because of this, (i) the above result needs a much weaker denseness assumption, (ii) it does not need a bound on $J$, and (iii) it requires the new directions to only be orthogonal to $\Span(\P_{t_j-1})$.
We discuss the results in detail in Sec. \ref{discussion}.

\begin{corollary} \label{thm1_cor_corol}
The following conclusions also hold under the assumptions of Theorem \ref{thm1_cor} with probability at least $1 - 3n^{-10}$.
\begin{enumerate}
\item The recovery error satisfies $\|\lhat_t - \lt \|_2 \leq \|\xt - \xhatt\|_2 + \epsilon_w$ and
\begin{align*}
\|\xt - \xhatt\|_2 \le
\begin{cases}
1.34 \left(2\sqrt{\zeta} +  \sqrt{r_{\new}}\gamma_{\rmnew} + \epsilon_w \right) & t \in \left[t_j, \that_j + \alpha \right] \\
1.34 \left(2.15\sqrt{\zeta} +  0.19\cdot(0.1)^{k-1}\sqrt{r_{\new}}\gamma_{\rmnew} + \epsilon_w \right) &   t \in \left[ \that_j+(k-1)\alpha + 1,  \that_j + k\alpha\right], \ k=2,3, \dots, K \\
2.67(\sqrt{\zeta} + \epsilon_w) & t \in \left[\that_j + K \alpha + 1,  \hat{t}_j + K \alpha + (\vartheta+1)\alpha \right] \\
2.67(\frac{r}{r+r_\new}\sqrt{\zeta} + \epsilon_w) & t \in \left[\that_j + K \alpha + (\vartheta+1)\alpha + 1, t_{j+1} - 1\right];
\end{cases}
\end{align*}
\item The subspace error satisfies,
\begin{align*}
\SE_t \leq
\begin{cases}
1  & t \in \left[t_j,\that_j + \alpha \right] \\
10^{-2} \sqrt{\zeta} +  0.19\cdot0.1^{k-1} & t \in \left[\that_j+(k-1)\alpha + 1,  \that_j + k\alpha \right], \ k=2,3, \dots, K  \\
10^{-2} \sqrt{\zeta}   &  t \in \left[\that_j + K \alpha + 1,  \hat{t}_j + K \alpha + (\vartheta+1)\alpha \right] \\
10^{-2} \frac{r}{r+r_\new}\sqrt{\zeta}   &  t \in \left[\that_j + K \alpha + (\vartheta+1)\alpha + 1, t_{j+1} - 1\right];
\end{cases}
\end{align*}
\end{enumerate}
\end{corollary}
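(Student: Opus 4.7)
\textbf{Proof plan for Corollary \ref{thm1_cor_corol}.}

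The corollary refines the uniform bounds of Theorem \ref{thm1_cor} into phase--dependent bounds, tracking how the subspace error $\SE_t$ evolves through the four phases of processing a subspace change: (i) the ``detection delay'' phase before $\that_j+\alpha$, (ii) the $K$ projection--PCA iterations, (iii) the interval during which cluster--PCA is being performed on the $\vartheta_j+1$ blocks of $\alpha$ frames, and (iv) the ``stable'' phase after cluster--PCA completes. Since the recovery error $\|\xt-\xhatt\|_2$ is driven by $\SE_t$ through the compressive--sensing/LS step of Algorithm \ref{reprocsdet}, the plan is to first extract the phase--dependent $\SE_t$ bounds from the intermediate lemmas that were already used to prove Theorem \ref{thm1_cor}, and then propagate these into $\|\xt-\xhatt\|_2$ and $\|\lhat_t-\lt\|_2$ bounds in a uniform way.

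First, I would establish the $\SE_t$ bounds. On $[t_j,\that_j+\alpha]$, no update to $\Phat$ accounting for the new directions has occurred yet, so $\SE_t\le 1$ is the best one can hope for and follows trivially from $\hat{\P}_{t-1}$ being a basis matrix. On $[\that_j+(k-1)\alpha+1,\that_j+k\alpha]$ for $2\le k\le K$, I would invoke the projection--PCA recursion lemma from Sec.\ \ref{3_pfs}, which is precisely what drives the $K=\lceil\log(0.85 r_\new\zeta)/\log(0.2)\rceil$ iteration count: each p--PCA step shrinks the component of $\SE_t$ along $\P_{t_j,\new}$ by a factor $0.2$ (yielding the $0.1^{k-1}$ factor when combined with the looser constants carried through the recursion), while the component along the already--estimated old directions stays at $O(\sqrt{\zeta})$. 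On $[\that_j+K\alpha+1,\that_j+K\alpha+(\vartheta+1)\alpha]$ the full $(K{+}2)\alpha$ projection--PCA schedule has stabilized and cluster--PCA is still running on successive clusters, so the uniform $10^{-2}\sqrt{\zeta}$ bound from Theorem \ref{thm1_cor} applies. Finally, on $[\that_j+K\alpha+(\vartheta+1)\alpha+1,t_{j+1}-1]$, cluster--PCA has re--estimated the entire current subspace in $r$ (rather than $r+r_\new$) dimensions; the improvement factor $r/(r+r_\new)$ comes out of the cluster--PCA analysis in Sec.\ \ref{3_pfs_del}, where the bound per recovered direction is the same but the total number of tracked directions has dropped from $r+r_\new$ to $r$.

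Next, I would derive the recovery error bounds from the $\SE_t$ bounds. Since Theorem \ref{thm1_cor} already guarantees $\That_t=\T_t$, the LS step gives $\xt-\xhatt = -[(\bm\Phi_t)_{\T_t}{}'(\bm\Phi_t)_{\T_t}]^{-1}(\bm\Phi_t)_{\T_t}{}'(\bm\Phi_t\lt+\bm\Phi_t\wt)$, and using $\|\I_{\T_t}{}'\bm\Phi_t\lt\|_2\le \SE_t\cdot\|\lt\|_2$ together with the RIC bound on $(\bm\Phi_t)_{\T_t}$ (guaranteed by Model \ref{dense_model}, which yields $\delta_{2s}(\bm\Phi_t)\le 0.3$ at the current accuracy level), one obtains $\|\xt-\xhatt\|_2\le C_1\SE_t\|\lt\|_2+C_2\epsilon_w$ for explicit small constants. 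Plugging in each phase's $\SE_t$ bound and using the bound $\|\P_t{}'\lt\|_2\le\sqrt{r_\new}\gamma_\new/(1-b_0)$ from the AR model along the new directions (plus $\sqrt{\zeta}$--level contributions along the stably--estimated directions, which come from the stationarity bound $\|\P_t'\bnu_t\|_2\le\gamma$) yields the claimed expressions. The bound $\|\lhat_t-\lt\|_2\le\|\xt-\xhatt\|_2+\epsilon_w$ is immediate from $\lhat_t=\mt-\xhatt$ and $\mt=\lt+\xt+\wt$.

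The main obstacle I anticipate is not any new probabilistic argument (everything needed is already in the proof of Theorem \ref{thm1_cor}) but keeping the numerical constants clean across the three regimes of the AR--driven bound on $\|\lt\|_2$, so that the stated constants $1.34$, $2.15$, $0.19$, $2.67$ come out exactly. In particular, the factor $0.19\cdot 0.1^{k-1}$ in phase (ii) has to absorb the geometric decay of $\SE_t$ \emph{times} the bound on $\|\P_{t_j,\new}{}'\lt\|_2$, and the transition at $k=K$ has to be smoothly matched to the $10^{-2}\sqrt\zeta$ level used in phase (iii); the $r/(r+r_\new)$ factor appearing in phase (iv) must be propagated carefully through the LS--error bound rather than being absorbed into $\sqrt{\zeta}$. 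These are bookkeeping issues that I would handle by specializing the computations in the proof of the main theorem's $\SE_t$--bound to each of the four intervals listed.
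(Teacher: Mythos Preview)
Your plan is essentially the paper's route: the corollary is read off from the phase--by--phase conclusions of the Sparse Recovery Lemma (Lemma \ref{cslem_cor}) together with the decay bounds $\zeta_{j,\new,k}^+ \le 0.19\cdot 0.1^{k-1} + 0.15 r_\new\zeta$ from Lemma \ref{zetadecay} and Fact \ref{d_large}, all holding on the event $\Gamma_{J,\rmend}$ whose probability was already computed in the proof of Theorem \ref{thm1_cor}. In other words, no new probabilistic work is needed and the bookkeeping you anticipate has already been done inside Lemma \ref{cslem_cor} (and Fact \ref{ltbnds}, which it invokes); the corollary is literally a restatement of the four cases of that lemma with the $\zeta_{j,*}^+, \zeta_{j,\new,k-1}^+, \zeta_{j,\add}^+, \zeta_{j+1,*}^+$ values plugged in.

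One genuine inaccuracy in your write--up: the intermediate bound $\|\xt-\xhatt\|_2 \le C_1\,\SE_t\,\|\lt\|_2 + C_2\epsilon_w$ is too crude to recover the stated constants. With $\|\lt\|_2 \le \sqrt{r}\gamma/(1-b)$, this would give a term of order $\SE_t \sqrt{r}\gamma$, not the $\zeta_{\new,k-1}^+\sqrt{r_\new}\gamma_\new$ that appears in phase (ii). The correct mechanism --- which you allude to later but not consistently --- is that $\bm\Phi_t \lt$ is bounded not by $\SE_t\|\lt\|_2$ but by splitting $\lt$ along $\P_{(j),*}$ and $\P_{(j),\new}$ \emph{before} applying $\bm\Phi_t$: the old--subspace piece contributes $\zeta_{j,*}^+\sqrt{r}\gamma$ (small because $\Phat_{(j),*}$ is already accurate), while the new--subspace piece contributes $\zeta_{\new,k-1}^+\sqrt{r_\new}\gamma_\new$ (small because of the slow--change bound on $\gamma_\new$, not because $\SE_t$ is small). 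This is exactly what Fact \ref{ltbnds} does, and it is why the $\sqrt{r_\new}\gamma_\new$ factor, not $\sqrt{r}\gamma$, multiplies the geometrically decaying $\zeta_{\new,k-1}^+$. Fix that step and your argument matches the paper's.
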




{\bf Online matrix completion (MC). }
MC can be interpreted as a special case of RPCA and hence the same is true for online MC and online RPCA \cite{rpca,rrpcp_isit15}. In \cite{rrpcp_isit15}, we explicitly stated results for both. In a similar fashion, an analog of either of the above results can also be obtained for online MC.

{\bf Offline RPCA. }
In certain applications such as video analytics, an improved offline estimate of both the background and the foreground is desirable. In some other applications, there is no real need for an online solution. We show here that, with a delay of at most $(K+2)\alpha$ frames, by using essentially the same ReProCS algorithm with one extra step, it is possible to recover $\xt$ and $\lt$ with close to zero error.
\begin{corollary}[Offline RPCA]
Consider the estimates given in the last two lines of Algorithm \ref{reprocsdet}.
Under the assumptions of Theorem \ref{thm1_cor}, with probability at least $1 - 3n^{-10}$, at all times $t$,
$
\|\xt - \xhatt^{\mathrm{offline}}\|_2
\le 
2.67 (\sqrt{\zeta} +  \epsilon_w)
$, $\|\lhat_t^{\mathrm{offline}} - \lt \|_2 \le 2.67 (\sqrt{\zeta} +  2\epsilon_w)$,
and all its other conclusions hold.
\end{corollary}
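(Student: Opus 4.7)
The plan is to show that, once the online subspace estimates have stabilized for every segment $[t_j, t_{j+1})$, the offline algorithm can feed these accurate estimates back into the sparse recovery step at every time $t$, thereby producing an essentially noise-limited reconstruction. By Theorem~\ref{thm1_cor}, except on the event of failure (which has probability at most $3 n^{-10}$), the online estimate $\Phat_{t'}$ satisfies $\|(\I-\Phat_{t'}\Phat_{t'}{}')\P_{t'}\|_2 \le 10^{-2}\sqrt{\zeta}$ for every $t'\in[t_j+d,t_{j+1})$. Since Model~\ref{cor_model} keeps $\P_t$ constant on $[t_j,t_{j+1})$, the last two lines of Algorithm~\ref{reprocsdet} may use $\Phat^{\mathrm{offline}}_t := \Phat_{t'}$ with $t' = t_{j+1}-1$ (or any $t' \ge t_j+d$) for every $t$ in that block, giving an \emph{offline subspace error} $\SE^{\mathrm{offline}}_t \le 10^{-2}\sqrt{\zeta}$ uniformly in $t > t_\train$.

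First I would set $\bm{\Phi}^{\mathrm{offline}}_t := \I - \Phat^{\mathrm{offline}}_t(\Phat^{\mathrm{offline}}_t)'$ and form $\bm{y}^{\mathrm{offline}}_t := \bm{\Phi}^{\mathrm{offline}}_t \mt = \bm{\Phi}^{\mathrm{offline}}_t \xt + \bm{b}^{\mathrm{offline}}_t$ with $\bm{b}^{\mathrm{offline}}_t := \bm{\Phi}^{\mathrm{offline}}_t(\lt + \wt)$. Since $\bnu_t \in \range(\P_t)$ and $\|\P_t'\bnu_t\|_2 \le \gamma$, unfolding the AR recursion gives $\|\lt\|_2 \le \gamma/(1-b_0)$, so
\[
\|\bm{b}^{\mathrm{offline}}_t\|_2 \;\le\; \SE^{\mathrm{offline}}_t \cdot \|\lt\|_2 \;+\; \epsilon_w \;\le\; \tfrac{10^{-2}\sqrt{\zeta}\,\gamma}{1-b_0} + \epsilon_w.
\]
Using $\zeta \le 1/((r+r_\new)^3\gamma^2)$ from the hypotheses of Theorem~\ref{thm1_cor}, the first term is absorbed into a constant multiple of $\sqrt\zeta$; a clean bookkeeping yields $\|\bm{b}^{\mathrm{offline}}_t\|_2 \le c(\sqrt{\zeta}+\epsilon_w)$ for a small constant $c$.

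Next I would invoke the denseness assumption (Model~\ref{dense_model}) exactly as in the online analysis: because $\|\P_{t_j}{}'\I_\T\|_2 \le 0.3$ for $|\T|\le 2s$ and $\SE^{\mathrm{offline}}_t$ is tiny, a standard perturbation argument gives $\delta_{2s}(\bm{\Phi}^{\mathrm{offline}}_t) \le 0.1$. The $\ell_1$-minimization step with constraint level $\xi$ then returns a CS estimate $\xhat^{CS,\mathrm{offline}}_t$ with error at most a constant times $\|\bm{b}^{\mathrm{offline}}_t\|_2$, by the RIC-based compressive-sensing bound of \cite{candes_rip}. The assumption $14\xi \le \min_{t,i\in\T_t}|(\xt)_i|$, combined with this improved (and no larger than the online) CS error, makes thresholding at $\omega$ recover $\T_t$ exactly; this is the same argument used to prove exact support recovery in Theorem~\ref{thm1_cor}, only with a smaller input error. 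Finally, LS on the correct support gives
\[
\xt - \xhatt^{\mathrm{offline}} \;=\; \bigl[(\bm{\Phi}^{\mathrm{offline}}_t)_{\T_t}\bigr]^{+}\bm{b}^{\mathrm{offline}}_t,
\]
whose norm is at most $\|\bm{b}^{\mathrm{offline}}_t\|_2/\sqrt{1-\delta_{2s}} \le 1.055\,\|\bm{b}^{\mathrm{offline}}_t\|_2$. Tracking constants yields the claimed $2.67(\sqrt{\zeta}+\epsilon_w)$ bound on $\xt$, and $\lhatt^{\mathrm{offline}} = \mt-\xhatt^{\mathrm{offline}}-\wt+\wt$ gives the corresponding $2.67(\sqrt{\zeta}+2\epsilon_w)$ bound on $\lt$ after the triangle inequality.

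The hard part will be ensuring that the stabilized-subspace estimate we plug in for each frame $t$ is obtained from quantities that are measurable at the time the offline pass is run, and that no new high-probability failure events have to be added. This is handled by observing that every ingredient the offline step uses -- $\Phat_{t'}$ with $t' \ge t_j+d$, the detected change times $\that_j$, and the ranks $\hat{r}_{j,\new,k}$ -- is already controlled, on the same event of probability $\ge 1-3n^{-10}$, by Theorem~\ref{thm1_cor}; no additional concentration arguments are required, and the corollary's bounds follow by re-running the sparse-recovery analysis with $\SE^{\mathrm{offline}}_t \le 10^{-2}\sqrt{\zeta}$ in place of the looser online $\SE_t$.
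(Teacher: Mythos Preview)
Your argument analyzes the wrong procedure. The offline step in the last two lines of Algorithm~\ref{reprocsdet} does \emph{not} re-run $\ell_1$-minimization or support thresholding; it takes the \emph{online} support estimate $\hat{\mathcal{T}}_t$ as given and only redoes the least-squares step with the stabilized projector $\bm{\Phi}_{\that_j+K\alpha}$, over the block $t\in[\that_{j-1}+K\alpha+1,\that_j+K\alpha]$. Your proposal instead introduces a fresh $\bm{y}_t^{\mathrm{offline}}$, a fresh CS estimate, and a fresh thresholding argument, and it uses a different projector ($\Phat_{t_{j+1}-1}$) on a different block ($[t_j,t_{j+1})$). None of that matches the estimator the corollary is about.

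The paper's (implicit) proof is much shorter and rests on a fact you already have: Theorem~\ref{thm1_cor} guarantees $\hat{\mathcal{T}}_t=\T_t$ for all $t$ on the same event of probability $\ge 1-3n^{-10}$. With exact support in hand, the offline LS estimate satisfies
\[
\xt-\xhatt^{\mathrm{offline}}
=\I_{\T_t}\big[(\bm{\Phi}_{\that_j+K\alpha})_{\T_t}{}'(\bm{\Phi}_{\that_j+K\alpha})_{\T_t}\big]^{-1}\I_{\T_t}{}'\bm{\Phi}_{\that_j+K\alpha}(\lt+\wt)-\wt,
\]
which is exactly \eqref{etdef0_cor} with $\bm{\Phi}_t$ replaced by $\bm{\Phi}_{(j),K}$. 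Since $\Span(\P_t)\subseteq\Span(\P_{(j),\add})$ for every $t$ in $[\that_{j-1}+K\alpha+1,\that_j+K\alpha]$ (both before and after $t_j$), the bound on $\|\bm{\Phi}_{(j),K}\lt\|_2$ from Fact~\ref{ltbnds} and the RIC bound $\phi_t\le\phi^+$ from Corollary~\ref{RICnumbnd} apply verbatim, yielding $\|\xt-\xhatt^{\mathrm{offline}}\|_2\le\frac{\phi^+}{1-b}(2\zeta_{j,\add}^+\sqrt{r}\gamma+2\epsilon_w)\le 2.67(\sqrt{\zeta}+\epsilon_w)$, exactly as in case~3 of Lemma~\ref{cslem_cor}. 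The $\lhatt^{\mathrm{offline}}$ bound then follows by the triangle inequality. No new $\ell_1$ or thresholding analysis is needed, and no new probabilistic events enter.
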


Observe that the offline recovery error can be made smaller and smaller by reducing $\zeta$ (this, in turn, will result in an increased delay between subspace change times).
As can be seen from the last two lines of Algorithm \ref{reprocsdet}, the offline estimates are obtained at $t=\that_j+K \alpha$. Since $\that_j \le t_j+2 \alpha$, this means that the offline estimates are obtained after a delay of at most $(K+2)\alpha$ frames.

\subsection{Discussion} \label{discussion}

{\bf Online versus offline. }
We analyze an online algorithm that is faster and needs less storage. It needs to store only a few $n \times \alpha$ or $n \times r$ matrices, while  PCP needs to store matrices of size $n \times t_{\max}$.
Other results for online algorithms include correctness results from \cite{rrpcp_icassp15,rrpcp_isit15} (discussed below), and partial results of Qiu et al. \cite{rrpcp_perf} and Feng et al. \cite{xu_nips2013_1}.
In \cite{xu_nips2013_1}, Feng et al. proposed a method for online RPCA and proved a partial result for their algorithm. Their approach was to reformulate the PCP program and to use this reformulation to develop a recursive algorithm that converged asymptotically to the solution of PCP as long as the basis estimate $\hat{\bm{P}}_t$ was full rank at each time $t$. Since this result assumed something about an intermediate algorithm estimate, $\hat{\bm{P}}_t$, it was  a {\em partial} result.
In \cite{rrpcp_perf}, Qiu et al. obtained a performance guarantee for ReProCS and ReProCS-cPCA that also needed intermediate algorithm estimates to satisfy certain properties. In particular, their result required that the basis vectors for the currently unestimated subspace, $\Span((I - \Phat_{t-1}\Phat_{t-1}{}')\P_{t_j,\new})$, be dense vectors. Thus, their result was also a  {\em partial} result.
In the current work, we remove this requirement and provide a {\em correctness result} for both ReProCS and ReProCS-cPCA. The assumption that helps us get this is Model \ref{sbyrho} on $\T_t$ (or its generalization given in Model \ref{general_model} later). Secondly, unlike \cite{rrpcp_perf}, we provide a correctness result for an automatic algorithm that does not assume knowledge of subspace change times, number of directions added or removed, or of the eigenvalue-based subspace clusters. Thirdly, we allow the $\lt$'s to follow an AR model where as \cite{rrpcp_perf} required independence over time.

To our knowledge, our work and \cite{rrpcp_icassp15,rrpcp_isit15} are the only correctness results for an online RPCA method. 
Our work significantly improves upon the results of \cite{rrpcp_icassp15,rrpcp_isit15}. We allow the $\lt$'s to be correlated over time and use a first order AR model to model the correlation. As discussed earlier, this is significantly more practically valid than the independence assumption used in \cite{rrpcp_icassp15,rrpcp_isit15}. It includes independence as a special case. Moreover, with the extra clustering assumption, we are able to analyze Automatic ReProCS-cPCA in Theorem \ref{thm1_cor}. It needs a much weaker rank-sparsity assumption than what is needed by the result of \cite{rrpcp_isit15}, and it does not need a bound on $J$. 
We discuss this below.

{\bf Bounds on rank and sparsity. }
Let $\L:=[\ell_1, \ell_2 \dots \ell_{t_{\max}}]$, $\S:=[x_1, x_2 \dots x_{t_{\max}}]$, $r_{\text{mat}}:=\rank(\L)$ and let $s_{\text{mat}}$ be the number of nonzero entries in $\S$. With our models, $s_{\text{mat}} \le s t_{\max}$ and $r_{\text{mat}} \le r_0 + J r_\new$ with both bounds being tight. Models \ref{sbyrho} and \ref{dense_model} constrain $s$ and $s,r,r_\new$ respectively. Model \ref{sbyrho} needs $s \le \rho_2 n /\alpha$ and Model \ref{dense_model} needs $r s \in \bigo(n)$ and $r_\new s \in \bigo(n)$.
Using the expression for $\alpha$, it is easy to see that if $J \in \bigo(n)$, $r_\new \in \bigo(1)$ and $r \in \bigo(\log n)$, then
$\frac{1}{\alpha} \in \bigo(\frac{\zeta^2}{r^2 \log n}) = \bigo(\frac{1}{(\log n)^9})$
Alternatively, if $r \in \bigo(1)$, then $\frac{1}{\alpha} \in \bigo(\frac{1}{\log n})$.
Thus, Theorem \ref{thm1_cor} definitely holds in two regimes of interest. The first is $J \in \bigo(n)$, $r_\new \in \bigo(1)$, $r \in \bigo(\log n)$, $s_{\text{mat}} \in \bigo(\frac{n t_{\max}}{(\log n)^9})$ and $r_{\text{mat}} \in \bigo(n)$.  The second is $J \in \bigo(n)$, $r_\new \in \bigo(1)$, $r \in \bigo(1)$, $s_{\text{mat}} \in \bigo(\frac{n t_{\max}}{\log n})$ and $r_{\text{mat}} \in \bigo(n)$. The second regime is more favorable when comparing bounds on $s_{\text{mat}}$ and $r_{\text{mat}}$, but, it also implies that the dimension of the subspace at any given time is $\bigo(1)$. This can be restrictive. The first regime allows the subspace dimension at any time to be $\bigo(\log n)$ which is more reasonable, but, because of this, it needs a tighter bound on $s$ and hence on $s_{\text{mat}}$.

In either regime, our requirements are weaker than those of the PCP results from \cite{rpca2,rpca_zhang}: they need $r_{\text{mat}} s = \bigo(n)$ which implies  $r_{\text{mat}} s_{\text{mat}} \in \bigo(n t_{\max})$; thus if $s_{\text{mat}} \in \bigo(\frac{n t_{\max}}{\log n})$, they would require $r_{\text{mat}}$ to be $\bigo(\log n)$. In the first regime, our conditions are slightly stronger than those of the PCP result from \cite{rpca} while in the second, they are comparable: \cite{rpca} needs $r_{\text{mat}} \in \bigo (\frac{n}{(\log n)^2})$ and $s_{\text{mat}} \in \bigo(n t_{\max})$.

Either set of requirements for Theorem \ref{thm1_cor} is significantly weaker than what is needed by Theorem \ref{thm1_cor_nodel} or by the results of \cite{rrpcp_icassp15,rrpcp_isit15}: both need $r_{\text{mat}} \in \bigo(\log n)$. This is because both analyze ReProCS without the cluster PCA based subspace deletion step. Suppose that $r_{j,\new}=r_\new$ for each $j$. For ReProCS without cluster PCA, this means that the dimension of the estimated subspace grows by $r_\new$ with each subspace change time. Thus, the maximum dimension of the estimated subspace is $r_{\text{mat}} = r_0 + J r_\new$ and this is what was used in place of $r$ in the denseness assumption as well in the bound on $\zeta$. This is why these results need $r_{\text{mat}}$ to be $\bigo(\log n)$. However, in Theorem \ref{thm1_cor}, we analyze ReProCS with cluster PCA. Cluster PCA is used to re-estimate the current subspace and thus effectively delete the subspace corresponding to the old directions. This ensures that the rank of the estimated subspace is also bounded by the rank of the true subspace at any time, i.e. by $r$. Thus, Theorem \ref{thm1_cor} only needs $r \in \bigo(\log n)$ while $r_{\text{mat}}$ can as large as $\bigo(n)$.

{\bf No bound on the number of subspace changes, $J$. }
Notice that the result for ReProCS-cPCA given in Theorem \ref{thm1_cor} does not require an upper bound on the number of subspace changes, $J$. On the other hand, the results for ReProCS (both Theorem \ref{thm1_cor_nodel} and the results from \cite{rrpcp_icassp15,rrpcp_isit15}) require a bound on $J$ that is imposed by the denseness assumption: they need $(r_0 + J r_\new) 2s \mu \le 0.09 n$. All results for PCP need a bound on $r_{\mathrm{mat}}$. Under our model of subspace change, $r_{\mathrm{mat}}$ is at most $r_0+J r_\new$ with the bound being tight and hence the PCP results also need a bound on $J$. Of course, even for Theorem \ref{thm1_cor}, $J$ does affect bounds on other quantities: the result needs $t_{j+1}-t_j > d > K \alpha + (\vartheta+3) \alpha$ where $\alpha$ is an algorithm parameter that depends linearly on $\log J$. Thus, for any given value of $J$, the delay between subspace change times, $t_{j+1}-t_j$, and the duration for which the eigenvalues along the new directions need to be small (quantified in \eqref{anew_small}), $d$, need to grow as $\log J$.

{\bf Assumptions on how often the outlier support $\T_t$ needs to change. }
An important advantage of our work over PCP and other batch methods is that we allow more correlated changes of the set of outliers over time. From the assumption on $\T_t$, it is easy to see that we allow the number of outliers per row of $\L$ to be $\bigo(t_{\max})$, as long as the sets follow Model \ref{sbyrho}\footnote{In a period of length $\alpha$, the set $\T_t$ can occupy index $i$ for at most $\rrho\beta$ time instants, and this pattern is allowed to repeat every $\alpha$ time instants. So an index can be in the support for a total of $\rrho\beta\frac{t_{\max}}{\alpha}$ time instants and the model assumes $\rrho \beta \le \frac{0.0001 \alpha}{\rrho}$ for a constant $\rrho$. Thus an index $i$ can be part of the support $\T_t$ for at most $\frac{0.0001}{\rho} t_{\max} \in \bigo(t_{\max})$ time instants.}. 
This is the same as what our previous results \cite{rrpcp_icassp15,rrpcp_isit15} also allowed.
On the other hand, the PCP results from \cite{rpca2,rpca_zhang} need this number to be $\bigo(\frac{t_{\max}}{r_{\text{mat}}})$ which is stronger. 
The PCP result from \cite{rpca} needs that the set $\cup_{t=1}^{t_{\max}} \T_t$ should be generated uniformly at random which is even stronger. 




{\bf Other assumptions. } 
The above advantages are obtained because we use extra assumptions on $\lt$. We assume (i) accurate knowledge of the initial subspace (or available outlier free data from which this can be obtained), (ii) slow subspace change as quantified by \eqref{anew_small} and the lower bound on the delay between subspace change times, and (iii) for a period of time after the previous subspace change has stabilized, we assume that the eigenvalues along the various subspace directions can be clustered into a few clusters. The result of \cite{rrpcp_isit15} required (i) and (ii) but not (iii). On the other hand, the PCP results \cite{rpca,rpca2,rpca_zhang} do not need any of the above. But they need other extra assumptions. They require denseness of the right singular vectors of $\L$ and a bound on the maximum absolute entry of the matrix $U V'$ where $U$ is the matrix of left singular vectors of $\L$ and $V$ is the matrix of its right singular vectors. In our notation $\Span(U) = \Span([P_0, P_{1,\new}, \dots P_{J,\new}])$. We assume denseness of $U$ but not of the right singular vectors.

{\bf Setting algorithm parameters. }
Our result needs five algorithm parameters to be appropriately set. Some of these require knowing at least an upper bound on the model parameters. Our result needs to know upper bounds on $\gamma, \gamma_\new$, $r_0, r, r_\new$, $b$, and $g^+$. The PCP results need this for none \cite{rpca} or at most one \cite{rpca2,rpca_zhang} algorithm parameter.
We briefly explain in Sec. \ref{param_set} how to set algorithm parameters automatically for practical experiments.

{\bf Other work. }
A recent work that uses knowledge of the initial subspace estimate but performs recovery in a piecewise batch fashion is modified-PCP \cite{zhan_pcp}. Like PCP, the result for modified PCP also needs uniformly randomly generated support sets which is stronger than what we need. But, like PCP, it does not need the other extra assumptions that ReProCS needs. Another somewhat related work is the algorithm and correctness result of Feng et al. \cite{xu_nips2013_2} on online PCA with contaminated data. This does not model the outlier as a sparse vector but defines anything that is far from the data subspace as an outlier. 



%
%

\begin{algorithm}[t!] 
\caption{Automatic ReProCS-cPCA}\label{reprocsdet}
{\bf Parameters}: $\alpha$, $K$, $\xi$, $\omega$, $\ghatp$, \ {\bf Inputs}: $\bm{m}_t$ for each $t$, \ {\bf Output}: $\xhatt$, $\lhatt$,  $\Phat_{t}$, $\that_{\jhat}$, $\hat{r}_{\jhat,\new,k}$, $\hat{G}_{j,k}$

Compute $\lamtrain$ as the $r_0$-th eigenvalue of $\frac{1}{t_\train} \sum_{t=1}^{t_\train} \mt \mt'$ and $\Phat_{t_\train}$ as its top $r_0$ eigenvectors.

Set $\thresh=\frac{\lamtrain}{2}$. 
Set $\Phat_{t,*} \leftarrow \Phat_{t_\train}$, $\Phat_{t,\new} \leftarrow [.]$, $\jhat \leftarrow 0$,  $\mathrm{phase} \leftarrow \mathrm{detect}$.

For every $t > t_\train$, do
\ben

\item {\bf Estimate $\T_t$ and $\xt$: }
\ben
\item \label{othoproj} compute $\bm{\Phi}_{t} \leftarrow \bm{I} - \Phat_{t-1} \Phat_{t-1}{}'$ and $\bm{y}_t \leftarrow \bm{\Phi}_{t} \bm{m}_t$

\item \label{Shatcs} solve $\min_{\bm{x}} \|\bm{x}\|_1 \ s.t. \ \|\bm{y}_t - \bm{\Phi}_{t} \bm{x}\|_2 \leq \xi$ and let $\hat{\bm{x}}_{t,\cs}$ denote its solution
\item \label{That} compute $\hat{\mathcal{T}}_t = \{i: \ |(\hat{\bm{x}}_{t,\cs})_i| > \omega\}$
\item \label{LS} LS: compute $\hat{\bm{x}}_t= \I_{\hat{\mathcal{T}}_t} ((\bm{\Phi}_t)_{\hat{\mathcal{T}}_t})^{\dag} \bm{y}_t$
\een

\item {\bf Estimate $\bm{\ell}_t$: }
$\hat{\bm{\ell}}_t \leftarrow \bm{m}_t - \hat{\bm{x}}_t $ 

\item {\bf Subspace Update: }

If $t \mod \alpha \neq 0$ then $\Phat_{t,*} \leftarrow \Phat_{t-1,*}$, $\Phat_{t,\new} \leftarrow \Phat_{t-1,\new}$, $\Phat_{t} \leftarrow [\Phat_{t,*} \ \Phat_{t,\new}]$

 If $t \mod \alpha = 0$  then 
\\
{\bf if  $\mathrm{phase} = \mathrm{detect}$ then}
\begin{enumerate}

\item \label{detect} Set $ u = \frac{t}{\alpha}$ and compute $\bm{\mathcal{D}}_u = (\I - \Phat_{u\alpha-1,*} \Phat_{u\alpha-1,*}{}') [\lhat_{(u-1)\alpha+1}, \dots \lhat_{u\alpha}]$

\item $\Phat_{t,*} \leftarrow \Phat_{t-1,*}$, $\Phat_{t,\new} \leftarrow \Phat_{t-1,\new}$, $\Phat_{t} \leftarrow [\Phat_{t,*} \ \Phat_{t,\new}]$ 

\item \label{change} If $\lambda_{\max} (\frac{1}{\alpha} \bm{\mathcal{D}}_u \bm{\mathcal{D}}_u{}'  ) \ge \thresh$  then
\begin{enumerate}
\item $\mathrm{phase} \leftarrow \mathrm{pPCA}$, $\jhat \leftarrow \jhat+1$, $k \leftarrow 0$, $\that_{\jhat} = t$
\end{enumerate}

\end{enumerate}
{\bf else if $\mathrm{phase} = \mathrm{pPCA}$ then}
\begin{enumerate}
\item Set $ u = \frac{t}{\alpha}$ and compute
$\bm{\mathcal{D}}_u = (\I - \Phat_{u\alpha-1,*} \Phat_{u\alpha-1,*}{}') [\lhat_{(u-1)\alpha+1}, \dots \lhat_{u\alpha}]$

\item\label{PCA} $\Phat_{t,\new} \leftarrow \text{eigenvectors}\left(\frac{1}{\alpha} \bm{\mathcal{D}}_u \bm{\mathcal{D}}_u{}',\thresh\right)$,  $\Phat_{t,*} \leftarrow \Phat_{t-1,*}$, $\Phat_{t} \leftarrow [\Phat_{t,*} \ \Phat_{t,\new}]$

\item  $k \leftarrow k+1$, set $\hat{r}_{j,\new,k} = \rank(\Phat_{t,\new})$

\item If $k == K$, then
\ben \item $\mathrm{phase} \leftarrow \mathrm{cPCA}$,  reset $k \leftarrow 0$ \een 
\end{enumerate}
{\bf else  if $\mathrm{phase} = \mathrm{cPCA}$ then}
\ben
\item cluster PCA (summarized in Algorithm \ref{clusterPCA}); 
\item set $\Phat_{t,*} \leftarrow \Phat_{t}$, $\Phat_{t,\new}\leftarrow [.]$,
\item $\mathrm{phase} \leftarrow \mathrm{detect}$, reset $k \leftarrow 0$
\een
{\bf end-if}
\een
$\mathrm{eigenvectors}(\bm{\mathcal{M}},\thresh)$ returns a basis matrix for the span of eigenvectors with eigenvalue above $\thresh$.
$\mathrm{eigenvectors}(\bm{\mathcal{M}},,r)$ returns a basis matrix for the span of the top $r$ eigenvectors.

{\bf Offline RPCA: } at $t = \that_j + K \alpha$, for all $t \in [\that_{j-1}+ K \alpha+1,  \that_j + K \alpha]$, compute
\[
\xhatt^{\mathrm{offline}} \leftarrow \I_{\hat{\mathcal{T}}_t} ((\bm{\Phi}_{\that_j + K \alpha})_{\hat{\mathcal{T}}_t})^{\dag} \bm{\Phi}_{\that_j + K \alpha} \bm{m}_t
\text{  and  }
\lhatt^{\mathrm{offline}} \leftarrow \mt - \xt
\]
\end{algorithm}

\begin{algorithm}[t!]
\caption{\small{cluster PCA}}
\label{clusterPCA}
\ben

\item If $k==0$, estimate the clusters
\ben
\item Set $ u = \frac{t}{\alpha}$ and compute $\hat{\bm{\Sigma}}_{\mathrm{sample}} =  \frac{1}{\alpha}\sum_{t =(u-1)\alpha+1}^{u \alpha} \lhatt\lhatt{}'$. Let $\hat\lambda_i$ denote its $i$-th largest eigenvalue.

\item To get the first cluster $\hat{\mathcal{G}}_{j,1}$, we start with the index of the first (largest) eigenvalue and keep adding indices of the smaller eigenvalues to it until $\frac{\hat\lambda_1}{\hat\lambda_{i+1}} > \ghatp$ but $\frac{\hat\lambda_1}{\hat\lambda_{i}} \le \ghatp$ or until $\hat\lambda_{i+1} < 0.25 \lamtrain$. We set  $\hat{\mathcal{G}}_{j,1} = \{1,2, \dots i\}$.
\\
For  $\hat{\mathcal{G}}_{j,2}$, start with the $(i+1)$-th eigenvalue and repeat the above procedure.
Repeat the above for each new cluster and stop when there are no more eigenvalues larger than $0.25 \lamtrain$. 

\item  $k \leftarrow k+1$, $\Phat_{t,*} \leftarrow \Phat_{t-1,*}$, $\Phat_{t,\new} \leftarrow \Phat_{t-1,\new}$, $\Phat_{t} \leftarrow [\Phat_{t,*} \ \Phat_{t,\new}]$
\een

\item If $1 \le k \le \vartheta$, estimate the $k$-th cluster's subspace by cluster PCA
\ben
\item Set $ u = \frac{t}{\alpha}$,  set $\hat{\bm{G}}_{j,0} \leftarrow [.]$. 
  \bi
  \item let $\hat{\bm{G}}_{j,\det,k}:= [\hat{\bm{G}}_{j,0},\hat{\bm{G}}_{j,1}, \dots \hat{\bm{G}}_{j,k-1}]$ and let $\bm\Psi_k:=(\I -\hat{\bm{G}}_{j,\det,k} \hat{\bm{G}}_{j,\det,k}{}')$ (notice that $\bm\Psi_{j,1}=\I$);  compute
  $\bm{\mathcal{M}}_{\mathrm{cpca}} = \bm\Psi_k \left(\frac{1}{\alpha}\sum_{t \in (u-1)\alpha+1}^{u \alpha} \lhat_t \lhat_t{}'\right) \bm\Psi_k$

  \item compute $\hat{\bm{G}}_{j,k} \leftarrow \mathrm{eigenvectors}(\bm{\mathcal{M}}_{\mathrm{cpca}},,|\hat{\mathcal{G}}_{j,k}|)$
  \ei

\item $k \leftarrow k+1$, $\Phat_{t,*} \leftarrow \Phat_{t-1,*}$, $\Phat_{t,\new} \leftarrow \Phat_{t-1,\new}$, $\Phat_{t} \leftarrow [\Phat_{t,*} \ \Phat_{t,\new}]$
\een

\item If $k==\vartheta$, set $\Phat_{t} \leftarrow [\hat{\bm{G}}_{j,1} \cdots \hat{\bm{G}}_{j,\vartheta}]$.

\een
\end{algorithm}

\section{Automatic ReProCS-cPCA} \label{algo_sec}

The automatic ReProCS-cPCA algorithm is summarized in Algorithm \ref{reprocsdet}. It proceeds as follows. It begins by estimating the initial subspace as the top $r_0$ left singular vectors of $[\bm{m}_1,\bm{m}_2,\dots, \bm{m}_{t_\train}]$. Let $\Phat_t$ denote the basis matrix for the subspace estimate at time $t$.
At time $t$, if the previous subspace estimate, $\Phat_{t-1}$, is accurate enough, because of the ``slow subspace change" assumption, projecting $\mt = \xt + \lt + \wt$ onto its orthogonal complement nullifies most of $\lt$. Specifically, we compute $\bm{y}_t:= \bm{\Phi}_t \mt$ where $\bm{\Phi}_t:= \I - \Phat_{t-1}\Phat_{t-1}{}'$. Clearly, $\bm{y}_t = \bm{\Phi}_t \xt + \bm{b}_t$ where $\bm{b}_t:= \bm{\Phi}_t\lt + \bm{\Phi}_t \wt$ and it can be argued that $\| \bm{b}_t \|_2$ is small: $\| \bm{\Phi}_t\lt\|_2$  is small due to the slow subspace change assumption and $\|\wt\|_2 \le \epsilon_w$.
Thus recovering $\xt$ from $\bm{y}_t$ becomes a traditional sparse recovery problem in small noise \cite{candes_rip}. We recover $\xt$ by $l_1$ minimization with the constraint $\| \bm{y}_t - \bm{\Phi}_t x\|_2 \le \xi$ and estimate its support by thresholding using a threshold $\omega$. We use the estimated support, $\That_t$, to get an improved debiased estimate of $\xt$, denoted $\xhatt$, by least squares (LS) estimation on $\That_t$. We then estimate $\lt$ as $\lhat_t = \mt - \xhatt$.
By the denseness assumption given in Model \ref{dense_model}, it can be argued that the restricted isometry constant (RIC) of $\bm{\Phi}_t$ will be small. Under the theorem's assumptions, we can bound it by 0.14. This ensures that a sparse $\xt$ is indeed accurately recoverable from $\bm{y}_t$.
%
With the support estimation threshold $\omega$ set as in Theorem \ref{thm1_cor}, it can be argued that the support will be exactly recovered, i.e., $\That_t=\T_t$. Let $\et: = \lt - \lhat_t$.
With this, it is clear that $\et =  (\xhatt - \xt) - \wt$ satisfies
\bea
\et = \I_{\T_t} [(\bm{\Phi}_t)_{\T_t}{}'(\bm{\Phi}_t)_{\T_t}]^{-1} \I_{\T_t}{}' \bm{b}_t - \wt = \I_{\T_t} [(\bm{\Phi}_t)_{\T_t}{}'(\bm{\Phi}_t)_{\T_t}]^{-1} \I_{\T_t}{}' \bm{\Phi}_t (\lt+\wt) - \wt.
\label{etdef_0}
\eea
Using the bound on the RIC of $\bm\Phi_t$, clearly $\|(\bm{\Phi}_t)_{\T_t}{}'(\bm{\Phi}_t)_{\T_t}^{-1}\|_2 \le (1-0.14)^{-1} < 1.2$. Thus,  $\|\et\|_2 \le 1.2 \|\bm{b}_t\|_2 + \epsilon_w$, i.e., it is small too. In other words, $\lt$ is accurately recovered.  

The estimates $\lhatt$ are used in the subspace estimation step which involves (i) detecting subspace change; (ii) $K$ steps of projection-PCA, each done with a new set of $\alpha$ frames of $\lhatt$, to get an accurate enough estimate of the newly added subspace; and (iii) cluster PCA to delete the old subspace by re-estimating the current subspace. At the end of the projection PCA step, the estimated subspace dimension is at most $r+r_\new$, and after cluster PCA, it comes down to at most $r$.

{\em Subspace update. }
In the subspace update step, the algorithm switches between the ``detect" phase, the ``pPCA" phase and the ``cPCA" phase. It starts in the ``detect" phase. When a subspace change is detected, i.e. at $t= \that_j$, it enters the ``pPCA" phase. After $K$ iterations of projection-PCA, i.e. at $t=\that_j+K\alpha$, the new subspace has been accurately estimated. At this time, it enters the ``cPCA" phase. At $t=\that_j+K\alpha+ (\vartheta+1)\alpha$, cluster PCA is done. At this time, it enters the ``detect" phase again and remains in it until the next subspace change is detected.
We detect the $j$-th subspace change as follows. Let $\Phat_* := \Phat_{\that_{j-1}+K\alpha+(\vartheta+1)\alpha}$. We detect change by comparing the eigenvalues of $\frac{1}{\alpha}\sum_{t} (\I-\hat{\bm{P}}_*\hat{\bm{P}}_*{}')\lhat_t \lhat_t' (\I-\hat{\bm{P}}_*\hat{\bm{P}}_*{}')$ to a chosen threshold at every $t=u\alpha$ when the algorithm is in the ``detect" phase.

{\em Projection-PCA (p-PCA). }
We use projection-PCA to estimate the newly added subspace. The reason this cannot be done using standard PCA is as follows \cite{rrpcp_perf}. Let $\sum_t$ denote a sum over an $\alpha$ length time interval. 
Because of how $\lt$ is recovered, the error, $\et$, in the estimate of $\lt$, $\lhat_t$, is correlated with $\lt$. This is evident from \eqref{etdef_0}. 
Due to this, the dominant terms in the perturbation seen by standard PCA, $\frac{1}{\alpha} \sum_t \lhatt \lhatt{}' - \frac{1}{\alpha} \sum_t \lt {\lt}'$,
are $\frac{1}{\alpha} \sum_t \lt {\et}'$ and its transpose\footnote{When $\lt$ and $\et$ are uncorrelated and one of them is zero mean, it can be argued by law of large numbers that, whp, these two terms will be close to zero and $\frac{1}{\alpha} \sum_t \et {\et}'$ will be the dominant term. 
}.
Thus, when the condition number of $\cov(\lt)$ is large, it is not possible to argue that the perturbation will be small compared to the smallest eigenvalue of $\cov(\lt)$. With a large perturbation, either the $\sin \theta$ theorem \cite{davis_kahan} (that bounds the subspace error between the eigenvectors of the true and estimated sample covariance matrices) cannot be applied or it gives a very large and useless bound.

Projection-PCA addresses the above issue as follows. Consider the $j$-th subspace change. 
Let $\bm{P}_* := \P_{t_{j-1}}$, $\P_{\new}:=\P_{t_j,\new}$, and $\Phat_* := \Phat_{\that_{j-1}+K\alpha+(\vartheta+1)\alpha}$. 
Denote the time at which this change is detected by $\that_j$. As explained in \cite{rrpcp_isit15}, it is easy to show that, whp, $t_j \le \that_j \le t_j+2\alpha$.
After $\that_j$ we use SVD on $K$ different sets of $\alpha$ frames of the $\lhat_t$'s projected orthogonal to $\Phat_*$ to get $K$ estimates of the new subspace $\Span(\bm{P}_{\new})$.
We get the $k$-th estimate, $\Phat_{\new,k}$, as the left singular vectors of $(\I - \Phat_* \Phat_*{}')[\lhat_{\that_j+(k-1)\alpha+1}, \dots , \lhat_{\that_j+ k\alpha}]$ with singular values above a threshold. After each projection-PCA step, we update $\Phat_t$ as $\Phat_t = [\Phat_* \ \textit{} \Phat_{\new,k}]$. This ensures that the error $\et$ is smaller for the next projection-PCA interval compared to the previous one and hence the subspace estimates also improve with each iteration. The above is done $K$ times with $K$ chosen so that, by $t=\that_j+K\alpha$, the error in estimating the new subspace is below $r_\new \zeta$. This ensures that, at this time, $\SE_t \le r \zeta +r_\new\zeta$.

{\em Cluster PCA for deleting directions by re-estimating the subspace. } The next step is to delete the subspace $\Span(\P_{j,\old})$ from $\Phat_t$.  The goal of doing this is to reduce the subspace error from $(r+r_\new) \zeta$ to $r \zeta$. The simplest way to do this would be to re-estimate $\Span(P_t)$ by standard PCA, i.e. compute the eigenvectors of $\frac{1}{\alpha}\sum_{t = \that_j+K \alpha + 1}^{t = \that_j+K \alpha + \alpha} \lhat_t \lhat_t'$ with eigenvalues above a threshold. However, since $\lt$ and $\et$ are correlated, this will cause a problem similar to the one described above. It will work only if the condition number of $\cov(\lt)$ is small. This is impractical though since we assume that $\lt$ can be large but structured noise.
Hence we re-estimate the subspace by developing a generalization of the projection-PCA idea that we call {\em cluster PCA (cPCA)}. This relies on the clustering assumption given in Model \ref{clust_model}.

cPCA proceeds as follows. We first estimate the clusters as follows. We compute the empirical covariance matrix of $\lhatt$'s after the new subspace is accurately estimated: $\hat{\bm\Sigma}_{\mathrm{sample}} = \frac{1}{\alpha} \sum_{t=\that_j+K\alpha+1}^{t=\that_j+K\alpha+\alpha} \lhatt \lhatt'$ and obtain its EVD. Let $\hat\lambda_i$ denote its $i$-th largest eigenvalue.
To get the first cluster $\hat{\mathcal{G}}_{j,1}$, we start with the index of the first (largest) eigenvalue and keep adding indices of the smaller eigenvalues to it until $\frac{\hat\lambda_1}{\hat\lambda_{i+1}} > \ghatp$ but
$\frac{\hat\lambda_1}{\hat\lambda_{i}} \le \ghatp$ or until the next eigenvalue $\hat\lambda_{i+1} < 0.25 \lamtrain$. We set  $\hat{\mathcal{G}}_{j,1} = \{1,2, \dots i\}$.
To get the second cluster we repeat the same procedure but starting with the $(i+1)$-th eigenvalue. We repeat this until there is no eigenvalue larger than $0.25 \lamtrain$. Observe that $\ghatp$ is set to a value that is a little larger than $g^+$ (see Theorem \ref{thm1_cor}). This is needed to allow for the fact that $\hat\lambda_i$ is not equal to the $i$-th eigenvalue of $\Lamj$ but is within a small margin of it. For the same reason, we need to also use a ``zeroing" threshold of $0.25 \lamtrain$ (notice that $\hat{\bm\Sigma}_{\mathrm{sample}}$ is not exactly low rank). This, along with appropriately setting $\ghatp$, and with using the separation condition from Model \ref{clust_model} ensures that, whp, all the clusters are correctly recovered.

Let $\bm{G}_{j,k}:= (\P_j)_{\hat{\mathcal{G}}_{j,k}}$.
Next, we estimate the subspace corresponding to the first cluster, $\Span(\bm{G}_{j,1})$ by standard PCA on $[\lhat_{\that_j+(K+1)\alpha+1}, \dots , \lhat_{\that_j+ (K+1)\alpha+\alpha}]$, i.e., by computing its top $|\hat{\mathcal{G}}_{j,1}|$ left singular vectors. Since the cluster's condition number is small (bounded by $g^+$), this works. Denote the basis for the estimated subspace by $\hat{\bm{G}}_{j,1}$. To estimate the subspace corresponding to the second cluster, we project the next set of $\alpha$ $\lhatt$'s orthogonal to $\hat{\bm{G}}_{j,1}$, followed by standard PCA to compute the top $|\hat{\mathcal{G}}_{j,2}|$ left singular vectors \cite{rrpcp_perf}. To estimate the $k$-th cluster's subspace, we do a similar thing but with projecting orthogonal to the estimated subspace corresponding to the previous $k-1$ clusters \cite{rrpcp_perf}.

%


\section{Proof Outline for Theorem \ref{thm1_cor} and Corollary \ref{thm1_cor_corol}} \label{outline}
The proof proceeds by induction.
Consider the $j$-the subspace change interval. 
Let $\bm{P}_* := \P_{t_{j-1}} = \P_{t_j-1}$, $\P_{\new}:=\P_{t_j,\new}$, and $\Phat_* := \Phat_{\that_{j-1}+K\alpha+(\vartheta+1)\alpha}$.
Assume that there have been no (false) change detects in the interval $[\that_{j-1}+K\alpha+(\vartheta+1)\alpha+1, t_j-1]$. Thus, $\Phat_{t_j-1} = \Phat_*$. Assume also that the subspace, $\Span(\P_{t_j-1})=\Span(\P_{*})$, has been accurately recovered, i.e., $\SE_{t_j-1} = \mathrm{dif}(\Phat_*,\P_*) \le  r \zeta$.
Conditioned on this, we use the following steps to show that, whp, the same conclusions hold at $t=t_{j+1}-1$ as well.
\ben
\item First, we  show that the subspace change is detected within a short delay of $t_j$. We show that $t_j \le \that_j \le t_j + 2 \alpha$ whp. This is done in Lemma \ref{det}.

\item At $t=\that_j + \alpha$, the first projection-PCA step is done to get the first estimate, $\Phat_{\new,1}$, of $\Span(\P_{\new})$. This computes the top singular vectors of $[\lhat_{\that_j+1}, \lhat_{\that_j+2}, \dots, \lhat_{\that_j+\alpha}]$ projected orthogonal to $\Span(\Phat_*)$.
In the interval $[t_j, \that_j + \alpha-1]$, the new subspace is not estimated at all, i.e., $\Phat_t = \Phat_*$ while $\P_t = [\P_* \ \P_\new]$ and so $\SE_t \le 1$. Thus, the noise seen by the projected sparse recovery step, $\bm{b}_t$, is the largest in this interval. Hence the error $\et$ is also the largest for the $\lhatt$'s used in the first projection-PCA step. However, due to slow subspace change, even this error is not too large. Because of this, and because $\P_\new$ is dense, 
 we can argue that $\Phat_{\new,1}$ is a good estimate. We show that $\mathrm{dif}([\Phat_* \ \Phat_{\new,1}], \P_\new) \le 0.19 < 1$.  Thus, at this time, $\SE_t=\mathrm{dif}([\Phat_* \ \Phat_{\new,1}], [\P_* \ \P_\new]) \le r \zeta + 0.19$. This is shown in Lemmas \ref{pPCA} and \ref{zetadecay}.

\item At $t=\that_j + k\alpha$,  for $k=1,2,\dots, K$, the $k$-th projection-PCA step is done to get the $k$-th estimate, $\Phat_{\new,k}$.  This computes the top singular vectors of $[\lhat_{\that_j+(k-1)\alpha+1}, \lhat_{\that_j+(k-1)\alpha+2}, \dots, \lhat_{\that_j+k\alpha}]$ projected orthogonal to $\Span(\Phat_*)$.
    After the first projection-PCA step, $\Phat_t = [\Phat_* \ \Phat_{\new,1}]$ and this reduces $\bm{b}_t$ and hence $\et$ for the $\lhatt$'s in the next $\alpha$ frames. This fact, along with the fact that $\et$ is approximately sparse with support $\T_t$ and $\T_t$ follows Model \ref{sbyrho}, in turn, imply that the perturbation seen by the second projection-PCA step is even smaller. So $\Phat_{\new,2}$ is a more accurate estimate of $\Span(\bm{P}_\new)$ than $\Phat_{\new,1}$. Repeating the same argument, the third estimate is even better and so on. Under the theorem's assumptions, we can show that $\mathrm{dif}([\Phat_* \ \Phat_{\new,k}], \P_\new) \le 0.19\cdot0.1^{k-1} + 0.15 r_\new \zeta$ and so, at $t=\that_j+k \alpha$, $\SE_t \le r \zeta + 0.19\cdot0.1^{k-1} + 0.15 r_\new \zeta$. This is shown in Lemmas \ref{pPCA} and \ref{zetadecay}. The most important idea here is to use the fact that $\et$ is approximately supported on $\T_t$ (shown in Lemma \ref{cslem_cor}) and the support change model on $\T_t$ (this is used in Lemma \ref{blockdiag1}).



\item The above is repeated $K$ times with $K$ set to ensure that, by $t=\that_j + K\alpha$, $\mathrm{dif}([\Phat_* \ \Phat_{\new,K}], \P_\new) \le r_\new \zeta$ and so, at this time, $\SE_t \le (r + r_\new) \zeta$.

\item In the interval $[\that_j + K\alpha+1, \that_j + K\alpha+ (\vartheta+1) \alpha]$, cluster PCA is done to delete $\Span(P_{t_{j}, \old})$. At the end of this step, we can show that the bound on $\SE_t$ has reduces from $(r + r_\new) \zeta$ to $r \zeta$.  This is proved in Lemmas \ref{cluster_lem}, \ref{cPCA} and \ref{zetadecay}.

\item Finally, we also argue that there are no (false) subspace change detects for any $t \in [\that_j + K \alpha + (\vartheta+1) \alpha+1, t_{j+1}-1]$. This ensures that $\that_{j+1} \ge t_{j+1}$. This is done in Lemma \ref{falsedet_del}.
\een
To prove the theorem, we first show that the initial subspace is recovered accurately enough, i.e., $\SE_t \le r \zeta$ at $t=t_\train+1$, whp. This is done in Lemma \ref{initsub_cor}. Then, repeating the above argument for each subspace change period, we can obtain the subspace error bounds of the theorem. We set $t_\train$ and $\alpha$ to ensure that the probability of the good events is at least $1-3n^{-10}$. The sparse recovery error bounds can be obtained by using these bounds and quantifying the discussion of Sec. \ref{algo_sec}. This is done in Lemma \ref{cslem_cor}.

The main part of the proof is the analysis of the projection-PCA steps (for subspace addition) and the cluster PCA steps (for subspace deletion). We explain its key ideas next. Assume for this approximate analysis that $\wt=0$ and that $\mathrm{dif}(\Phat_*, \P_*)=0$ (previous subspace is perfectly estimated).
In the $k$-th projection-PCA step the goal is to bound $\zeta_{\new,k}:=\mathrm{dif}([\Phat_*, \Phat_{\new,k}], \P_\new)$ conditioned on ``accurate recovery so far". Here ``accurate recovery so far" means $\mathrm{dif}(\Phat_*, \P_*) \approx 0$ and $\zeta_{\new,k-1} \le \zeta_{\new,k-1}^+$. Before $k=1$, there is no estimate of $\P_\new$ and thus we have $\zeta_{\new,0} \le \zeta_{\new,0}^+=1$.

We first use the $\sin \theta$ theorem \cite{davis_kahan} (Theorem \ref{sintheta}) to get a bound on $\zeta_{\new,k}$. This is done in  Lemma \ref{zetakbnd}.
We then bound the terms in this bound using the matrix Azuma inequality from \cite{tail_bound} (Corollaries \ref{azuma_hermitian} and \ref{azuma_norm}). This is done in Lemmas \ref{Ak_cor}, \ref{Akperp_cor} and \ref{calHk_cor}.
%
%
Using the $\sin \theta$ theorem followed by using matrix Azuma for lower bounding $\lambda_{\min}(\frac{1}{\alpha} \sum_t (I - \Phat_* \Phat_*{}') \lt \lt' (I - \Phat_* \Phat_*{}'))$, we can conclude that
\bea
\zeta_{\new,k} \lesssim  \frac{ \|\mathrm{perturbation}\|_2 }{\frac{1}{1-b^2} \lambda_\new^- - \epsilon - \|\mathrm{perturbation}\|_2 }
\lesssim  \frac{ 2 \big\|\frac{1}{\alpha} \sum_t (I - \Phat_* \Phat_*{}') \lt \et' \big\|_2 +  \big\|\frac{1}{\alpha} \sum_t \et \et'\big\|_2}{ \frac{1}{1-b^2}\lambda_\new^-  - \epsilon - (2\|\frac{1}{\alpha} \sum_t (I - \Phat_* \Phat_*{}') \lt \et' \big\|_2 + 2\big\|\frac{1}{\alpha} \sum_t \et \et'\big\|_2)}
\label{zetanew_approx}
\eea
Here $\mathrm{perturbation} = \frac{1}{\alpha} \sum_t (I - \Phat_* \Phat_*{}') \lhatt \lhatt' (I - \Phat_* \Phat_*{}') - \frac{1}{\alpha} \sum_t (I - \Phat_* \Phat_*{}') \lt \lt' (I - \Phat_* \Phat_*{}')$. Since $\sum_t (\lhatt \lhatt' - \lt \lt') = \sum_t (\lt \et' + \et \lt' + \et \et')$, the bound used in the second inequality above follows.
The next task is to bound the two perturbation terms using the matrix Azuma inequality. This is done in Lemma \ref{calHk_cor}.
As explained in Sec \ref{algo_sec}, under ``accurate recovery so far", it can be shown that $\et$ satisfies \eqref{etdef_0} and that $ \big\|\invterm \big\|_2 \le 1.2$. This is proved in Lemma \ref{cslem_cor}. Notice that, when $\wt=0$, $\et$ is exactly supported on $\T_t$. Using the expression for $\et$, expanding $\lt$ in terms of $\bnu_\tau$'s, manipulating the resulting terms carefully (as explained in Sec \ref{general_decomp}), and applying the matrix Azuma inequality, one can show that, whp,
\[
\big\|\frac{1}{\alpha} \sum_{t=t_0}^{t_0+\alpha-1} (\I - \Phat_* \Phat_*{}') \lt \et' \big\|_2 \le 4 \epsilon + \frac{1}{\alpha}\frac{b^2}{1-b^2} (r\gamma^2) + \mathrm{LargeTerm}_k
\]
where $t_0 = \that_j+(k-1)\alpha+1$ is the first time instant of the $k$-the projection-PCA interval and
\[
\mathrm{LargeTerm}_k:=\big\|\frac{1}{\alpha} \sum_{t=t_0}^{t_0+\alpha-1} \sum_{\tau=t_0}^{t}  b^{2t-2\tau}  \P_\new \Lamtnew \P_\new' (\I - \Phat_* \Phat_*{}' - \Phat_{\new,k-1} \Phat_{\new,k-1}{}') \ITt \invterm \ITt' \big\|_2.
\]
In the above, $\epsilon$ is very small (comes from applying Azuma for zero-mean terms). The second term is also very small since $1/\alpha \le (r_\new \zeta)^2$. Thus, $\mathrm{LargeTerm}_k$ is the only significant term. To bound it, for $k=1$, we use the fact that $\Phat_{\new,k-1} = \Phat_{\new,0} = [.]$ and hence $(I - \Phat_* \Phat_*{}' - \Phat_{\new,k-1} \Phat_{\new,k-1}{}') \P_\new \approx \P_\new$ and $\P_\new$ is dense. From Model \ref{dense_model}, $ \big\| \P_\new{}' \ITt \big\|_2 \le 0.02$. Thus, using $ \big\|\invterm \big\|_2 \le 1.2$ and slow subspace change, \eqref{anew_small}, we get that, for $k=1$,
\[
\big\|\frac{1}{\alpha} \sum_{t=t_0}^{t_0+\alpha-1} (I - \Phat_* \Phat_*{}') \lt \et' \big\|_2  \lesssim  \big\|\mathrm{LargeTerm}_1 \big\|_2 \le \frac{1}{1-b^2} 1.2 \cdot 0.02 \cdot \lambda_\new^+ \le \frac{1}{1-b^2} 1.2 \cdot 0.02 \cdot 3 \lambda^-  = 0.072 \frac{1}{1-b^2}\lambda^-.
\]

For $k>1$, we cannot show that $ (\I - \Phat_* \Phat_*{}' - \Phat_{\new,k-1} \Phat_{\new,k-1}{}') \P_\new$ is dense\footnote{The partial result of \cite{rrpcp_perf} assumed that this holds and then used the above approach to get a performance guarantee.}. Thus we use a different approach. We apply the Cauchy-Schwartz inequality (Lemma \ref{CSmat}) with $\bm{X}_t:=\sum_{\tau=t_0}^{t}  b^{2t-2\tau}  \P_\new \Lamtnew \P_\new' (I - \Phat_* \Phat_*{}' - \Phat_{\new,k-1} \Phat_{\new,k-1}{}')$ and $\bm{Y}_t:=\ITt \invterm \ITt'$, followed by using Model \ref{sbyrho} on $\T_t$ to bound  $\lambda_{\max}( \frac{1}{\alpha} \sum_{t=t_0}^{t_0+\alpha-1} \bm{Y}_t \bm{Y}_t')$.

It is easy to see that $\ds \lambda_{\max}( \frac{1}{\alpha} \sum_{t=t_0}^{t_0+\alpha-1} \bm{X}_t \bm{X}_t') \le \max_t \|\bm{X}_t\|_2^2$ and $\ds \|\bm{X}_t\|_2 \le \frac{1}{1-b^2} \lambda_\new^+ \zeta_{\new,k-1}^+ \le  3 \zeta_{\new,k-1}^+  \frac{1}{1-b^2}\lambda^-$. 

We bound $\ds \lambda_{\max}( \frac{1}{\alpha} \sum_{t=t_0}^{t_0+\alpha-1} \bm{Y}_t \bm{Y}_t')$ by using Model \ref{sbyrho} on support change. This is done in Lemma \ref{blockdiag1}. This lemma exploits the fact that $\ds \frac{1}{\alpha} \sum_t \bm{Y}_t \bm{Y}_t' = \frac{1}{\alpha} \sum_t \ITt (\invterm)^2 \ITt'$ is a block-banded matrix and, for each block, the summation is not over $\alpha$ frames but only over $\beta$ frames with $\beta$ being much smaller. For example, if Model \ref{sbyrho} holds with $\rho=1$, this matrix is block diagonal; if it holds with $\rho=2$, then it is block-tridiagonal and so on. Thus, using $\|\invterm\|_2 \le 1.2$, we can show that $\lambda_{\max}( \frac{1}{\alpha} \sum_{t=t_0}^{t_0+\alpha-1} \bm{Y}_t \bm{Y}_t') \le \frac{1}{\alpha} \rho^2 \beta (1.2)^2 \le 0.0001 \cdot (1.2)^2$.

By Cauchy-Schwartz and the above bounds, we can conclude that, for $k>1$,
\[
\big\|\frac{1}{\alpha} \sum_{t=t_0}^{t_0+\alpha-1} (I - \Phat_* \Phat_*{}') \lt \et' \big\|_2  \lesssim \big\| \mathrm{LargeTerm}_k \big\|_2 \le
 \sqrt{0.0001 \cdot (1.2)^2} \cdot  3 \cdot \zeta_{\new,k-1}^+   \frac{1}{1-b^2}\lambda^- = 0.036\cdot \zeta_{\new,k-1}^+  \frac{1}{1-b^2}  \lambda^-
\]

Using an approach similar to the one outlined above one can also bound the $\et \et'$ term. This is actually easier to bound because one does not need Cauchy-Schwartz.
For $k=1$, we get
\[
\big\|\frac{1}{\alpha} \sum_{t=t_0}^{t_0+\alpha-1} \et \et' \big\|_2 \lesssim \rho^2 \beta (1.2)^2 \cdot 0.02^2 \frac{1}{1-b^2} 3 \lambda^- \le 0.0001 \cdot 1.44 \cdot 0.02^2 \cdot 3 \frac{1}{1-b^2}\lambda^- <0.00002 \frac{1}{1-b^2}\lambda^-.
\]
and for $k>1$,
\[
\big\|\frac{1}{\alpha} \sum_{t=t_0}^{t_0+\alpha-1} \et \et' \big\|_2 \lesssim \rho^2 \beta (1.2)^2 \cdot (\zeta_{\new,k-1}^+ )^2  \frac{1}{1-b^2} 3 \lambda^- \le 0.0001 \cdot 1.44 \cdot 3 \cdot (\zeta_{\new,k-1}^+)^2  \frac{1}{1-b^2}\lambda^- < 0.075 (\zeta_{\new,k-1}^+)^2  \frac{1}{1-b^2}\lambda^-
\]

Using the above bounds in \eqref{zetanew_approx} and using $\lambda_\new^- \ge \lambda^-$, we can conclude that,
\[
\zeta_{\new,1}^+ \lesssim 0.19, \ \ \zeta_{\new,k}^+ \lesssim \frac{2 \cdot 0.036 \cdot \zeta_{\new,k-1}^+  +  0.075 (\zeta_{\new,k-1}^+)^2  }{1 - \mathrm{NumeratorTerm}}
\]
Here $\mathrm{NumeratorTerm}$ refers to the expression from the numerator. From the above, it is easy to see that $\zeta_{\new,2}^+ \lesssim 0.19$ and, proceeding similarly, $\zeta_{\new,k}^+ \lesssim 0.19$. Using this to get a loose bound on $\mathrm{NumeratorTerm}$, we can conclude that $\zeta_{\new,k}^+ \lesssim 0.1 \zeta_{\new,k-1}^+ \le 0.19 \cdot 0.1^{k-1}$.

The above approximate analysis ignores the fact that $\Span(\Phat_*) \neq \Span(\P_*)$. It also ignores the unstructured noise term $\wt$ and the other small terms that come with each application of matrix Azuma. With incorporating all this, and with using $\mathrm{dif}(\Phat_*,\P_*) \le r \zeta$ (instead of zero), we can conclude that $\zeta_{\new,k} \le \zeta_{\new,k}^+ \le 0.19 \cdot 0.1^{k-1} + 0.15 r_\new \zeta$. By picking $K$ carefully, we get that $\zeta_{\new,K} \le r_\new \zeta$ and thus $\SE_t \le (r+r_\new)\zeta$ after the $K$-the projection PCA step.

The analysis of cluster PCA is a significant generalization of the above ideas. The slow subspace change assumption is replaced by the clustering assumption at various places in its proof.

%
%
%

\subsection{Novelty in proof techniques} \label{novelty}
This work has two key contributions - it analyzes ReProCS with the deletion step (done via cluster PCA), and it obtains a complete result for ReProCS and ReProCS-cPCA for the case when the $\lt$'s are correlated over time.

While the overall proof structure described above is similar to that used in \cite{rrpcp_isit15}, the proof approach for proving the ``main lemmas" is quite different for the correlated $\lt$'s case. The first such difference is seen in Fact \ref{ltbnds} which shows how to bound $\|(I - \Phat_{t-1} \Phat_{t-1}{}') \lt\|_2$ for when $\lt$ is correlated over time. This is used to prove Lemma \ref{cslem_cor}.
The second and most significant difference is in proving the matrix-Azuma-based lemmas for projection-PCA and for cluster PCA. These are proved in Sec \ref{3_pfs} and \ref{3_pfs_del}. The matrix Azuma inequality \cite[Theorem 7.1]{tail_bound} is significantly harder to apply than the matrix Hoeffding \cite{tail_bound}. There are two reasons for this. First we need to get the sums of conditional expectations of quantities needed to apply this result in a form that can be bounded easily. The simplest way of doing this can lead to loose bounds. To get the desired bounds, we need to rewrite $\lt$ in terms of past $\bnu_t$'s and use the fact that $b^\alpha < (r_\new \zeta)$ (is very small) and that $\sum_{\tau=t-\alpha+1}^t b^t \le 1/(1-b) \le 1/(1-b_0) < 1.12$. In words, the contribution of very old $\bnu_t$'s is negligible and the contribution due to the last $\alpha$ $\bnu_t$'s is only slightly larger than that of one $\bnu_t$.%

The third main difference is the analysis of the automatic cluster estimation step and of the cluster PCA algorithm for deleting the subspace. The fact that the former is correct whp is shown in Lemma \ref{cluster_lem}. This uses Lemma \ref{cluster_lem_azuma} and the separation condition from Model \ref{clust_model} to show that, whp, the clusters obtained by using a threshold of $\ghatp$ on the condition numbers of the eigenvalues of the empirical covariance matrix computed with the $\lhatt$'s are exactly the same as the true clusters defined in Model \ref{clust_model}. The analysis of cluster PCA (Lemma \ref{cPCA}) relies on matrix-Azuma-based Lemmas \ref{Atil_cor}, \ref{Atilperp_cor}, and \ref{Htil_cor}. These are new too and are proved using a significant generalization of the approach used for analyzing the projection-PCA step.

%
%
%

\section{Proof of Theorem \ref{thm1_cor} and Corollary \ref{thm1_cor_corol}} \label{pf_thm1_cor}
We first give the most general denseness assumption and the most general model on $\T_t$ in Sec. \ref{gens} below. Next, we define quantities that will be used in the proofs in Sec. \ref{defs}.
The basic lemmas that are used several times in the proof are stated next in Sec. \ref{basiclems}. The five main lemmas leading to the proof and the proof itself are given in Sec. \ref{pf_thm}. We then give the seven key lemmas that are used to prove the main lemmas in Sec. \ref{keylems}, followed by the proofs of the main lemmas in Sec. \ref{pf_mainlems}. The proofs of the key lemmas are the long ones and these are given in Sec. \ref{3_pfs} and \ref{3_pfs_del}.

\subsection{Generalizations} \label{gens}
Consider the denseness assumption in Model \ref{dense_model}. This can be generalized as follows.
\begin{sigmodel} \label{dense_model_general}
For a basis matrix $\bm{P}$, define the (un)denseness coefficient
\[
\kappa_s(\bm{P}) := \max_{|\mathcal{T}| \leq s} \| {\I_{\mathcal{T}}}' \bm{P} \|_2
\]
Assume that
\beq \label{kappa_dense}
\kappa_{2s,*} :=\max_j \kappa_{2s}(\bm{P}_{t_j}) \leq 0.3 \quad \text{ and } \quad \kappa_{2s,\rmnew}:= \max_j \kappa_{2s}(\bm{P}_{t_j,\new}) \leq 0.02.
\eeq
\end{sigmodel}

\begin{lem} Model \ref{dense_model} is a special case of Model \ref{dense_model_general}. \end{lem}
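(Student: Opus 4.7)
The plan is to simply invoke the $r \times s$ norm bound that is already recalled in the fact preceding this lemma. Indeed, since spectral norm is invariant under transposition, $\kappa_{2s}(\bm{P}) = \max_{|\mathcal{T}| \le 2s} \|\I_{\mathcal{T}}{}' \bm{P}\|_2 = \max_{|\mathcal{T}| \le 2s} \|\bm{P}'\I_{\mathcal{T}}\|_2$, so it suffices to upper bound $\|\bm{P}'\I_{\mathcal{T}}\|_2$ for $\bm{P} \in \{\bm{P}_{t_j},\bm{P}_{t_j,\new}\}$ and $|\mathcal{T}| \le 2s$.

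First I would fix $j$ and a set $\mathcal{T}$ with $|\mathcal{T}| \le 2s$, and observe that $\bm{P}_{t_j}{}' \I_{\mathcal{T}}$ is an $r_j \times |\mathcal{T}|$ matrix whose columns are exactly the vectors $\bm{P}_{t_j}{}' \I_i$ for $i \in \mathcal{T}$. Applying the elementary bound $\|M\|_2 \le \sqrt{|\mathcal{T}|} \max_i \|M_i\|_2$ recalled in the paper, together with the per-column bound $\|\bm{P}_{t_j}{}' \I_i\|_2^2 \le \mu r_j/n$ from Model \ref{dense_model}, gives
\[
\|\bm{P}_{t_j}{}' \I_{\mathcal{T}}\|_2 \;\le\; \sqrt{|\mathcal{T}|}\,\sqrt{\mu r_j/n} \;\le\; \sqrt{2s\,\mu\,r/n}.
\]
Plugging in the hypothesis $2sr\mu \le 0.09 n$ yields $\kappa_{2s}(\bm{P}_{t_j}) \le \sqrt{0.09} = 0.3$, and taking the maximum over $j$ proves $\kappa_{2s,*} \le 0.3$.

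Next, repeating exactly the same argument with $\bm{P}_{t_j}$ replaced by $\bm{P}_{t_j,\new}$ and with $r_j$ replaced by $r_{j,\new} \le r_\new$, and using the companion per-column bound $\|\bm{P}_{t_j,\new}{}' \I_i\|_2^2 \le \mu r_{j,\new}/n$ together with the hypothesis $2s r_\new \mu \le 0.0004 n$, gives $\kappa_{2s}(\bm{P}_{t_j,\new}) \le \sqrt{0.0004} = 0.02$, hence $\kappa_{2s,\new} \le 0.02$. Both bounds in \eqref{kappa_dense} are thus implied, which is exactly the content of Model \ref{dense_model_general}.

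There is essentially no obstacle here: the only step with any content is the Cauchy-Schwartz-type column bound $\|M\|_2 \le \sqrt{s}\max_i \|M_i\|_2$, and this has already been stated. The lemma is included only to license the reader, in subsequent sections, to replace the explicit incoherence Model \ref{dense_model} by the cleaner coefficient-based Model \ref{dense_model_general} wherever denseness is invoked in the main proofs.
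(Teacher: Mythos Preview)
Your proof is correct and takes essentially the same approach as the paper's own proof: both reduce to the inequality $\kappa_s(\bm{P}) \le \sqrt{s}\,\kappa_1(\bm{P})$ (equivalently, the column bound $\|M\|_2 \le \sqrt{s}\max_i\|M_i\|_2$), then substitute the incoherence bounds from Model~\ref{dense_model} and the numerical hypotheses $2sr\mu \le 0.09n$, $2sr_\new\mu \le 0.0004n$.
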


\begin{proof}
Recall Model \ref{dense_model}. For any basis matrix $\P$, $\left[\kappa_1(\P)\right]^2 = \max_{i} \|\P' \I_i\|_2^2$.  Using the triangle inequality, it is easy to show that $\kappa_s(\bm{P}) \leq \sqrt{s}\kappa_1(\bm{P})$ \cite{rrpcp_perf}. Using this, the claim follows. 
\end{proof}

The proof of Theorem \ref{thm1_cor} only uses \eqref{kappa_dense} for the denseness assumption.

The reason for defining  the (un)denseness coefficient $\kappa_{s}(\P)$ as above is the following lemma from \cite{rrpcp_perf}.
\begin{lem}[\cite{rrpcp_perf}]\label{kappadelta}
For a basis matrix $\bm{P}$, $\delta_s(\I - \bm{P}\bm{P}') = \left(\kappa_s(\bm{P})\right)^2$.
\end{lem}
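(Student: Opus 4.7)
The plan is to exploit two elementary facts: $\I-\bm{P}\bm{P}'$ is an orthogonal projection, so it is idempotent with $\|\I-\bm{P}\bm{P}'\|_2 \le 1$; and any $s$-sparse vector $\bm{x}$ with support $\T$, $|\T|\le s$, can be written $\bm{x} = \I_\T \I_\T{}' \bm{x}$. Combining these, for any such $\bm{x}$,
\[
\|(\I-\bm{P}\bm{P}')\bm{x}\|_2^2 \;=\; \bm{x}'(\I-\bm{P}\bm{P}')\bm{x} \;=\; \|\bm{x}\|_2^2 - \|\bm{P}'\bm{x}\|_2^2 .
\]
So the task reduces to estimating $\|\bm{P}'\bm{x}\|_2^2$ in terms of $\kappa_s(\bm{P})$.

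For the upper direction on $\delta_s$, I would simply bound
\[
\|\bm{P}'\bm{x}\|_2^2 \;=\; \|\bm{P}'\I_\T (\I_\T{}'\bm{x})\|_2^2 \;\le\; \|\I_\T{}'\bm{P}\|_2^2\, \|\bm{x}\|_2^2 \;\le\; (\kappa_s(\bm{P}))^2 \|\bm{x}\|_2^2,
\]
which yields the lower RIP-type inequality $\|(\I-\bm{P}\bm{P}')\bm{x}\|_2^2 \ge (1 - (\kappa_s(\bm{P}))^2)\|\bm{x}\|_2^2$. The companion upper inequality $\|(\I-\bm{P}\bm{P}')\bm{x}\|_2^2 \le \|\bm{x}\|_2^2 \le (1 + (\kappa_s(\bm{P}))^2)\|\bm{x}\|_2^2$ is immediate from $\|\I-\bm{P}\bm{P}'\|_2 \le 1$. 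By definition of the RIC this gives $\delta_s(\I-\bm{P}\bm{P}') \le (\kappa_s(\bm{P}))^2$.

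For the matching lower direction, I would exhibit an $s$-sparse vector that saturates the first inequality. Pick $\T^*$ achieving the maximum in the definition of $\kappa_s(\bm{P})$, i.e.\ $|\T^*|\le s$ with $\|\I_{\T^*}{}'\bm{P}\|_2 = \kappa_s(\bm{P})$, and pick a unit vector $\bm{u}\in\mathbb{R}^{|\T^*|}$ with $\|\bm{P}'\I_{\T^*}\bm{u}\|_2 = \|\I_{\T^*}{}'\bm{P}\|_2 = \kappa_s(\bm{P})$ (such $\bm{u}$ exists since the operator norm is attained on the unit sphere in finite dimensions). Setting $\bm{x} := \I_{\T^*}\bm{u}$ gives an $s$-sparse unit vector for which $\|(\I-\bm{P}\bm{P}')\bm{x}\|_2^2 = 1 - (\kappa_s(\bm{P}))^2$, forcing $\delta_s(\I-\bm{P}\bm{P}') \ge (\kappa_s(\bm{P}))^2$. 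Combined with the previous paragraph this proves equality.

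There is no real obstacle; the one subtlety to watch is that the RIC is defined as the smallest $\delta_s$ making \emph{both} the upper and lower inequalities hold, so even though only the lower one is tight, I must explicitly verify that the upper inequality is not the binding constraint (it is not, since $(\I-\bm{P}\bm{P}')$ is a contraction). After that the proof is just the chain of identities above.
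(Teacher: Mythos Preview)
Your proof is correct. The paper does not actually supply its own proof of this lemma---it is simply cited from \cite{rrpcp_perf}---but your argument is the natural one: use idempotence of the projection $\I-\bm{P}\bm{P}'$ to write $\|(\I-\bm{P}\bm{P}')\bm{x}\|_2^2=\|\bm{x}\|_2^2-\|\bm{P}'\bm{x}\|_2^2$, bound $\|\bm{P}'\bm{x}\|_2$ via $\|\I_{\T}{}'\bm{P}\|_2$, and exhibit a maximizing support and singular vector for tightness. There is essentially no alternative route for this identity, and you have handled the one subtlety (that the upper RIP inequality is never binding because the projection is a contraction) correctly.
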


Next consider the support change model given in Model \ref{sbyrho}. This is one special case of the most general model that works for our result. This model was introduced in \cite{rrpcp_isit15}. We explain it here. What we need to prevent is $\T_t$ occupying the same indices for too many time instants in a given interval. If $\T_t$ does not change `enough' in a time interval of length $\alpha$, we will be unable to see enough entries of a given index of $\lt$ in order to be able to accurately fill in the missing ones.
The following model quantifies `enough' for our purposes.  The number of time instants for which an index is part of $\T_t$ is determined both by how often this set changes, and by how much it moves when it changes. The latter is parameterized by $\rho$ which controls how much the set moves when it changes.  For example $\rho=1$ would require that distinct sets be disjoint, and $\rho=2$ would mean that at least half of the set is displaced whenever it changes.  The parameter $h^+ \in (0,1)$ represents the maximum fraction of time for which the set remains in a given area in a time interval of length $\alpha$.  The smaller $h^+$, the more frequently the set will need to change in order to satisfy the model. Our result requires a bound on the product $\rho^2 h^+$.  

\begin{sigmodel}\label{general_model}
Let $\rho$ be a positive integer. Split $[1, t_{\max}]$ into intervals of length $\alpha$. Use $\J_u:=[(u-1)\alpha+1,u\alpha]$ to denote the $u$-th interval.
For a given interval, $\J_u$, let $\mathcal{T}_{(i),u}$ for $i = 1,\dots, l_u$ be mutually disjoint subsets of $\{ 1, \dots, n\}$ and let
 $\J_{(i),u}, i=1,2, \dots, l_u$ be {\em a} partition\footnote{i.e. the $\J_{(i),u}$'s are mutually disjoint intervals and their union equals $\J_u$} of the interval $\J_u$ so that
\begin{equation}\label{union}
\text{  for all $t \in \J_{(i),u}$,   }
\T_t \subseteq \mathcal{T}_{(i),u} \cup \mathcal{T}_{(i+1),u} \cup \dots \cup \mathcal{T}_{(i+\rho-1),u}
\end{equation}
Define
\begin{align}\label{alphabyp}  
h_u\left(\alpha;\{\mathcal{T}_{(i),u}\}_{\substack{\\i=1,\dots,l_u}}, \{\J_{(i),u} \}_{\substack{\\i=1,\dots,l_u}} \right)
&:= \max_{i=1,2,\dots l_u}\big| \J_{(i),u} \big|
\end{align}
and define $h_u^*(\alpha)$ as the minimum over all choices of $\mathcal{T}_{(i),u}$ and over all choices of the partition $\J_{(i),u}$.
%
\begin{align}\label{hstar}
h_u^*(\alpha) &:=
\min_{\substack{\text{\em all choices of mutually disjoint $\mathcal{T}_{(i),u}, i=1,2, \dots l_u$}  \\  \text{\em and all choices of mutually disjoint $\J_{(i),u},i=1,2,\dots l_u$} \\ \text{\em so that $\cup_{i=1}^{l_u} \J_{(i),u} = \J_u$ and \eqref{union} holds}  } }
h_u\left(\alpha;\{\mathcal{T}_{(i),u}\}_{\substack{\\i=1,\dots,l_u}}, \{\J_{(i),u} \}_{\substack{\\i=1,\dots,l_u}} \right)
\end{align}
Assume that $|\T_t|\le s$ and that for all $u = 1,\dots,\llceil\frac{t_{\max}}{\alpha}\rrceil$,
\[
h_u^*(\alpha)\leq h^+\alpha \ \text{ with }  h^+ \le \frac{0.0001}{\rho^2}.
\]
\end{sigmodel}
In the above model, $h_u^*(\alpha)$ provides a bound on how long $\T_t$ remains in a given ``area", $\mathcal{T}_{(i),u} \cup \mathcal{T}_{(i+1),u} \cup \dots \cup \mathcal{T}_{(i+\rho-1),u}$ during the interval $\J_u$, for the best allocation of $\T_t$'s to a given ``area" and the best choice of the ``areas."

Notice that \eqref{union} can always be trivially satisfied by choosing $l_u =1$, $\mathcal{T}_{(1),u} = \{  1, \dots, n \}$ and $\J_{(1),u}=\J_u$, but this will give $h_u(\alpha;.) = \alpha$ and hence is not a good choice. This is why we take a minimum over all choices.

\begin{lem}\label{spc_case}[{\cite{rrpcp_isit15}}]
Model \ref{sbyrho} is a special case of Model \ref{general_model} above with $h^+= \frac{\beta}{\alpha}$.
\end{lem}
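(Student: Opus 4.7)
The plan is to show, for every interval $\J_u=[(u-1)\alpha+1,u\alpha]$, how to build $l_u$, a partition $\{\J_{(i),u}\}$, and mutually disjoint subsets $\{\T_{(i),u}\}$ that witness Model~\ref{general_model} with $h^+=\beta/\alpha$ and the same $\rho$. The parameter check is then immediate: $\rho^2 h^+ = \rho^2\beta/\alpha \le 0.0001$ follows directly from $\rho^2\beta \le 0.0001\alpha$ in Model~\ref{sbyrho}, which is the only numerical assumption Model~\ref{general_model} places on $(h^+,\rho)$.

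The construction I would use aligns the partition with the intervals of constant support. Writing the distinct supports in $\J_u$ in order as $\T^{[k_1]},\T^{[k_1+1]},\dots,\T^{[k_1+l_u-1]}$, set $\J_{(i),u}:=\{t\in\J_u:\T_t=\T^{[k_1+i-1]}\}$ and $\T_{(i),u}:=\T^{[k_1+i-1]}\setminus\T^{[k_1+i]}$ (the indices that leave at the $i$-th transition), for $i=1,\dots,l_u$. Condition~1 of Model~\ref{sbyrho} says each $\J_{(i),u}$ is an interval of length strictly less than $\beta$, so the $\J_{(i),u}$'s partition $\J_u$ into subintervals with $\max_i|\J_{(i),u}|<\beta=h^+\alpha$, which is the desired bound on $h_u^*(\alpha)$.

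Mutual disjointness of the $\T_{(i),u}$'s is the content of the second half of Condition~3 of Model~\ref{sbyrho}: for any $1\le i_1<i_2\le l_u$ the leaving steps $k_1+i_1,k_1+i_2$ are at distance at most $l_u-1<\alpha$, so Condition~3 with $k=k_1+i_1-1$ and $i=k_1+i_2-1$ gives $\T_{(i_1),u}\cap\T_{(i_2),u}=\emptyset$. The forward-cover condition~\eqref{union} follows from the $\rho$-disjointness (Condition~2): for $t\in\J_{(i),u}$ and $j\in\T_t=\T^{[k_1+i-1]}$, Condition~2 gives $j\notin\T^{[k_1+i-1+\rho]}$, so the smallest index $i^*\ge i$ with $j\notin\T^{[k_1+i^*]}$ lies in $[i,i+\rho-1]$, and then $j\in\T^{[k_1+i^*-1]}\setminus\T^{[k_1+i^*]}=\T_{(i^*),u}$.

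The hard part will be the right boundary of $\J_u$: when $i$ is within $\rho-1$ of $l_u$, the sets $\T_{(i+1),u},\dots,\T_{(i+\rho-1),u}$ in \eqref{union} correspond to transitions past the last support change inside $\J_u$. I would resolve this by padding the construction with $\rho-1$ extra sets $\T_{(l_u+m),u}:=\T^{[k_1+l_u+m-1]}\setminus\T^{[k_1+l_u+m]}$ and matching empty intervals $\J_{(l_u+m),u}=\emptyset$, built from the first few supports of $\J_{u+1}$. The delicate subcase is a ``two-run'' pattern in which a single index $j$ leaves once inside $\J_u$ and then re-enters before the end of $\J_u$; Condition~3 of Model~\ref{sbyrho} forbids $j$ from leaving a second time inside $\J_u$, so the second run must spill into $\J_{u+1}$ and, since its length is at most $\rho$ by $\rho$-disjointness, $j$'s second leaving event is caught by one of the padded $\T_{(l_u+m),u}$. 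Disjointness of the padded family is maintained by the same Condition~3 argument once we use that $\rho^2\beta\le 0.0001\alpha$ keeps the padded index range short relative to the $\alpha$-window over which Condition~3 operates, completing the reduction.
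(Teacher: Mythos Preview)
The paper does not give its own proof of this lemma; it simply cites \cite{rrpcp_isit15}, so there is nothing in the paper to compare against. Your construction---partition $\J_u$ by intervals of constant support and take the $\T_{(i),u}$'s to be the leaving sets $\T^{[k_1+i-1]}\setminus\T^{[k_1+i]}$---is the natural one, and your arguments for the bound $\max_i|\J_{(i),u}|<\beta=h^+\alpha$ and for the cover condition~\eqref{union} are correct. In particular the cover follows directly from Condition~2 of Model~\ref{sbyrho}: any $j\in\T^{[k_1+i-1]}$ must leave by step $k_1+i+\rho-2$, so it lands in some $\T_{(i^*),u}$ with $i^*\in[i,i+\rho-1]$. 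Your ``two-run'' paragraph is unnecessary for the cover but not wrong.

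The genuine gap is in the disjointness of the padded family. Condition~3 of Model~\ref{sbyrho} gives $(\T^{[k]}\setminus\T^{[k+1]})\cap(\T^{[i]}\setminus\T^{[i+1]})=\emptyset$ only for $0<i-k\le\alpha$, but your padded family spans indices $1,\dots,l_u+\rho-1$, and $l_u$ can equal $\alpha$ (Condition~1 allows the support to change at every time step once $\beta\ge2$). So for $\rho\ge3$ and $l_u>\alpha-\rho+2$, the pair consisting of one of the first $\rho-2$ original leaving sets and the last padded one sits at index distance $>\alpha$, and Condition~3 does not apply. Your closing sentence---that $\rho^2\beta\le0.0001\alpha$ ``keeps the padded index range short relative to the $\alpha$-window''---does not repair this: that inequality bounds $\rho$ (hence the padding width $\rho-1$) relative to $\alpha$, but it does not force $l_u+\rho-1\le\alpha+1$. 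Concretely, an index $j$ that leaves at step $k_1$ can, consistently with Conditions~2 and~3, re-enter late in $\J_u$ and leave again at a padded step $k_1+l_u+m-1$ with $l_u+m-1>\alpha$, placing $j$ in two of your $\T_{(i),u}$'s. Closing this requires either a sharper joint use of Conditions~2 and~3 to rule out that specific two-leave pattern over the full padded range, or a modified construction at the boundary (e.g., handling the last $\rho-1$ supports without introducing sets that can collide with the first $\rho-2$).
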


\subsection{Definitions} \label{defs}
\begin{remark}
Recall that $\vartheta$ is the maximum number of clusters from Model \ref{clust_model}. For ease of notation, henceforth, we will assume that there are $\vartheta$ clusters for all $j$. If $\vartheta_j < \vartheta$, it will just mean that the last $(\vartheta - \vartheta_j+1)$ clusters are empty. 
\end{remark}

\begin{definition}
Define $\b_t:= \bm\Phi_t \mt - \bm\Phi_t \xt = \bm\Phi_t (\lt + \wt)$. This is the ``noise" seen by the projected sparse recovery step of the algorithm.

Define $\bm{e}_t$ to be the error made in estimating $\lt$.  That is
$
\et := \lt-\lhatt. 
$
Thus, from the algorithm, $\et = (\xhatt - \xt) - \wt$
\end{definition}

\begin{definition}
Define the intervals
\[
\mathcal{J}_u := [(u-1)\alpha+1,u\alpha].
\]
Define $u_j$ to be the $u$ such that $t_j \in \mathcal{J}_u$.  That is
$u_j := \llceil \frac{t_j}{\alpha}\rrceil.$
For the purposes of describing events before the first subspace change, let $u_0 := 0$.

Define $\hat{u}_j := \frac{\hat{t}_j}{\alpha}.$ Notice from the algorithm that this will be an integer, because detection only occurs when $t \mod{\alpha} = 0 $. We will show that, under appropriate conditioning, whp, $\hat{u}_j = u_j$ or $\hat{u}_j = u_j + 1$.

For the cluster-PCA step, define the following intervals for $k=0,1,2, \dots \vartheta$.
\[
\Ijkt := [\that_j + (K+1) \alpha + (k-1)\alpha + 1, \that_j + (K+1) \alpha + k \alpha]
\]
Notice that $\tilde{\mathcal{I}}_{j,0}$ is where the clusters are determined, and $\tilde{\mathcal{I}}_{j,k}$ is where cluster $k$ is recovered.
\end{definition}

\begin{definition} \label{def_P_starnew}
Define $ \bm{P}_{(j)}:= \bm{P}_{t_j}$,
\begin{align*}
\bm{P}_{(j),*} &:= \bm{P}_{(j-1)} = \bm{P}_{t_{j}-1}  \text{ and }  \bm{P}_{(j),\new}:= \bm{P}_{t_j,\new} \text{ for } j = 1,\dots,J \\
\bm{a}_{t,*} &:= {\Pjs}'\nut   \text{ and }  \bm{a}_{t,\rmnew} := {\Pjnew}'\nut  \text{ for }  t \in [t_j, t_{j+1}).
\end{align*}
Notice that $\bm{a}_{t,*}$ is a vector of length $r_{j-1}$, whose last $(r_{j-1}-r_{j,\old})$ entries are zeroes.
Also define
\[
\bm{P}_{(j),\add} := [\bm{P}_{(j),*} \ \bm{P}_{(j),\new}]
\]
\end{definition}

Thus, for $t\in[t_{j}, t_{j} + d]$, $\nut$ can be written as
\[
\nut = \Pj \at =  [ \Pjs \ \Pjnew ]\vect{\bm{a}_{t,*}}{\atnew}  
\]
 and $\cov(\nut)= \bm{\Sigma}_t$ can be rewritten as
\[
\bm{\Sigma}_t = \Pj \Lamt \Pj{}' =
\left[\bm{P}_{(j),*} \ \bm{P}_{(j),\new}\right]
\left[\begin{array}{cc}\bm{\Lambda}_{t,*} & \bm{0} \\ \bm{0} & \bm{\Lambda}_{t,\rmnew}\end{array}\right]
\left[\begin{array}{c} {\bm{P}_{(j),*}}' \\ {\bm{P}_{(j),\new}}' \end{array} \right]
\]
Notice that the last $(r_{j-1}-r_{j,\old})$ diagonal entries of $\bm{\Lambda}_{t,*}$ are zeroes.

\begin{remark}
From Model \ref{cor_model}, $\bm{P}_{(j),*}$ is orthogonal to $\bm{P}_{(j),\new}$. 
\end{remark}

\begin{definition}\label{def_Phat_starnew}
For $j=1,2,\dots,J$ and $k=1,2,\dots,K$ define
\begin{enumerate}
\item $\ds\Phat_{(j),*} := \Phat_{\that_{j-1} + K\alpha + (\vartheta+1)\alpha}$. If all subspace changes are correctly detected, this is the final estimate of $\bm{P}_{(j),*} = \bm{P}_{(j-1)}$ and $\Phat_{(j),*}= \Phat_{t_j-1}$.  Let $\ds \Phat_{(1),*}:=\Phat_{t_\train}$ (the initial estimate).

\item $\ds\Phat_{(j),\new,0} := [.]$ and $\ds\Phat_{(j),\new,k} := \Phat_{\that_j + k \alpha,\new}$. This is the $k^{\text{th}}$ estimate of $\bm{P}_{(j),\new}$ (again, conditioned on correct change time detection).

\item $\ds\Phat_{(j),\add} := [\Phat_{(j),*} \ \Phat_{(j),\new,K}]$ is the final estimate of $\bm{P}_{(j),\add}$.
\end{enumerate}
\end{definition}

Notice from the algorithm that, 
\begin{enumerate}
\item  $\Phat_{t,*}  = \Phat_{(j),*}$ for all $t \in [\that_{j-1} + K\alpha + (\vartheta+1)\alpha, \that_{j} + K\alpha + (\vartheta+1)\alpha-1]$
\item $\Phat_{t,\new} =  \Phat_{(j),\new,k-1}$ for all $t \in \J_{\uhat_j+k}$ for $k=1,2, \dots K$,  $\Phat_{t,\new} =  \Phat_{(j),\new,K}$ for $t \in [\that_j+K\alpha,  \that_j + K\alpha + (\vartheta+1)\alpha-1]$, and $\Phat_{t,\new} = [.]$ at all other times.

\item At all times, $\Phat_t = [\Phat_{t,*} \  \Phat_{t,\new}]$. 
\item $\Phat_{t-1,*}\perp\Phat_{t,\new}$ at $t=\that_j+k \alpha$ and so $\Phat_{(j),*} \perp \Phat_{(j),\new,k}$
%
\end{enumerate}


\begin{definition}
Define $\bm{G}_{j,k}:= (\P_{t_j})_{\mathcal{G}_{j,k}}$ for $k=1,2,\dots, \vartheta$. The clusters $\mathcal{G}_{j,k}$ were defined in Model \ref{clust_model}. Thus $\P_{(j+1),*} = \P_{(j)}=\P_{t_j} = [{G}_{j,1},{G}_{j,2}, \dots {G}_{j,\vartheta}]$.

Recall that $\hat{\bmG}_{j,k}$ is obtained in the cluster-PCA routine of Algorithm \ref{reprocsdet}. From the definition of $\Phat_{(j),*}$, $\Phat_{(j+1),*} = [\hat{G}_{j,1},\hat{G}_{j,2}, \dots \hat{G}_{j,\vartheta}]$.
\end{definition}

\begin{definition}\label{difzeta} \label{def_zetaknew_etc}
Define
\begin{enumerate}
\item $\ds \zeta_{j,*} := \mathrm{dif}(\hat{\bm{P}}_{(j),*},\bm{P}_{(j),*}) $
\item $\ds \zeta_{j,\new,k} := \mathrm{dif}([\Phat_{(j),*} \ \Phat_{(j),\new,k}],\bm{P}_{(j),\new}) $
\item $\ds \zeta_{j,\add} := \mathrm{dif}(\Phat_{(j),\add},\bm{P}_{(j),\add})$
\item $\ds \tilde{\zeta}_{j,k} := \mathrm{dif} ([\hat{\bmG}_{j,1} \dots \hat{\bmG}_{j,k}], \bmG_{j,k})$.
\end{enumerate}
Using the previous definition, clearly $\zeta_{j+1,*} \le \sum_{k=1}^\vartheta \tilde{\zeta}_{j,k}$.
%
\end{definition}

\begin{definition}
Define
\begin{enumerate}
\item $\ds \zeta_{j,*}^+ :=  r \zeta $ 
\item $\ds \zeta_{j,\new,0}^+ :=1$,  $\ds \zeta_{j,\new,k}^+ := \frac{b_{\bm{\mathcal{H}},k}}{b_{\bm{A}} - b_{\bm{A},\perp} - b_{\bm{\mathcal{H}},k}} $ for $k=1,2, \dots, K$ where $b_{\bm{A}}$, $b_{\bm{A},\perp}$, and $b_{\bm{\mathcal{H}},k}$ are defined in Lemmas \ref{Ak_cor}, \ref{Akperp_cor}, and \ref{calHk_cor} respectively. Their expressions use $\epsilon$ given by (\ref{def_eps}).
\item $\zeta_{j,\add}^+ :=  (r+r_\new) \zeta$. 
\item $\ds \tilde{\zeta}_k^+ := \frac{b_{\tilde{\bm{\mathcal{H}}},k}}{b_{\tilde{\bm{A}},k}-b_{\tilde{\bm{A}},k,\perp}-b_{\tilde{\bm{\mathcal{H}}},k}}$
where $b_{\tilde{\bm{\mathcal{H}}},k}$, $b_{\tilde{\bm{A}},k}$, and $b_{\tilde{\bm{A}},k,\perp}$ are defined in Lemmas \ref{Atil_cor}, \ref{Atilperp_cor}, and \ref{Htil_cor} respectively.
\end{enumerate}
We will show that these are high probability upper bounds on $\zeta_{j,*}$, $\zeta_{j,\new,k}$, $\zeta_{j,\add}$, and $\tilde{\zeta}_{j,k}$ under appropriate conditioning.  We should point out that $\zeta_{j,*}^+$,  $\zeta_{j,\add}^+$, and $\zeta_{j,\new,k}^+$ do not actually depend on $j$.  However, when analyzing Algorithm \ref{reprocsdet} without the c-PCA step, they do depend on $j$. 
\end{definition}

\begin{definition} \label{def_X}
Define the random variable
\[
X_u:= \{ \{\bnu_1, \bnu_2, \dots \bnu_{u \alpha}\}, \{\T_t\}_{t=1,2,\dots t_{\max}} \}.
\]
This is the random variable that we condition on (with appropriate choice of $u$) when analyzing the subspace update steps - detection or projection-PCA or cluster-PCA.
\end{definition}

\begin{definition}
Recall from Algorithm \ref{reprocsdet} that
\[
\thresh = \frac{\lamtrain}{2}.
\]
Also, recall the definition of ${\bm{\mathcal{D}}_u}$ from Algorithm \ref{reprocsdet}.
For $j = 1,\dots,J$, and for $a = u_j$ or $a = u_j + 1$, define the following events
\begin{itemize}
\item $\ds \mathrm{DET}_j^a := \left\{ \hat{u}_j = a \right\} $
\item $\ds \mathrm{PPCA}_{j,k}^a := \left\{ \hat{u}_j = a \ \text{\em and}\ \rank(\Phat_{(j),\new,k}) = r_{j,\new} \ \text{\em and}\ \zeta_{j,\new,k}\leq \zeta_{j,\new,k}^+  \right\}$ for  $k = 1,\dots,K$,
\item $\mathrm{CLUSTER}_{j}^a :=  \left\{ \hat{u}_j = a \text{ and } \hat{\mathcal{G}}_{j,k} = \mathcal{G}_{j,k} \text{ for } k =1,\dots,\vartheta  \right\}$
\item $\ds \mathrm{CPCA}_{j,k}^a := \left\{ \hat{u}_j = a \text{ and } \tilde{\zeta}_{j,k} \leq \tilde{\zeta}_{k}^+   \right\} $ for  $k = 1,\dots,\vartheta$,

\item $ \mathrm{NODETS}_j^a :=
\left\{ \hat{u}_j = a \ \text{\em and}\ \lambda_{\max}\left(\frac{1}{\alpha}\bm{\mathcal{D}}_u{\bm{\mathcal{D}}_u}'\right)<\thresh \ \text{\em for all}\ u \in [\hat{u}_j +K+ (\vartheta+1) + 1, u_{j+1}-1] \right\}
$
\\

\item $\Gamma_{0,\rmend} := \left\{ \zeta_{1,*}\leq r_0\zeta  \right\} \cap \left\{ \lambda_{\max}\left(\frac{1}{\alpha}\bm{\mathcal{D}}_u{\bm{\mathcal{D}}_u}'\right)<\thresh \ \text{\em for all}\ u \in [1, u_1-1] \right\} $

\item $\Gamma_{j,0}^{a} := \Gamma_{j-1,\rmend} \cap \mathrm{DET}_j^a$
\item $\ds \Gamma_{j,k}^a := \Gamma_{j,k-1}^a \cap \mathrm{PPCA}_{j,k}^a$ for $k=1,2, \dots K$

\item $\ds \tilde\Gamma_{j,0}^a := \Gamma_{j,K}^a \cap \mathrm{CLUSTER}_{j}^a$
\item $\ds \tilde\Gamma_{j,k}^a := \tilde\Gamma_{j,k-1}^a \cap  \mathrm{CPCA}_{j,k}^a$ for $k=1,2 \dots \vartheta$
\item $\ds \Gamma_{j,\rmend} := \Big( \tilde\Gamma_{j,\vartheta}^{u_j} \cap \mathrm{NODETS}_j^{u_j} \Big) \cup \left(  \tilde\Gamma_{j,\vartheta}^{u_j +1} \cap \mathrm{NODETS}_j^{u_j +1} \right)$
\end{itemize}
We misuse notation as follows.
Suppose that a set $\Gamma$ is a subset of all possible values that a r.v. $X$ can take. For two r.v.s' $\{X,Y\}$, when we need to say ``$X \in \Gamma$ and $Y$ can be anything" we will sometimes misuse notation and just say ``$\{X,Y\} \in \Gamma$."  For example, we sometimes say $X_{u_j}\in\Gamma_{j,\rmend}$.  This means $X_{u_j-1}\in\Gamma_{j,\rmend}$ and $\bm{a}_{t}$ for $t\in\J_{u_j}$ are unconstrained.
\end{definition}

\begin{definition}\label{defHk}
Define
\begin{enumerate}
\item Let $\bm{D}_{j,\new}:= (\bm{I} - \Phat_{(j),*} \Phat_{(j),*}{}')\bm{P}_{(j),\new} \overset{QR}{=} \bm{E}_{j,\rmnew} \bm{R}_{j,\rmnew}$ denote its reduced QR decomposition, i.e. let $\bm{E}_{j,\rmnew}$ be a basis matrix for $\Span\left(\bm{D}_{j,\new}\right)$ and let $\bm{R}_{j,\rmnew} = {\bm{E}_{j,\rmnew}}'\bm{D}_{j,\new}$.

\item Let $\bm{E}_{j,\rmnew,\perp}$ be a basis matrix for the orthogonal complement of $\Span(\bm{E}_{j,\rmnew})$. To be precise, $\bm{E}_{j,\rmnew,\perp}$ is an $n\times(n-r_j)$ basis matrix so that $[\bm{E}_{j,\rmnew} \ {\bm{E}_{j,\rmnew,\perp}}]$ is unitary.

\item
For $u = \hat{u}_{j}+k$ for $k = 1,\dots,K$, define $\bm{A}_{u}$, $\bm{A}_{u,\perp}$, $\bm{\mathcal{A}}_{u}$ as
\begin{align*}
\bm{A}_{u} &:= \frac{1}{\alpha} \sum_{t \in \mathcal{J}_{u}} {\bm{E}_{j,\rmnew}}' (\I - \hat{\bm{P}}_{(j),*}\hat{\bm{P}}_{(j),*}{}') \bm{\ell}_t {\bm{\ell}_t}' (\I - \hat{\bm{P}}_{(j),*}\hat{\bm{P}}_{(j),*}{}')\bm{E}_{j,\rmnew} \\
\bm{A}_{u,\perp} &:= \frac{1}{\alpha} \sum_{t \in \mathcal{J}_{u}} {\bm{E}_{j,\rmnew,\perp}}' (\I - \hat{\bm{P}}_{(j),*}\hat{\bm{P}}_{(j),*}{}') \bm{\ell}_t {\bm{\ell}_t}' (\I - \hat{\bm{P}}_{(j),*}\hat{\bm{P}}_{(j),*}{}') \bm{E}_{j,\rmnew,\perp}
\end{align*}
and let
\[
\bm{\mathcal{A}}_{u} := \left[ \begin{array}{cc} \bm{E}_{j,\rmnew} & \bm{E}_{j,\rmnew,\perp} \\ \end{array} \right]
\left[\begin{array}{cc} \bm{A}_{u} \ & \bm{0} \ \\ \bm{0} \ & \bm{A}_{u,\perp}  \\ \end{array} \right]
\left[ \begin{array}{c} {\bm{E}_{j,\rmnew}}' \\ {\bm{E}_{j,\rmnew,\perp}}' \\ \end{array} \right]
\]

\item For $u = \hat{u}_{j}+k$ for $k = 1,\dots,K$, define $\bm{\mathcal{M}}_{u}$ and $\bm{\mathcal{H}}_{u}$ as
\[
\M_{u} = (\I - \hat{\bm{P}}_{(j),*}\hat{\bm{P}}_{(j),*}{}')  \left(\frac{1}{\alpha} \sum_{t\in\J_u}\lhat_t\lhat_t{}'\right)(\I - \hat{\bm{P}}_{(j),*}\hat{\bm{P}}_{(j),*}{}')
\]
and
\[
\bm{\mathcal{H}}_{u} := \bm{\mathcal{M}}_u - \bm{\mathcal{A}}_u
\]
\end{enumerate}
\end{definition}

\begin{remark}[]
Recall the definition of ${\bm{\mathcal{D}}_u}$ from Algorithm \ref{reprocsdet}.
Conditioned on $\Gamma_{j,0}^{\uhat_j}$, for  $u = \uhat_j + k$, $k=1,2, \dots, K$, $\Phat_{u\alpha-1,*} = \Phat_{(j),*}$ and thus, for these values of $u$
\[
\frac{1}{\alpha}\bm{\mathcal{D}}_u {\bm{\mathcal{D}}_u}' = \M_u.
\]
For these $u$'s $\M_{u}$ is the matrix whose eigenvectors with eigenvalue above $\mathrm{thresh}$ form $\Phat_{(j),\new,k}$ (see step \ref{PCA} of Algorithm \ref{reprocsdet}). In other words, $\M_{u}$ has eigendecomposition
\begin{align*}
\M_u \overset{\mathrm{EVD}}{=} \left[ \begin{array}{cc} \Phat_{(j),\new,k} & \Phat_{(j),\new,k,\perp} \\ \end{array} \right]
\left[\begin{array}{cc}\hat{\bm{\Lambda}}_u \ & \bm{0} \ \\ \bm{0} \ & \ \hat{\bm{\Lambda}}_{u,\perp} \\ \end{array} \right]
\left[ \begin{array}{c}\Phat_{(j),\new,k}{}' \\ \Phat_{(j),\new,k,\perp}{}' \\ \end{array} \right].
\end{align*}
\end{remark}

\begin{definition}
Define
\ben
\item $\kappa_{s,*} := \max_{j}\kappa_s(\bm{P}_{(j),*})$ and $\kappa_{s,\rmnew} := \max_{j} \kappa_s(\bm{P}_{(j),\rmnew})$.
\item $\kappa_{s,*}^+:=0.3$ and $\kappa_{s,\new}^+:=0.0215$. As we will show later in Lemma \ref{Dnew0_lem}, $\kappa_{s,\new}^+$ upper bounds $\| {\I_{\T_t}}'\bm{D}_{j,\new}\|_2$ under appropriate conditioning.
\item $\phi^+:=1.2$. As we will show later in Lemma \ref{cslem_cor}, this upper bounds $\phi_t := \| [ ({\bm{\Phi}_{t})_{\mathcal{T}_t}}'(\bm{\Phi}_{t})_{\mathcal{T}_t}]^{-1} \|_2$ under appropriate conditioning.
\een
\end{definition}

\begin{definition}
Define $\bm{\Phi}_{(j),0}:= (I - \Phat_{(j),*} \Phat_{(j),*}{}')$ and $\bm{\Phi}_{(j),k}:= (I - \Phat_{(j),*} \Phat_{(j),*}' - \Phat_{(j),\new,k} \Phat_{(j),\new,k}{}')$ for $k=1,2, \dots K$.

Thus for $t \in [t_j, \that_j+\alpha]$ (before the first proj-PCA step), $\bm\Phi_t = \bm{\Phi}_{(j),0}$, for $t \in \J_{\uhat_j+k}$ (during interval used for $k$-th proj-PCA step), $\bm\Phi_t = \bm{\Phi}_{(j),k-1}$, for $t \in [\that_j+K\alpha, \that_j+K\alpha+ (\vartheta+1)\alpha]$ (after $K$-th proj-PCA step), $\bm\Phi_t = \bm{\Phi}_{(j),K}$ and for $t \in [\that_j+K\alpha+ (\vartheta+1)\alpha, t_{j+1}-1]$ (after cluster-PCA step), $\bm\Phi_t = \bm{\Phi}_{(j+1),0}$.
\end{definition}

\begin{remark}
The proof uses Model \ref{general_model} on $\T_t$. By Lemma \ref{spc_case}, Model \ref{sbyrho} is a special case of it. In particular, this means that (a) Model \ref{sbyrho} also implies $\rho^2 h^+ \le 0.01$ and (b) Model \ref{sbyrho} also allows us to use the support change lemma, Lemma \ref{blockdiag1}. This lemma and the sparse recovery lemma, Lemma \ref{cslem_cor}, are used to get bounds on quantities containing $\et$ in the proof of Lemma \ref{calHk_cor}.
\end{remark}

\subsection{Basic Lemmas} \label{basiclems}

\begin{lem} \label{initsub_cor}
Consider Algorithm \ref{reprocsdet}. Under Theorem \ref{thm1_cor} assumptions,
\begin{align*}
& \mathrm{dif}(\Phat_{t_\train}, \P_{t_\train}) \le r_0 \zeta \ \ \text{and} \\
& 0.8 \lambda^- \le \lamtrain \le 1.2 \lambda^-
\end{align*}
with probability at least $1-  n^{-10}$.
\end{lem}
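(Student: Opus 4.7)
The plan is to apply the Davis-Kahan $\sin\theta$ theorem and Weyl's inequality to the empirical second-moment matrix $\bm{M}:=\frac{1}{t_\train}\sum_{t=1}^{t_\train}\mt\mt{}'$, whose top $r_0$ eigenvectors give $\Phat_{t_\train}$ and whose $r_0$-th eigenvalue equals $\lamtrain$. Since $t_\train<t_1$, Model \ref{cor_model} gives $\P_t\equiv\P_{t_\train}$ throughout $[1,t_\train]$, so the AR recursion with $\bm{\ell}_0=\bm{0}$ yields $\lt=\sum_{\tau=1}^t b^{t-\tau}\bnu_\tau$ with each $\bnu_\tau$ lying in $\Span(\P_{t_\train})$, and $\mt=\lt+\wt$ on this interval.

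First I would compute $\bar{\bm M}:=\E[\bm M]$ exactly. Mutual independence of $\{\bnu_\tau\}\cup\{\wt\}$ kills all cross terms and gives $\bar{\bm M}=\P_{t_\train}\bar{\bm D}\P_{t_\train}{}'+\bar{\bm W}$, where $\bar{\bm D}:=\frac{1}{t_\train}\sum_{t=1}^{t_\train}\sum_{\tau=1}^t b^{2(t-\tau)}\bm\Lambda_\tau$ is diagonal and $\|\bar{\bm W}\|_2\le\epsilon_w^2$. Swapping the order of summation, $\bar{\bm D}=\frac{1}{1-b^2}\cdot\frac{1}{t_\train}\sum_\tau\bm\Lambda_\tau-\bm R$ with $\|\bm R\|_2\le\frac{b^2\lambda^+}{t_\train(1-b^2)^2}$. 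For $b\le b_0=0.1$ and the $t_\train$ prescribed by the theorem, the correction $\bm R$ is negligible, so $\lambda_{\min}(\bar{\bm D})\in[\tfrac{\lambda^-}{1-b^2}(1-o(1)),\tfrac{\lambda^-}{1-b^2}]$ and $\lambda_{\max}(\bar{\bm D})\le\tfrac{\lambda^+}{1-b^2}$. Together with $\|\bar{\bm W}\|_2\le\epsilon_w^2\le 0.03\zeta\lambda^-$, this pins down $\lambda_{r_0}(\bar{\bm M})\in[0.98\lambda^-,1.03\lambda^-]$ and $\lambda_{r_0+1}(\bar{\bm M})\le\epsilon_w^2$.

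Next I would establish the concentration bound $\|\bm M-\bar{\bm M}\|_2\le 0.001\,r_\new\zeta\lambda^-$ with probability at least $1-n^{-10}$ via matrix Azuma (Corollary \ref{azuma_hermitian}, as used later in Lemmas \ref{Ak_cor} and \ref{calHk_cor}). Expanding $\lt=\sum_\tau b^{t-\tau}\bnu_\tau$ and interchanging sums decomposes $\bm M-\bar{\bm M}$ into four pieces: $\frac{1}{t_\train}\sum_\tau c_\tau\bigl(\bnu_\tau\bnu_\tau{}'-\bm\Sigma_\tau\bigr)$, $\frac{1}{t_\train}\sum_{\tau\neq\tau'}d_{\tau,\tau'}\bnu_\tau\bnu_{\tau'}{}'$, a $\bnu$--$\bm w$ cross term, and $\frac{1}{t_\train}\sum_t(\wt\wt{}'-\cov(\wt))$, where the geometric weights satisfy $c_\tau,d_{\tau,\tau'}\le(1-b_0^2)^{-1}$. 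Each piece is a martingale-difference sum with respect to the natural filtration on $(\bnu_\cdot,\wt_\cdot)$; the hypotheses $\|\P_{t_\train}{}'\bnu_\tau\|_2\le\gamma$ and $\|\wt\|_2\le\epsilon_w$, together with the weight bounds and the inequality $\sum_\tau b^{2(t-\tau)}\le(1-b_0^2)^{-1}$, supply the per-term norm and conditional-variance bounds required by matrix Azuma. The lower bound $t_\train\ge\frac{32(2r\gamma^2)^2}{(1-b_0)^2(0.001 r_\new\zeta\lambda^-)^2}(11\log n+\log 8)$ is calibrated so that the tail is at most $n^{-10}$.

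Finally I would combine the two estimates. Weyl's inequality gives $|\lamtrain-\lambda_{r_0}(\bar{\bm M})|\le\|\bm M-\bar{\bm M}\|_2\le 0.001 r_\new\zeta\lambda^-$, which with $\lambda_{r_0}(\bar{\bm M})\in[0.98\lambda^-,1.03\lambda^-]$ yields $0.8\lambda^-\le\lamtrain\le 1.2\lambda^-$. For the subspace claim, the $\sin\theta$ theorem applied with spectral gap $\lambda_{r_0}(\bar{\bm M})-\lambda_{r_0+1}(\bar{\bm M})\ge 0.9\lambda^-$ gives $\mathrm{dif}(\Phat_{t_\train},\P_{t_\train})\le \|\bm M-\bar{\bm M}\|_2/\text{gap}\le 0.002 r_\new\zeta\le r_0\zeta$, where the last step uses $r_\new\le r_0$ from Model \ref{cor_model}. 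The main obstacle is the matrix Azuma step for the correlated sum $\frac{1}{t_\train}\sum_t\lt\lt{}'$: the AR expansion produces many cross-time products $\bnu_\tau\bnu_{\tau'}{}'$ that must be regrouped into a martingale-difference sequence under the natural filtration on $\{\bnu_t\}$ with the weights $c_\tau,d_{\tau,\tau'}$ carefully tracked. This is a simplified analogue of the technique used later for projection-PCA, the simplifications being that here the subspace is fixed, there is no projection $\bm\Phi_t$, no sparse-recovery error to control, and the filtration is the natural one on $(\bnu_t,\wt)$.
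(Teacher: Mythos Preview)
Your proposal is correct in spirit and uses the same toolkit as the paper (Davis--Kahan, Weyl, matrix Azuma on the AR expansion), but the organization differs in one place that leaves a small gap. The paper takes as its $\sin\theta$ reference the \emph{random} matrix $\mathcal{A}:=\frac{1}{t_\train}\sum_t \lt\lt{}'$, which lives exactly in $\Span(\P_{t_\train})$ (so its top-$r_0$ eigenspace is $\P_{t_\train}$ on the nose and $\lambda_{r_0+1}(\mathcal{A})=0$), and treats $\mathrm{perturb}=\bm M-\mathcal{A}$ (i.e.\ the $\lt\wt'$ and $\wt\wt'$ pieces) as the perturbation; the Azuma argument is then used only to lower-bound $\lambda_{r_0}(\mathcal{A})$. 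You instead take the \emph{deterministic} $\bar{\bm M}=\E[\bm M]=\P_{t_\train}\bar{\bm D}\P_{t_\train}{}'+\bar{\bm W}$ as reference. Because $\bar{\bm W}$ need not commute with $\P_{t_\train}\bar{\bm D}\P_{t_\train}{}'$, the top-$r_0$ eigenspace of $\bar{\bm M}$ is not exactly $\Span(\P_{t_\train})$, so $\sin\theta$ only gives you $\mathrm{dif}(\Phat_{t_\train},\bm E)$ for $\bm E$ the top eigenvectors of $\bar{\bm M}$; you still need a second step (either another $\sin\theta$ between $\bar{\bm M}$ and $\P_{t_\train}\bar{\bm D}\P_{t_\train}{}'$, or simply fold $\bar{\bm W}$ into the perturbation from the start and use $\P_{t_\train}\bar{\bm D}\P_{t_\train}{}'$ as the reference). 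Either fix is immediate since $\|\bar{\bm W}\|_2\le\epsilon_w^2\le 0.03\zeta\lambda^-$, but as written the inequality $\mathrm{dif}(\Phat_{t_\train},\P_{t_\train})\le\|\bm M-\bar{\bm M}\|_2/\text{gap}$ does not follow directly.

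Otherwise the two proofs match: the paper's Azuma step for $\frac{1}{t_\train}\sum_t\lt\lt{}'$ is exactly the general decomposition of Section~\ref{general_decomp} specialized to $\bN_t=\bM_t=\I$ with $t_0=1$ (so $\l_{t_0-1}=\bm 0$ and $\mathrm{term1}=\mathrm{term3}=0$), which is precisely the martingale regrouping you describe as the main obstacle.
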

This lemma follows in a fashion analogous to the proof of the p-PCA lemma, Lemma \ref{pPCA} (or actually just the proof of Lemma \ref{Ak_cor} which is one of the lemmas used to prove Lemma \ref{pPCA}). Its proof is in Appendix \ref{proof_initsub_cor}.

\begin{lem}\label{zetadecay}[{Bounds on $b_{\bm{A}},b_{\bm{A},\perp},b_{\bm{\mathcal{H}},k}$, $\zeta_{j,\new,k}$ and $\tilde\zeta_{k}^+$}]  \label{bounds_b_A}
Consider the quantities defined in Defnition \ref{def_zetaknew_etc}. Under the conditions of Theorem \ref{thm1_cor},
\begin{enumerate}
\item 
$b_{\bm{A}} - b_{\bm{\mathcal{H}},1} \geq 0.8\lambda^- > 0.5\lamtrain = \thresh $ and $b_{\bm{A},\perp} + b_{\bm{\mathcal{H}},1} \leq 0.2\lambda^- < 0.35 \lamtrain < \thresh$.

\item $\ds\zeta_{\new,0}^+=1$, $\ds\zeta_{\new,1}^+ \le 0.19$, $\ds\zeta_{\new,k}^+ \leq 0.19\cdot0.1^{k-1} + 0.15 r_{\new}\zeta$ for all $k \ge 1$. 
\item $\ds\tilde{\zeta}_{k}^+ \leq r_{j,k}\zeta$ where $r_{j,k}=|\mathcal{G}_{j,k}|$. 

\end{enumerate}
\end{lem}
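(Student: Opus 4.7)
The plan is to unpack the definitions of $b_{\bm{A}}, b_{\bm{A},\perp}, b_{\bm{\mathcal{H}},k}$ (and their tilded counterparts) from Lemmas \ref{Ak_cor}, \ref{Akperp_cor}, \ref{calHk_cor} (resp.\ Lemmas \ref{Atil_cor}, \ref{Atilperp_cor}, \ref{Htil_cor}) and then substitute in the numerical bounds imposed by Theorem \ref{thm1_cor}'s hypotheses. This is essentially a bookkeeping exercise: each of $b_{\bm A}, b_{\bm A,\perp}, b_{\bm{\mathcal H},k}$ is a sum of a ``main'' term (expressible in $\lambda_{\new}^{\pm}/(1-b^2)$, $\zeta_{j,\new,k-1}^+$ and the denseness constants $\kappa_{2s,*}^+, \kappa_{2s,\new}^+$) plus lower-order perturbation terms proportional to $\epsilon$, $\epsilon_w$, $r_\new \zeta\lambda^-$, and $b^{2}/(1-b^2)\cdot r\gamma^2/\alpha$. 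Using $b\le b_0=0.1$ (so $1/(1-b^2)\le 1/(1-0.01)$), $\lambda_\new^-\ge\lambda^-$, $\lambda_\new^+\le 3\lambda^-$, $\kappa_{2s,*}^+=0.3$, $\kappa_{2s,\new}^+=0.0215$, $\phi^+=1.2$, $\rho^2 h^+\le 10^{-4}$, the lower bound on $\alpha$, and the bound on $\epsilon_w^2\le 0.03\zeta\lambda^-$, each of these lower-order terms is absorbed into a slack of order $\zeta\lambda^-$.

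For part~1, I would first substitute to obtain $b_{\bm A}\ge (\lambda_\new^-/(1-b^2))(1-\zeta_{j,*}^+) - (\text{small})$ and $b_{\bm A,\perp}\le (\text{small})\cdot \lambda^+$, both with explicit constants. Combining with the companion bound $b_{\bm{\mathcal H},1}\le 0.072\,\lambda^-/(1-b^2) + (\text{small})$ (obtained from the $k=1$ case of Lemma \ref{calHk_cor}, where $\zeta_{\new,0}^+=1$ and only the denseness factor $\kappa_{2s,\new}^+$ enters), a direct numerical check gives $b_{\bm A}-b_{\bm{\mathcal H},1}\ge 0.8\lambda^-$ and $b_{\bm A,\perp}+b_{\bm{\mathcal H},1}\le 0.2\lambda^-$. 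The inequalities $0.8\lambda^->0.5\hat\lambda_{\train}^-$ and $0.2\lambda^-<0.35\hat\lambda_{\train}^-<\thresh$ then follow from Lemma \ref{initsub_cor}, which gives $0.8\lambda^-\le\hat\lambda_{\train}^-\le 1.2\lambda^-$.

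For part~2, the bound $\zeta_{\new,0}^+=1$ is by definition. For $\zeta_{\new,1}^+$, plug the part~1 bounds directly into $\zeta_{\new,1}^+=b_{\bm{\mathcal H},1}/(b_{\bm A}-b_{\bm A,\perp}-b_{\bm{\mathcal H},1})$ and simplify to get $\le 0.19$. For $k>1$, I will use induction: from the $k>1$ case of Lemma \ref{calHk_cor} one has the schematic bound (as previewed in Sec.\ \ref{outline})
\[
b_{\bm{\mathcal H},k} \;\le\; \bigl( 0.072\,\zeta_{\new,k-1}^+ + 0.075\,(\zeta_{\new,k-1}^+)^2 \bigr)\cdot\tfrac{\lambda^-}{1-b^2} \;+\; C\, r_\new\zeta\,\lambda^-,
\]
and the part~1 style lower bound on $b_{\bm A}-b_{\bm A,\perp}$ continues to apply. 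Substituting gives a recursion of the form $\zeta_{\new,k}^+ \le 0.1\,\zeta_{\new,k-1}^+ + C' r_\new\zeta$, and iterating this starting from $\zeta_{\new,1}^+\le 0.19$ yields $\zeta_{\new,k}^+ \le 0.19\cdot 0.1^{k-1}+\frac{C'}{1-0.1}r_\new\zeta\le 0.19\cdot 0.1^{k-1}+0.15\,r_\new\zeta$, after checking that the constants in Theorem \ref{thm1_cor} make $C'/(1-0.1)\le 0.15$.

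For part~3, the same program applied to the cluster-PCA lemmas gives $b_{\tilde{\bm A},k}-b_{\tilde{\bm A},k,\perp}\gtrsim \lambda_{j,k}^-$ (using $g^+=3$ and $\chi^+=0.2$) and $b_{\tilde{\bm{\mathcal H}},k}$ bounded by a multiple of $r_{j,k}\zeta\cdot\lambda_{j,k}^-$ (using $|\mathcal{G}_{j,k}|\ge 0.15(r+r_\new)$, the orthogonality of $\hat{\bm G}_{j,i}$ for $i<k$ granted by the conditioning, and the clustering separation); dividing yields $\tilde\zeta_k^+\le r_{j,k}\zeta$. The main obstacle throughout is purely constant-tracking: picking the right sub-bounds from the many summands in Lemmas \ref{Ak_cor}--\ref{calHk_cor} and \ref{Atil_cor}--\ref{Htil_cor}, and confirming that the $\epsilon, \epsilon_w$, and $r\gamma^2/\alpha$ terms are small enough to fit inside the $0.19, 0.15, 0.2, 0.8$ slack factors; nothing beyond elementary algebra is required once the expressions from those lemmas are in hand.
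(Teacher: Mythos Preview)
Your proposal is correct and follows essentially the same route as the paper: the paper packages the numerical substitutions you describe into Fact~\ref{ppca_bound} (for $b_{\bm A}, b_{\bm A,\perp}, b_{\bm{\mathcal H},k}$) and Fact~\ref{del_bounds} (for the tilded versions), then plugs those into the ratio definitions, obtains the same recursion $\zeta_{\new,k}^+ \le 0.1\,\zeta_{\new,k-1}^+ + 0.15\,r_\new\zeta$, and for item~3 similarly gets $\tilde\zeta_k^+ \le 0.15(r+r_\new)\zeta$ before invoking $|\mathcal G_{j,k}|\ge 0.15(r+r_\new)$. One small numerical slip: for $k=1$ the dominant term in $b_{\bm{\mathcal H},1}$ is $\phi^+\kappa_{s,\new}^+\lambda_\new^+/(1-b^2)$, which gives roughly $0.156\,\lambda^-/(1-b^2)$, not $0.072$ (the $0.072$ coefficient is for $k\ge 2$); this does not affect your conclusions since $b_{\bm A}-b_{\bm{\mathcal H},1}\approx (1-0.156)\lambda^-/(1-b^2)\ge 0.8\lambda^-$ still holds.
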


This lemma essentially follows using simple algebra. We provide the proof in Appendix \ref{zetatil_bnd}. The proof of the second part is similar to that of Lemma 6.14 of \cite{rrpcp_isit15}.

\begin{lem}\label{blockdiag1}[{Support change lemma \cite[Lemma 5.3]{rrpcp_isit15}}]
Let $s_t = |\mathcal{T}_t|$. Consider a sequence of $s_t \times s_t$ symmetric positive-semidefinite matrices $\bm{A}_t$ such that
$\| \bm{A}_t\|_2 \leq \sigma^+$ for all $t$.  Assume that the $\mathcal{T}_t$ obey Model \ref{general_model}. 
Let $\ds\bm{M} = \sum_{t\in\J_u} \bm{I}_{\mathcal{T}_t} \bm{A}_t {\bm{I}_{\mathcal{T}_t}}'$ be an $n \times n$ matrix ($\I$ is an $n\times n$ identity matrix).
Then
\begin{align*}
\|\bm{M}\|_2 &\leq    \rho^2h^+ \alpha\sigma^+  \leq 0.0001\sigma^+  \alpha
\end{align*}
\end{lem}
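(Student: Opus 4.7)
The plan is to exploit the block-banded structure that Model \ref{general_model} imposes on $\bm{M}$. By the definition of $h_u^*(\alpha)$, we may choose mutually disjoint macro-sets $\mathcal{T}_{(i),u}$, $i=1,\dots,l_u$, together with a partition $\{\J_{(i),u}\}_{i=1}^{l_u}$ of $\J_u$ satisfying $|\J_{(i),u}|\le h^+\alpha$ for every $i$, such that for every $t\in\J_{(i),u}$ the support $\mathcal{T}_t$ is contained in the $\rho$-fold union $\mathcal{U}_i := \mathcal{T}_{(i),u}\cup\mathcal{T}_{(i+1),u}\cup\cdots\cup\mathcal{T}_{(i+\rho-1),u}$. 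Decompose $\bm{M} = \sum_{i=1}^{l_u}\bm{M}_{(i)}$ with $\bm{M}_{(i)} := \sum_{t\in\J_{(i),u}}\bm{I}_{\mathcal{T}_t}\bm{A}_t\bm{I}_{\mathcal{T}_t}'$. Each $\bm{M}_{(i)}$ is positive semidefinite (as a sum of such matrices), has both row and column support contained in $\mathcal{U}_i$, and by the triangle inequality satisfies $\|\bm{M}_{(i)}\|_2 \le |\J_{(i),u}|\,\sigma^+ \le h^+\alpha\,\sigma^+$.

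The key observation is that the disjointness of the $\mathcal{T}_{(i),u}$'s implies that whenever $|i-i'|\ge\rho$, the unions $\mathcal{U}_i$ and $\mathcal{U}_{i'}$ are disjoint. Grouping the indices $i\in\{1,\dots,l_u\}$ by residue class modulo $\rho$, the matrices $\bm{M}_{(i)}$ within any single class therefore have pairwise disjoint row/column supports. For positive semidefinite matrices with pairwise disjoint supports, the operator norm of the sum equals the maximum individual norm, so each of the at most $\rho$ classes contributes at most $h^+\alpha\,\sigma^+$. Summing across the classes by the triangle inequality yields $\|\bm{M}\|_2\le \rho\,h^+\alpha\,\sigma^+$, which in particular implies the stated $\rho^2 h^+\alpha\,\sigma^+$; the final inequality $\le 0.0001\,\alpha\sigma^+$ is then immediate from the standing assumption $h^+ \le 0.0001/\rho^2$.

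No genuine difficulty arises here: all the content is in setting up the right partition (which is supplied by Model \ref{general_model} via the bound on $h_u^*$), and the rest is bookkeeping about disjoint supports of positive semidefinite summands. The only minor care needed is checking that the modular grouping really produces disjoint $\mathcal{U}_i$'s, which is transparent because each $\mathcal{U}_i$ is a union of exactly $\rho$ consecutive (pairwise disjoint) macro-sets.
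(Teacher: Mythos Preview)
Your proof is correct and follows the same block-banded strategy the paper sketches in its proof outline (the paper itself cites this lemma from prior work without reproving it here). In fact your residue-class grouping yields the sharper bound $\rho\,h^+\alpha\,\sigma^+$ rather than $\rho^2 h^+\alpha\,\sigma^+$; since $\rho\ge 1$ this implies the stated inequality, as you note.
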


\begin{lem}\label{Dnew0_lem}[{\cite{rrpcp_isit15}}]
Assume that the assumptions of Theorem \ref{thm1_cor} hold.
Conditioned on $X_{\uhat_j+k-1}$, for $X_{\uhat_j+k-1} \in \Gamma_{j,k-1}^{\uhat_j}$, for $\hat{u}_j = u_j$ or $\hat{u}_j = u_j+1$,
\begin{equation}\label{kappasplus}
\| {\I_{\mathcal{T}}}' \bm{D}_{j,\rmnew} \|_2 \le \kappa_{s,\new}^+ := .0215
\end{equation}
for all $\mathcal{T}$ such that $|\mathcal{T}|\leq s$.
\end{lem}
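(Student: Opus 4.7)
My plan is to prove the bound by a direct triangle-inequality decomposition of $\bm{D}_{j,\new} = \P_{(j),\new} - \Phat_{(j),*} \Phat_{(j),*}{}' \P_{(j),\new}$ and then bound each resulting factor using the denseness assumption (Model \ref{dense_model_general}), the orthogonality $\P_{(j),\new}{}'\P_{(j),*}=0$ from Model \ref{cor_model}, and the conditioning on $\Gamma_{j,k-1}^{\uhat_j}$, which forces $\zeta_{j,*}\le \zeta_{j,*}^+ = r\zeta$. The key identity I will use (without proof, as it is standard) is $\|(\I-\P\P'){\bm{Q}}\|_2 = \|(\I-\bm{Q}\bm{Q}')\P\|_2$ for basis matrices $\P,\bm{Q}$ of the same size, or more generally that these two quantities both equal the principal angle between the corresponding subspaces.

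First, I would write
\[
\I_\T{}' \bm{D}_{j,\new} \;=\; \I_\T{}' \P_{(j),\new} \;-\; \bigl(\I_\T{}' \Phat_{(j),*}\bigr)\bigl(\Phat_{(j),*}{}' \P_{(j),\new}\bigr),
\]
so by the triangle inequality and submultiplicativity,
\[
\|\I_\T{}' \bm{D}_{j,\new}\|_2 \;\le\; \|\I_\T{}' \P_{(j),\new}\|_2 \;+\; \|\I_\T{}' \Phat_{(j),*}\|_2\cdot \|\Phat_{(j),*}{}' \P_{(j),\new}\|_2.
\]
The first term is bounded by denseness: $\|\I_\T{}'\P_{(j),\new}\|_2\le \kappa_s(\P_{(j),\new})\le \kappa_{2s,\new}\le 0.02$.

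Next, to bound $\|\I_\T{}' \Phat_{(j),*}\|_2$, I would insert $\P_{(j),*}\P_{(j),*}{}' + (\I-\P_{(j),*}\P_{(j),*}{}')$ on the right of $\I_\T{}'$, giving
\[
\|\I_\T{}' \Phat_{(j),*}\|_2 \;\le\; \|\I_\T{}'\P_{(j),*}\|_2\cdot\|\P_{(j),*}{}'\Phat_{(j),*}\|_2 + \|(\I-\P_{(j),*}\P_{(j),*}{}')\Phat_{(j),*}\|_2 \;\le\; \kappa_{2s,*} + \zeta_{j,*},
\]
using $\|\P_{(j),*}{}'\Phat_{(j),*}\|_2\le 1$ and the symmetry identity to get that the second summand equals $\|(\I-\Phat_{(j),*}\Phat_{(j),*}{}')\P_{(j),*}\|_2 = \zeta_{j,*}$. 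To bound $\|\Phat_{(j),*}{}' \P_{(j),\new}\|_2$, I would use $\P_{(j),\new}{}'\P_{(j),*}=\bm 0$ to write
\[
\P_{(j),\new}{}'\Phat_{(j),*} \;=\; \P_{(j),\new}{}'(\I-\P_{(j),*}\P_{(j),*}{}')\Phat_{(j),*},
\]
hence $\|\Phat_{(j),*}{}' \P_{(j),\new}\|_2 \le \|(\I-\P_{(j),*}\P_{(j),*}{}')\Phat_{(j),*}\|_2 = \zeta_{j,*}$.

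Finally, I would combine: conditioned on any $X_{\uhat_j+k-1}\in \Gamma_{j,k-1}^{\uhat_j}$, the event forces $\zeta_{j,*}\le \zeta_{j,*}^+ = r\zeta$, so
\[
\|\I_\T{}'\bm{D}_{j,\new}\|_2 \;\le\; \kappa_{2s,\new} + (\kappa_{2s,*}+\zeta_{j,*})\,\zeta_{j,*} \;\le\; 0.02 + (0.3 + r\zeta)\,r\zeta.
\]
Using the bound $\zeta \le 10^{-4}/(r+r_\new)^2$ from Theorem \ref{thm1_cor}, the correction $(0.3+r\zeta)r\zeta$ is at most $\approx 0.3\cdot 10^{-4}+10^{-8}$, so the right-hand side is comfortably below $0.0215$, yielding the claim. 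I do not anticipate any real obstacle here; the only mild subtlety is correctly invoking the symmetric identity for $\|(\I-\P\P')\bm{Q}\|_2$ and making sure the denseness constants are applied with the correct support size (note $\kappa_s\le \kappa_{2s}$, which is used implicitly).
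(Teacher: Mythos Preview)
Your proof is correct and follows essentially the same approach as the paper (which defers to \cite{rrpcp_isit15}): a triangle-inequality split of $\bm{D}_{j,\new}$, followed by the denseness bounds from Model~\ref{dense_model_general}, the orthogonality $\P_{(j),\new}{}'\P_{(j),*}=\bm{0}$, and the symmetry/cross-term facts of Lemma~\ref{hatswitch} together with $\zeta_{j,*}\le r\zeta$ from Fact~\ref{d_large}. The only thing worth making explicit is that Lemma~\ref{hatswitch} requires $\P_{(j),*}$ and $\Phat_{(j),*}$ to have the same number of columns, which is guaranteed here because $\Gamma_{j,k-1}^{\uhat_j}\subseteq\Gamma_{j-1,\rmend}$ forces exact cluster recovery at stage $j-1$.
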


The following summarizes many simple facts.
\begin{fact} \label{d_large}\label{tilda_zeta_k_bnd} \label{1byalpha} \label{alphabdn} \
\begin{enumerate}

\item Observe that $\Gamma_{j,0}^{a}$ both for $a = u_j$ and $a=u_j+1$ implies that $u_j\leq\hat{u}_j \leq u_j + 1$. Thus, since $\that_j = \uhat_j \alpha$, in both cases, $t_j\leq \hat{t}_j \leq t_j + 2\alpha$.  So with the model assumption that $d\geq (K+2)\alpha$, we have that $\J_{\hat{u}_j+k} \subseteq [t_j, t_j+d]$ for $k=1,2,\dots, K$, i.e., for all the projection-PCA intervals, (\ref{anew_small}) holds and we can bound $\|\atnew\|_\infty$ by $\gamma_\new$.

\item Since, $\Gamma_{j,K}^{a} \subseteq \Gamma_{j,0}^{a}$, $\Gamma_{j,K}^{a}$ also implies  that $t_j \le \that_j \le t_j+ 2\alpha$. This along with $d_2 > (\vartheta+3)\alpha$ implies that all the intervals used for the cluster-estimation or the cluster-PCA steps are subsets of the interval in which the clustering assumption holds, i.e., $[\that_j+K \alpha+1, \that_j+K \alpha+(\vartheta+1)\alpha] \subseteq [t_j+K \alpha+1, t_j+K \alpha+d_2]$.
\\

\item Lemma \ref{zetadecay}, item 3, implies that, if  $\tilde{\zeta}_{j,k} \leq \tilde{\zeta}_{k}^+$ for $k=1,\dots,\vartheta$, then $\zeta_{j+1,*}:= \mathrm{dif}(\Phat_{(j+1),*},\bm{P}_{(j+1),*})  \leq \sum_{k=1}^\vartheta \tilde{\zeta}_{j,k} \le \sum_{k=1}^\vartheta r_{j,k} \zeta = r_j \zeta \le \zeta_{j+1,*}^+$. This follows by triangle inequality and the fact that $\Phat_{(j+1),*} = [\hat{G}_{j,1},\hat{G}_{j,2}, \dots \hat{G}_{j,\vartheta}]$ and $\P_{(j+1),*} = \P_{(j)} = [{G}_{j,1},{G}_{j,2}, \dots {G}_{j,\vartheta}]$.

\item Thus the event $\Gamma_{j,\rmend}$ implies $\zeta_{j+1,*} \leq \zeta_{j+1,*}^+$. Equivalently, $\Gamma_{j-1,\rmend}$ implies $\zeta_{j,*} \leq \zeta_{j,*}^+$

\item Thus, the event $\Gamma_{j,0}^a$ implies $\zeta_{j,*}\leq  \zeta_{j,*}^+ = r \zeta$ for $a = u_j$ or $a = u_{j+1}$.

\item Thus the event $\Gamma_{j,k-1}^a$ also implies this.

\item Lemma \ref{zetadecay}, item 2, and the choice of $K$ in the theorem imply that $\zeta_{j,\new,K}^+ \leq r_{\new}\zeta$.

\item Using the previous two items, the event $\Gamma_{j,K}^{\hat{u}_j}$,  both for $\uhat_j = u_j$ and $\uhat_j = u_j + 1$, implies that $\mathrm{dif}(\Phat_{(j),\add},\bm{P}_{(j),\add}) \leq \zeta_{j,*}^+ + r_{\new}\zeta = \zeta_{j,\add}^+$.
\\


\item  $\frac{1}{\alpha}\leq (r_{\new}\zeta)^2$.  To see this, observe that the lower bound for $\alpha$ has $(r_{\new}\zeta)^2$ in the denominator, and everything else in the expression is greater than or equal to 1. (Notice that $\frac{{\gamma_{\new}}^2}{\lambda^-}\geq1$)

\item  $b^\alpha\leq(r_{\new}\zeta)$. This follows because $b \le b_0=0.1$ and so $\frac{-\log (r_{\new}\zeta)}{-\log b} \le \frac{-\log (r_{\new}\zeta)}{-\log b_0} = \frac{\log \frac{1}{r_{\new}\zeta}}{2.3} \le \frac{1}{2.3} \frac{1}{r_{\new}\zeta} \le \frac{1}{(r_{\new}\zeta)^2} \le \alpha$. 


\end{enumerate}
\end{fact}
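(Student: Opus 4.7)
The proof is a sequence of direct unpackings of the definitions together with elementary arithmetic, so I would simply proceed through the ten items in order, grouping them as follows.

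For items 1 and 2, I would invoke the definition of $\Gamma_{j,0}^a$ (respectively $\Gamma_{j,K}^a$), which embeds the detection event $\mathrm{DET}_j^a$ with $a\in\{u_j,u_j+1\}$. Since $\that_j=\uhat_j\alpha$ and $u_j=\lceil t_j/\alpha\rceil$, one immediately obtains $t_j\le\that_j\le t_j+2\alpha$. The projection-PCA intervals $\J_{\uhat_j+k}$ for $k=1,\dots,K$ are then contained in $[t_j,t_j+(K+2)\alpha]\subseteq[t_j,t_j+d]$ by the Model \ref{cor_model} assumption $d\ge(K+2)\alpha$, giving item 1. For item 2, the cluster-PCA related intervals lie in $[\that_j+K\alpha+1,\that_j+K\alpha+(\vartheta+1)\alpha]\subseteq[t_j+K\alpha+1,t_j+K\alpha+(\vartheta+3)\alpha]$, which sits inside $[t_j+K\alpha+1,t_j+K\alpha+d_2]$ by the Model \ref{clust_model} assumption $d_2\ge(\vartheta+3)\alpha$.

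For items 3 through 6, the starting point is the triangle-inequality bound $\zeta_{j+1,*}\le\sum_{k=1}^\vartheta\tilde\zeta_{j,k}$ already recorded in Definition \ref{difzeta}, combined with part 3 of Lemma \ref{zetadecay} and $\sum_k r_{j,k}=r_j\le r$; this yields item 3. Items 4--6 then follow by tracing the nested structure of the $\Gamma$-events: $\Gamma_{j,\rmend}\subseteq\tilde\Gamma_{j,\vartheta}^a$ contains $\mathrm{CPCA}_{j,k}^a$ for every $k$ (item 4); and $\Gamma_{j,k-1}^a\supseteq\Gamma_{j,0}^a\supseteq\Gamma_{j-1,\rmend}$ propagates the bound $\zeta_{j,*}\le r\zeta$ down the chain of conditioning events (items 5 and 6).

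For items 7 and 8, I would substitute the chosen $K=\lceil\log(0.85 r_\new\zeta)/\log(0.2)\rceil$ into the recursive bound $\zeta_{j,\new,k}^+\le 0.19\cdot 0.1^{k-1}+0.15 r_\new\zeta$ from part 2 of Lemma \ref{zetadecay} and verify $0.19\cdot 0.1^{K-1}+0.15 r_\new\zeta\le r_\new\zeta$ directly. Item 8 then follows because $\Phat_{(j),*}\perp\Phat_{(j),\new,K}$ and $\P_{(j),*}\perp\P_{(j),\new}$ (the latter by Model \ref{cor_model}), so a block decomposition of $(\I-\Phat_{(j),\add}\Phat_{(j),\add}{}')\P_{(j),\add}$ bounds $\zeta_{j,\add}$ by $\zeta_{j,*}+\zeta_{j,\new,K}\le\zeta_{j,*}^++r_\new\zeta$.

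Items 9 and 10 reduce to pure arithmetic from the stated lower bound on $\alpha$. For item 9, every factor in the lower bound other than $(r_\new\zeta)^{-2}$ is at least $1$ (in particular $\gamma_\new^2/\lambda^-\ge 1$), so $\alpha\ge(r_\new\zeta)^{-2}$. For item 10, since $b\le b_0=0.1$ and $\ln(1/x)\le 1/x$ on $(0,1]$, $b^\alpha\le r_\new\zeta$ reduces to $\alpha\ge\ln(1/(r_\new\zeta))/\ln(1/b_0)$, which is much weaker than item 9. No step here presents any genuine obstacle --- this Fact is purely a bookkeeping catalogue combining the definitions of the conditioning events, the recursive decay from Lemma \ref{zetadecay}, and the explicit choices of $K$ and $\alpha$ in the theorem statement.
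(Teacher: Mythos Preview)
Your proposal is correct and mirrors exactly the reasoning the paper embeds inline within the statement of the Fact itself: each item is justified by the same definition-unpacking, the same invocation of Lemma~\ref{zetadecay} and Definition~\ref{difzeta}, and the same arithmetic on the choices of $K$ and $\alpha$. The paper provides no separate proof beyond those inline justifications, so there is nothing further to compare.
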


\begin{lem}[{Sparse Recovery Lemma (similar to \cite[Lemma 6.4]{rrpcp_perf} and \cite{rrpcp_isit15})}]\label{cslem_cor}
Assume that all of the conditions of Theorem \ref{thm1_cor} hold.
Recall that $\mathrm{SE}_t = \operatorname{dif}(\Phat_t,\bm{P}_t)$.
\begin{enumerate}
\item  Conditioned on $\Gamma_{j-1,\rmend}$, for $t \in [t_j, (\uhat_j+1)\alpha]$
\begin{enumerate}
\item $\phi_t := \| [ ({\bm{\Phi}_{t})_{\mathcal{T}_t}}'(\bm{\Phi}_{t})_{\mathcal{T}_t}]^{-1} \|_2 \leq \phi^+ := 1.2$.
\item  the support of $\xt$ is recovered exactly i.e. $\hat{\mathcal{T}}_t = \mathcal{T}_t$ and  $\et$ satisfies:
\begin{align}\label{etdef0_cor}
\et := \lt - \lhatt  = (\hat{\bm{x}}_t - \xt) - \wt  = \bm{I}_{\mathcal{T}_t} [ ({\bm{\Phi}_{t})_{\mathcal{T}_t}}'(\bm{\Phi}_{t})_{\mathcal{T}_t}]^{-1}  {\bm{I}_{\mathcal{T}_t}}' \bm{\Phi}_{t} (\bm{\ell}_t + \wt) - \wt
\end{align} 
\item Furthermore,
\begin{align*}
\mathrm{SE}_t &\leq 1 \ \text{, and} \\
\|\bm{e}_t\|_2 &\leq \frac{\phi^+}{1-b} (2\zeta_{,*}^+ \sqrt{r}\gamma +  \sqrt{r_{\new}}\gamma_{\rmnew} + 2\epsilon_w) \leq
1.34 \left(2\sqrt{\zeta} +  \sqrt{r_{\new}}\gamma_{\rmnew} + 2\epsilon_w  \right)
\end{align*}
\end{enumerate}

\item For $k=2,3, \dots, K$ and $\hat{u}_j = u_j$ or $\hat{u}_j = u_j +1$, conditioned on $\Gamma_{j,k-1}^{\hat{u}_j}$, for $t \in \J_{\hat{u}_j+k} = \left[(\uhat_j+ k-1)\alpha+1, (\uhat_j+ k)\alpha \right]$, the first two conclusions above hold. That is, $\phi_t \leq\phi^+$ and $\et$ satisfies (\ref{etdef0_cor}).  Furthermore,
\begin{align*}
\mathrm{SE}_t &\leq \zeta_{j,*}^+ + \zeta_{j,\new,k-1}^+  \ \text{, and} \\
\|\bm{e}_t\|_2 &\leq \frac{\phi^+}{1-b} (2\zeta_{j,*}^+ \sqrt{r}\gamma +  \zeta_{j,\new,k-1}^+ \sqrt{r_{\new}}\gamma_{\rmnew} + 2\epsilon_w)
\leq 1.34 \left(2.15\sqrt{\zeta} +  0.19\cdot(0.1)^{k-1}\sqrt{r_{\new}}\gamma_{\rmnew} + 2\epsilon_w  \right)
\end{align*}

\item For $\hat{u}_j = u_j$ or $\hat{u}_j = u_j +1$, conditioned on $\Gamma_{j,K}^{\uhat_j}$, for $t \in \left[\that_j+ K\alpha+1, \that_j+ K\alpha + (\vartheta+1)\alpha\right]$, the first two conclusions above hold ($\phi_t\leq\phi^+$ and $\et$ satisfies \eqref{etdef0_cor}).
Furthermore,
\begin{align*}
\SE_t &\leq \zeta_{j,\add}^+  \ \text{, and} \\
\|\bm{e}_t\|_2 &\leq  \frac{\phi^+}{1-b} (2\zeta_{j,\add}^+ \sqrt{r}\gamma + 2\epsilon_w) \leq 2.67( \sqrt{\zeta} + \epsilon_w)
\end{align*}

\item For $\hat{u}_j = u_j$ or $\hat{u}_j = u_j +1$, conditioned on $\tilde\Gamma_{j,\vartheta}^{\uhat_j}$, for $t \in \left[\that_j+ K\alpha + (\vartheta+1)\alpha+1, t_{j+1}-1\right]$, the first two conclusions above hold ($\phi_t\leq\phi^+$ and $\et$ satisfies \eqref{etdef0_cor}).
Furthermore,
\begin{align*}
\SE_t &\leq \zeta_{j+1,*}^+  \ \text{, and} \\
\|\bm{e}_t\|_2 &\leq  \frac{\phi^+}{1-b} (2\zeta_{j+1,*}^+ \sqrt{r}\gamma + 2\epsilon_w) \leq 2.67( \sqrt{\zeta} + \epsilon_w)
\end{align*}

\end{enumerate}

\end{lem}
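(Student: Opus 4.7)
\textbf{Proof plan for Lemma \ref{cslem_cor}.} The four parts have identical structure; they only differ in which subspace directions are captured by $\Phat_{t-1}$ and hence in the bound on $\|\bm\Phi_t \lt\|_2$. My plan is to prove a general template and then specialize. The template has four steps: (i) bound the restricted isometry constant $\delta_{2s}(\bm\Phi_t)$; (ii) bound the ``noise'' $\b_t := \bm\Phi_t(\lt+\wt)$; (iii) argue that the $\ell_1$-min step followed by hard thresholding recovers $\T_t$ exactly; (iv) plug the LS formula for $\xhatt$ into $\et = (\xhatt - \xt) - \wt$ and bound $\|\et\|_2$.

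For step (i), in each case the conditioning gives an a priori bound $q$ on the subspace error, namely $q = 1$, $q=\zeta_{j,*}^+ + \zeta_{j,\new,k-1}^+$, $q=\zeta_{j,\add}^+$, or $q=\zeta_{j+1,*}^+$ (from Fact \ref{d_large} and Lemma \ref{zetadecay}). I will use a standard triangle-inequality argument to show that
\[
\kappa_{2s}(\Phat_t) \le \kappa_{2s,*}^+ + \kappa_{2s,\new}^+ + q \;\le\; 0.3 + 0.02 + q,
\]
where the first two terms come from the denseness assumption \eqref{kappa_dense} applied to the basis matrices of $\bm{P}_{(j),*}$ and $\bm{P}_{(j),\new}$ (or just $\bm{P}_{(j)}$ when the full subspace is tracked) and the last term absorbs the error. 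Lemma \ref{kappadelta} then yields $\delta_{2s}(\bm\Phi_t) = \kappa_{2s}(\Phat_t)^2$, and in each case this is below $1-1/\phi^+ = 1/6$; hence $\phi_t \le \phi^+ = 1.2$.

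For step (ii) I will use Fact \ref{ltbnds} (the analogue, for the AR model $\lt = b\bm\ell_{t-1}+\bnu_t$, of the standard bound on $\|(\I-\Phat\Phat')\lt\|_2$). Unrolling $\lt = \sum_{\tau\le t} b^{t-\tau}\bnu_\tau$ and using $\sum_{\tau} b^{t-\tau} \le 1/(1-b)$ gives, in case 2 (and analogously for the other cases),
\[
\|\bm\Phi_t \lt\|_2 \;\le\; \frac{1}{1-b}\bigl(\zeta_{j,*}^+ \sqrt{r}\gamma + \zeta_{j,\new,k-1}^+ \sqrt{r_\new}\gamma_\new\bigr),
\]
because $\bm\Phi_t$ annihilates $\Phat_{(j),*}$ and $\Phat_{(j),\new,k-1}$ exactly and leaves residuals of size $\zeta_{j,*}^+$ and $\zeta_{j,\new,k-1}^+$ on $\bm{P}_{(j),*}$ and $\bm{P}_{(j),\new}$ respectively, while $\|\Pjs{}'\bnu_\tau\|_2 \le \sqrt{r}\gamma$ and $\|\Pjnew{}'\bnu_\tau\|_2 \le \sqrt{r_\new}\gamma_\new$ by Model \ref{cor_model}. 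Adding $\|\bm\Phi_t\wt\|_2 \le \epsilon_w$ and comparing to $\xi = \xi_{\mathrm{cor}} = \epsilon_w + (2\sqrt\zeta + \sqrt{r_\new}\gamma_\new)/(1-b_0)$ shows $\|\b_t\|_2 \le \xi$ in every case, so $\xt$ is feasible for the $\ell_1$ program.

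For step (iii) I will invoke the standard sparse recovery guarantee for noisy BPDN under $\delta_{2s} < \sqrt{2}-1$ (satisfied since $\delta_{2s} \le 0.14 < 0.414$) to get $\|\xhat_{t,\cs} - \xt\|_\infty \le \|\xhat_{t,\cs} - \xt\|_2 \le C\xi$ for a small constant $C<7/2$; combined with the assumption $\min_{i\in\T_t}|(x_t)_i| \ge 14\xi$ and threshold $\omega = 7\xi$, this yields $\That_t = \T_t$ exactly (large entries stay above $\omega$, entries outside $\T_t$ stay below $\omega$). Step (iv) is then immediate: the LS step gives $\xhatt = \I_{\T_t}((\bm\Phi_t)_{\T_t}{}'(\bm\Phi_t)_{\T_t})^{-1}\I_{\T_t}{}'\bm{y}_t$, and substituting $\bm{y}_t = \bm\Phi_t\xt + \b_t$ yields the claimed identity \eqref{etdef0_cor}. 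The norm bound follows from $\|\et\|_2 \le \phi^+ \|\b_t\|_2 + \epsilon_w$, plugging in the step-(ii) bound, and collecting constants (using $\zeta_{j,*}^+ \sqrt{r}\gamma \le \sqrt\zeta$ etc.\ from the choice of $\zeta$). The main obstacle is bookkeeping the four cases and verifying that the resulting numerical bounds match the stated constants $1.34$, $2.67$; no new ideas are required beyond Fact \ref{ltbnds}, Lemma \ref{kappadelta}, and standard BPDN theory.
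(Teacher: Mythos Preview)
Your overall template (bound the RIC, bound $\|\b_t\|_2$, invoke BPDN, exact support recovery, LS error formula) matches the paper's proof. Steps (ii)--(iv) are essentially correct and coincide with the paper's use of Fact~\ref{ltbnds} and Corollary~\ref{RICnumbnd}.

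However, step (i) has a genuine gap. You propose $\kappa_{2s}(\Phat_t) \le 0.3 + 0.02 + q$ with $q$ taken to be the subspace error bound $\SE_t$, and in particular $q=1$ in case~1. With $q=1$ this gives $\kappa_{2s}(\Phat_t)\ge 1.32$, hence $\delta_{2s}(\bm\Phi_t)\ge 1.74$, which is useless (and even in case~2 with $k=2$, $q\approx 0.19$ gives $(0.51)^2\approx 0.26>1/6$). The point you are missing is that $\delta_s(\bm\Phi_t)=\kappa_s^2(\Phat_t)$ depends on the \emph{denseness of the columns of $\Phat_t$}, not on how much of the current true subspace $\bm{P}_t$ is captured by $\Phat_t$. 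In case~1, $\Phat_t=\Phat_{(j),*}$; this matrix fails to span $\bm{P}_{(j),\new}$ (so $\SE_t$ can be $1$), but its columns are still within $\zeta_{j,*}\le r\zeta$ of the dense basis $\bm{P}_{(j),*}$, which is all that matters for $\kappa_s$. The paper handles this via Lemma~\ref{RIC_bnd}, which gives $\kappa_s^2(\Phat_{(j),*})\le (\kappa_{s,*})^2+2\zeta_{j,*}$ and $\kappa_s(\Phat_{(j),\new,k})\le \kappa_{s,\new}+\zeta_{j,\new,k}+\zeta_{j,*}$, and then combines these (Corollary~\ref{RICnumbnd}) to get $\delta_{2s}(\bm\Phi_t)\le (\kappa_{2s,*})^2+2\zeta_{j,*}^+ + (\kappa_{2s,\new}+\zeta_{j,\new,k-1}^+ +\zeta_{j,*}^+)^2 < 0.1479$. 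Replace your $q=\SE_t$ argument with this decomposition and the rest of your plan goes through.
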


Notice that cases 1) and 4) of the above lemma occur when the algorithm is in the detection phase;  during the intervals for case 2) the algorithm is performing projection-PCA; during the interval for case 3), the algorithm is performing cluster-PCA.
In case 1) new directions have been added but not estimated, so the error, $\et$, is the largest. In case 2), the error is decaying exponentially with each estimation step.  Case 3) occurs after the new directions have been successfully estimated but the old directions are not deleted yet. Case 4) occurs after the latter has been done too (after cluster-PCA is done). Case 4) contains the smallest error bound, with case 3) bounds being only slightly larger. The proof of this lemma is similar to the proof of Lemma 6.15 of \cite{rrpcp_isit15}. It is given in Appendix \ref{pf_cslem}. The main extra fact that we need to use now because the $\lt$'s follow an AR model is the following.

\begin{fact} \label{ltbnds}
From Model \ref{cor_model}, clearly $\|\lt\|_2 \le \frac{\sqrt{r}\gamma}{1-b}$. Moreover, $\lt$ can be expanded as follows.
\[
\lt = \l_{t,small} + \sum_{\tau=t-\alpha+1}^{t} b^{t-\tau} \P_\tau \bm{a}_\tau \ \text{where} \ \l_{t,small}:=\sum_{\tau=0}^{t-\alpha} b^{t-\tau} \bm{\nu}_{\tau}
\]
Using the geometric series sum formula, $b^\alpha \le r_\new \zeta$, and the bound on $\zeta$ from the theorem,
\[
\|\l_{t,small}\|_2 \le  \frac{b^\alpha \sqrt{r} \gamma}{1-b}   \le \frac{r_\new \zeta \sqrt{r} \gamma}{1-b}  \le \frac{\sqrt{\zeta}}{1-b}
\]

For $t \in [t_j, (\uhat_j+1)\alpha)$, conditioned on $\Gamma_{j-1,\rmend}$,
\[
\|\bm\Phi_t \lt\|_2 = \|\bm\Phi_{(j),0} \lt\|_2 \le  \frac{r_\new \zeta \sqrt{r} \gamma}{1-b}   + \frac{1}{1-b} \max_{\tau \in [t-\alpha+1, t]} \|\bm\Phi_{0} \P_\tau \bm{a}_\tau\|_2
\le  \frac{2r\zeta \sqrt{r}\gamma + \sqrt{r_\new}\gamma_\new}{1-b}
\le \frac{2\sqrt{\zeta} + \sqrt{r_\new}\gamma_\new}{1-b}
\]
For a $t \in \J_{\uhat_j + k}$ for $k=2,3,\dots K$, conditioned on $\Gamma_{j,k-1}^{\hat{u}_j}$,   for $\hat{u}_j = u_j$ or $\hat{u}_j = u_j +1$,
\[
\|\bm\Phi_t \lt\|_2 = \|\bm\Phi_{(j),k-1} \lt\|_2 \le  \frac{r_\new \zeta \sqrt{r} \gamma}{1-b}   + \frac{1}{1-b} \max_{\tau \in [t-\alpha+1, t]} \|\bm\Phi_{k-1} \P_\tau \bm{a}_\tau\|_2
\le  \frac{2r\zeta \sqrt{r}\gamma + \zeta_{\new,k-1}^+ \sqrt{r_\new}\gamma_\new}{1-b}
\]
and the above can further be bounded by $\frac{2\sqrt{\zeta}  + \zeta_{\new,k-1}^+ \sqrt{r_\new}\gamma_\new}{1-b}$.

Using $\zeta_{\new,K}^+ \le r_\new \zeta$ (follows using Lemma \ref{zetadecay} and expression for $K$) and the bound on $\zeta$,
for $t \in [\that_j+K\alpha+1, \that_j+K\alpha + (\vartheta+1)\alpha]$, conditioned on $\Gamma_{j,K}^{\uhat_j}$,
\[
\|\bm\Phi_t \lt\|_2 = \|\bm\Phi_{(j),K} \lt\|_2 \le \frac{(2r\zeta+r_\new\zeta) \sqrt{r} \gamma}{1-b} \le  \frac{2\sqrt{\zeta}}{1-b}
\]
Using Fact \ref{d_large}, item 3, for $t \in [\that_j+K\alpha + (\vartheta+1)\alpha+1, t_{j+1}-1]$, conditioned on $\tilde\Gamma_{j,\vartheta}^{\uhat_j}$, $\zeta_{j+1,*} \le \zeta_{j+1,*}^+ = r\zeta$ and so
\[
\|\bm\Phi_t \lt\|_2 = \|\bm\Phi_{(j+1),0} \lt\|_2 \le \frac{2r\zeta \sqrt{r} \gamma}{1-b} \le  \frac{2\sqrt{\zeta}}{1-b}
\]

Recall that $\b_t:=\bm{\Phi}_t (\lt + \wt)$. Thus, using the above, we get that $\|\b_t\|_2 \le \|\bm\Phi_{t} \lt\|_2 + \|\wt\|_2 \le \xi$ ($ \xi$ is set in Theorem \ref{thm1_cor}).
\end{fact}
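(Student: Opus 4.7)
Proof proposal for Fact \ref{ltbnds}.

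The plan is to first unroll the AR recursion, then split the resulting sum into a ``tail'' (far past) and a ``head'' (recent $\alpha$ terms), and finally apply the relevant projector $\bm\Phi_t$ case by case. Starting from $\lt = b\bm{\ell}_{t-1} + \bnu_t$ with $\bm{\ell}_0=\bm{0}$, induction on $t$ yields $\lt = \sum_{\tau=1}^{t} b^{t-\tau}\bnu_\tau$. Since $\bnu_\tau$ lies in $\Span(\bm P_\tau)$, we can write $\bnu_\tau = \bm P_\tau \bm a_\tau$, and the decomposition $\lt = \bm\ell_{t,\mathrm{small}} + \sum_{\tau=t-\alpha+1}^{t} b^{t-\tau} \bm P_\tau \bm a_\tau$ is immediate. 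The bound $\|\bnu_\tau\|_2 \le \|\bm P_\tau{}'\bnu_\tau\|_2 \le \gamma$ (Model \ref{cor_model} item 6) then gives $\|\bm\ell_{t,\mathrm{small}}\|_2 \le \gamma \sum_{\tau\le t-\alpha} b^{t-\tau} \le \gamma b^\alpha/(1-b) \le \sqrt{r}\gamma b^\alpha/(1-b)$. I would then plug in $b^\alpha \le r_\new\zeta$ from Fact \ref{d_large} item 10, and finally use the bound on $\zeta$ from Theorem \ref{thm1_cor} (which forces $r\sqrt{r}\gamma\sqrt{\zeta}\le 1$, so $r_\new\zeta\sqrt{r}\gamma \le r\sqrt{r}\gamma\zeta \le \sqrt{\zeta}$) to reach the claimed $\sqrt{\zeta}/(1-b)$ bound. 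The same computation trivially yields $\|\lt\|_2\le \sqrt{r}\gamma/(1-b)$ by summing all of $\sum_{\tau=1}^t b^{t-\tau}\|\bnu_\tau\|_2$.

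For the projected bounds, the key observation is that in the ``head'' sum, each $\bm P_\tau \bm a_\tau$ can be decomposed along the ``old'' and ``new'' directions: for $\tau<t_j$, $\bm P_\tau = \bm P_{(j),*}$, and for $\tau\ge t_j$, $\bm P_\tau \bm a_\tau = \bm P_{(j),*}\bm a_{\tau,*} + \bm P_{(j),\new}\bm a_{\tau,\new}$ (writing $\bm a_\tau$ in the $[\bm P_{(j),*}\ \bm P_{(j),\new}]$ basis). I would use $\|\bm a_{\tau,*}\|_2 \le \gamma$ and $\|\bm a_{\tau,\new}\|_2 \le \sqrt{r_\new}\gamma_\new$ (via $\|\cdot\|_2\le\sqrt{r_\new}\|\cdot\|_\infty$ and the $\gamma_\new$ bound), and control $\|\bm\Phi_t \bm P_{(j),*}\|_2$ and $\|\bm\Phi_t \bm P_{(j),\new}\|_2$ depending on what $\bm\Phi_t$ is. For $t\in[t_j,(\uhat_j+1)\alpha]$ under $\Gamma_{j-1,\rmend}$ we have $\bm\Phi_t=\bm\Phi_{(j),0}$, so $\|\bm\Phi_t\bm P_{(j),*}\|_2 = \zeta_{j,*} \le r\zeta$ (by Fact \ref{d_large} items 3--5) and $\|\bm\Phi_t\bm P_{(j),\new}\|_2 \le 1$. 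For $t\in\J_{\uhat_j+k}$ with $k\ge 2$, $\bm\Phi_t = \I - \Phat_{(j),\add,k-1}\Phat_{(j),\add,k-1}{}'$ (using orthogonality of $\Phat_{(j),*}$ and $\Phat_{(j),\new,k-1}$ from Definition \ref{def_Phat_starnew}, item 4), so the projection bound for projections only shrinks things, giving $\|\bm\Phi_t\bm P_{(j),*}\|_2 \le \zeta_{j,*}$ and $\|\bm\Phi_t\bm P_{(j),\new}\|_2 \le \zeta_{j,\new,k-1} \le \zeta_{j,\new,k-1}^+$. The remaining two cases (after the $K$-th p-PCA and after cluster-PCA) use Fact \ref{d_large} items 7--8 to get $\mathrm{dif}(\Phat_{(j),\add},\bm P_{(j),\add}) \le \zeta_{j,\add}^+$ and $\zeta_{j+1,*}\le r\zeta$ respectively.

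Once these operator norm bounds are in hand, I would combine them via $\sum_{\tau=t-\alpha+1}^{t}b^{t-\tau}\|\bm\Phi_t \bm P_\tau \bm a_\tau\|_2 \le \frac{1}{1-b}\max_\tau \|\bm\Phi_t \bm P_\tau \bm a_\tau\|_2$, and add the $\|\bm\Phi_t \bm\ell_{t,\mathrm{small}}\|_2 \le r_\new\zeta\sqrt{r}\gamma/(1-b)$ term from the first part. Absorbing $r_\new\zeta\sqrt{r}\gamma$ into an extra $r\zeta\sqrt{r}\gamma$ (using $r_\new\le r$) produces the ``$2r\zeta\sqrt{r}\gamma$'' factor displayed in the statement. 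The bound $\zeta \le 1/((r+r_\new)^3\gamma^2)$ then yields $2r\zeta\sqrt{r}\gamma \le 2\sqrt{\zeta}$, which matches the asserted final forms. The bound on $\|\bm b_t\|_2$ follows immediately: $\|\bm b_t\|_2 \le \|\bm\Phi_t \lt\|_2 + \|\wt\|_2$, and in each case the first term is one of the four bounds above while $\|\wt\|_2\le\epsilon_w$, giving $\|\bm b_t\|_2 \le \epsilon_w + (2\sqrt{\zeta}+\sqrt{r_\new}\gamma_\new)/(1-b) = \xi$ in the ``worst'' (pre-first-p-PCA) regime and strictly less in the others.

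The only mild obstacle is purely organizational — keeping track of which $\bm\Phi_t$ and which $\zeta$-bound is active in each of the four conditioning regimes, and checking that in each case the absorbed $r\zeta\sqrt{r}\gamma$ terms collapse into $\sqrt{\zeta}$ by the theorem's bound on $\zeta$. Everything else is just the triangle inequality and the geometric series formula.
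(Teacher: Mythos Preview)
Your proposal is correct and follows essentially the same approach as the paper: the Fact is largely self-justifying there, with the key steps (unrolling the AR recursion, splitting into tail and head via $b^\alpha \le r_\new\zeta$, and bounding $\|\bm\Phi_t\bm P_{(j),*}\|_2$ and $\|\bm\Phi_t\bm P_{(j),\new}\|_2$ regime by regime using Fact \ref{d_large} and the definitions of the $\zeta$-quantities) written inline in the statement itself. The only cosmetic difference is that the paper uses the looser bound $\sqrt{r}\gamma$ for $\|\bm a_\tau\|_2$ where your tighter $\|\bm a_{\tau,*}\|_2\le\gamma$ also suffices.
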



\subsection{Main lemmas for proving Theorem \ref{thm1_cor} and proof of Theorem \ref{thm1_cor}} \label{pf_thm}

The first three lemmas below deal with analyzing the addition step. They have statements which are exactly the same as the corresponding lemmas in \cite{rrpcp_isit15}. But the proofs of the key lemmas needed for proving them are very different since the $\lt$'s are now correlated over time. We thus relegate the proofs of these lemmas to the appendix. The proofs of the key lemmas needed for these are given in the main text though. The fourth and the fifth lemma below deal with the deletion step (cluster-PCA) and these are new. These are proved in this section itself.

\begin{lem}[No false detection of subspace changes] \label{falsedet_del}
$\ds  \Pr\left(\mathrm{NODETS}_j^a \ | \ \tilde\Gamma_{j,\vartheta}^a \right) = 1$ for $a = u_j$ or $a = u_j + 1$.
\end{lem}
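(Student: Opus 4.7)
The claim is that the event $\mathrm{NODETS}_j^a$ holds \emph{deterministically} on $\tilde\Gamma_{j,\vartheta}^a$, so the plan is to derive a pointwise (sample-path) upper bound on $\lambda_{\max}(\tfrac{1}{\alpha}\bm{\mathcal{D}}_u\bm{\mathcal{D}}_u')$ that is valid for every $u\in[\hat u_j+K+\vartheta+2,\,u_{j+1}-1]$ and that lies strictly below $\thresh=\lamtrain/2$. The first step is to identify which subspace estimate is used to build $\bm{\mathcal{D}}_u$ in this range of $u$'s. By inspection of Algorithm \ref{reprocsdet}, for all such $u$ the algorithm is in the \emph{detect} phase and $\Phat_{u\alpha-1,*}=\Phat_{(j+1),*}$, with $\Phat_{u\alpha-1,\new}=[.]$. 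Hence, for every $t\in\J_u$, $\bm\Phi_t=\I-\Phat_{(j+1),*}\Phat_{(j+1),*}{}'$ and
\[
\tfrac{1}{\alpha}\bm{\mathcal{D}}_u\bm{\mathcal{D}}_u' \;=\; \tfrac{1}{\alpha}\sum_{t\in\J_u}\bm\Phi_t\,\lhatt\,\lhatt{}'\bm\Phi_t.
\]

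The second step is a simple operator-norm bound. Since this matrix is a positive-semidefinite average of rank-one outer products, $\lambda_{\max}(\tfrac{1}{\alpha}\bm{\mathcal{D}}_u\bm{\mathcal{D}}_u')\le \max_{t\in\J_u}\|\bm\Phi_t\lhatt\|_2^2$. Writing $\lhatt=\lt-\et$ gives $\|\bm\Phi_t\lhatt\|_2 \le \|\bm\Phi_t\lt\|_2 + \|\et\|_2$, so it suffices to bound each of these two quantities in the relevant interval $[\that_j+K\alpha+(\vartheta+1)\alpha+1,\,t_{j+1}-1]$.

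For the first term I will invoke the last display of Fact \ref{ltbnds}: conditioned on $\tilde\Gamma_{j,\vartheta}^{\uhat_j}$, we have $\zeta_{j+1,*}\le r\zeta$ (by Fact \ref{d_large}) and therefore $\|\bm\Phi_t\lt\|_2=\|\bm\Phi_{(j+1),0}\lt\|_2\le \tfrac{2\sqrt{\zeta}}{1-b}$. For the second term, the relevant case is case~4) of Lemma \ref{cslem_cor}, which (again under $\tilde\Gamma_{j,\vartheta}^{\uhat_j}$) gives $\|\et\|_2\le 2.67(\sqrt{\zeta}+\epsilon_w)$. Combining, $\|\bm\Phi_t\lhatt\|_2\le \tfrac{2\sqrt{\zeta}}{1-b_0}+2.67(\sqrt{\zeta}+\epsilon_w)$, and this bound is deterministic on the conditioning event.

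The final step is the numeric check that the square of this bound is less than $\thresh$. By Lemma \ref{initsub_cor}, $\lamtrain \ge 0.8\lambda^-$, so $\thresh\ge 0.4\lambda^-$. Using $(a+b)^2\le 2a^2+2b^2$ and the hypotheses $\epsilon_w^2\le 0.03\zeta\lambda^-$ and $\zeta\le \tfrac{0.003\lambda^-}{(r+r_\new)^2\lambda^+}$ together with $\lambda^+\ge \lambda^-$, each of $\zeta$ and $\epsilon_w^2$ is at most a small constant times $\lambda^-$ with a slack factor of at least $(r+r_\new)^{-2}$, which comfortably yields $\|\bm\Phi_t\lhatt\|_2^2 < 0.4\lambda^- \le \thresh$. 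This gives the required strict inequality for every $u$ in the stated range, so $\mathrm{NODETS}_j^a$ holds with probability one on $\tilde\Gamma_{j,\vartheta}^a$. There is no genuine ``hard part'' beyond the arithmetic: the work was done in Lemma \ref{cslem_cor} (case 4) and in Fact \ref{ltbnds}; this lemma just reads off the resulting deterministic bound and compares it against $\thresh$.
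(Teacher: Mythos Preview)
Your overall strategy is exactly the paper's: on $\tilde\Gamma_{j,\vartheta}^a$ the projector is $\bm\Phi_{(j+1),0}$, and one bounds $\lambda_{\max}(\tfrac{1}{\alpha}\bm{\mathcal{D}}_u\bm{\mathcal{D}}_u')$ pointwise by $(\max_t\|\bm\Phi_t\lt\|_2+\max_t\|\et\|_2)^2$ using Fact~\ref{ltbnds} and case~4 of Lemma~\ref{cslem_cor}. The structure is fine.

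The gap is in your last paragraph's arithmetic. You quote the \emph{simplified} final forms $\|\bm\Phi_t\lt\|_2\le \tfrac{2\sqrt{\zeta}}{1-b}$ and $\|\et\|_2\le 2.67(\sqrt{\zeta}+\epsilon_w)$, so after squaring you have a term of order~$\zeta$ that must be shown to be $<0.4\lambda^-$. You then claim that $\zeta\le \tfrac{0.003\lambda^-}{(r+r_\new)^2\lambda^+}$ together with $\lambda^+\ge\lambda^-$ forces $\zeta$ to be a small constant times $\lambda^-$; but that inequality only gives the dimensionless bound $\zeta\le 0.003/(r+r_\new)^2$, not $\zeta\le c\lambda^-$. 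Nothing in the theorem's hypotheses bounds $\zeta$ by a multiple of $\lambda^-$: for instance, if $\lambda^-=\lambda^+=\gamma^2=10^{-10}$ and $r+r_\new=11$, the hypotheses allow $\zeta$ of order $10^{-7}$, vastly larger than $\lambda^-$, and your square $\approx 48\zeta$ is not below $0.4\lambda^-$.

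The paper avoids this by using the \emph{tighter} intermediate bounds that Fact~\ref{ltbnds} and Lemma~\ref{cslem_cor} also record, namely $\|\bm\Phi_{(j+1),0}\lt\|_2\le \tfrac{2r\zeta\sqrt{r}\gamma}{1-b}$ and $\|\et\|_2\le \tfrac{\phi^+}{1-b}(2r\zeta\sqrt{r}\gamma+2\epsilon_w)$. Squaring these produces terms like $r^3\zeta^2\gamma^2$ and $r\zeta\sqrt{r}\gamma\,\epsilon_w$, and now the hypothesis $\zeta\le \tfrac{0.05\lambda^-}{(r+r_\new)^3\gamma^2}$ (which ties $\zeta\gamma^2$ to $\lambda^-$) together with $\epsilon_w^2\le 0.03\zeta\lambda^-$ gives everything as a small multiple of $\zeta\lambda^-\ll\lambda^-$, which is the comparison you need against $\thresh\ge 0.4\lambda^-$. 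So the fix is simply to swap in those tighter bounds before squaring; no new idea is required.
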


\begin{lem}[Subspace change detected within $2\alpha$ frames] \label{det}
For $j = 1,\dots,J$,
\[
\Pr\left(\mathrm{DET}_j^{u_j+1} \ | \ \Gamma_{j-1,\rmend}, \overline{\mathrm{DET}^{u_j}} \right) \geq p_{\det,1} := 1 - p_{\bm{A}} - p_{\bm{\mathcal{H}}}.
\]
The definitions of $p_{\bm{A}}$ and $p_{\bm{\mathcal{H}}}$ can be found in the proofs of Lemmas \ref{Ak_cor} and \ref{calHk_cor} respectively.
\end{lem}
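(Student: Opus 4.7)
The plan is to reduce the detection event at $u = u_j+1$ to the event $\{\lambda_{\max}(\M_{u_j+1})\geq\thresh\}$ and then combine Weyl's inequality with the matrix-Azuma-based bounds from Lemmas \ref{Ak_cor} and \ref{calHk_cor}. First I would note that on $\Gamma_{j-1,\rmend}\cap\overline{\mathrm{DET}^{u_j}}$ the algorithm is in the ``detect'' phase throughout $\mathcal{J}_{u_j+1}$ with $\Phat_{t,*}=\Phat_{(j),*}$ and $\Phat_{t,\new}=[\,]$, so that $\frac{1}{\alpha}\bm{\mathcal{D}}_{u_j+1}\bm{\mathcal{D}}_{u_j+1}{}' = \M_{u_j+1}$ in the sense of Definition \ref{defHk}. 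Since $t_j\in\mathcal{J}_{u_j}$, we also have $\mathcal{J}_{u_j+1}\subseteq[t_j,t_j+2\alpha]\subseteq[t_j,t_j+d]$, so the small-eigenvalue condition (\ref{anew_small}) on $\bm{\Lambda}_{t,\new}$ is active for every $t$ in this interval, and case 1 of Lemma \ref{cslem_cor} guarantees $\phi_t\leq\phi^+$ together with the error identity \eqref{etdef0_cor}.

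Next I would decompose $\M_{u_j+1} = \bm{\mathcal{A}}_{u_j+1} + \bm{\mathcal{H}}_{u_j+1}$ and apply Weyl to obtain
\[
\lambda_{\max}(\M_{u_j+1})\;\geq\;\lambda_{\max}(\bm{\mathcal{A}}_{u_j+1}) - \|\bm{\mathcal{H}}_{u_j+1}\|_2.
\]
Because $\bm{\mathcal{A}}_u$ is block diagonal in the orthonormal basis $[\bm{E}_{j,\new},\bm{E}_{j,\new,\perp}]$ with positive semidefinite blocks $\bm{A}_u$ and $\bm{A}_{u,\perp}$, its largest eigenvalue is at least $\lambda_{\min}(\bm{A}_{u_j+1})$. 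Lemma \ref{Ak_cor} (specialized to $k=1$, which matches the present situation since $\Phat_{(j),\new,0}=[\,]$) then yields $\lambda_{\min}(\bm{A}_{u_j+1})\geq b_{\bm{A}}$ off an event of probability at most $p_{\bm{A}}$, while Lemma \ref{calHk_cor} yields $\|\bm{\mathcal{H}}_{u_j+1}\|_2\leq b_{\bm{\mathcal{H}},1}$ off an event of probability at most $p_{\bm{\mathcal{H}}}$. Combining these with item 1 of Lemma \ref{bounds_b_A}, which gives $b_{\bm{A}} - b_{\bm{\mathcal{H}},1}\geq 0.8\lambda^- > \thresh$, and taking a union bound delivers $\lambda_{\max}(\M_{u_j+1})\geq\thresh$ with probability at least $1 - p_{\bm{A}} - p_{\bm{\mathcal{H}}}$.

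The main obstacle will be a bookkeeping check: Lemmas \ref{Ak_cor} and \ref{calHk_cor} are naturally phrased with conditioning of the form $X_{\hat{u}_j+k-1}\in\Gamma_{j,k-1}^{\hat{u}_j}$ for the projection-PCA analysis, whereas here the conditioning is $X_{u_j}\in\Gamma_{j-1,\rmend}\cap\overline{\mathrm{DET}^{u_j}}$. What the Azuma arguments in those proofs actually require is (i) $\mathrm{dif}(\Phat_{(j),*},\bm{P}_{(j),*})\leq r\zeta$, (ii) $\bm{\Phi}_t = \I - \Phat_{(j),*}\Phat_{(j),*}{}'$ throughout $\mathcal{J}_{u_j+1}$, (iii) the $\phi_t$ and $\et$ conclusions of Lemma \ref{cslem_cor} case 1, and (iv) independence of $\{\bnu_t\}_{t\in\mathcal{J}_{u_j+1}}$ from the conditioning. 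All four hold here---the first by Fact \ref{d_large}, the next two by Lemma \ref{cslem_cor}, and the last by the independence assumption of Theorem \ref{thm1_cor}---so the bounds carry over verbatim and the lemma follows.
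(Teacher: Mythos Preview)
Your proposal is correct and follows essentially the same approach as the paper's proof: both reduce to showing $\lambda_{\max}(\M_{u_j+1})\ge\thresh$ via Weyl, the block structure of $\bm{\mathcal{A}}_{u_j+1}$, and the $k=1$ cases of Lemmas \ref{Ak_cor} and \ref{calHk_cor} together with Lemma \ref{bounds_b_A}. The paper phrases the conditioning slightly differently---it first proves the bound for all $X_{u_j}\in\Gamma_{j-1,\rmend}$ and then restricts to $\Gamma_{j-1,\rmend}\cap\overline{\mathrm{DET}^{u_j}}$ before invoking Lemma \ref{event}---but your more explicit bookkeeping of why the Azuma lemmas transfer to this conditioning is equivalent and arguably cleaner.
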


\begin{lem}[$k$-th iteration of pPCA works well] \label{pPCA}
\[
\Pr\left(\Gamma_{j,k}^a \ | \ \Gamma_{j,k-1}^a \right) = \Pr\left(\mathrm{PPCA}_{j,k}^a \ | \ \Gamma_{j,k-1}^{a} \right)  \geq p_{\mathrm{ppca}} := 1 - p_{\bm{A}} - p_{\bm{A},{\perp}} - p_{\bm{\mathcal{H}}}
\]
for $a = u_j$ or $a = u_{j}+1$.
The definitions of $p_{\bm{A}}$, $p_{\bm{A},{\perp}}$, and $p_{\bm{\mathcal{H}}}$ can be found in the proofs of Lemmas \ref{Ak_cor}, \ref{Akperp_cor}, and \ref{calHk_cor} respectively.
\end{lem}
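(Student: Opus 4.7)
\medskip

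\noindent\textbf{Proof plan for Lemma \ref{pPCA}.} Fix $u = \hat{u}_j + k$ and condition on an $X_{u-1} \in \Gamma_{j,k-1}^a$. On this event, $\hat{u}_j = a$, $\zeta_{j,*} \le \zeta_{j,*}^+$, and $\zeta_{j,\new,k-1} \le \zeta_{j,\new,k-1}^+$, and moreover $\Phat_{u\alpha-1,*} = \Phat_{(j),*}$, so that $\frac{1}{\alpha}\bm{\mathcal{D}}_u \bm{\mathcal{D}}_u{}' = \M_u = \mathcal{A}_u + \mathcal{H}_u$, with $\mathcal{A}_u$ already block-diagonalized by the unitary $[\bm{E}_{j,\new}\ \bm{E}_{j,\new,\perp}]$ (Definition~\ref{defHk}). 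The key observation is that $\Span(\bm{E}_{j,\new})$ is the range of $(I - \Phat_{(j),*}\Phat_{(j),*}{}')\bm{P}_{(j),\new}$, and an elementary calculation (using $\zeta_{j,*} \le \zeta_{j,*}^+ = r\zeta$) yields a bound of the form $\mathrm{dif}([\Phat_{(j),*}\ \bm{E}_{j,\new}],\bm{P}_{(j),\new}) \le \zeta_{j,*}^+$. Thus the plan is to first estimate $\Span(\bm{E}_{j,\new})$ well by the top eigenvectors of $\M_u$, and then combine this estimate with $\Phat_{(j),*}$ to control $\zeta_{j,\new,k}$.

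\medskip

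\noindent The heart of the argument is an application of the Davis--Kahan $\sin\theta$ theorem to the pair $(\M_u,\mathcal{A}_u)$. By the block structure of $\mathcal{A}_u$, its top $r_{j,\new}$ eigenvalues are exactly those of $\bm{A}_u$, with associated eigenspace $\Span(\bm{E}_{j,\new})$, and its remaining eigenvalues are those of $\bm{A}_{u,\perp}$. The $\sin\theta$ theorem then yields
\begin{equation*}
\mathrm{dif}(\Phat_{(j),\new,k}, \bm{E}_{j,\new}) \;\le\; \frac{\|\mathcal{H}_u\|_2}{\lambda_{\min}(\bm{A}_u) - \lambda_{\max}(\bm{A}_{u,\perp}) - \|\mathcal{H}_u\|_2},
\end{equation*}
provided the denominator is positive. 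The three key lemmas \ref{Ak_cor}, \ref{Akperp_cor}, \ref{calHk_cor} give, conditioned on $X_{u-1} \in \Gamma_{j,k-1}^a$, the high-probability bounds $\lambda_{\min}(\bm{A}_u) \ge b_{\bm{A}}$ (w.p.\ at least $1-p_{\bm{A}}$), $\lambda_{\max}(\bm{A}_{u,\perp}) \le b_{\bm{A},\perp}$ (w.p.\ at least $1-p_{\bm{A},\perp}$), and $\|\mathcal{H}_u\|_2 \le b_{\bm{\mathcal{H}},k}$ (w.p.\ at least $1-p_{\bm{\mathcal{H}}}$). Substituting yields $\mathrm{dif}(\Phat_{(j),\new,k}, \bm{E}_{j,\new}) \le b_{\bm{\mathcal{H}},k}/(b_{\bm{A}} - b_{\bm{A},\perp} - b_{\bm{\mathcal{H}},k})$; a short triangle-inequality computation using $\Phat_{(j),*} \perp \Phat_{(j),\new,k}$ then converts this into the desired bound $\zeta_{j,\new,k} \le \zeta_{j,\new,k}^+$ from Definition~\ref{def_zetaknew_etc}.

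\medskip

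\noindent For the rank claim, I will use Weyl's inequality to compare the eigenvalues of $\M_u$ and $\mathcal{A}_u$. On the good event, the top $r_{j,\new}$ eigenvalues of $\mathcal{A}_u$ are at least $b_{\bm{A}}$ and the remaining ones are at most $b_{\bm{A},\perp}$, so after perturbation by at most $b_{\bm{\mathcal{H}},k}$ the corresponding eigenvalues of $\M_u$ lie in $[b_{\bm{A}} - b_{\bm{\mathcal{H}},k},\, \infty)$ and $(-\infty, b_{\bm{A},\perp} + b_{\bm{\mathcal{H}},k}]$ respectively. Lemma~\ref{bounds_b_A}(1) gives $b_{\bm{A}} - b_{\bm{\mathcal{H}},k} \ge b_{\bm{A}} - b_{\bm{\mathcal{H}},1} > \thresh$ and $b_{\bm{A},\perp} + b_{\bm{\mathcal{H}},k} \le b_{\bm{A},\perp} + b_{\bm{\mathcal{H}},1} < \thresh$, so the threshold $\thresh$ used in step~\ref{PCA} of Algorithm~\ref{reprocsdet} cleanly separates the two groups, forcing $\rank(\Phat_{(j),\new,k}) = r_{j,\new}$. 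Combining all the above via the union bound gives the claimed probability $p_{\mathrm{ppca}} = 1 - p_{\bm{A}} - p_{\bm{A},\perp} - p_{\bm{\mathcal{H}}}$.

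\medskip

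\noindent\textbf{Main obstacle.} The structural part above is short; the real work lives inside Lemmas~\ref{Ak_cor}, \ref{Akperp_cor}, \ref{calHk_cor}, which are deferred. The subtlety here is only to confirm that (i) the three good events all hold simultaneously with the stated probability under the \emph{same} conditioning $X_{u-1} \in \Gamma_{j,k-1}^a$, and (ii) the denominator $b_{\bm{A}} - b_{\bm{A},\perp} - b_{\bm{\mathcal{H}},k}$ is provably positive, which again reduces to Lemma~\ref{bounds_b_A}(1). No matrix Azuma computation is needed at this level; the output of those sublemmas is precisely shaped so that the Davis--Kahan bound plugs in directly.
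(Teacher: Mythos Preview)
Your proposal is correct and follows essentially the same approach as the paper: split $\M_u = \mathcal{A}_u + \mathcal{H}_u$, use Weyl's inequality with the bounds from Lemmas~\ref{Ak_cor}, \ref{Akperp_cor}, \ref{calHk_cor} and Lemma~\ref{bounds_b_A} to pin the $r_{j,\new}$-th and $(r_{j,\new}+1)$-th eigenvalues of $\M_u$ on opposite sides of $\thresh$, then apply the $\sin\theta$ theorem to get the $\zeta_{j,\new,k}$ bound. The only cosmetic difference is that the paper invokes Lemma~\ref{zetakbnd} directly (which already states the bound on $\zeta_{j,\new,k}$ rather than on $\mathrm{dif}(\Phat_{(j),\new,k},\bm{E}_{j,\new})$), so your ``triangle-inequality conversion'' step is absorbed there; concretely, orthogonality $\Phat_{(j),*}\perp\Phat_{(j),\new,k}$ gives $(I-\Phat_{(j),*}\Phat_{(j),*}{}'-\Phat_{(j),\new,k}\Phat_{(j),\new,k}{}')\bm{P}_{(j),\new} = (I-\Phat_{(j),\new,k}\Phat_{(j),\new,k}{}')\bm{E}_{j,\new}\bm{R}_{j,\new}$ with $\|\bm{R}_{j,\new}\|_2\le 1$, which is the computation you allude to.
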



\begin{lem}[Clusters are correctly estimated]\label{cluster_lem}
\[
\Pr\left(  \mathrm{CLUSTER}_{j}^a \ \big| \ \Gamma_{j,K}^a \right) \geq p_{\mathrm{cluster}} = 1 -  p_{\mathrm{cl}} - p_{\tilde{\bm{le}}} - p_{\tilde{\bm{ee}}} 
\]
for $a=u_j$ or $a=u_{j}+1$. The definition of $p_{\mathrm{cl}}$ can be found in the proof of Lemma \ref{cluster_lem_azuma} and definition of $p_{\tilde{\bm{le}}}$, $p_{\tilde{\bm{ee}}}$ can be found in the proof of Lemma \ref{Htil_cor}.
\end{lem}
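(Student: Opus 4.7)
Conditioning on $\Gamma_{j,K}^a$ puts us in case 3 of Lemma~\ref{cslem_cor} throughout $\tilde{\mathcal{I}}_{j,0}\subseteq[\that_j+K\alpha+1,\that_j+K\alpha+(\vartheta+1)\alpha]$, so on this window $\hat{\mathcal{T}}_t=\mathcal{T}_t$, $\phi_t\le\phi^+=1.2$, and $\et$ is given by~\eqref{etdef0_cor} with $\SE_t\le\zeta_{j,\add}^+\le(r+r_\new)\zeta$. The plan is to show that the sample covariance computed in the cluster-estimation step of Algorithm~\ref{clusterPCA},
\[
\hat{\bm\Sigma}_{\mathrm{sample}} \;=\; \frac{1}{\alpha}\sum_{t\in\tilde{\mathcal{I}}_{j,0}}\lhat_t\lhat_t{}' \;=\; \underbrace{\tfrac{1}{\alpha}\sum\lt\lt{}'}_{=:\bm{S}_{\ell\ell}} \;-\; \underbrace{\tfrac{1}{\alpha}\sum(\lt\et{}'+\et\lt{}')}_{=:\bm{S}_{\ell e}} \;+\; \underbrace{\tfrac{1}{\alpha}\sum\et\et{}'}_{=:\bm{S}_{ee}},
\]
lies within $\bigO(\zeta)\lambda^-$ in spectral norm of a matrix whose eigenvalues are essentially $\tfrac{1}{1-b^2}\lambda_i(\Lamj)$. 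Weyl's inequality then yields $|\hat\lambda_i-\tfrac{1}{1-b^2}\lambda_i(\Lamj)|$ small enough to certify that the threshold $\ghatp=(g^++0.06)/(1-0.06)$ separates the clusters correctly.

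The three pieces of the perturbation are controlled by three matrix-Azuma arguments, matching the three failure probabilities in the statement. The deviation $\|\bm{S}_{\ell\ell}-\E[\bm{S}_{\ell\ell}]\|_2$ will come from Lemma~\ref{cluster_lem_azuma} and contributes $p_{\mathrm{cl}}$; here the trick is to expand $\lt$ as in Fact~\ref{ltbnds} into a ``recent'' part $\sum_{\tau=t-\alpha+1}^t b^{t-\tau}\bnu_\tau$ and an ``old'' tail of norm $\le\sqrt{\zeta}/(1-b)$ (because $b^\alpha\le r_\new\zeta$), apply matrix Azuma to the recent part, and verify $\E[\bm{S}_{\ell\ell}]$ is within $\bigO(\zeta)\lambda^-$ of $\tfrac{1}{1-b^2}\P_{(j)}\Lamj\P_{(j)}{}'$. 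The cross term $\|\bm{S}_{\ell e}\|_2$ (contributing $p_{\tilde{\bm{le}}}$) and the quadratic term $\|\bm{S}_{ee}\|_2$ (contributing $p_{\tilde{\bm{ee}}}$) are furnished by Lemma~\ref{Htil_cor}. Their proofs follow the template of Lemma~\ref{calHk_cor}: substitute~\eqref{etdef0_cor}, apply Cauchy-Schwartz with $\bm{Y}_t:=\ITt\invterm\ITt{}'$, bound $\lambda_{\max}\bigl(\tfrac{1}{\alpha}\sum\bm{Y}_t\bm{Y}_t{}'\bigr)$ via Lemma~\ref{blockdiag1} using $\rho^2 h^+\le 10^{-4}$ and $\|\invterm\|_2\le\phi^+$, and keep the other factor small via $\zeta_{j,\add}^+\le(r+r_\new)\zeta$ together with the denseness bound $\kappa_{2s,*}\le 0.3$. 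A union bound over the three events yields $1-p_{\mathrm{cl}}-p_{\tilde{\bm{le}}}-p_{\tilde{\bm{ee}}}$.

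With the Weyl bound in hand, verifying $\hat{\mathcal{G}}_{j,k}=\mathcal{G}_{j,k}$ is bookkeeping. Since $\lambda_i(\Lamj)\ge\lambda^-$ for $i\in\bigcup_k\mathcal{G}_{j,k}$, empirical eigenvalues inside cluster $k$ lie in $[(1-0.06)\tfrac{\lambda_{j,k}^-}{1-b^2},\,(1+0.06)\tfrac{\lambda_{j,k}^+}{1-b^2}]$. Within-cluster ratios are therefore at most $(g^++0.06)/(1-0.06)=\ghatp$, while between-cluster ratios exceed $(1-0.06)/((1+0.06)\chi^+)>\ghatp$ thanks to the separation \eqref{chi} with $\chi^+=0.2$. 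Every nonzero-cluster empirical eigenvalue exceeds $0.94\,\lambda^-/(1-b^2) > 0.25\cdot 1.2\,\lambda^-\ge 0.25\lamtrain$ (using Lemma~\ref{initsub_cor}), whereas eigenvalues outside $\bigcup_k\mathcal{G}_{j,k}$ are zero in $\Lamj$ and are perturbed by at most $0.06\,\lambda^-/(1-b^2)<0.25\lamtrain$. The stopping conditions in Algorithm~\ref{clusterPCA} fire exactly at the cluster boundaries prescribed by Model~\ref{clust_model}, so $\hat{\mathcal{G}}_{j,k}=\mathcal{G}_{j,k}$ for every $k$.

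The main obstacle is the cross term $\bm{S}_{\ell e}$: its summands are strongly correlated because $\et$ is a deterministic function of $\lt$ and $\wt$ through~\eqref{etdef0_cor}, ruling out matrix Hoeffding. The way around, already used in the projection-PCA proof, is to expand $\lt$ via the AR recursion, condition on $X_{\uhat_j+K}$, and construct martingale differences whose conditional expectations collapse by independence of the $\bnu_t$'s; the recent-AR-tail part then admits matrix Azuma, while the pre-$\alpha$ tail is absorbed by $b^\alpha\le r_\new\zeta$. Carrying this out in the cluster-PCA window, where the sum is not preceded by an outer projection $\bm\Phi_{(j),*}$ as in Lemma~\ref{calHk_cor}, is precisely the content of Lemma~\ref{Htil_cor} and is where the bulk of the technical work lives.
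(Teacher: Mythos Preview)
Your proposal is correct and follows essentially the same route as the paper: split $\hat{\bm\Sigma}_{\mathrm{sample}}$ into $\bm{S}_{\ell\ell}$, the cross term, and $\bm{S}_{ee}$; invoke Lemma~\ref{cluster_lem_azuma} for $\|\bm{S}_{\ell\ell}-\tfrac{1}{1-b^2}\bm\Sigma_{(j)}\|_2\le 0.05\lambda^-$ and the second conclusion of Lemma~\ref{Htil_cor} for $2\|\tfrac{1}{\alpha}\sum\lt\et{}'\|_2+\|\tfrac{1}{\alpha}\sum\et\et{}'\|_2\le 0.01\lambda^-$; then apply Weyl and check the ratio and threshold conditions. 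One small wording fix: the perturbation on $\hat\lambda_i$ is the \emph{additive} constant $\bbb+\bb\le 0.06\lambda^-$, not a multiplicative $(1\pm 0.06)$ factor (your ratio computations still go through since $\lambda_{j,k}^-\ge\lambda^-$, but the stopping-rule check $\hat\lambda_{k_{\vartheta_j}+1}\le 0+0.06\lambda^-<0.25\lamtrain$ should use the additive form).
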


\begin{lem}[Subspaces corresponding to each cluster are correctly estimated] \label{cPCA}
\[
\Pr\left(\mathrm{CPCA}_{j,k}^a \ | \ \tilde\Gamma_{j,k-1}^a  \right) \geq p_{\mathrm{cpca}} := 1 - p_{\tilde{\bm{A}}} - p_{\tilde{\bm{A}},\perp} - p_{\tilde{\bm{\mathcal{H}}}}
\]
for $a=u_{j+1}$ or $a=u_{j+1}+1$.  The probabilities $p_{\tilde{\bm{A}}}$, $p_{\tilde{\bm{A}},\perp}$, $p_{\tilde{\bm{\mathcal{H}}}}$ are defined in the proofs of Lemmas \ref{Atil_cor}, \ref{Atilperp_cor}, and \ref{Htil_cor} respectively.

Using Fact \ref{d_large}, $\bigcap_{k=1}^{\vartheta} \mathrm{CPCA}_{j,k}^a$ implies that $\zeta_{(j+1),*} \le \zeta_{(j+1),*}^+ = r \zeta$. Thus, $\Gamma_{j,\rmend}^a$ also implies this.
\end{lem}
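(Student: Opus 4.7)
The plan is to mimic the proof structure of Lemma \ref{pPCA} (the addition step), adapted to the deletion step carried out by cluster PCA. Conditioning on $\tilde\Gamma_{j,k-1}^a$ fixes $\hat{\bm{G}}_{j,1}, \dots, \hat{\bm{G}}_{j,k-1}$ (so that $\bm{\Psi}_k := (\I - \hat{\bm{G}}_{j,\det,k}\hat{\bm{G}}_{j,\det,k}{}')$ is deterministic) and guarantees $\hat{\mathcal{G}}_{j,i} = \mathcal{G}_{j,i}$ for every $i$; thus $\hat{\bm{G}}_{j,k}$ is exactly the top $|\mathcal{G}_{j,k}|$ eigenvectors of $\bm{\mathcal{M}}_{\mathrm{cpca}}$ computed over the interval $\Ijkt$, and the quantity to bound is $\tilde\zeta_{j,k} = \mathrm{dif}([\hat{\bm{G}}_{j,1}, \dots, \hat{\bm{G}}_{j,k}], \bm{G}_{j,k})$.

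The first step is to decompose $\bm{\mathcal{M}}_{\mathrm{cpca}} = \tilde{\bm{\mathcal{A}}}_{j,k} + \tilde{\bm{\mathcal{H}}}_{j,k}$ into a noise-free signal part $\tilde{\bm{\mathcal{A}}}_{j,k} := \bm{\Psi}_k \bigl(\tfrac{1}{\alpha}\sum_{t \in \Ijkt} \lt \lt{}'\bigr) \bm{\Psi}_k$ and a perturbation $\tilde{\bm{\mathcal{H}}}_{j,k}$ arising from $\lhatt - \lt = -\et$. Writing things in the basis $[\bm{G}_{j,k}, \bm{G}_{j,k,\perp}]$, where $\bm{G}_{j,k,\perp}$ spans the orthogonal complement of $\bm{G}_{j,k}$, the signal $\tilde{\bm{\mathcal{A}}}_{j,k}$ is (approximately) block-diagonal: Lemma \ref{Atil_cor} will lower-bound the minimum eigenvalue of the $\bm{G}_{j,k}$-block by $b_{\tilde{\bm{A}},k}$, and Lemma \ref{Atilperp_cor} will upper-bound the maximum eigenvalue of the complementary block by $b_{\tilde{\bm{A}},k,\perp}$. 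The accurate recovery of earlier clusters encoded in $\tilde\Gamma_{j,k-1}^a$ is what makes $\bm{\Psi}_k \bm{G}_{j,i}$ small for $i < k$; the separation condition \eqref{chi} combined with the in-cluster condition-number bound \eqref{gjk} is precisely what makes the gap $b_{\tilde{\bm{A}},k} - b_{\tilde{\bm{A}},k,\perp}$ positive. Lemma \ref{Htil_cor} controls $\|\tilde{\bm{\mathcal{H}}}_{j,k}\|_2 \le b_{\tilde{\bm{\mathcal{H}}},k}$ via matrix Azuma applied to the cross and quadratic error terms. Plugging these three bounds into the Davis--Kahan $\sin\theta$ theorem gives
\[
\tilde\zeta_{j,k} \le \frac{b_{\tilde{\bm{\mathcal{H}}},k}}{b_{\tilde{\bm{A}},k} - b_{\tilde{\bm{A}},k,\perp} - b_{\tilde{\bm{\mathcal{H}}},k}} = \tilde\zeta_k^+,
\]
and a union bound over the three failure events from those lemmas yields the claimed probability $1 - p_{\tilde{\bm{A}}} - p_{\tilde{\bm{A}},\perp} - p_{\tilde{\bm{\mathcal{H}}}}$. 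The implication $\Gamma_{j,\mathrm{end}}^a \Rightarrow \zeta_{(j+1),*} \le r\zeta$ is then exactly Fact \ref{d_large}, items 3--4.

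The main obstacle is not this assembly but the proof of the three underlying Azuma lemmas, and within them, Lemma \ref{Htil_cor} is the hardest. The difficulty is that terms such as $\tfrac{1}{\alpha}\sum_{t \in \Ijkt} \bm{\Psi}_k \lt \et{}' \bm{\Psi}_k$ are \emph{not} zero-mean, because $\et$ is correlated with $\lt$ through the sparse-recovery formula \eqref{etdef0_cor}. As in the projection-PCA analysis, such terms must first be rewritten using \eqref{etdef0_cor} and split via Cauchy--Schwartz (Lemma \ref{CSmat}), after which the support-change structure of Model \ref{general_model} together with Lemma \ref{blockdiag1} brings in the small factor $\rho^2 h^+ \le 10^{-4}$ that we need to beat the $\lambda^-$ gap. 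On top of this, the AR(1) correlation in $\lt$ forces us to expand $\lt = \bm{\ell}_{t,\mathrm{small}} + \sum_{\tau = t-\alpha+1}^{t} b^{t-\tau} \bm{\nu}_\tau$ (Fact \ref{ltbnds}), absorb the tail using $b^\alpha \le r_\new \zeta$, and show that the remaining $\alpha$-fold $\bm{\nu}_\tau$ sum only inflates everything by a benign $1/(1-b) \le 1/(1-b_0)$ factor before matrix Azuma can be invoked cleanly.
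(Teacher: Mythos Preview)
Your proposal is essentially correct and follows the paper's approach: condition on $\tilde\Gamma_{j,k-1}^a$ so that clusters are exact, invoke the three Azuma-type lemmas (\ref{Atil_cor}, \ref{Atilperp_cor}, \ref{Htil_cor}), plug into the $\sin\theta$ bound of Lemma \ref{zetakbnd}, and union-bound. The paper's proof is literally two lines doing exactly this, then citing Lemma \ref{event}.

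One technical point to tighten: your definition $\tilde{\bm{\mathcal{A}}}_{j,k} := \bm{\Psi}_k\bigl(\tfrac{1}{\alpha}\sum_t \lt\lt{}'\bigr)\bm{\Psi}_k$ and your basis $[\bm{G}_{j,k},\bm{G}_{j,k,\perp}]$ differ from the paper's. The paper works in the basis $[\bm{E}_{j,k,\cur},\bm{E}_{j,k,\cur,\perp}]$ obtained from the QR decomposition of $\bm{\Psi}_{j,k}\bm{G}_{j,k}$ (Definition \ref{defAHtil}), and \emph{defines} $\tilde{\bm{\mathcal{A}}}_{j,k}$ to be exactly block-diagonal in that basis; the off-diagonal cross-block (the $\bm{F}_t$ term) is pushed into $\tilde{\bm{\mathcal{H}}}_{j,k}$ and bounded inside Lemma \ref{Htil_cor}. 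This is not cosmetic: Lemma \ref{zetakbnd} requires an exactly block-diagonal reference, and the QR basis is what makes the resulting $\sin\theta$ bound translate directly into $\tilde\zeta_{j,k}=\|(\I-\hat{\bm{G}}_{j,k}\hat{\bm{G}}_{j,k}{}')\bm{\Psi}_{j,k}\bm{G}_{j,k}\|_2$. Since you already cite Lemmas \ref{Atil_cor}--\ref{Htil_cor}, which internally use these correct definitions, your assembly goes through once you align the description with Definition \ref{defAHtil}.
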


\begin{corollary}\label{GammaCorDel} \label{main_cor}
Let  $p_{\det,0}: = \Pr\left(\mathrm{DET}_j^{u_j} \ | \ \Gamma_{j-1,\rmend} \right).$
Combining Lemmas \ref{falsedet_del}, \ref{det}, \ref{pPCA}, \ref{cluster_lem}, and \ref{cPCA} gives
\begin{align*}
\Pr\left( \Gamma_{j,\rmend} \ | \ \Gamma_{j-1,\rmend} \right) &=
\Pr\bigg( \Big( \mathrm{DET}_j^{u_j} \bigcap_{k=1}^K\mathrm{PPCA}_{j,k}^{u_j}  \bigcap \mathrm{CLUSTER}_{j}^{u_j}\bigcap_{k=1}^\vartheta \mathrm{CPCA}_{j,k}^{u_j}  \Big) \bigcup \\
&\hspace{.5in} \Big(\overline{\mathrm{DET}_j^{u_j}}\cap\mathrm{DET}_j^{u_j+1}  \bigcap_{k=1}^K\mathrm{PPCA}_{j,k}^{u_j+1} \bigcap \mathrm{CLUSTER}_{j}^{u_j+1}\bigcap_{k=1}^\vartheta \mathrm{CPCA}_{j,k}^{u_j+1}   \Big) \ \big| \ \Gamma_{j-1,\rmend}  \bigg)\\%
%
&\geq p_{\det,0} \cdot  (p_{\mathrm{ppca}})^K \cdot (p_{\mathrm{cluster}})\cdot(p_{\mathrm{cpca}})^{\vartheta}
+ (1-p_{\det,0})\cdot  p_{\det,1} \cdot (p_{\mathrm{ppca}})^K \cdot(p_{\mathrm{cluster}})\cdot(p_{\mathrm{cpca}})^{\vartheta} \\
&\geq  p_{\det,1} (p_{\mathrm{ppca}})^K \cdot p_{\mathrm{cluster}} (p_{\mathrm{cpca}})^{\vartheta}
\end{align*}
\end{corollary}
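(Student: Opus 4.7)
The strategy is exactly the decomposition indicated in the corollary statement: write $\Gamma_{j,\mathrm{end}}$ as the union of two disjoint events indexed by whether the $j$-th subspace change is first detected at $u=u_j$ or at $u=u_j+1$, lower-bound the probability of each branch by telescoping chain-rule applications along the nested event sequences built into the definition, and add.

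For the first branch, $\tilde\Gamma_{j,\vartheta}^{u_j}\cap\mathrm{NODETS}_j^{u_j}$, I would walk along the chain
\[
\Gamma_{j-1,\mathrm{end}}\ \subset\ \Gamma_{j,0}^{u_j}\ \subset\ \Gamma_{j,1}^{u_j}\ \subset\ \cdots\ \subset\ \Gamma_{j,K}^{u_j}\ \subset\ \tilde\Gamma_{j,0}^{u_j}\ \subset\ \cdots\ \subset\ \tilde\Gamma_{j,\vartheta}^{u_j},
\]
which by definition satisfies $\Gamma_{j,k}^{u_j}=\Gamma_{j,k-1}^{u_j}\cap\mathrm{PPCA}_{j,k}^{u_j}$ and $\tilde\Gamma_{j,k}^{u_j}=\tilde\Gamma_{j,k-1}^{u_j}\cap\mathrm{CPCA}_{j,k}^{u_j}$. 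The chain rule then factorizes the probability of this branch as
\[
\Pr(\mathrm{DET}_j^{u_j}|\Gamma_{j-1,\mathrm{end}})\!\!\prod_{k=1}^{K}\!\!\Pr(\mathrm{PPCA}_{j,k}^{u_j}|\Gamma_{j,k-1}^{u_j})\cdot\Pr(\mathrm{CLUSTER}_{j}^{u_j}|\Gamma_{j,K}^{u_j})\!\!\prod_{k=1}^{\vartheta}\!\!\Pr(\mathrm{CPCA}_{j,k}^{u_j}|\tilde\Gamma_{j,k-1}^{u_j})\cdot\Pr(\mathrm{NODETS}_j^{u_j}|\tilde\Gamma_{j,\vartheta}^{u_j}).
\]
Invoking Lemmas \ref{pPCA}, \ref{cluster_lem}, \ref{cPCA}, and \ref{falsedet_del} on the successive factors (and using $p_{\det,0}:=\Pr(\mathrm{DET}_j^{u_j}|\Gamma_{j-1,\mathrm{end}})$ for the first) gives a lower bound of $p_{\det,0}(p_{\mathrm{ppca}})^{K}p_{\mathrm{cluster}}(p_{\mathrm{cpca}})^{\vartheta}$.

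For the second branch, $\overline{\mathrm{DET}_j^{u_j}}\cap\tilde\Gamma_{j,\vartheta}^{u_j+1}\cap\mathrm{NODETS}_j^{u_j+1}$, the same telescoping works after paying an initial factor $(1-p_{\det,0})\cdot p_{\det,1}$: the first piece is $\Pr(\overline{\mathrm{DET}_j^{u_j}}|\Gamma_{j-1,\mathrm{end}})=1-p_{\det,0}$, and the second is $\Pr(\mathrm{DET}_j^{u_j+1}|\Gamma_{j-1,\mathrm{end}},\overline{\mathrm{DET}_j^{u_j}})\ge p_{\det,1}$ by Lemma \ref{det}. The remaining PPCA/CLUSTER/CPCA/NODETS factors are bounded identically, just with $u_j+1$ replacing $u_j$ throughout. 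So this branch contributes at least $(1-p_{\det,0})p_{\det,1}(p_{\mathrm{ppca}})^{K}p_{\mathrm{cluster}}(p_{\mathrm{cpca}})^{\vartheta}$.

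Adding the two disjoint contributions and setting $X:=(p_{\mathrm{ppca}})^{K}p_{\mathrm{cluster}}(p_{\mathrm{cpca}})^{\vartheta}$, the total is at least $p_{\det,0}X+(1-p_{\det,0})p_{\det,1}X\ge p_{\det,0}p_{\det,1}X+(1-p_{\det,0})p_{\det,1}X=p_{\det,1}X$, where the inequality uses $p_{\det,1}\le 1$. This is precisely the claimed bound. There is no substantive obstacle: all the probabilistic work is done inside the five cited lemmas, and the only care required is the bookkeeping check that the events nest as above so that the chain rule applies cleanly. The one small observation worth highlighting is that the unknown quantity $p_{\det,0}$ drops out of the final bound via the inequality $p_{\det,0}+(1-p_{\det,0})p_{\det,1}\ge p_{\det,1}$, so one never needs to estimate it.
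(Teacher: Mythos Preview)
Your approach is correct and matches the paper's: the corollary is essentially self-proving once the five lemmas are in hand, and your chain-rule/telescoping argument along the nested events is exactly what the statement encodes. One small notational slip: the inclusions should read $\Gamma_{j-1,\mathrm{end}}\supseteq\Gamma_{j,0}^{u_j}\supseteq\Gamma_{j,1}^{u_j}\supseteq\cdots\supseteq\tilde\Gamma_{j,\vartheta}^{u_j}$ (each later event is an intersection, hence smaller), but your subsequent chain-rule factorization is written correctly regardless.
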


\begin{proof}[{Proof of Theorem \ref{thm1_cor} and Corollary \ref{thm1_cor_corol}}]
Using the fact that $\Gamma_{j-1,\rmend}\subseteq\Gamma_{j-2,\rmend}\subseteq\dots\subseteq \Gamma_{1,\rmend}\subseteq\Gamma_{0,\rmend}$,
$\Pr(\Gamma_{J,\rmend})  = \Pr(\Gamma_{0,\rmend}) \prod_{j=1}^{J} \Pr(\Gamma_{j,\rmend} \ | \ \Gamma_{j-1,\rmend})$.

By Lemma \ref{initsub_cor} and the argument used to prove Lemmas \ref{cslem_cor} and \ref{falsedet_del}, we get that $\Pr(\Gamma_{0,\rmend}) \ge 1 - n^{-10}$.
Thus, using Corollary \ref{GammaCorDel}, and the lower bound on $\alpha$,
\[
\Pr(\Gamma_{J,\rmend}) \ge (1 - n^{-10}) \left( p_{\det,1} (p_{\mathrm{ppca}})^K \cdot p_{\mathrm{cluster}} (p_{\mathrm{cpca}})^{\vartheta} \right)^J
\ge (1 - n^{-10})(p_{\mathrm{ppca}})^{(K+1)J} (p_{\mathrm{cluster}} (p_{\mathrm{cpca}})^{\vartheta})^J \ge (1 - n^{-10})^3 \geq 1 - 3n^{-10}.
\]
By Fact \ref{d_large}, Lemma \ref{cslem_cor}, and Lemma \ref{zetadecay}, $\Gamma_{J,\rmend}$ implies that $\That_t = \T_t$ for all times $t$; and that all the bounds on the subspace error $\SE_t$ and on $\et$ hold.
\end{proof}

\subsection{Key lemmas needed for proving the main lemmas} \label{keylems} 
The following lemma follows from the $\sin\theta$ theorem \cite{davis_kahan} (Theorem \ref{sintheta} in Appendix \ref{prelim}) and Weyl's inequality.  It is taken from \cite{rrpcp_perf}.
\begin{lem}[\cite{rrpcp_perf}, Lemma 6.9]\label{zetakbnd}
At $u = \hat{u}_j + k$,
if $\rank(\Phat_{(j),\new,k}) = r_{j,\new}$, and if $\lambda_{\min}(\bm{A}_u) - \|\bm{A}_{u,\perp}\|_2 - \|\bm{\mathcal{H}}_u\|_2 >0$, then
\begin{align} \label{zetakbound}
\zeta_{j,\new,k} \leq  \frac{\|\bm{\mathcal{H}}_u\|_2}{\lambda_{\min} (\bm{A}_u) - \|\bm{A}_{u,\perp}\|_2 - \|\bm{\mathcal{H}}_u\|_2}.
\end{align}
Similarly, if $\hat{\mathcal{G}}_{j,k} = \mathcal{G}_{j,k}$ and $\lambda_{\min}(\tilde{\bm{A}}_{j,k}) - \|\tilde{\bm{A}}_{j,k,\perp}\|_2 - \| \tilde{\bm{\mathcal{H}}}_{j,k} \|_2 > 0$, then
\begin{equation}\label{sintheta_del}
\tilde{\zeta}_{j,k} \leq \frac{\|\tilde{\bm{\mathcal{H}}}_{j,k} \|_2}{\lambda_{\min}(\tilde{\bm{A}}_{j,k}) - \|\tilde{\bm{A}}_{j,k,\perp}\|_2 - \| \tilde{\bm{\mathcal{H}}}_{j,k} \|_2}
\end{equation}
\end{lem}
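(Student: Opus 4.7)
The plan is to reduce each of the two bounds to a standard $\sin\theta$-plus-Weyl argument applied to $\bm{\mathcal{M}}_u = \bm{\mathcal{A}}_u + \bm{\mathcal{H}}_u$. The key structural fact is that $\bm{\mathcal{A}}_u$ is already diagonalized in the basis $[\bm{E}_{j,\new}\ \bm{E}_{j,\new,\perp}]$ with block-diagonal blocks $\bm{A}_u$ and $\bm{A}_{u,\perp}$; so the top $r_{j,\new}$ eigenvectors of $\bm{\mathcal{A}}_u$ span $\range(\bm{E}_{j,\new})$ whenever $\lambda_{\min}(\bm{A}_u) > \|\bm{A}_{u,\perp}\|_2$. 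Meanwhile, the top $r_{j,\new}$ eigenvectors of $\bm{\mathcal{M}}_u$ form $\Phat_{(j),\new,k}$ by the definition of the p-PCA step in Algorithm \ref{reprocsdet} (together with the assumption $\rank(\Phat_{(j),\new,k}) = r_{j,\new}$).

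First I would show that $\zeta_{j,\new,k} \le \|(\I - \Phat_{(j),\new,k}\Phat_{(j),\new,k}{}')\,\bm{E}_{j,\new}\|_2$. Using the QR decomposition $\bm{D}_{j,\new} = \bm{E}_{j,\new}\bm{R}_{j,\new}$ from Definition \ref{defHk}, and the identity $\bm{P}_{(j),\new} = \Phat_{(j),*}\Phat_{(j),*}{}'\bm{P}_{(j),\new} + \bm{E}_{j,\new}\bm{R}_{j,\new}$, together with the orthogonality $\Phat_{(j),*} \perp \Phat_{(j),\new,k}$ from Definition \ref{def_Phat_starnew}, a direct expansion gives
\[
(\I - \Phat_{(j),*}\Phat_{(j),*}{}' - \Phat_{(j),\new,k}\Phat_{(j),\new,k}{}')\bm{P}_{(j),\new} = (\I - \Phat_{(j),\new,k}\Phat_{(j),\new,k}{}')\bm{E}_{j,\new}\bm{R}_{j,\new}.
\]
Taking norms and noting $\|\bm{R}_{j,\new}\|_2 \le 1$ (since $\bm{R}_{j,\new} = \bm{E}_{j,\new}{}'(\I - \Phat_{(j),*}\Phat_{(j),*}{}')\bm{P}_{(j),\new}$ is a product of norm-at-most-one matrices) yields the reduction.

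Next I apply the $\sin\theta$ theorem (Theorem \ref{sintheta}) to the perturbation $\bm{\mathcal{M}}_u = \bm{\mathcal{A}}_u + \bm{\mathcal{H}}_u$. Take the ``unperturbed'' invariant subspace to be $\range(\bm{E}_{j,\new})$ and the ``perturbed'' invariant subspace to be $\range(\Phat_{(j),\new,k})$. The top-eigenvalue block of $\bm{\mathcal{A}}_u$ on this subspace has eigenvalues at least $\lambda_{\min}(\bm{A}_u)$, while the complementary block has eigenvalues at most $\|\bm{A}_{u,\perp}\|_2$. By Weyl's inequality, the $(r_{j,\new}+1)$-th eigenvalue of $\bm{\mathcal{M}}_u$ is bounded by $\|\bm{A}_{u,\perp}\|_2 + \|\bm{\mathcal{H}}_u\|_2$, so the spectral gap between the top $r_{j,\new}$ eigenvalues of $\bm{\mathcal{M}}_u$ and the remaining ones is at least $\lambda_{\min}(\bm{A}_u) - \|\bm{A}_{u,\perp}\|_2 - \|\bm{\mathcal{H}}_u\|_2$, which is positive by hypothesis. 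The $\sin\theta$ theorem then gives $\|(\I - \Phat_{(j),\new,k}\Phat_{(j),\new,k}{}')\bm{E}_{j,\new}\|_2 \le \|\bm{\mathcal{H}}_u\|_2 / (\lambda_{\min}(\bm{A}_u) - \|\bm{A}_{u,\perp}\|_2 - \|\bm{\mathcal{H}}_u\|_2)$, which combined with the reduction above yields \eqref{zetakbound}.

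For the second inequality \eqref{sintheta_del}, the argument is structurally identical after replacing $\bm{E}_{j,\new}, \Phat_{(j),\new,k}, \bm{A}_u, \bm{A}_{u,\perp}, \bm{\mathcal{H}}_u$ by their cluster-PCA analogues. Concretely, one projects $\bm{P}_{(j)}$ orthogonal to the previously estimated clusters $[\hat{\bm{G}}_{j,1}\cdots\hat{\bm{G}}_{j,k-1}]$ to obtain the relevant analog of $\bm{E}$, uses $\hat{\bm{G}}_{j,k} \perp [\hat{\bm{G}}_{j,1}\cdots\hat{\bm{G}}_{j,k-1}]$ by construction of Algorithm \ref{clusterPCA}, and applies the same QR reduction plus $\sin\theta$/Weyl sandwich with the clustering eigen-gap coming from the blocks of $\tilde{\bm{A}}_{j,k}$. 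The only step requiring care — and the one I expect to be slightly tricky in the write-up rather than mathematically hard — is correctly identifying, under $\hat{\mathcal{G}}_{j,k} = \mathcal{G}_{j,k}$, which columns of $\bm{P}_{(j)}$ play the role of $\bm{E}_{j,\new}$ for the $k$-th cluster so that $\tilde\zeta_{j,k}$ is exactly the Frobenius/operator-norm $\sin\theta$ distance bounded by the theorem. Once that bookkeeping is done, the two inequalities follow from the same $\sin\theta$ + Weyl template.
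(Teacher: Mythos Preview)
Your proposal is correct and follows precisely the route the paper indicates: the lemma is cited from \cite{rrpcp_perf} with the remark that it ``follows from the $\sin\theta$ theorem \cite{davis_kahan} (Theorem \ref{sintheta}) and Weyl's inequality,'' and your QR-reduction-plus-$\sin\theta$/Weyl argument is exactly that template, including the correct identification of $\bm{E}_{j,\new}$ (resp.\ $\bm{E}_{j,k,\cur}$) as the unperturbed invariant subspace and the use of the Remark after Theorem \ref{sintheta} to pass from $\lambda_{\max}(\Lambda_\perp)$ to $\|\bm{A}_{u,\perp}\|_2 + \|\bm{\mathcal{H}}_u\|_2$.
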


The next three lemmas (\ref{Ak_cor}, \ref{Akperp_cor}, and \ref{calHk_cor}) each assert a high probability bound for one of the terms in \eqref{zetakbound}. These, along with Lemma \ref{zetakbnd}, are used to prove Lemmas \ref{det} and \ref{pPCA}.  The proofs of these lemmas use the matrix Azuma inequalities (Lemmas \ref{azuma}, \ref{azuma_hermitian} or \ref{azuma_norm} in the Appendix) and hence we refer to them as the ``addition Azuma" lemmas.
Let
\beq
\epsilon = \frac{1}{1-b^2} 0.001 r_\new \zeta \lambda^-
\label{def_eps}
\eeq
\begin{lem}\label{Ak_cor} %
Define
\[
b_{\bm{A}}:=  \frac{1}{1-b^2}\left( (1 - (\zeta_*^+)^2)\lambda_\new^- -  (r_\new \zeta)^2 \frac{b^2}{1-b^2}(1-\zeta_*^+)^2\lambda_\new^- \right)  - 4\epsilon 
\]
For $k = 1,\dots,K$, for all $X_{\hat{u}_j+k-1}\in\Gamma_{j,k-1}^{\hat{u}_j}$ with $\hat{u}_j=u_j$ or $\hat{u}_j=u_j+1$,
\begin{align*}
\Pr\left( \lambda_{\min} \left(\bm{A}_{\hat{u}_j + k} \right)\geq b_{\bm{A}} \ \big| \ X_{\hat{u}_j+k-1} \right) \geq 1 -  p_{\bm{A}}
\end{align*}
where $p_{\bm{A}}$ is defined in the proof.
\end{lem}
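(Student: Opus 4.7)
The strategy is to apply the matrix Azuma inequality for Hermitian matrices (Corollary \ref{azuma_hermitian}) to
\[
\alpha \bm{A}_{u} = \sum_{t \in \J_u} \bm{Z}_t, \qquad \bm{Z}_t := \bm{E}_{j,\new}{}' (\I - \hat{\bm{P}}_{(j),*}\hat{\bm{P}}_{(j),*}{}') \lt \lt{}' (\I - \hat{\bm{P}}_{(j),*}\hat{\bm{P}}_{(j),*}{}') \bm{E}_{j,\new},
\]
with $u = \uhat_j + k$. Once conditioned on $X_{u-1} \in \Gamma_{j,k-1}^{\uhat_j}$, the matrices $\hat{\bm{P}}_{(j),*}$ and $\bm{E}_{j,\new}$ are fixed. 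I would lower bound $\lambda_{\min}(\E[\alpha \bm{A}_u \mid X_{u-1}])$ deterministically, and use Azuma to control the deviation $\|\alpha\bm{A}_u - \E[\alpha\bm{A}_u \mid X_{u-1}]\|_2$; Weyl's inequality then yields the claimed bound $b_{\bm{A}}$ on $\lambda_{\min}(\bm{A}_u)$.

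To compute the conditional expectation, the plan is to unroll the AR(1) recursion as $\lt = \sum_{\tau=0}^{t} b^{t-\tau} \bnu_\tau$ and split into $\lt = \lt^{\text{old}} + \lt^{\text{new}}$, with $\lt^{\text{old}} := \sum_{\tau=0}^{(u-1)\alpha} b^{t-\tau} \bnu_\tau$ being $X_{u-1}$-measurable and $\lt^{\text{new}} := \sum_{\tau=(u-1)\alpha+1}^{t} b^{t-\tau} \bnu_\tau$ mean-zero and independent of $X_{u-1}$. Since $\J_u \subseteq [t_j, t_j + d]$ by Fact \ref{d_large}, for every $\tau$ in the ``new'' range $\cov(\bnu_\tau) = \Pj \Lamtaunew \Pj{}'$, so
\[
\E[\lt\lt{}' \mid X_{u-1}] = \lt^{\text{old}}(\lt^{\text{old}}){}' + \sum_{\tau=(u-1)\alpha+1}^{t} b^{2(t-\tau)} \Pj \Lamtaunew \Pj{}'.
\]
Dropping the PSD deterministic term only strengthens the lower bound on $\lambda_{\min}$. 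Writing $\Pj = [\P_{(j),*},\ \P_{(j),\new}]$, the cross-term along $\P_{(j),*}$ contributes negligibly because $\bm{E}_{j,\new} \perp \hat{\P}_{(j),*}$ and $\|\hat{\P}_{(j),*}{}' \P_{(j),\new}\|_2 \le \zeta_{j,*} \le \zeta_{*}^+$ (so $\|\P_{(j),*}{}' \bm{E}_{j,\new}\|_2$ is $O(\zeta_{*}^+)$). The main contribution comes from $\P_{(j),\new}\Lamtaunew\P_{(j),\new}{}'$, and since $\bm{E}_{j,\new}$ spans $\range((\I-\hat\P_{(j),*}\hat\P_{(j),*}{}')\P_{(j),\new})$, the smallest singular value of $\bm{E}_{j,\new}{}'(\I-\hat\P_{(j),*}\hat\P_{(j),*}{}')\P_{(j),\new}$ is at least $\sqrt{1-(\zeta_{*}^+)^2}$. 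Combining this with $\lambda_{\min}(\Lamtaunew) \ge \lambda_\new^-$ (from \eqref{anew_small}), the geometric sum $\tfrac{1}{\alpha}\sum_{t\in\J_u}\sum_{\tau=(u-1)\alpha+1}^{t} b^{2(t-\tau)} = \tfrac{1}{1-b^2}(1-O(1/\alpha))$, and the bound $b^{2\alpha} \le (r_\new \zeta)^2$ from Fact \ref{1byalpha}, should produce exactly the leading expression $\tfrac{1}{1-b^2}\bigl((1-(\zeta_*^+)^2)\lambda_\new^- - (r_\new\zeta)^2 \tfrac{b^2}{1-b^2}(1-\zeta_*^+)^2 \lambda_\new^-\bigr) - 2\epsilon$ appearing in $b_{\bm{A}}$, where the final $-2\epsilon$ absorbs the $O(1/\alpha)$ terms via the choice $\tfrac{1}{\alpha} \le (r_\new\zeta)^2$.

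For the concentration step, I would define the filtration $\mathcal{F}_{(u-1)\alpha} := \sigma(X_{u-1})$ and $\mathcal{F}_t := \sigma(X_{u-1}, \bnu_{(u-1)\alpha+1}, \dots, \bnu_t)$, and express $\alpha\bm{A}_u - \E[\alpha\bm{A}_u \mid X_{u-1}]$ as a sum of $\mathcal{F}_t$-martingale differences by a standard telescoping $\bm{Z}_t - \E[\bm{Z}_t\mid X_{u-1}] = \sum_{s=(u-1)\alpha+1}^{u\alpha}\bigl(\E[\bm{Z}_t\mid \mathcal{F}_s] - \E[\bm{Z}_t\mid \mathcal{F}_{s-1}]\bigr)$, and reindexing. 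Using the a.s.\ bound $\|\bnu_\tau\|_2\le\sqrt{r}\gamma$ to obtain a deterministic bound $\|\bm{Z}_t\|_2 \le r\gamma^2/(1-b)^2$, Corollary \ref{azuma_hermitian} with deviation $2\alpha\epsilon$ gives $\|\alpha\bm{A}_u - \E[\alpha\bm{A}_u\mid X_{u-1}]\|_2 \le 2\alpha\epsilon$ except on an event of probability $p_{\bm{A}} \le 2n\exp\bigl(-c\alpha^2\epsilon^2/(r\gamma^2/(1-b^2))^2\bigr)$; the lower bound on $\alpha$ in Theorem \ref{thm1_cor} is calibrated exactly so that $p_{\bm{A}}$ is tiny (of order $n^{-10}/((52K+44)J)$).

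The main obstacle is the martingale setup. Because the $\bnu_\tau$'s are shared across different $t\in\J_u$, the summands $\bm{Z}_t$ are strongly dependent even after conditioning on $X_{u-1}$; the naive application of Azuma would yield loose constants. The telescoping trick above is essential, and one must verify that the resulting differences are conditionally mean-zero and uniformly bounded (the latter requires carefully bounding the sensitivity of $\bm{Z}_t$ to a single $\bnu_s$ using $\sum_{\tau=t-\alpha+1}^{t} b^{t-\tau} \le 1/(1-b_0)$). Handling this martingale bookkeeping, and arranging the $b^{2\alpha} \le (r_\new\zeta)^2$ bound to absorb the $\lt^{\text{old}}$ cross-term into the stated correction, is where the proof will require most of the care.
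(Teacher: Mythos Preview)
Your plan is workable but differs from the paper's route. Instead of the Doob-martingale/telescoping construction you sketch, the paper expands $\lt\lt'$ fully into products $\bnu_\tau\bnu_{\tilde\tau}'$ (plus $\l_{t_0-1}$ terms) and groups them into five pieces: the deterministic $\l_{t_0-1}\l_{t_0-1}'$ term (PSD, so dropped for the lower bound), the diagonal piece $\sum_t\sum_{\tau}b^{2(t-\tau)}\bnu_\tau\bnu_\tau'$ (one Azuma application, with conditional mean $\frac{1}{\alpha}\sum_t\sum_\tau b^{2(t-\tau)}\bm{E}_{j,\new}'\bm{\Phi}_0\bm{\Sigma}_\tau\bm{\Phi}_0\bm{E}_{j,\new}$ lower-bounded via Ostrowski's theorem and Lemma~\ref{sum_often}), the two off-diagonal pieces with $\tau\lessgtr\tilde\tau$, and the old--new cross term. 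Each of the last three is re-indexed into a single-index martingale with zero conditional mean---for the $\tilde\tau>\tau$ piece this needs the substitution $p=t_0+\alpha-1-\tau$ to obtain an adapted filtration---and Azuma bounds each by $\epsilon$; this is exactly where the $-4\epsilon$ in $b_{\bm{A}}$ comes from. The payoff of the five-term decomposition is modularity: it is set up once in Section~\ref{general_decomp} and reused verbatim for every Azuma lemma in Sections~\ref{3_pfs} and~\ref{3_pfs_del}, including the asymmetric cases $\bN_t\ne\bM_t'$ (e.g.\ the $\lt\et'$ terms) where a symmetric Doob construction on a single quadratic form does not directly apply. Two small corrections to your sketch: the Azuma exponent scales as $\alpha\epsilon^2$, not $\alpha^2\epsilon^2$; and the increments you must bound in the telescoping are the $\bm{D}_s=\sum_{t\ge s}(\E[\bm{Z}_t\mid\mathcal{F}_s]-\E[\bm{Z}_t\mid\mathcal{F}_{s-1}])$, not the $\bm{Z}_t$ themselves, so the a.s.\ bound needs the additional geometric sum over $t$ that you allude to but do not carry through.
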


\begin{lem}\label{Akperp_cor} 
Define
\[
b_{\bm{A},\perp}:= \frac{1}{1-b^2} (\zeta_*^+)^2 \lambda^+ +  \frac{0.05(r_\new \zeta)^2b^2\lambda^-}{(1-b^2)(1-b)^2}  + 4 \epsilon 
\]
For $k = 1,\dots,K$, for all $X_{\hat{u}_j+k-1}\in\Gamma_{j,k-1}^{\hat{u}_j}$ with $\hat{u}_j=u_j$ or $\hat{u}_j=u_j+1$,
\begin{align*}
\Pr\left( \lambda_{\max}\left(\bm{A}_{\hat{u}_j + k,\perp} \right) \leq b_{\bm{A},\perp}  \ \big| \ X_{\hat{u}_j+k-1} \right)\geq 1 - p_{\bm{A},{\perp}}
\end{align*}
where $p_{\bm{A},{\perp}}$ is defined in the proof.
\end{lem}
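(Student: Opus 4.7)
The plan is to bound $\lambda_{\max}(\bm{A}_{\hat u_j+k,\perp})$ by separating a structural (bias) term from a concentration term. The structural piece exploits the defining identity $\bm{E}_{j,\new,\perp}{}'(\I-\hat{\bm{P}}_{(j),*}\hat{\bm{P}}_{(j),*}{}')\bm{P}_{(j),\new}=0$, which follows because $\bm{E}_{j,\new}$ was chosen as a basis for the range of $(\I-\hat{\bm{P}}_{(j),*}\hat{\bm{P}}_{(j),*}{}')\bm{P}_{(j),\new}$. Concentration will be handled by the matrix Azuma inequality for Hermitian matrices (Corollary \ref{azuma_hermitian}). First I unfold the AR recursion via Fact \ref{ltbnds}:
\[
\lt = \l_{t,\mathrm{small}} + \sum_{\tau=t-\alpha+1}^{t} b^{t-\tau}\bnu_\tau,\qquad \|\l_{t,\mathrm{small}}\|_2\le \frac{b^\alpha\sqrt{r}\,\gamma}{1-b}\le \frac{r_\new\zeta\sqrt{r}\,\gamma}{1-b}.
\]
For $\tau\ge t_j$, $\bnu_\tau=\bm{P}_{(j),*}\bm{a}_{\tau,*}+\bm{P}_{(j),\new}\bm{a}_{\tau,\new}$; for $\tau<t_j$ the new-direction part is absent. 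Applying $\bm{E}_{j,\new,\perp}{}'(\I-\hat{\bm{P}}_{(j),*}\hat{\bm{P}}_{(j),*}{}')$ kills every new-direction component, so the image of $\lt$ under this map reduces to a contribution attenuated by $\|\hat{\bm\Phi}_{(j),0}\bm{P}_{(j),*}\|_2=\zeta_{j,*}\le\zeta_*^+$, plus the tail $\l_{t,\mathrm{small}}$.

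Next I compute $\E[\bm{A}_{\hat u_j+k,\perp}\mid X_{\hat u_j+k-1}]$. Conditioning fixes $\hat{\bm{P}}_{(j),*}$, $\bm{E}_{j,\new,\perp}$ and the support sequence $\T$, and the $\bnu_\tau$'s are mutually independent, so cross terms vanish and the expectation becomes a weighted sum of the form $\sum_\tau b^{2(t-\tau)}\bm{E}_{j,\new,\perp}{}'\hat{\bm\Phi}_{(j),0}\bm{P}_{(j),*}\bm{\Lambda}_{\tau,*}\bm{P}_{(j),*}{}'\hat{\bm\Phi}_{(j),0}\bm{E}_{j,\new,\perp}$, plus an $\l_{t,\mathrm{small}}$ self-term. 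Summing the geometric series $\sum b^{2(t-\tau)}\le 1/(1-b^2)$ and using $\lambda_{\max}(\bm\Lambda_{\tau,*})\le\lambda^+$ gives the leading $\tfrac{(\zeta_*^+)^2\lambda^+}{1-b^2}$ summand of $b_{\bm A,\perp}$. The old-tail self-term is at most $b^{2\alpha}\lambda^+/(1-b^2)$; using $b^\alpha\le r_\new\zeta$ together with the bound on $\zeta$ in the theorem (which absorbs one factor of $\lambda^+/\lambda^-$ into numerical constants) produces the $\tfrac{0.05(r_\new\zeta)^2 b^2\lambda^-}{(1-b^2)(1-b)^2}$ middle summand; the mixed tail$\times$recent cross terms have zero conditional mean and are rolled into the Azuma slack.

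The last step is concentration. I form the martingale difference sequence $\bm{Z}_t:=\bm{E}_{j,\new,\perp}{}'\hat{\bm\Phi}_{(j),0}\lt\lt{}'\hat{\bm\Phi}_{(j),0}\bm{E}_{j,\new,\perp}-\E[\,\cdot\mid\mathcal{F}_{t-1}]$ with $\mathcal{F}_{t-1}$ generated by $X_{\hat u_j+k-1}$ and $\bnu_{\hat u_j\alpha+1},\dots,\bnu_{t-1}$. The boundedness assumptions $\|\bm{P}_t{}'\bnu_t\|_2\le\gamma$ and $\|\lt\|_2\le\sqrt{r}\,\gamma/(1-b)$ yield a uniform operator-norm bound on each $\bm{Z}_t$, and Corollary \ref{azuma_hermitian} gives a tail for the maximum eigenvalue whose deviation parameter I set equal to $\epsilon$ (the $4\epsilon$ in $b_{\bm A,\perp}$ absorbs self-term plus cross-term Azuma errors). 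Choosing $\alpha\ge\alpha_{\add}$ from the theorem makes the resulting tail probability $p_{\bm{A},\perp}\le n^{-10}/[(52K+44)J]$, small enough to survive the union bound in Corollary \ref{main_cor}. The main obstacle is the martingale construction itself: unlike the i.i.d. setting of \cite{rrpcp_isit15} where matrix Hoeffding sufficed, the AR correlations force me to re-express $\lt$ through the independent innovations $\bnu_\tau$ and truncate at depth $\alpha$; keeping $\|\l_{t,\mathrm{small}}\|_2$ controlled via $b^\alpha\le r_\new\zeta$ is exactly what allows the bias term to remain $O((r_\new\zeta)^2)$ rather than blowing up.
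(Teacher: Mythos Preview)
Your structural observation is exactly right: the identity $\bm{E}_{j,\new,\perp}{}'\bm{\Phi}_{(j),0}\bm{P}_{(j),\new}=\bm{0}$ is what makes the leading bias term $(\zeta_*^+)^2\lambda^+/(1-b^2)$ rather than something involving $\lambda_\new^+$. That part matches the paper.

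The concentration step, however, has a genuine gap. You compute $\E[\bm{A}_{\hat u_j+k,\perp}\mid X_{\hat u_j+k-1}]$ and then form a martingale difference sequence indexed by $t$. But the quantity that Corollary~\ref{azuma_hermitian} actually controls is $\lambda_{\max}\big(\bm{A}_{u,\perp}-\tfrac{1}{\alpha}\sum_t\E[\,\cdot\mid\mathcal F_{t-1}]\big)$, so to conclude you must bound $\tfrac{1}{\alpha}\sum_t\E[\bN_t\lt\lt{}'\bN_t{}'\mid\mathcal F_{t-1}]$ deterministically. Since $\lt=b\l_{t-1}+\bnu_t$, this conditional mean equals $b^2\bN_t\l_{t-1}\l_{t-1}{}'\bN_t{}'+\bN_t\bm\Sigma_t\bN_t{}'$, which is \emph{random} through $\l_{t-1}$. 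It is not equal to $\E[\bm{A}_{u,\perp}\mid X]$, and bounding it almost surely via $\|\bN_t\l_{t-1}\|_2^2$ introduces a term of order $b^2(\zeta_*^+)^2 r\gamma^2/(1-b)^2$ rather than $(\zeta_*^+)^2\lambda^+/(1-b^2)$; these do not match $b_{\bm A,\perp}$.

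The paper's fix (Section~\ref{general_decomp}) is to expand $\lt=b^{t-t_0+1}\l_{t_0-1}+\sum_{\tau=t_0}^t b^{t-\tau}\bnu_\tau$ with $t_0$ the \emph{first} time of $\J_u$, so that $\l_{t_0-1}$ is $X$-measurable. One then expands $\lt\lt{}'$ into five pieces ($\mathrm{term1}$, $\mathrm{term21}$, $\mathrm{term22}$, $\mathrm{term23}$, $\mathrm{term3}$), swaps the $t$ and $\tau$ sums via Lemma~\ref{sumswitch}, and applies Azuma with the martingale indexed by the \emph{innovation index} $\tau$ rather than by $t$. With this re-indexing, $\E[\bm Z_\tau\mid\bm Z_{t_0},\dots,\bm Z_{\tau-1},X]$ is genuinely deterministic given $X$ (equal to $\sum_t b^{2(t-\tau)}\bN\bm\Sigma_\tau\bN'$ for $\mathrm{term21}$ and zero for the cross terms), which is exactly what Corollary~\ref{azuma_hermitian} needs. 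The four separate Azuma applications (one per term) are what produce the $4\epsilon$; $\mathrm{term1}$ is the deterministic piece that yields the $b^2/[(1-b^2)(1-b)^2]$ correction. Your truncation at depth $\alpha$ from each $t$ (Fact~\ref{ltbnds}) is the right intuition for why the old-history contribution is negligible, but it does not by itself give a martingale with deterministic conditional means.
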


\begin{lem}\label{calHk_cor} 
Define
\[
b_{\bm{\mathcal{H}},k} := 2b_{\l \e,k} + b_{\e \e,k} + 2b_{\bm{F_k}}  
\]
where for $k \ge 2$,
\begin{align*}
b_{\l \e, k}& :=  \frac{1}{1-b^2}( \sqrt{\rho^2 h^+ } \phi^+  (\zeta_*^+)^2 \lambda^+ + \sqrt{\rho^2 h^+ } \phi^+  \zeta_{\new,k-1}^+ \lambda_\new^+ ) +  \frac{0.05(r_\new \zeta) b^2 \lambda^-}{(1-b^2)(1-b)^2} + 6 \epsilon \\
%
b_{\e \e, k} &:=  \frac{1}{1-b^2}  (\rho^2 h^+  \ (\phi^+)^2  (\zeta_*^+)^2 \lambda^+ + \rho^2 h^+  \ (\phi^+)^2  (\zeta_{\new,k-1}^+)^2 \lambda_\new^+)
+  \frac{0.05(r_\new \zeta) b^2 \lambda^-}{(1-b^2)(1-b)^2}
+ (\phi^+)^2 (0.06 r_\new \zeta \lambda^-)
+ 8 \epsilon \\
%
b_{\bm{F},k} &:= \frac{1}{1-b^2}  (\zeta_*^+)^2 \lambda^+ +  \frac{0.05(r_\new \zeta) b^2 \lambda^-}{(1-b^2)(1-b)^2} + 4 \epsilon
\end{align*}
and for $k=1$,
\begin{align*}
b_{\l \e, 1}& :=  \frac{1}{1-b^2}( \sqrt{\rho^2 h^+ } \phi^+  (\zeta_*^+)^2 \lambda^+ + \phi^+\kappa_{s,\new}^+ \lambda_\new^+ ) +  \frac{0.05(r_\new \zeta) b^2 \lambda^-}{(1-b^2)(1-b)^2} + 6 \epsilon \\
%
b_{\e \e, 1} &:=  \frac{1}{1-b^2}  (\rho^2 h^+  \ (\phi^+)^2 (\zeta_*^+)^2 \lambda^+ + \rho^2 h^+  \ (\phi^+)^2 (\kappa_{s,\new}^+)^2 \lambda_\new^+)
+  \frac{0.05(r_\new \zeta) b^2 \lambda^-}{(1-b^2)(1-b)^2} \rho^2 h^+
+ (\phi^+)^2 (0.06 r_\new \zeta \lambda^-)
+ 8 \epsilon \\
b_{\bm{F},1} &:= \frac{1}{1-b^2} (\zeta_*^+)^2 \lambda^+ +  \frac{0.05(r_\new \zeta) b^2 \lambda^-}{(1-b^2)(1-b)^2} + 4 \epsilon
\end{align*}
For $k = 1,\dots,K$, for all $X_{\hat{u}_j+k-1}\in\Gamma_{j,k-1}^{\hat{u}_j}$ with $\hat{u}_j=u_j$ or $\hat{u}_j=u_j+1$,
\begin{align}
&\Pr \left(\|\bm{\mathcal{H}}_{\hat{u}_j + k}\|_2 \leq b_{\bm{\mathcal{H}},k}  \ \big| \ X_{\hat{u}_j+k-1}\right)\geq 1 - p_{\bm{\mathcal{H}}}
\label{normHk_cor}
\end{align}
where $p_{\bm{\mathcal{H}}}:=  p_{\l \e} + p_{\e \e} + p_{\bm{F}}$ and $p_{\l \e}$, $p_{\e \e}$ and $p_{\bm{F}}$ are defined in the proof.
\end{lem}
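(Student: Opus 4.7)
The plan is to expand $\lhat_t = \lt - \et$ in the definition of $\bm{\mathcal{M}}_u$ and read off a clean additive decomposition of $\bm{\mathcal{H}}_u = \bm{\mathcal{M}}_u - \bm{\mathcal{A}}_u$. Writing $\bm\Phi_0 := \I - \Phat_{(j),*}\Phat_{(j),*}{}'$ and $\bm L_u := \tfrac{1}{\alpha}\sum_{t\in\J_u}\bm\Phi_0\lt\lt{}'\bm\Phi_0$, the block-diagonal part of $\bm L_u$ in the basis $[\bm{E}_{j,\new}\ \bm{E}_{j,\new,\perp}]$ is exactly $\bm{\mathcal{A}}_u$, so $\bm L_u - \bm{\mathcal{A}}_u = \bm F_u + \bm F_u{}'$ with $\bm F_u := \bm{E}_{j,\new}\bm{E}_{j,\new}{}'\bm L_u\bm{E}_{j,\new,\perp}\bm{E}_{j,\new,\perp}{}'$. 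Thus
\[
\bm{\mathcal{H}}_u \;=\; (\bm F_u + \bm F_u{}') \;-\; \tfrac{1}{\alpha}\sum_{t\in\J_u}\bm\Phi_0(\lt\et{}' + \et\lt{}')\bm\Phi_0 \;+\; \tfrac{1}{\alpha}\sum_{t\in\J_u}\bm\Phi_0\et\et{}'\bm\Phi_0,
\]
and by the triangle inequality together with a union bound it suffices to prove high-probability bounds of the asserted form on $\|\bm F_u\|_2$, $\|\tfrac{1}{\alpha}\sum\bm\Phi_0\lt\et{}'\bm\Phi_0\|_2$ and $\|\tfrac{1}{\alpha}\sum\bm\Phi_0\et\et{}'\bm\Phi_0\|_2$; these will supply the three contributions $b_{\bm F,k}$, $b_{\l\e,k}$, $b_{\e\e,k}$ to $b_{\bm{\mathcal{H}},k}$.

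For each piece I will apply a matrix Azuma inequality (Corollary \ref{azuma_hermitian} or \ref{azuma_norm}) after conditioning on $X_{\hat u_j+k-1}\in\Gamma_{j,k-1}^{\hat u_j}$, which freezes the supports $\{\T_t\}$ and all innovations up to time $(\uhat_j+k-1)\alpha$, and after subtracting the conditional expectation of the sum. The key new difficulty compared with the independent-$\lt$ analysis of \cite{rrpcp_isit15} is that the $\alpha$ summands are no longer conditionally mutually independent. I resolve this via the expansion from Fact \ref{ltbnds}, $\lt = \bm\ell_{t,small} + \sum_{\tau=(u-1)\alpha+1}^t b^{t-\tau}\Pt \at$, in which the pre-$\J_u$ tail $\bm\ell_{t,small}$ has norm $O(\sqrt{\zeta}/(1-b))$ (because $b^\alpha\le r_\new\zeta$) and is absorbed into a small additive $\epsilon$. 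The remaining finite-memory part is a linear functional of the fresh innovations $\{\bnu_\tau\}_{\tau\in\J_u}$, which are mutually independent conditional on $X_{\hat u_j+k-1}$ (or on the slightly weaker filtration of Remark \ref{wt_model_weaker}), so matrix Azuma applies to a martingale-difference sequence indexed by them. The geometric identity $\sum_{\tau=(u-1)\alpha+1}^t b^{t-\tau}\le 1/(1-b_0)$ then produces the $\tfrac{1}{1-b^2}$ prefactors visible in $b_{\bm F,k}$, $b_{\l\e,k}$, $b_{\e\e,k}$.

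The two easier pieces are $\bm F_u$ and the $\et\et{}'$ term. For $\bm F_u$, the key observation is $\bm{E}_{j,\new,\perp}{}'\bm D_{j,\new}=\bm 0$, which annihilates the $\bm P_{(j),\new}$-directional contribution of $\lt$ in the right factor; only the $\bm P_{(j),*}$ part survives there, bounded by $\|\bm\Phi_0\bm P_{(j),*}\|_2\le \zeta_{j,*}^+$, yielding $b_{\bm F,k}\sim (\zeta_*^+)^2\lambda^+/(1-b^2)$ after Azuma. For the $\et\et{}'$ piece I substitute \eqref{etdef0_cor} from Lemma \ref{cslem_cor} and use $\phi_t\le\phi^+$; the resulting $\tfrac{1}{\alpha}\sum_t\I_{\T_t}\bm v_t\bm v_t{}'\I_{\T_t}{}'$ is handled by the support-change Lemma \ref{blockdiag1}, giving the $\rho^2h^+$ factor, while $\|\bm v_t\|_2$ is bounded via Fact \ref{ltbnds} using $\|\bm\Phi_t\bm P_{(j),\new}\|_2\le \zeta_{j,\new,k-1}^+$ for $k>1$, and, for $k=1$, using Lemma \ref{Dnew0_lem} ($\|\I_{\T_t}{}'\bm D_{j,\new}\|_2\le\kappa_{s,\new}^+$) in place of the (vacuous) new-subspace estimate.

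The main obstacle is the $\lt\et{}'$ cross term, and this is where the argument splits at $k=1$ versus $k\ge 2$. Substituting \eqref{etdef0_cor} for $\et$ and the Fact~\ref{ltbnds} expansion for $\lt$, then taking conditional expectations over the new innovations, the dominant ``signal'' contribution collapses to
\[
\mathrm{LargeTerm}_k := \tfrac{1}{\alpha}\sum_{t\in\J_u}\bm X_{t,k}\,\bm Y_t,\ \bm X_{t,k}\!:=\!\!\sum_{\tau=(u-1)\alpha+1}^t \!\!b^{2t-2\tau}\bm P_{(j),\new}\bm\Lambda_{\tau,\new}\bm P_{(j),\new}{}'\bm\Phi_{(j),k-1},\ \bm Y_t:=\I_{\T_t}\phi_t\I_{\T_t}{}'.
\]
For $k=1$, $\Phat_{(j),\new,0}=[\,]$ gives $\bm\Phi_{(j),0}\bm P_{(j),\new}=\bm D_{j,\new}$, and Lemma \ref{Dnew0_lem} directly controls the product $\bm X_{t,1}\bm Y_t$, producing the $\phi^+\kappa_{s,\new}^+\lambda_\new^+/(1-b^2)$ contribution to $b_{\l\e,1}$. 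For $k\ge 2$ the post-projected new subspace cannot be argued dense, so I instead apply the matrix Cauchy--Schwarz inequality (Lemma \ref{CSmat}), bounding $\lambda_{\max}\bigl(\tfrac{1}{\alpha}\sum_t \bm X_{t,k}\bm X_{t,k}{}'\bigr)\le\max_t\|\bm X_{t,k}\|_2^2\lesssim \bigl(\zeta_{\new,k-1}^+\lambda_\new^+/(1-b^2)\bigr)^2$ and $\lambda_{\max}\bigl(\tfrac{1}{\alpha}\sum_t\bm Y_t\bm Y_t{}'\bigr)\le \rho^2h^+(\phi^+)^2$ using Lemma \ref{blockdiag1} on the block-banded matrix $\tfrac{1}{\alpha}\sum_t\I_{\T_t}\phi_t^2\I_{\T_t}{}'$; the square root of the product yields the $\sqrt{\rho^2h^+}\phi^+\zeta_{\new,k-1}^+\lambda_\new^+$ contribution to $b_{\l\e,k}$. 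Rolling in the zero-mean Azuma deviations, the $\wt$-contributions ($\|\wt\|_2\le\epsilon_w$), and the $\bm\ell_{t,small}$ tail as the additive $\epsilon$-terms produces the displayed forms of $b_{\bm F,k}$, $b_{\l\e,k}$, $b_{\e\e,k}$; summing the three Azuma failure probabilities defines $p_{\bm{\mathcal{H}}}=p_{\l\e}+p_{\e\e}+p_{\bm F}$.
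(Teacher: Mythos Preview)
Your proposal is correct and follows essentially the same route as the paper: the same three-term decomposition $\bm{\mathcal{H}}_u = (\bm F_u+\bm F_u{}') - \bm\Phi_0(\tfrac{1}{\alpha}\sum\lt\et{}'+\et\lt{}')\bm\Phi_0 + \bm\Phi_0(\tfrac{1}{\alpha}\sum\et\et{}')\bm\Phi_0$, the same AR expansion of $\lt$ at the fixed cut-point $t_0=(u-1)\alpha+1$ followed by matrix Azuma on the fresh-innovation part, and the same $k=1$ versus $k\ge 2$ split on the cross term (Lemma~\ref{Dnew0_lem} for $k=1$, Cauchy--Schwarz Lemma~\ref{CSmat} plus the support-change Lemma~\ref{blockdiag1} for $k\ge 2$). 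One small clarification: the expansion you write, with lower summation index $(u-1)\alpha+1$, is the one from Section~\ref{general_decomp} rather than Fact~\ref{ltbnds}; the ``small'' tail is then $b^{t-t_0+1}\bm\ell_{t_0-1}$, and its contribution (the paper's $\mathrm{term1}$) is controlled via $1/\alpha\le (r_\new\zeta)^2$ after averaging, not via $b^\alpha\le r_\new\zeta$.
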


\begin{fact}
\label{ppca_bound}
Using $\rho^2h^+ \leq 10^{-4}$, $\frac{\lambda_\new^+}{\lambda^-} \leq  3$, $\lambda_\new^- \ge \lambda^-$, $\phi^+=1.2$, $\kappa_{s,\new}^+ = 0.0215$, $b\leq 0.1$,  $\zeta \leq \min\{\frac{10^{-4}}{(r+r_\new)^2},\frac{0.003\lambda^-}{(r+r_\new)^2\lambda^+}\}$, $\zeta_*^+ = r\zeta$, $\epsilon = \frac{1}{1-b^2}0.001r_\new\zeta$,
\bea
b_{\bm{A}} & \geq & \frac{\lambda^-}{1-b^2} \left(0.9999 - 0.005r_\new\zeta \right) \nn \\
b_{\bm{A},\perp} &  \leq & \frac{0.008r_\new\zeta\lambda^-}{1-b^2} \nn \\ 
b_{\bm{\mathcal{H}},1}  &  \leq & \frac{\lambda^-}{1-b^2}(0.156 + 0.1 r_\new\zeta) \nn \\
b_{\bm{\mathcal{H}},k}  &  \leq & \frac{\lambda^-}{1-b^2}(0.073\zeta_{\new,k-1}^+ + 0.1r_\new\zeta)\nn
\eea
\end{fact}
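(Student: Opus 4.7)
The proposal is that Fact \ref{ppca_bound} is purely a substitution-and-simplification claim: each of the four inequalities is obtained by plugging the fixed numerical constants and the assumed bound on $\zeta$ into the closed-form expressions for $b_{\bm A}, b_{\bm A,\perp}, b_{\bm{\mathcal H},1}, b_{\bm{\mathcal H},k}$ defined in Lemmas \ref{Ak_cor}--\ref{calHk_cor}, and then grouping terms into a ``main'' piece and a small correction of order $r_\new\zeta$. No new probabilistic or geometric idea is required; the only care needed is in verifying that the small terms really are absorbed into the slack written on the right-hand side.

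The workhorse inequalities I will use repeatedly are: (i) $\lambda_{\new}^+ \le 3\lambda^-$ and $\lambda_{\new}^- \ge \lambda^-$ from Model \ref{cor_model}; (ii) the bound on $\zeta$ gives $(\zeta_*^+)^2\lambda^+ = r^2\zeta^2\lambda^+ \le 0.003\zeta\lambda^- \le 0.003\,r_\new\zeta\lambda^-$, and in particular $r^2\zeta^2 \le 10^{-8}$; (iii) $\sqrt{\rho^2 h^+}\le 10^{-2}$ and $\rho^2 h^+\le 10^{-4}$; (iv) $(\phi^+)^2 = 1.44$; and (v) $4\epsilon = \frac{0.004\,r_\new\zeta\lambda^-}{1-b^2}$, $6\epsilon = \frac{0.006\,r_\new\zeta\lambda^-}{1-b^2}$, $8\epsilon=\frac{0.008\,r_\new\zeta\lambda^-}{1-b^2}$. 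For $b_{\bm A}$, replacing $\lambda_{\new}^-$ by its lower bound $\lambda^-$, using $1-(\zeta_*^+)^2\ge 1-10^{-8}$ and absorbing the $(r_\new\zeta)^2 b^2/(1-b^2)$ factor (which is $\le 10^{-8}\cdot b^2/(1-b^2)$) into the slack yields $b_{\bm A}\ge \frac{\lambda^-}{1-b^2}(0.9999 - 0.005\,r_\new\zeta)$. For $b_{\bm A,\perp}$, the first term is $\le \frac{0.003\,r_\new\zeta\lambda^-}{1-b^2}$ by (ii), the middle term is negligible because $(r_\new\zeta)^2\le r_\new\zeta\cdot 10^{-4}$, and adding $4\epsilon$ gives $\le \frac{0.008\,r_\new\zeta\lambda^-}{1-b^2}$.

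For $b_{\bm{\mathcal H},1}=2b_{\l\e,1}+b_{\e\e,1}+2b_{\bm F,1}$, the only non-negligible piece of $b_{\l\e,1}$ is $\frac{\phi^+\kappa_{s,\new}^+\lambda_\new^+}{1-b^2}\le \frac{1.2\cdot 0.0215\cdot 3\lambda^-}{1-b^2}=\frac{0.0774\lambda^-}{1-b^2}$; the $\sqrt{\rho^2 h^+}\phi^+(\zeta_*^+)^2\lambda^+$ term is bounded by $10^{-2}\cdot 1.2\cdot 0.003\zeta\lambda^-$ and collapses into the $r_\new\zeta$ slack. In $b_{\e\e,1}$ all terms except $(\phi^+)^2\cdot 0.06\,r_\new\zeta\lambda^-=0.0864\,r_\new\zeta\lambda^-$ are negligible (using $\kappa_{s,\new}^+\le 0.0215$, $\rho^2h^+\le 10^{-4}$, and (ii)), and $b_{\bm F,1}$ contributes only through $4\epsilon$. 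Summing gives $b_{\bm{\mathcal H},1}\le \frac{\lambda^-}{1-b^2}(2\cdot 0.0774 + 0.0864\,r_\new\zeta + \text{(neg.)})\le \frac{\lambda^-}{1-b^2}(0.156 + 0.1\,r_\new\zeta)$, where I inflate $0.0864$ up to $0.1$ to absorb the leftover $\epsilon$-contributions.

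For $k\ge 2$ the calculation is identical except that the $\kappa_{s,\new}^+$ factor is replaced by $\zeta_{\new,k-1}^+$ and the second summand of $b_{\l\e,k}$ now carries an extra $\sqrt{\rho^2h^+}\le 10^{-2}$ factor. Thus the dominant contribution becomes $2\cdot\frac{\sqrt{\rho^2h^+}\phi^+\zeta_{\new,k-1}^+\lambda_\new^+}{1-b^2}\le \frac{2\cdot 10^{-2}\cdot 1.2\cdot 3\,\zeta_{\new,k-1}^+\lambda^-}{1-b^2}=\frac{0.072\,\zeta_{\new,k-1}^+\lambda^-}{1-b^2}$, and the $r_\new\zeta$ slack comes from the $(\phi^+)^2\cdot 0.06\,r_\new\zeta\lambda^-$ summand in $b_{\e\e,k}$ plus the $\epsilon$-terms, giving $b_{\bm{\mathcal H},k}\le \frac{\lambda^-}{1-b^2}(0.073\,\zeta_{\new,k-1}^+ + 0.1\,r_\new\zeta)$, where $0.072$ is rounded to $0.073$ precisely to swallow the $\sqrt{\rho^2h^+}\phi^+(\zeta_*^+)^2\lambda^+$ remainder. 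The main (and only) obstacle is bookkeeping: making sure each discarded small term is actually dominated by the slack built into the stated constants ($0.9999$ vs.\ $1$, $0.156$ vs.\ $2\cdot 0.0774$, $0.073$ vs.\ $0.072$, $0.1\,r_\new\zeta$ vs.\ the summed $\epsilon$-contributions). I would therefore present the proof as four short blocks, one per inequality, each listing the summands of the relevant $b$ and showing that every ``neg.'' summand is $\le 10^{-3}\cdot r_\new\zeta\lambda^-/(1-b^2)$, so the claimed rounded constants are valid.
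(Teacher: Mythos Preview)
Your proposal is correct and matches the paper's intent: the paper does not supply a proof of Fact \ref{ppca_bound} at all, treating it as a direct numerical consequence of substituting the listed constants into the expressions for $b_{\bm A}$, $b_{\bm A,\perp}$, $b_{\bm{\mathcal H},k}$ from Lemmas \ref{Ak_cor}--\ref{calHk_cor}. Your identification of the dominant terms (the $\phi^+\kappa_{s,\new}^+\lambda_\new^+$ piece for $k=1$, the $\sqrt{\rho^2 h^+}\phi^+\zeta_{\new,k-1}^+\lambda_\new^+$ piece for $k\ge 2$) and your bookkeeping of the $\epsilon$-terms and $(\zeta_*^+)^2\lambda^+$ remainders into the rounded constants is exactly the computation the paper leaves implicit.
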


The following lemma is needed for the proof of Lemma \ref{cluster_lem}.
\begin{lem} \label{cluster_lem_azuma}
Let $\that_{\mathrm{cl}}:=\that_{j}+ K\alpha+1$. Let $\bbb:= 0.05 \lammin$.
\begin{align*}
\Pr\left( \| \frac{1}{\alpha}\sum_{t=\that_{cl}}^{\that_{cl} + \alpha-1} \lt {\lt}' - \frac{1}{1-b^2} \Sigma_{(j)} \| \le \bbb  \ \big| \ X_{\uhat_j+K} \right) \ge 1-p_{\mathrm{cl}}.
\end{align*}
for all $X_{\uhat_j+K} \in \Gamma_{j,K}^{\uhat_{j}}$ for ${\uhat_{j}}=u_{j}$ or $u_{j}+1$
In the above, $p_{\mathrm{cl}}$ is defined in the proof.
\end{lem}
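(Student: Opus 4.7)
\textbf{Proof plan for Lemma \ref{cluster_lem_azuma}.}
The plan is to bound the target quantity by the triangle inequality, splitting into a deterministic ``bias'' piece and a random ``deviation'' piece, each of which I bound by $\bbb/2 = 0.025\lammin$:
\begin{align*}
\Big\| \tfrac{1}{\alpha}\!\!\!\sum_{t=\that_{cl}}^{\that_{cl}+\alpha-1}\!\!\! \lt \lt{}' - \tfrac{1}{1-b^2}\bm{\Sigma}_{(j)} \Big\|
&\le \Big\| \tfrac{1}{\alpha}\!\!\!\sum_{t=\that_{cl}}^{\that_{cl}+\alpha-1}\!\!\! \big(\lt\lt{}' - \E[\lt\lt{}'\mid X_{\uhat_j+K}]\big) \Big\|
+ \Big\| \tfrac{1}{\alpha}\!\!\!\sum_{t=\that_{cl}}^{\that_{cl}+\alpha-1}\!\!\! \E[\lt\lt{}'\mid X_{\uhat_j+K}] - \tfrac{1}{1-b^2}\bm{\Sigma}_{(j)} \Big\|.
\end{align*}
Throughout, I condition on $X_{\uhat_j+K}\in \Gamma_{j,K}^{\uhat_j}$. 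Under this event, Fact \ref{d_large}(2) says $[\that_{cl},\that_{cl}+\alpha-1]\subseteq [t_j+K\alpha+1,t_j+K\alpha+d_2]$, so by Model \ref{clust_model} $\bm{\Sigma}_\tau = \bm{\Sigma}_{(j)}$ for every $\tau$ in this interval; moreover $\bm{\ell}_{\that_{cl}-1}$ is a fixed vector and the innovations $\bnu_{\that_{cl}},\bnu_{\that_{cl}+1},\ldots$ are zero-mean, mutually independent, and bounded.

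\emph{Bias term.} Using the AR(1) recursion iteratively, for $t\ge \that_{cl}$ write
\[
\lt = b^{t-\that_{cl}+1}\bm{\ell}_{\that_{cl}-1} + \sum_{\tau=\that_{cl}}^{t} b^{t-\tau}\bnu_\tau.
\]
Taking conditional expectation and applying $\bm{\Sigma}_\tau = \bm{\Sigma}_{(j)}$ on the relevant range yields
\[
\E[\lt\lt{}'\mid X_{\uhat_j+K}] = b^{2(t-\that_{cl}+1)}\bm{\ell}_{\that_{cl}-1}\bm{\ell}_{\that_{cl}-1}{}' + \tfrac{1-b^{2(t-\that_{cl}+1)}}{1-b^2}\bm{\Sigma}_{(j)}.
\]
Averaging over $t$ and using the geometric-series identity $\sum_{t=\that_{cl}}^{\that_{cl}+\alpha-1} b^{2(t-\that_{cl}+1)} \le b^2/(1-b^2)$ together with the deterministic bound $\|\bm{\ell}_{\that_{cl}-1}\|_2 \le \sqrt{r}\gamma/(1-b)$ from Fact \ref{ltbnds} and $\|\bm{\Sigma}_{(j)}\|_2 \le \lambda^+$, the bias is bounded by a constant times $\frac{b^2}{\alpha(1-b^2)}\bigl(\frac{r\gamma^2}{(1-b)^2} + \frac{\lambda^+}{1-b^2}\bigr)$. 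Because the $\alpha_{\del}$ lower bound of Theorem \ref{thm1_cor} contains the factor $r^2\gamma^4/(r_\new\zeta\lambda^-)^2$ in its numerator, this is $\le 0.025\lammin$ deterministically.

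\emph{Deviation term.} Decompose $\lt = b^{t-\that_{cl}+1}\bm{\ell}_{\that_{cl}-1} + \tilde{\bm{\ell}}_t$ with $\tilde{\bm{\ell}}_t := \sum_{\tau=\that_{cl}}^{t} b^{t-\tau}\bnu_\tau$. Since $b^{2(t-\that_{cl}+1)}\bm{\ell}_{\that_{cl}-1}\bm{\ell}_{\that_{cl}-1}{}'$ is constant under the conditioning, the centered term equals
\[
\lt\lt{}' - \E[\lt\lt{}'\mid X_{\uhat_j+K}] = b^{t-\that_{cl}+1}\bigl(\bm{\ell}_{\that_{cl}-1}\tilde{\bm{\ell}}_t{}' + \tilde{\bm{\ell}}_t\bm{\ell}_{\that_{cl}-1}{}'\bigr) + \bigl(\tilde{\bm{\ell}}_t\tilde{\bm{\ell}}_t{}' - \E[\tilde{\bm{\ell}}_t\tilde{\bm{\ell}}_t{}'\mid X_{\uhat_j+K}]\bigr).
\]
Summed over $t$, both pieces are functions of the independent innovations $\{\bnu_\tau\}_{\tau=\that_{cl}}^{\that_{cl}+\alpha-1}$, so I build a martingale with respect to the filtration $\mathcal{F}_s = \sigma(X_{\uhat_j+K},\bnu_{\that_{cl}},\ldots,\bnu_s)$ and apply the matrix Azuma inequality (Corollary \ref{azuma_hermitian}), exactly in the spirit of the proofs of Lemmas \ref{Ak_cor}--\ref{calHk_cor} that already appear for the p-PCA step. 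Using the deterministic bounds $\|\bm{\ell}_{\that_{cl}-1}\|_2, \|\tilde{\bm{\ell}}_t\|_2 \le \sqrt{r}\gamma/(1-b)$ and summing $\sum_\tau b^{2(t-\tau)} \le 1/(1-b^2)$, the per-step martingale differences admit a uniform operator-norm bound of order $r\gamma^2/(1-b)^2$. The Azuma bound then gives tail probability of the form $(n+r)\exp\bigl(-c\alpha(\bbb/2)^2/(r\gamma^2/(1-b)^2)^2\bigr)$, which by the choice of $\alpha_{\del}$ in Theorem \ref{thm1_cor} is at most $p_{\mathrm{cl}} := (n+r)\exp(-c\cdot 11\log n)$, giving the desired $1-p_{\mathrm{cl}}$ guarantee.

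\emph{Main obstacle.} The delicate step is the martingale construction: unlike the independent $\bnu_t$ setting, here $\lt$ is correlated across $t$, so one must either expand $\tilde{\bm{\ell}}_t\tilde{\bm{\ell}}_t{}'$ into its bilinear-in-$\bnu_\tau$ form and absorb the cross terms via $\sum_\tau b^{2(t-\tau)}$ bounds, or else telescope via $\mathcal{F}_s$-Doob differences; in both routes the key trick is that $b^\alpha \le r_\new\zeta$ (Fact \ref{d_large}(10)) lets one treat the sum over the past as effectively a finite geometric sum bounded by $1/(1-b)$, which is exactly what makes the per-step Azuma increments bounded and keeps $\alpha_{\del}$ of the same order as $\alpha_{\add}$.
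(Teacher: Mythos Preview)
Your proposal is correct and follows essentially the same approach as the paper: the paper's proof simply invokes the general decomposition of Section~\ref{general_decomp} with $\bN_t=\bM_t'=\I$ (folding $-\frac{1}{1-b^2}\bm\Sigma_{(j)}$ into $\mathrm{term21}$), which is precisely your bilinear-in-$\bnu_\tau$ expansion---your ``bias'' is the paper's $\mathrm{term1}+f_{\mathrm{term21}}$ and your ``deviation'' is $\mathrm{term3}+\mathrm{term22}+\mathrm{term23}+(\mathrm{term21}-\E[\mathrm{term21}\mid X])$, each bounded by matrix Azuma. One minor correction: the small parameter that makes the bias negligible is $1/\alpha\le (r_\new\zeta)^2$ from Fact~\ref{d_large}, not $b^\alpha\le r_\new\zeta$; the latter is not used in this lemma.
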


The next three lemmas are needed for the proof of Lemma \ref{cPCA}. The third one below is also used in the proof of Lemma \ref{cluster_lem}.

\begin{lem}\label{Atil_cor}
Define
\[
b_{\tilde{\bm{A}},k} := (1 - r^2\zeta^2 ) (1 - \frac{(r_\new \zeta)^2 b^2}{1-b^2}) \frac{1}{1-b^2}\lambda_{j,k}^- - 4\epsilon 
\]
For $j=1,\dots,J$ and $k=1,\dots,\vartheta$, for $a=u_{j}$ or $a=u_{j}+1$, for all $X_{(\uhat_j+K+1)+k-1}\in \tilde{\Gamma}_{j,k-1}^a$,
\[
\mathbb{P} \left(\lambda_{\min} (\tilde{\bm{A}}_{j,k}) \geq b_{\tilde{\bm{A}},k} \ \big| \ X_{(\uhat_j+K+1)+k-1}  \right) > 1- p_{\tilde{\bm{A}}}
\]
where $p_{\tilde{\bm{A}}}$ is defined in the proof.
\end{lem}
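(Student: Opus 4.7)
The plan is to mirror the proof of Lemma \ref{Ak_cor}, with two substitutions: (i) the projector $(\I-\hat{\bm{P}}_{(j),*}\hat{\bm{P}}_{(j),*}{}')$ is replaced by $\bm{\Psi}_k = (\I - \hat{\bm{G}}_{j,\det,k}\hat{\bm{G}}_{j,\det,k}{}')$, and the ``new'' basis $\bm{E}_{j,\new}$ is replaced by $\tilde{\bm{E}}_{j,k}$, the basis of $\Span(\bm{\Psi}_k \bm{G}_{j,k})$; (ii) the slow-subspace-change bound on $\lambda_\new^-$ is replaced by the clustering bound $\lambda_{\min}(\bm{G}_{j,k}{}'\bm{\Lambda}_{(j)}\bm{G}_{j,k}) \ge \lambda_{j,k}^-$ from Model \ref{clust_model}. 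First, I would condition on $X_{(\uhat_j+K+1)+k-1}\in\tilde{\Gamma}_{j,k-1}^a$; by Fact \ref{tilda_zeta_k_bnd} this gives $\zeta_{j,*}\le r\zeta$ and $\tilde\zeta_{j,k'}\le \tilde\zeta_{k'}^+$ for $k'<k$, which together imply that the total error in $\hat{\bm{G}}_{j,\det,k}$ is at most $\sum_{k'<k}\tilde\zeta_{k'}^+ \le r\zeta$, so $\|(\I-\bm{\Psi}_k)\bm{G}_{j,k}\|_2 \le r\zeta$ (using orthogonality of distinct clusters and Lemma \ref{kappadelta}-style arguments).

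Next I would expand $\lt$ in the AR form $\lt = b^{t-t_0}\bm{\ell}_{t_0-1} + \sum_{\tau=t_0}^{t} b^{t-\tau}\bm{\nu}_\tau$ with $t_0$ the start of $\tilde{\mathcal{I}}_{j,k}$, and split $\lt\lt{}'$ into a ``main'' double sum over $\tau,\tau'\in[t_0,t]$ and ``tail'' cross-terms involving $\bm{\ell}_{t_0-1}$. Because $\bnu_\tau$ is independent across $\tau$ (conditioned on $X$) and zero-mean, $\mathbb{E}[\lt\lt{}'\mid X_{(\uhat_j+K+1)+k-1}]$ reduces to $\sum_{\tau\le t} b^{2(t-\tau)}\bm{\Sigma}_\tau$. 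Since by Fact \ref{d_large}(2), $\tilde{\mathcal{I}}_{j,k}\subseteq[t_j+K\alpha+1,t_j+K\alpha+d_2]$ where $\bm{\Sigma}_t=\bm{P}_{(j)}\bm{\Lambda}_{(j)}\bm{P}_{(j)}{}'$ is constant, the geometric series telescopes to $\frac{1}{1-b^2}\bm{\Sigma}_{(j)}$ up to a $\frac{b^{2(t-t_0+1)}}{1-b^2}$ boundary correction. Using $b^\alpha\le r_\new\zeta$ (Fact \ref{d_large}(10)) and the independence-based cancellation of cross terms, this boundary correction contributes at most the $(r_\new\zeta)^2 b^2/(1-b^2)$ factor seen in the statement of $b_{\tilde{\bm{A}},k}$.

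Taking conditional expectation of $\tilde{\bm{A}}_{j,k} = \frac{1}{\alpha}\sum_{t}\tilde{\bm{E}}_{j,k}{}'\bm{\Psi}_k \lt\lt{}'\bm{\Psi}_k\tilde{\bm{E}}_{j,k}$ and substituting, the leading term is
\[
\frac{1}{1-b^2}\,\tilde{\bm{E}}_{j,k}{}'\bm{\Psi}_k\bm{P}_{(j)}\bm{\Lambda}_{(j)}\bm{P}_{(j)}{}'\bm{\Psi}_k\tilde{\bm{E}}_{j,k}.
\]
To lower-bound its minimum eigenvalue by $\frac{1-r^2\zeta^2}{1-b^2}\lambda_{j,k}^-$, I would decompose $\bm{P}_{(j)}=[\bm{G}_{j,1},\dots,\bm{G}_{j,\vartheta}]$, note that for $k'<k$ the term $\bm{\Psi}_k\bm{G}_{j,k'}$ has norm bounded by the conditioning-provided $\tilde\zeta_{k'}^+$ (so those contributions are absorbed into the $r^2\zeta^2$ factor), and note that $\|\bm{\Psi}_k\bm{G}_{j,k}-\bm{G}_{j,k}\|_2 \le r\zeta$. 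Then by the inequality $\sigma_{\min}(AB)\ge \sigma_{\min}(A)\sigma_{\min}(B)$ applied to $\tilde{\bm{E}}_{j,k}{}'\bm{\Psi}_k\bm{G}_{j,k}$, together with the clustering bound $\lambda_{\min}(\bm{G}_{j,k}{}'\bm{\Lambda}_{(j)}\bm{G}_{j,k})\ge\lambda_{j,k}^-$, we obtain the desired lower bound on $\lambda_{\min}(\mathbb{E}[\tilde{\bm{A}}_{j,k}\mid\cdot])$ up to the $(1-r^2\zeta^2)$ factor.

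Finally, I would apply the Hermitian matrix Azuma inequality (Corollary \ref{azuma_hermitian}) to the bounded, conditionally-zero-mean martingale difference $\tilde{\bm{A}}_{j,k} - \mathbb{E}[\tilde{\bm{A}}_{j,k}\mid X_{(\uhat_j+K+1)+k-1}]$, using $\|\tilde{\bm{E}}_{j,k}{}'\bm{\Psi}_k\lt\|_2 \le \|\lt\|_2 \le \sqrt{r}\gamma/(1-b)$ (Fact \ref{ltbnds}) to obtain the per-term uniform bound needed for Azuma. The choice of $\alpha \ge \alpha_{\del}$ in the theorem makes the deviation at most $4\epsilon$ with failure probability $p_{\tilde{\bm{A}}}\le n^{-10}$-type, which is the $-4\epsilon$ slack in the statement. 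The main obstacle is the careful bookkeeping of the boundary terms from the AR expansion (showing that contributions from $\bm{\ell}_{t_0-1}$ and the asymmetric cross-sums are absorbed by $\frac{(r_\new\zeta)^2 b^2}{1-b^2}$); this parallels the same step in Lemma \ref{Ak_cor} but must be redone with the cluster-PCA projector $\bm{\Psi}_k$ in place of $(\I-\hat{\bm{P}}_{(j),*}\hat{\bm{P}}_{(j),*}{}')$.
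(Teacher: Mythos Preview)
Your plan is essentially the paper's proof: invoke the general AR decomposition of Section \ref{general_decomp} with $\bN_t'=\bM_t=\bm{\Psi}_k\bm{E}_{k,\cur}$, then lower-bound the conditional-expectation term via Lemma \ref{matbnds_del} and Ostrowski's theorem. Two small bookkeeping corrections are worth making before you write it out. First, the factor $\bigl(1-\tfrac{(r_\new\zeta)^2 b^2}{1-b^2}\bigr)$ does \emph{not} arise from the $\bm{\ell}_{t_0-1}$ boundary pieces; in the lower-bound direction term1 is PSD and is simply dropped, while term3 is handled by Azuma and contributes only to the $-4\epsilon$. The factor instead comes from the incomplete geometric sum inside term21 via Lemma \ref{sum_often}, after substituting $1/\alpha\le(r_\new\zeta)^2$. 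Second, the $k'<k$ cluster contributions to $\bm{E}_{k,\cur}{}'\bm{\Psi}_k\bm{\Sigma}_\tau\bm{\Psi}_k\bm{E}_{k,\cur}$ are PSD and are dropped outright, not ``absorbed''; the $(1-r^2\zeta^2)$ factor comes solely from $\sigma_{\min}(\bm{R}_{k,\cur})^2$ (Lemma \ref{matbnds_del}, item 2) via Ostrowski applied to $\bm{R}_{k,\cur}\bm{\Lambda}_{t,\cur}\bm{R}_{k,\cur}{}'$.
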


\begin{lem}\label{Atilperp_cor}
Define
\[
b_{\tilde{\bm{A}},\perp,k} := \frac{1}{1-b^2}(2(r\zeta)^2 \lambda^+ + \lambda_{k+1}^+) + \frac{0.05(r_\new \zeta)  b^2}{(1-b^2)(1-b)^2}  + 4 \epsilon 
\]
For $j=1,\dots,J$ and $k=1,\dots,\vartheta$, for $a=u_{j}$ or $a=u_{j}+1$, for all $X_{(\uhat_j+K+1)+k-1}\in \tilde{\Gamma}_{j,k-1}^a$,
\[
\Pr \left(\lambda_{\max}(\tilde{\bm{A}}_{j,k,\perp}) \leq b_{\tilde{\bm{A}},\perp,k}  \ \big| \ X_{(\uhat_j+K+1)+k-1} \right) > 1- p_{\tilde{\bm{A}},\perp,k}
\]
where $p_{\tilde{\bm{A}},\perp,k}$ is defined in the proof.
\end{lem}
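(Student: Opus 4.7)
The plan is to follow the same template used to prove Lemma \ref{Akperp_cor}, but in the cluster-PCA geometry where the projector $\bm{\Psi}_k = \I - \hat{\bm G}_{j,\det,k}\hat{\bm G}_{j,\det,k}{}'$ has already (up to estimation error) killed the first $k-1$ clusters, and $\tilde{\bm E}_{j,k,\perp}$ is the orthogonal complement in the range of $\bm\Psi_k$ of a basis $\tilde{\bm E}_{j,k}$ for $\bm{\Psi}_k \bm G_{j,k}$. Starting from
\[
\tilde{\bm A}_{j,k,\perp} = \frac{1}{\alpha}\sum_{t\in\Ijkt}\tilde{\bm E}_{j,k,\perp}{}'\bm\Psi_k\lt\lt{}'\bm\Psi_k\tilde{\bm E}_{j,k,\perp},
\]
I would expand each $\lt$ by the AR representation of Fact \ref{ltbnds}, writing $\lt = \l_{t,small}+\sum_{\tau=t-\alpha+1}^t b^{t-\tau}\bm P_{(j)}\bm a_\tau$, which is valid on $\Ijkt$ by Fact \ref{d_large}, item 2, together with the constant-$\bm\Lambda_{(j)}$ clause of Model \ref{clust_model}. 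This splits $\tilde{\bm A}_{j,k,\perp}$ into three pieces: a main quadratic form in the $\bm a_\tau$'s, a cross term with $\l_{t,small}$, and a pure tail term in $\l_{t,small}$.

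The core estimate is on the main sum. Writing $\bm P_{(j)} = [\bm G_{j,1}\,\cdots\,\bm G_{j,\vartheta}]$ in block form and partitioning $\bm a_\tau$ into cluster sub-vectors $\bm a_{\tau,l}$, each summand is a bilinear sum over cluster indices $(l,l')$ of blocks $\tilde{\bm E}_{j,k,\perp}{}'\bm\Psi_k\bm G_{j,l}\,(\bm a_{\tau,l}\bm a_{\tau',l'}{}')\,\bm G_{j,l'}{}'\bm\Psi_k\tilde{\bm E}_{j,k,\perp}$. I would bound these in cases. (i) For $l\le k-1$ or $l'\le k-1$, the conditioning $\tilde{\Gamma}_{j,k-1}^a$ together with Lemma \ref{zetadecay} gives $\|\bm\Psi_k \bm G_{j,l}\|_2\le \tilde\zeta_{j,l}\le r_{j,l}\zeta$; after taking conditional expectation (so $\E[\bm a_{\tau,l}\bm a_{\tau,l'}{}']$ is a diagonal block of $\bm\Lambda_{(j)}$, with $\|\bm\Lambda_{(j)}\|_2\le\lambda^+$) and summing $b^{t-\tau}b^{t-\tau'}$ on the diagonal $\tau=\tau'$ (geometric sum $1/(1-b^2)$), this produces the $\frac{2(r\zeta)^2\lambda^+}{1-b^2}$ contribution in $b_{\tilde{\bm A},\perp,k}$ after symmetrizing the $(l,l')$ and $(l',l)$ entries. (ii) For $l=k$, the factor $\tilde{\bm E}_{j,k,\perp}{}'\bm\Psi_k\bm G_{j,k}$ is identically zero by the defining property of $\tilde{\bm E}_{j,k}$. (iii) For $l=l'\ge k+1$, orthonormality of $\bm P_{(j)}$ forces $\|\tilde{\bm E}_{j,k,\perp}{}'\bm\Psi_k\bm G_{j,l}\|_2\le 1$, and Model \ref{clust_model} bounds the corresponding diagonal contribution by $\lambda_{j,l}^+\le\lambda_{k+1}^+$; the off-diagonal $l\ne l'$ pieces inside $\{k+1,\dots,\vartheta\}$ have zero conditional mean because $\bm\Sigma_{(j)}$ is block-diagonal across clusters in the $\bm P_{(j)}$ basis. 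This produces the $\lambda_{k+1}^+/(1-b^2)$ contribution.

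The two residual pieces (the $\l_{t,small}$ cross term and its square) are handled deterministically in exactly the manner of Fact \ref{ltbnds}, using $\|\l_{t,small}\|_2\le b^\alpha\sqrt{r}\gamma/(1-b)$ and $b^\alpha\le r_\new\zeta$; together with the $b^2$ weights produced by the AR tail this yields the $\frac{0.05(r_\new\zeta)b^2}{(1-b^2)(1-b)^2}$ term. Finally, to pass from the conditional expectation to the actual realization of $\tilde{\bm A}_{j,k,\perp}$, I would apply the matrix Azuma inequality (Corollary \ref{azuma_hermitian}) with respect to the filtration generated by $\{\bm\nu_\tau\}_{\tau\le t}$. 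Boundedness of $\bm\nu_t$ in Model \ref{cor_model}, item 6, gives $\|\bm\nu_t\|_2\le \sqrt{r}\gamma$, which bounds each conditional-difference matrix in operator norm by an $\bigO(r\gamma^2)$ constant. The choice $\alpha\ge\alpha_{\del}$ in the theorem then makes the resulting deviation at most $4\epsilon$ with probability at least $1-p_{\tilde{\bm A},\perp,k}$, which is the meaning of the $4\epsilon$ slack in $b_{\tilde{\bm A},\perp,k}$.

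The main obstacle will be the cross-cluster bookkeeping in the expectation. One has to show that the accumulated bias from all the previously estimated cluster subspaces $\hat{\bm G}_{j,1},\dots,\hat{\bm G}_{j,k-1}$ combines with the error from $\Phat_{(j),*}$ (implicit inside $\bm\Psi_k$ through the original detection/addition steps that produced $\hat G_{j,\det,k}$) only quadratically in $r\zeta$ rather than linearly, after symmetrizing the $(l,l')$ and $(l',l)$ blocks and using that cross-cluster covariances vanish in the $\bm P_{(j)}$ basis. Getting the constant $2$ right in $2(r\zeta)^2\lambda^+$ requires keeping these two sources of error separate and summing them rather than allowing a cross product of magnitude $r\zeta\cdot\lambda^+$ to leak through. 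Once this accounting is clean, the rest is a direct analogue of the proof of Lemma \ref{Akperp_cor}.
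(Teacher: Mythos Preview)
Your overall plan is in the right direction, but you have diverged from the paper in two ways, one of which creates a genuine technical gap.

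\textbf{The AR decomposition and the Azuma step.} You expand $\lt$ via the moving-window formula $\lt=\l_{t,small}+\sum_{\tau=t-\alpha+1}^{t}b^{t-\tau}\bnu_\tau$ from Fact~\ref{ltbnds}. The paper does \emph{not} use this; it uses the interval-anchored expansion of Section~\ref{general_decomp}, $\lt=b^{t-t_0+1}\l_{t_0-1}+\sum_{\tau=t_0}^{t}b^{t-\tau}\bnu_\tau$ with $t_0$ the first time of $\Ijkt$. The difference matters for Azuma. With your moving window, the $t$-th summand depends on $\bnu_{t-\alpha+1},\dots,\bnu_t$, a window that overlaps heavily with the $(t{-}1)$-th summand's window, so $\{\bm Z_t\}_t$ is not a martingale-difference sequence with respect to the filtration $\{\bnu_\tau\}_{\tau\le t}$; saying ``apply Corollary~\ref{azuma_hermitian} to pass from expectation to realization'' does not go through as written. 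The paper's anchored decomposition is designed precisely so that (i) $\l_{t_0-1}$ is deterministic under the conditioning on $X_{(\uhat_j+K+1)+k-1}$ and gives the small constant $b_{term1}$, and (ii) after swapping the order of the double sum (Lemma~\ref{sumswitch}) and splitting the $(\tau,\ttau)$ cross products into diagonal/upper/lower pieces, one obtains three genuine adapted sequences (term21, term22, term23) plus a term3 from the $\l_{t_0-1}$ cross, each bounded by a single Azuma call --- that is where the $4\epsilon$ comes from. Your proposal does not display this martingale structure; if you try to carry it out you will end up re-deriving Section~\ref{general_decomp} anyway.

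\textbf{The covariance bound.} The cluster-by-cluster block accounting you describe to control $\|\tilde{\bm E}_{j,k,\perp}{}'\bm\Psi_k\bm\Sigma_{(j)}\bm\Psi_k\tilde{\bm E}_{j,k,\perp}\|_2$ is correct, but the paper packages all of this as Lemma~\ref{matbnds_del}, item~5(a), and simply cites it: one gets $(r\zeta)^2\lambda^+ + \lambda_{k+1}^+$ directly (the $2$ in the statement is slack). So the ``main obstacle'' you anticipate is already handled as a standalone lemma. Also, a small misconception: $\bm\Psi_k=\I-\hat{\bm G}_{j,\det,k}\hat{\bm G}_{j,\det,k}{}'$ contains only the cluster-PCA estimates $\hat{\bm G}_{j,1},\dots,\hat{\bm G}_{j,k-1}$; there is no $\Phat_{(j),*}$ ``implicit inside $\bm\Psi_k$'' to track separately.

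In short: replace your moving-window expansion by the Section~\ref{general_decomp} template with $\bN_t'=\bM_t=\bm\Psi_k\tilde{\bm E}_{j,k,\perp}$, and replace the block-by-block case analysis by a one-line invocation of Lemma~\ref{matbnds_del}. Then the proof is exactly parallel to Lemma~\ref{Akperp_cor}.
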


\begin{lem}\label{Htil_cor}
Define
\[
b_{\tilde{\bm{\mathcal{H}}},k} := 2b_{\tilde{\bm{\ell e}},k} + b_{\tilde{\bm{ee}},k} + 2b_{\tilde{\bm{F}},k}
\]
where
$b_{\tilde{\bm{\ell e}},k}: = \sqrt{\rho^2 h^+ (\phi^+)^2}   ( \frac{1}{1-b^2} (r + r_\new) \zeta ( (r\zeta) \lambda^+ + \lambda_k^+) ) + \frac{0.05(r_\new \zeta)  b^2\lambda^-}{(1-b^2)(1-b)^2} + 6 \epsilon$ 
\\
%
$b_{\tilde{\bm{e e}},k}: = \rho^2 h^+  \ (\phi^+)^2 \frac{1}{1-b^2}  (r \zeta) (r+r_\new)\zeta \lambda^+ +  \frac{0.05(r_\new \zeta) b^2 \lambda^-}{(1-b^2)(1-b)^2} + (\phi^+)^2 2(0.03 \zeta \lambda^-) + 8 \epsilon$ 
\\
%
$b_{\tilde{\bm{F}},k}:= \frac{1}{1-b^2} \left( (r\zeta)^2\lambda^+ + \frac{(r\zeta)^2}{\sqrt{1 - (r\zeta)^2}}\lambda_{k+1}^+ \right) + \frac{0.05(r_\new \zeta)  b^2\lambda^-}{(1-b^2)(1-b)^2} + 4\epsilon$ 

For $k=1,\dots,k$, for $a=u_{j}$ or $a=u_{j}+1$, for all $X_{(\uhat_j+K+1)+k-1}\in \tilde{\Gamma}_{j,k-1}^a$,
\begin{align*}
&\Pr \left( \| \tilde{\bm{\mathcal{H}}}_k \|_2 \leq b_{\tilde{\bm{\mathcal{H}}},k} \ \big| \ X_{(\uhat_j+K+1)+k-1} \right) \geq 1 - p_{\tilde{\bm{\mathcal{H}}}}
\end{align*}
where $p_{\tilde{\bm{\mathcal{H}}}}:= p_{\tilde{\bm{\ell e}}} + p_{\tilde{\bm{e e}}} + p_{\tilde{\bm{F}}}$ and
$p_{\tilde{\bm{\ell e}}}$, $p_{\tilde{\bm{e e}}}$, $p_{\tilde{\bm{F}}}$ are defined in the proof.

Also,  for $a=u_{j}$ or $a=u_{j}+1$, for all $X_{\uhat_j+K}\in \Gamma_{j,K}^a$,
\[
\Pr \left(2\|\frac{1}{\alpha} \sum_{t=\hat{t}_j + K\alpha +1}^{\hat{t}_j + (K+1)\alpha +1} \lt \et' \|_2 + \| \frac{1}{\alpha}\sum_t \e_t\e_t' \|_2  \leq  b_{\tilde{\bm{\mathcal{H}}},1}  \big| \ X_{\uhat_j+K} \right) \geq 1 - p_{\tilde{\bm{\ell e}}} - p_{\tilde{\bm{e e}}}
\]
(This is used in the proof of Lemma \ref{cluster_lem_azuma}. It follows using the exact same approach as that used to bound $\| \tilde{\bm{\mathcal{H}}}_1 \|_2$.)
\end{lem}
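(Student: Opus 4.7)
\textbf{Proof plan for Lemma \ref{Htil_cor}.} The plan is to mirror the structure of the addition-phase Lemma \ref{calHk_cor}, decomposing $\tilde{\bm{\mathcal{H}}}_k$ into three families of terms and bounding each by a matrix-Azuma argument adapted to the AR$(1)$ correlation in $\lt$. Writing $\lhatt = \lt - \et$, we have
\[
\bm{\mathcal{M}}_{\mathrm{cpca}} = \bm{\Psi}_k \Bigl(\frac{1}{\alpha}\sum_{t\in\tilde{\mathcal{I}}_{j,k}} \lt \lt{}'\Bigr)\bm{\Psi}_k \ - \ \bm{\Psi}_k \Bigl(\frac{1}{\alpha}\sum_t (\lt \et{}' + \et \lt{}')\Bigr)\bm{\Psi}_k \ + \ \bm{\Psi}_k \Bigl(\frac{1}{\alpha}\sum_t \et \et{}'\Bigr)\bm{\Psi}_k,
\]
and $\tilde{\bm{\mathcal{A}}}_{j,k}$ is the ``ideal'' block-diagonalization along the $\bmG_{j,k}$ vs.\ $\bmG_{j,k,\perp}$ split (analogous to $\bm{\mathcal{A}}_u$ in Definition \ref{defHk}). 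So $\tilde{\bm{\mathcal{H}}}_k$ splits into the $\lt\lt{}'$ fluctuation term (producing $b_{\tilde{\bm F},k}$), the symmetric cross term (producing $2 b_{\tilde{\bm{\ell e}},k}$), and the $\et\et{}'$ term (producing $b_{\tilde{\bm{ee}},k}$), plus conditional-mean corrections that are absorbed into the $\epsilon$ slack.

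For the $\tilde{\bm F}$ term I would proceed as in the proof of Lemma \ref{Ak_cor}/Lemma \ref{cluster_lem_azuma}: expand $\lt = \sum_{\tau \le t} b^{t-\tau}\bnu_\tau$, swap the order of summation over $t$ and $\tau$, and separate the tail $\tau \le t_0 - 1$ (which contributes at most $b^\alpha \sqrt{r}\gamma/(1-b)$ per $\lt$, hence at most the small term involving $0.05 (r_\new\zeta)b^2\lambda^-/((1-b^2)(1-b)^2)$) from the recent block $\tau \in \tilde{\mathcal{I}}_{j,k}$. The recent block, after reordering, becomes a sum of conditionally independent increments in $\bnu_\tau$ (conditioned on $X_{(\uhat_j+K+1)+k-1}$ and the full outlier support $\T$) weighted by the geometric factor $\sum_{t \ge \tau} b^{2(t-\tau)} \le 1/(1-b^2)$, so Corollary \ref{azuma_hermitian} with the $\|\P_t{}'\bnu_t\|_2 \le \gamma$ bound yields the $(r\zeta)^2\lambda^+$ term from $\zeta_{j+1,*} \le r\zeta$ and the $(r\zeta)^2\lambda_{k+1}^+/\sqrt{1-(r\zeta)^2}$ term from the fact that $\bm\Psi_k$ still lets through the ``higher-cluster" directions with spectral norm at most $\lambda_{j,k+1}^+$ (using Model \ref{clust_model}).

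For the $\tilde{\bm{\ell e}}$ and $\tilde{\bm{ee}}$ terms I would use \eqref{etdef0_cor} from Lemma \ref{cslem_cor}, so that $\et = \ITt \invterm \ITt{}'\bm\Phi_t(\lt+\wt) - \wt$; the factor $\ITt$ is exactly what triggers the support-change Lemma \ref{blockdiag1}, giving $\rho^2 h^+$ savings in the $\et\et{}'$ bound and $\sqrt{\rho^2 h^+}$ savings via Cauchy--Schwarz (Lemma \ref{CSmat}) in the $\lt \et{}'$ bound. With $\phi_t \le \phi^+$ and $\|(\I - \Phat_{t-1}\Phat_{t-1}{}')\lt\|_2$ controlled by Fact \ref{ltbnds} --- but now on the cluster-PCA interval, where $\zeta_{j+1,*}\le r\zeta$ and $\zeta_{j,\new,K}^+ \le r_\new\zeta$ jointly give a prefactor of order $(r+r_\new)\zeta\lambda^+ + \lambda_{j,k}^+$ --- the same AR expansion and reordering trick turns the double sum into a martingale difference sequence to which Corollary \ref{azuma_norm} applies, producing $b_{\tilde{\bm{\ell e}},k}$ and $b_{\tilde{\bm{ee}},k}$. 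A final union bound over the three Azuma events yields the claimed $1 - p_{\tilde{\bm{\mathcal H}}}$.

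The main obstacle will be the same as for Lemma \ref{calHk_cor}: the AR model means the $\lt$'s are not a martingale difference in $t$, so every application of matrix Azuma must first be rewritten as a sum in $\bnu_\tau$. I expect two additional subtleties specific to cluster-PCA: (i) the projector is $\bm{\Psi}_k = \I - \hat{\bm G}_{j,\det,k} \hat{\bm G}_{j,\det,k}{}'$, not $\I - \Phat_*\Phat_*{}'$, so to produce $\lambda_{j,k+1}^+$ rather than $\lambda_\new^+$ in the leakage bound I must use $\tilde{\zeta}_{j,i} \le \tilde{\zeta}_i^+ \le r_{j,i}\zeta$ for $i<k$ (which is encoded in $\tilde\Gamma_{j,k-1}^a$) together with the cluster separation \eqref{chi}; and (ii) the second stated inequality (used in Lemma \ref{cluster_lem_azuma}) is at the \emph{start} of the cluster-PCA interval, where the relevant projector is simply $\bm\Phi_{(j),K}$, and the exact same argument with $\zeta_{j,\add}^+$ in place of $\zeta_{j+1,*}^+$ gives the claimed bound --- no new ideas are needed, only careful bookkeeping of constants.
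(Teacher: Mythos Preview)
Your plan is essentially the paper's proof: the same three-term split $2b_{\tilde{\bm{\ell e}},k} + b_{\tilde{\bm{ee}},k} + 2b_{\tilde{\bm{F}},k}$, the same AR expansion in $\bnu_\tau$ with the general decomposition of Section~\ref{general_decomp}, Cauchy--Schwarz plus Lemma~\ref{blockdiag1} for the cross term, Lemma~\ref{blockdiag1} alone for $\et\et{}'$, and the second claim handled as the $k=1$ case with $\bm\Psi_1=\I$. The one place your bookkeeping is slightly off is the source of the constants in $b_{\tilde{\bm F},k}$ and $b_{\tilde{\bm{\ell e}},k}$: these do not come from $\zeta_{j+1,*}$ but from the deterministic cluster-PCA bounds of Lemma~\ref{matbnds_del} (namely $\|\bm{D}_{j,k,\det}\|_2 \le r\zeta$, $\|\bm{E}_{k,\cur}{}'\bm{D}_{k,\undet}\|_2 \le (r\zeta)^2/\sqrt{1-(r\zeta)^2}$, and the factored form of $\bm\Psi_k\bm\Sigma_t\bm\Psi_k$), together with $\|\bm\Phi_K\P_*\|_2\le r\zeta$, $\|\bm\Phi_K\P_\new\|_2\le r_\new\zeta$ for the $\et$-side projector; once you invoke that lemma explicitly, the stated constants drop out exactly.
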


\begin{fact}
\label{del_bounds}
Using $\rho^2h^+ \leq 10^{-4}$, $\frac{\lambda_k^+}{\lambda_k^-} \leq g^+ = 3$, $\phi^+=1.2$, $\kappa_{s,\new}^+ = 0.0215$, $b\leq 0.1$, $\frac{\lambda_{k+1}^+}{\lambda_k^-} \leq \chi^+ = 0.2$, $\zeta \leq \min\{\frac{10^{-4}}{(r+r_\new)^2},\frac{0.003\lambda^-}{(r+r_\new)^2\lambda^+}\}$, $\zeta_*^+ = r\zeta$, $\epsilon = \frac{1}{1-b^2}0.001r_\new\zeta$, we have 
\bea
b_{\tilde{\bm{A}},k} & \geq & \frac{\lambda_k^-}{1-b^2}(0.9999 - 0.005r_\new\zeta) \\
b_{\tilde{\bm{A}},\perp,k} & \leq & \frac{\lambda_k^-}{1-b^2}(0.2 + 0.07r_\new\zeta) \\
b_{\tilde{\bm{\mathcal{H}}},k} & \leq & \frac{\lambda_k^-}{1-b^2}(0.072(r+r_\new)\zeta + 0.095r_\new\zeta)
\eea
\end{fact}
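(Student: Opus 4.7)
The plan is to verify each of the three inequalities by direct substitution into the defining expressions of $b_{\tilde{\bm{A}},k}$, $b_{\tilde{\bm{A}},\perp,k}$, and $b_{\tilde{\bm{\mathcal{H}}},k}$ from Lemmas~\ref{Atil_cor}--\ref{Htil_cor}, using the numerical values listed in the hypothesis. Two auxiliary observations are used repeatedly: (i) Model~\ref{clust_model} gives $\lambda^- \le \lambda_{j,k}^-$, which lets me replace every $\lambda^-$ appearing in a correction term by $\lambda_k^-$ and pull out a common factor $\lambda_k^-/(1-b^2)$; and (ii) the two bounds on $\zeta$ from Theorem~\ref{thm1_cor} imply both $r^2\zeta^2 \le 10^{-4}$ and $r^2\zeta^2 \lambda^+ \le 0.003\,\zeta\lambda^-$, which is how any factor of $\lambda^+$ is converted to $\lambda^-$ when needed.

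For $b_{\tilde{\bm{A}},k}$, I expand $(1-r^2\zeta^2)(1-r_\new^2\zeta^2 b^2/(1-b^2)) \ge 1 - r^2\zeta^2 - r_\new^2\zeta^2\,b^2/(1-b^2)$, convert each $O(\zeta^2)$ error into a small multiple of $r_\new\zeta$ using $r^2\zeta \le 10^{-4}$, and combine with $4\epsilon = 0.004\,r_\new\zeta\lambda^-/(1-b^2) \le 0.004\,r_\new\zeta\lambda_k^-/(1-b^2)$ to obtain the stated $0.9999 - 0.005\,r_\new\zeta$ lower bound. For $b_{\tilde{\bm{A}},\perp,k}$, the leading piece $\lambda_{k+1}^+/(1-b^2)$ is bounded via the separation assumption as $\chi^+\lambda_k^- = 0.2\,\lambda_k^-$; the $2(r\zeta)^2\lambda^+/(1-b^2)$ term becomes at most $0.006\,\zeta\lambda^-/(1-b^2)$ via (ii), the $0.05\,r_\new\zeta\,b^2\lambda^-/((1-b^2)(1-b)^2)$ term uses $b^2/(1-b)^2 \le 0.0124$, and $4\epsilon$ contributes $0.004\,r_\new\zeta\lambda_k^-/(1-b^2)$. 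Their sum sits well under the targeted $0.07\,r_\new\zeta\,\lambda_k^-/(1-b^2)$.

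The longest calculation is for $b_{\tilde{\bm{\mathcal{H}}},k} = 2b_{\tilde{\bm{\ell e}},k} + b_{\tilde{\bm{ee}},k} + 2b_{\tilde{\bm{F}},k}$. The leading contribution comes from the $2\sqrt{\rho^2 h^+(\phi^+)^2}\,(r+r_\new)\zeta\,\lambda_k^+/(1-b^2)$ piece of $2b_{\tilde{\bm{\ell e}},k}$, which using $\sqrt{10^{-4}}\cdot\phi^+ = 0.012$ and $\lambda_k^+ \le g^+\lambda_k^- = 3\lambda_k^-$ evaluates exactly to $0.072\,(r+r_\new)\zeta\,\lambda_k^-/(1-b^2)$, matching the first coefficient in the target. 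Every other summand --- the $(r\zeta)\lambda^+$ cross-terms inside $2b_{\tilde{\bm{\ell e}},k}$ and $b_{\tilde{\bm{ee}},k}$, the three copies of $0.05\,r_\new\zeta\,b^2\lambda^-/((1-b^2)(1-b)^2)$, the $(\phi^+)^2 \cdot 2(0.03\,\zeta\lambda^-)$ piece of $b_{\tilde{\bm{ee}},k}$, the $(r\zeta)^2\lambda^+$ and $(r\zeta)^2\lambda_{k+1}^+/\sqrt{1-(r\zeta)^2}$ pieces of $2b_{\tilde{\bm{F}},k}$, and the $(2\cdot 6 + 8 + 2\cdot 4)\epsilon = 28\epsilon$ total Azuma slack --- is individually reduced to a multiple of $r_\new\zeta\,\lambda_k^-/(1-b^2)$ using $\lambda^- \le \lambda_k^-$, $\zeta \le r_\new\zeta$, and either (ii) or $\lambda_{k+1}^+ \le 0.2\,\lambda_k^-$. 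Summing the resulting coefficients with slack yields the $0.095\,r_\new\zeta$ budget.

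The main obstacle is purely bookkeeping: there are roughly a dozen distinct correction terms across the three bounds, all small but of similar order, and each must be carefully rewritten in the common unit $\lambda_k^-/(1-b^2)$ before being added, with attention to avoiding double counting. No new probabilistic or structural argument is required --- Fact~\ref{del_bounds} is a convenient numerical repackaging of Lemmas~\ref{Atil_cor}--\ref{Htil_cor} under the assumed parameter scaling, used downstream (together with Lemma~\ref{zetakbnd}) in the proofs of the deletion-step lemmas such as Lemma~\ref{cPCA}.
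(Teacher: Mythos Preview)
Your proposal is correct and matches the paper's intended approach: Fact~\ref{del_bounds} is stated in the paper without proof (as is the analogous Fact~\ref{ppca_bound}), and is meant to follow by exactly the kind of direct numerical substitution you describe---plugging the parameter bounds into the definitions of $b_{\tilde{\bm{A}},k}$, $b_{\tilde{\bm{A}},\perp,k}$, $b_{\tilde{\bm{\mathcal{H}}},k}$ from Lemmas~\ref{Atil_cor}--\ref{Htil_cor} and collecting terms in units of $\lambda_k^-/(1-b^2)$. Your two key reductions, $\lambda^- \le \lambda_{j,k}^-$ (from Model~\ref{clust_model}) and $(r\zeta)^2\lambda^+ \le 0.003\,\zeta\lambda^-$ (from the $\zeta$ bound), are precisely the moves needed, and your identification of $2\sqrt{\rho^2 h^+}\,\phi^+\cdot g^+ = 0.072$ as the source of the leading $(r+r_\new)\zeta$ coefficient is exactly right.
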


\subsection{Proofs of the main lemmas} \label{pf_mainlems}  
Lemmas \ref{falsedet_del}, \ref{det}, and \ref{pPCA} are proved in Appendix \ref{old_main_lemmas}. These use the first three lemmas from the above subsection.

\begin{proof}[Proof of Lemma \ref{cluster_lem}]
In this proof, all of the probabilistic statements are conditioned on $X_{\uhat_j+K} \in \Gamma_{j,K}^{\uhat_{j}}$ for ${\uhat_{j}}=u_{j}$ or $u_{j}+1$.


Let $\that_{\mathrm{cl}}:=\that_{j}+ K\alpha+1$. Recall from Algorithm \ref{reprocsdet} that $\ds\hat{\bm{\Sigma}}_{\mathrm{sample}} := \frac{1}{\alpha}\sum_{t=\that_{cl}}^{\that_{cl} + \alpha-1} \lhatt \lhatt{}'$. Define $\ds\bm{\Sigma}_{\mathrm{sample}}:= \frac{1}{\alpha}\sum_{t=\that_{cl}}^{\that_{cl} + \alpha-1} \lt {\lt}'$.

By Lemma \ref{cluster_lem_azuma} and Lemma \ref{event}, under the given conditioning, with probability (w.p.) at least $1 - p_{\mathrm{cl}}$,
\begin{equation}
\lambda_{\max}(\bm{\Sigma}_{\mathrm{sample}} - \frac{1}{1-b^2} \bm\Sigma_{(j)}) \le \bbb := 0.05 \lambda^-
\label{cluster_lem_azuma_bnd}
\end{equation}

Let $k_0=0$. Let $k_i$ denote the last index of cluster $i$. Thus true cluster 1, $\mathcal{G}_{j,1} = \{1,2, \dots k_1\}$, true cluster 2, $\mathcal{G}_{j,2} = \{k_1+1, k_1+2, \dots k_2\}$ and so on
 for all $i=1,2, \dots \vartheta_j$. Recall that $\bm{\Sigma}_{(j)}$ has rank $r_{j}$ and so $k_{\vartheta_j} = r_{j}$.


Consider ``true cluster" 1. We need to show that ``estimated cluster" 1, $\hat{\mathcal{G}}_{j,1} = \{1,2, \dots k_1\}$. Let $\hat{\lambda}_i := \lambda_{i}\left(\hat{\bm{\Sigma}}_{\mathrm{sample}}\right)$. We will be done if we can show that
\ben
\item  $\ds\frac{\hat\lambda_1}{\hat\lambda_{k_1}} \le \ghatp$ and
\item $\ds\frac{\hat\lambda_1}{\hat\lambda_{k_1+1}} > \ghatp$
\een
Define
\[
\bb: = \left\|\bm{\Sigma}_{\mathrm{sample}} - \hat{\bm{\Sigma}}_{\mathrm{sample}}\right\|_2
\]
Using the fact that $\lhatt = \lt - \et$ we get that
\[
\bb \le 2 \bigg\|\frac{1}{\alpha} \sum_t \lt {\et}'\bigg\| + \bigg\| \frac{1}{\alpha} \sum_t \et {\et}'\bigg\|
\]
Using Lemma \ref{Htil_cor}, Fact \ref{del_bounds} and $(r+r_\new) \zeta \lambda_k^- \le (r+r_\new) \zeta \lambda^+ \le 0.0003 \lambda^-$ (from the bound on $\zeta$), under the given conditioning,
\[
\begin{array}{ll}
\bb &\le \frac{\lambda_k^-}{1-b^2}(0.072(r+r_\new)\zeta + 0.095r_\new\zeta) \\
&< 0.01 \lambda^- 
\end{array}
\]
with probability at least $1 - p_{\tilde{\bm{le}}} - p_{\tilde{\bm{ee}}}$ where $p_{\tilde{\bm{le}}}$, $p_{\tilde{\bm{ee}}}$ are defined in Lemma \ref{Htil_cor}.
%

Using Weyl's inequality and \eqref{cluster_lem_azuma_bnd}, for $i=1,\dots,n$
\begin{align*}
\hat{\lambda}_i:= \lambda_i(\hat{\bm{\Sigma}}_{\mathrm{sample}}) 
&\leq \lambda_i(\bm{\Sigma}_{\mathrm{sample}}) + \lambda_{\max}(\hat{\bm{\Sigma}}_{\mathrm{sample}} - \bm{\Sigma}_{\mathrm{sample}}) \\
&\leq \lambda_i(\frac{1}{1-b^2}\bm{\Sigma}_{(j)}) + \lambda_{\max}(\bm{\Sigma}_{\mathrm{sample}} -\frac{1}{1-b^2}\bm{\Sigma}_{(j)} ) + \bb \\
& \le \lambda_i(\frac{1}{1-b^2}\bm{\Sigma}_{(j)}) + \bbb + \bb
\end{align*}
and
\begin{align*}
\hat{\lambda}_i:= \lambda_i(\hat{\bm{\Sigma}}_{\mathrm{sample}}) 
&\geq \lambda_i(\bm{\Sigma}_{\mathrm{sample}}) - \lambda_{\max}(\hat{\bm{\Sigma}}_{\mathrm{sample}} - \bm{\Sigma}_{\mathrm{sample}}) \\
&\geq \lambda_i(\frac{1}{1-b^2}\bm{\Sigma}_{(j)}) - \lambda_{\max}(\bm{\Sigma}_{\mathrm{sample}} -\frac{1}{1-b^2}\bm{\Sigma}_{(j)} ) - \bb \\
& \ge \lambda_i(\frac{1}{1-b^2}\bm{\Sigma}_{(j)}) - \bbb - \bb
\end{align*}
%
The above strategy to bound $ \lambda_i(\hat{\bm{\Sigma}}_{\mathrm{sample}})$ was suggested in \cite{tropp_personal_comm}.

Thus, using the the fact that $\frac{\lambda_{1}(\Lamjex) }{ \lambda_{k_1}(\Lamjex) } \le g^+=3$ and $\lambda_{k_1}(\Lamjex) \ge \lammin$, we have that
\[
\frac{\hat\lambda_1}{\hat\lambda_{k_1}}
\le \frac{\frac{1}{1-b^2}\lambda_{1}(\Lamjex) + \bbb  + \bb}{\frac{1}{1-b^2}\lambda_{k_1}(\Lamjex)  - \bbb -\bb}
\le \frac{g^+ + \frac{(\bbb + \bb)(1-b^2)}{\lambda_{k_1}(\Lamjex)}}{1 - \frac{(\bbb + \bb)(1-b^2)}{\lambda_{k_1}(\Lamjex)}}
\le \frac{g^+ + \frac{(\bbb + \bb)}{\lammin}}{1 - \frac{(\bbb + \bb)}{\lammin}}
\le \frac{3 + 0.06}{1-0.06} = \ghatp.
\]
Similarly, using the lower bound $\frac{\lambda_{1}(\Lambda_{(j)})}{\lambda_{k_1+1}(\Lambda_{(j)})} \ge \frac{1}{\chi^+}=5$ from Model \ref{clust_model}, 
\[
\frac{\hat\lambda_1}{\hat\lambda_{k_1+1}}
\ge \frac{\frac{1}{1-b^2}\lambda_{1}(\Lamjex) - \bbb  - \bb}{\frac{1}{1-b^2}\lambda_{k_1+1}(\Lamjex)  + \bbb  + \bb}
\ge \frac{\frac{\lambda_{1}(\Lamjex)}{\lambda_{k_1+1}(\Lamjex)} - \frac{(\bbb + \bb)(1-b^2)}{\lambda_{k_1+1}(\Lamjex)}}{1  + \frac{(\bbb + \bb)(1-b^2)}{\lambda_{k_1+1}(\Lamjex)}}
\ge \frac{\frac{\lambda_{k_1}(\Lamjex)}{\lambda_{k_1+1}(\Lamjex)} - \frac{(\bbb + \bb)}{\lammin}}{1  + \frac{(\bbb + \bb)}{\lammin}}
\ge \frac{\frac{1}{\chi^+} - \frac{(\bbb + \bb)}{\lammin}}{1  + \frac{(\bbb + \bb)}{\lammin}}
\ge \frac{5 - 0.06}{1+0.06} = 4.67  > \ghatp
\]
This shows that the first cluster is correctly recovered. 
Proceeding in the same manner, 
\[
\frac{\hat\lambda_{k_{i-1}+1}}{\hat\lambda_{k_{i}}}
\le \frac{g^+ + \frac{(\bbb + \bb)}{\lammin}}{1 - \frac{(\bbb + \bb)}{\lammin}}
\le \frac{3 + 0.06}{1-0.06} = \ghatp.
\]
and
\[
\frac{\hat\lambda_{k_{i-1}+1}}{\hat\lambda_{k_{i}+1}}
\ge \frac{\frac{1}{\chi^+} - \frac{(\bbb + \bb)}{\lammin}}{1  + \frac{(\bbb + \bb)}{\lammin}}
= \frac{5 - 0.06}{1+0.06} = 4.67  > \ghatp.
\]
Recall that the clustering algorithm excludes all eigenvalues below $0.25\lamtrain$.
Recall also that $\bm{\Sigma}_{(j)}$ has rank rank $r_{j} = k_{\vartheta_j}$. Thus from the upper and lower bounds on $\hat\lambda_i$ given above and using Lemma \ref{initsub_cor}, we can also conclude that,
\[
\hat{\lambda}_{k_{\vartheta_j}} \geq \lambda_{k_{\vartheta_j}} - \bbb - \bb \geq \lammin - 0.06\lammin > 0.75 \lambda^- > 0.25 \lamtrain
\]
and
\[
\hat{\lambda}_{k_{\vartheta_j}+1} \leq \lambda_{k_{\vartheta_j}+1} + \bbb + \bb \leq 0 + 0.06\lammin < 0.25\lamtrain
\]
Thus, the algorithm also stops at the correct place. 
We have shown that all of the clusters will be recovered exactly and no extra clusters will be formed (algorithm stops at the correct place). Thus, 
\[
\Pr\left(  \mathrm{CLUSTER}_{j}^a \ \big| \ \Gamma_{j,K}^a\right) \geq p_{\mathrm{cluster}} := 1 - p_{\mathrm{cl}} - p_{\tilde{\bm{le}}} - p_{\tilde{\bm{ee}}}
\]
for $a=u_{j}$ or $a=u_{j}+1$.  This proves the lemma.
\end{proof}

\begin{proof}[Proof of Lemma \ref{cPCA}]
Since we condition on the event $\Gamma_{j,k-1}^{\uhat_j}$ and $\Gamma_{j,k-1}^{\uhat_j} \subseteq \mathrm{CLUSTER}_j^{\uhat_j}$, the clusters are correctly recovered, i.e. $\hat{\mathcal{G}}_{j,k} = \mathcal{G}_{j,k}$.
This lemma then follows by combining Lemma \ref{zetakbnd} with the bounds from Lemmas \ref{Atil_cor}, \ref{Atil_cor}, \ref{Htil_cor} and finally using Lemma \ref{event}.
\end{proof}

\section{Proof of the Addition Azuma Lemmas} \label{3_pfs}

\subsection{A general decomposition used in all the proofs} \label{general_decomp}
\newcommand{\bM}{\bm{M}}
\newcommand{\bN}{\bm{N}}


A general decomposition will be developed here. We will use this in all the proofs that follow. Consider an interval $\J_u$ and let $t_0$ denote the first time instant of this interval.  Let $X \equiv X_{(t_0-1)/\alpha} = \{\bnu_0, \bnu_1, \dots \bnu_{t_0-1}, \{\T_t\}_{t=1,2,\dots t_{\max}}\}$. Let $\bM_t$ and $\bN_t$ be matrices that are deterministic given $X$. Consider bounding
\[
\frac{1}{\alpha}\sum_{t=t_0}^{t_0 + \alpha-1} \bN_t \lt {\lt}' \bM_t
\]
conditioned on $X$ for $X \in \Gamma$. 


From our model, notice that
\[
\lt = b^{t-t_0+1} \l_{t_0-1} + \sum_{\tau=t_0}^{t} b^{t-\tau} \bnu_\tau
\]
Thus,
\begin{align*}
\frac{1}{\alpha}\sum_{t=t_0}^{t_0 + \alpha-1} \bN_t \lt {\lt}' \bM_t
&= \frac{1}{\alpha} \sum_{t=t_0}^{t_0 + \alpha-1}   \bN_t \left( b^{t-t_0+1} \l_{t_0-1} + \sum_{\tau=t_0}^{t} b^{t-\tau} \bnu_\tau \right)  \left(  b^{t-t_0+1} \l_{t_0-1} +  \sum_{\ttau=t_0}^{t} b^{t-\ttau} \bnu_\ttau \right)' \bM_t  \\
&:=  \mathrm{term1} + \mathrm{term2} + \mathrm{term3}
\end{align*}
where
\begin{align*}
\mathrm{term1} &= \frac{1}{\alpha} \sum_{t=t_0}^{t_0+\alpha-1}  b^{2(t-t_0)+2} \bN_t ( \l_{t_0-1} \l_{t_0-1}{}' ) \bM_t \\
\mathrm{term3} &= \frac{1}{\alpha} \sum_{t=t_0}^{t_0+\alpha-1}  \sum_{\tau=t_0}^{t} b^{2t-t_0-\tau+1} \bN_t( \bnu_\tau \l_{t_0-1}{}' + \l_{t_0-1} \bnu_\tau' )\bM_t \\
\mathrm{term2} &= \frac{1}{\alpha} \sum_{t=t_0}^{t_0 + \alpha-1}  \bN_t \left( \sum_{\tau=t_0}^{t} b^{t-\tau} \bnu_\tau \right) \left( \sum_{\ttau=t_0}^{t} b^{t-\ttau} \bnu_\ttau \right)' \bM_t    \\
&:= \mathrm{term21} + \mathrm{term22} + \mathrm{term23} \ \text{where}  \\
\mathrm{term21} &= \frac{1}{\alpha} \sum_{t=t_0}^{t_0+\alpha-1}  \sum_{\tau=t_0}^{t}  b^{2t-2\tau} \bN_t (\bnu_\tau \bnu_\tau') \bM_t \\%
\mathrm{term22} &= \frac{1}{\alpha} \sum_{t=t_0}^{t_0+\alpha-1}  \sum_{\tau=t_0}^{t} \sum_{\ttau=t_0}^{\tau-1}  b^{2t-\tau-\ttau} \bN_t (\bnu_\tau \bnu_\ttau') \bM_t \\
\mathrm{term23} &= \frac{1}{\alpha} \sum_{t=t_0}^{t_0+\alpha-1}  \sum_{\tau=t_0}^{t} \sum_{\ttau=\tau+1}^{t}  b^{2t-\tau-\ttau} \bN_t (\bnu_\tau \bnu_\ttau' ) \bM_t
\end{align*}

We will show that $\mathrm{term22},\mathrm{term23},\mathrm{term3}$ are close to zero whp, and that $\mathrm{term1}$ can be bounded by a very small value (proportional to $1/\alpha$). The only non-trivial term is $\mathrm{term21}$ and we will show how to (i) bound its spectral norm whp and, (ii) when $\bN_t = \bM_t'$ (so that this term is symmetric), how to also bound its minimum eigenvalue whp. For all terms, except $\mathrm{term1}$ (which is a constant when conditioning on $X$), we will use the matrix Azuma inequalities (given in Appendix \ref{prelim}).

We first show how to bound the near-zero terms. Consider $\mathrm{term22}$. By Lemma \ref{sumswitch} (exchange order of double sum),
\begin{align*}
\mathrm{term22}
&= \frac{1}{\alpha} \sum_{\tau=t_0}^{t_0+\alpha-1} \sum_{t=\tau}^{t_0+\alpha-1} \sum_{\ttau=t_0}^{\tau-1}  b^{2t-\tau-\ttau} \bN_t (\bnu_\tau \bnu_\ttau') \bM_t \\
&:= \frac{1}{\alpha} \sum_{\tau=t_0}^{t_0+\alpha-1} \bm{Z}_\tau
\end{align*}
To apply matrix Azuma (Lemma \ref{azuma_norm}), we need to bound $\| \frac{1}{\alpha} \sum_{\tau=t_0}^{t_0+\alpha-1} \E[\bm{Z}_\tau| \bm{Z}_{t_0},\bm{Z}_{t_0+1}, \dots, \bm{Z}_{\tau-1},X] \|_2$ and $\|\bm{Z}_\tau\|$ conditioned on  $X$. Now,
\[
\E[\bm{Z}_\tau| \bm{Z}_{t_0},\bm{Z}_{t_0+1}, \dots, \bm{Z}_{\tau-1},X] 
= \sum_{t=\tau}^{t_0+\alpha-1} \sum_{\ttau=t_0}^{\tau-1}  b^{2t-\tau-\ttau} \bN_t \E[\bnu_\tau \bnu_\ttau'| \bm{Z}_{t_0},\bm{Z}_{t_0+1}, \dots, \bm{Z}_{\tau-1},X] \bM_t
\]
Consider $\E[ \bnu_\tau \bnu_\ttau' | \bm{Z}_{t_0},\bm{Z}_{t_0+1}, \dots, \bm{Z}_{\tau-1},X]$. Notice that here $\ttau \le \tau-1$. Thus, this is a case a of $\E[W Y | Z]$ where $W$ is independent of $\{Y,Z\}$ with $W \equiv \bnu_\tau$, $Y\equiv  \bnu_\ttau$ and $Z \equiv \{\bm{Z}_{t_0},\bm{Z}_{t_0+1}, \dots, \bm{Z}_{\tau-1},X\}$. This is true because $\bm{Z}_\tau$ is a function of  $\bnu_{t_0},\bnu_{t_0+1},  \dots, \bnu_\tau$ and thus $\{\bm{Z}_{t_0},\bm{Z}_{t_0+1}, \dots, \bm{Z}_{\tau-1},X\} = f(\bnu_0, \bnu_1, \dots \bnu_{\tau-1},\{\T_{\tilde{t}}\}_{\tilde{t}=1,2,\dots,t_{\max}})$. So $\{Y,Z\} = g(\bnu_0, \bnu_1, \dots \bnu_{\tau-1},\{\T_{\tilde{t}}\}_{\tilde{t}=1,2,\dots,t_{\max}})$ and this is independent of $\bnu_\tau$ (by the independence assumption from the theorem). Thus, by Lemma \ref{ind_exp}, since $\bnu_\tau$ is zero mean,
\[
\E[ \bnu_\tau \bnu_\ttau'| \bm{Z}_{t_0},\bm{Z}_{t_0+1}, \dots, \bm{Z}_{\tau-1},X] = \E[\bnu_\tau] \E [ \bnu_\ttau'| \bm{Z}_{t_0},\bm{Z}_{t_0+1}, \dots, \bm{Z}_{\tau-1},X] = 0
\]
Also,
\[
\|\bm{Z}_\tau\|_2 \le  (\max_\tau \sum_{t=\tau}^{t_0+\alpha-1} \sum_{\ttau=t_0}^{\tau-1}  b^{2t-\tau-\ttau})  \max_{t,\tau,\ttau} \|\bN_t \bnu_\tau \bnu_\ttau' \bM_t \|_2
\le b_{prob,term22}:= \frac{b}{(1-b^2)(1-b)} \max_{t,\tau,\ttau} \|\bN_t \bnu_\tau \bnu_\ttau' \bM_t \|_2
\]
Thus by Azuma, conditioned on $X$, $\|\mathrm{term22}\|_2 \le \epsilon$ w.p. at least $1-(2n) \exp\left(\frac{-\alpha \epsilon^2}{32 (b_{prob,term22})^2} \right)$.

Consider $\mathrm{term23}$. By Lemma \ref{sumswitch} (exchange order of double sum),
\begin{align*}
\mathrm{term23}
&= \frac{1}{\alpha} \sum_{\tau=t_0}^{t_0+\alpha-1} \sum_{t=\tau}^{t_0+\alpha-1}  \sum_{\ttau=\tau+1}^{t}  b^{2t-\tau-\ttau} \bN_t (\bnu_\tau \bnu_\ttau') \bM_t
\end{align*}
This term is not in a form where we can apply the matrix Azuma inequalities to get a useful bound. But we can get it into a nicer form by a simple change of variables. Let $p=(t_0+\alpha-1)-\tau$ and use this to replace $\tau$.
Then,  
\begin{align*}
\mathrm{term23}
& = \frac{1}{\alpha} \sum_{p=0}^{\alpha-1} \ \sum_{t=t_0+\alpha-1-p}^{t_0+\alpha-1} \ \sum_{\ttau=t_0+\alpha-p}^{t}  b^{2t-t0-\alpha+p-\ttau+1} \bN_t (\bnu_{t_0+\alpha-1-p} \bnu_\ttau' ) \bM_t\\  
&:= \frac{1}{\alpha} \sum_{p=0}^{\alpha-1} \ \bm{Z}_p
\end{align*}
To apply Azuma (Lemma \ref{azuma_norm}), we need to bound $\| \frac{1}{\alpha} \sum_{p=0}^{\alpha-1} \E[\bm{Z}_p| \bm{Z}_{0},\bm{Z}_{1}, \dots, \bm{Z}_{p-1},X] \|_2$ and $\|\bm{Z}_p\|$ conditioned on  $X$. Now,
\[
\E[\bm{Z}_p| \bm{Z}_{0},\bm{Z}_{1}, \dots, \bm{Z}_{p-1},X] = \sum_{t=t_0+\alpha-1-p}^{t_0+\alpha-1}  \sum_{\ttau=t_0+\alpha-p}^{t}  b^{2t-t_0-\alpha+1-\ttau} \bN_t \E[\bnu_{t_0+\alpha-1-p} \bnu_\ttau'| \bm{Z}_{0},\bm{Z}_{1}, \dots, \bm{Z}_{p-1},X] \bM_t
\]
Notice that $\bm{Z}_{p}$ is a function of $\bnu_{t_0+\alpha-p-1}, \bnu_{t_0+\alpha-p}, \dots, \bnu_{t_0+\alpha-1}$. Also recall that $X = \{\bnu_0, \bnu_1, \dots \bnu_{t_0-1}\}$.
Thus, $\{ \bm{Z}_0, \bm{Z}_1, \dots, \bm{Z}_{p-1},X \} = f(\bnu_0, \bnu_1, \dots \bnu_{t_0-1}, \ \bnu_{t_0+\alpha-p}, \bnu_{t_0+\alpha-p+1}, \dots \bnu_{t_0+\alpha-1})$.
%
%
Notice also that $\ttau \ge t_0+ \alpha-p$. Thus, the expectation above is again a case of $\E[W Y | Z]$ where $W$ is independent of $\{Y,Z\}$ with $W = \bnu_{t_0+\alpha-p-1}$, $Y = \bnu_\ttau$ (for a $\ttau \ge t_0+\alpha-p$) and $Z = \{\bm{Z}_0, \bm{Z}_1, \dots, \bm{Z}_{p-1},X\} = f(\bnu_0, \bnu_1, \dots \bnu_{t_0-1}, \ \bnu_{t_0+\alpha-p}, \bnu_{t_0+\alpha-p+1}, \dots, \bnu_{t_0+\alpha-1})$. Using, this, by Lemma \ref{ind_exp}, $\E[\bnu_{t_0+\alpha-1-p} \bnu_\ttau'| \bm{Z}_{0},\bm{Z}_{1}, \dots, \bm{Z}_{p-1},X] = 0$.
%
%
%
Also,
\[
\|\bm{Z}_p\| \le  (\max_p \sum_{t=t_0+\alpha-1-p}^{t_0+\alpha-1} \sum_{\ttau=t_0+\alpha-p}^{t}  b^{2t-t_0-\alpha+1-\ttau})  \max_{t,p,\ttau} \|\bN_t \bnu_{t_0+\alpha-1-p} \bnu_\ttau' \bM_t \|_2
\le b_{prob,term23}:= \frac{1}{(1-b)^2} \max_{t,\tau,\ttau} \|\bN_t \bnu_\tau \bnu_\ttau' \bM_t \|_2
\]
Thus by Azuma, conditioned on $X$, $\|\mathrm{term23}\|_2 \le \epsilon$ w.p. at least $1-(2n) \exp\left(\frac{-\alpha \epsilon^2}{32 (b_{prob,term23})^2} \right)$.

Consider $\mathrm{term3}$. By Lemma \ref{sumswitch} (exchange order of double sum),
\begin{align*}
\mathrm{term3} &= \frac{1}{\alpha} \sum_{\tau=t_0}^{t_0+\alpha-1} \sum_{t=\tau}^{t_0+\alpha-1} b^{2t-t_0-\tau+1} \bN_t  ( \bnu_\tau \l_{t_0-1}{}' + \l_{t_0-1} \bnu_\tau' )  \bM_t \\
& := \frac{1}{\alpha} \sum_{\tau=t_0}^{t_0+\alpha-1} \bm{Z}_\tau
\end{align*}
To apply Azuma (Lemma \ref{azuma_norm}), we need to bound $\|\frac{1}{\alpha} \sum_{\tau=t_0}^{t_0+\alpha-1} \E[\bm{Z}_\tau| \bm{Z}_{t_0}, \bm{Z}_{t_0+1}, \dots, \bm{Z}_{\tau-1}, X]\|_2$ and $\|\bm{Z}_\tau\|_2$ conditioned on  $X$. We can show that
\begin{align}
\E[\bm{Z}_\tau| \bm{Z}_{t_0}, \bm{Z}_{t_0+1}, \dots, \bm{Z}_{\tau-1}, X]
&= \sum_{t=\tau}^{t_0+\alpha-1} b^{2t-t_0-\tau+1} \bN_t \E[(\bnu_\tau \l_{t_0-1}{}' + \l_{t_0-1} \bnu_\tau') | \bm{Z}_{t_0}, \bm{Z}_{t_0+1}, \dots, \bm{Z}_{\tau-1}, X] \bM_t = 0.
\nonumber
\end{align}
This follows because $\bm{Z}_\tau = f(\bnu_\tau,X)$ and thus, $\{\bm{Z}_{t_0}, \bm{Z}_{t_0+1}, \dots, \bm{Z}_{\tau-1}, X\} = \tilde{f}(\bnu_0, \bnu_1, \dots, \bnu_{\tau-1})$. Also,
$\l_{t_0-1} = g(X)=\tilde{g}(\bnu_0, \bnu_1, \dots, \bnu_{t_0-1})$. Thus, this is again a case of $\E[W Y | Z]$ with $W =  \bnu_\tau$, $Y = \l_{t_0-1} = \tilde{g}(\bnu_0, \bnu_1, \dots, \bnu_{t_0-1})$ and $Z = \{\bm{Z}_{t_0}, \bm{Z}_{t_0+1}, \dots, \bm{Z}_{\tau-1}, X\} = \tilde{f}(\bnu_0, \bnu_1, \dots, \bnu_{\tau-1})$. 

Also,
\[
\|\bm{Z}_\tau\| 
\le b_{prob,term3}:= \frac{1}{(1-b^2)} \max_{t,\tau} (\|\bN_t \bnu_\tau \l_{t_0-1}{}' \bM_t \|_2 + \|\bN_t \l_{t_0-1}{} \bnu_\tau' \bM_t \|_2)
\]
Thus by Azuma, conditioned on $X$, $\|\mathrm{term3}\|_2 \le \epsilon$ w.p. at least $1-(2n) \exp\left(\frac{-\alpha \epsilon^2}{32 (b_{prob,term3})^2} \right)$.

Consider $\mathrm{term1}$. Since $\l_{t_0-1} = f(X)$ and everything else in this term is also a function of $X$, this term is a constant given $X$. Thus we can bound it directly. We have
\begin{align}
\|\mathrm{term1}\|_2 & \le \frac{1}{\alpha} \sum_{t=t_0}^{t_0+\alpha-1}  b^{2(t-t_0)+2} \max_{t \in [t_0, t_0+\alpha-1]} \|\bN_t  ( \l_{t_0-1} \l_{t_0-1}{}' ) \bM_t \|_2  \\
& \le \frac{b^2}{\alpha(1-b^2)}  \max_{t \in [t_0, t_0+\alpha-1]} \|\bN_t  ( \l_{t_0-1} \l_{t_0-1}{}' ) \bM_t \|_2 \le  \frac{(r_\new \zeta)^2 b^2}{(1-b^2)}  \max_{t \in [t_0, t_0+\alpha-1]} \|\bN_t  ( \l_{t_0-1} \l_{t_0-1}{}' ) \bM_t \|_2:= b_{term1}
\label{term1_upper_bnd}
\end{align}

Consider $\mathrm{term21}$. By Lemma \ref{sumswitch} (exchange summation order),
\begin{align*}
\mathrm{term21}
&= \frac{1}{\alpha} \sum_{\tau=t_0}^{t_0+\alpha-1} \sum_{t=\tau}^{t_0+\alpha-1}  b^{2t-2\tau} \bN_t (\bnu_\tau \bnu_\tau') \bM_t \\
&:= \frac{1}{\alpha} \sum_{\tau=t_0}^{t_0+\alpha-1}  \bm{Z}_\tau
\end{align*}
To obtain an upper bound on its spectral norm using Azuma, we need to upper bound $\|\frac{1}{\alpha} \sum_{\tau=t_0}^{t_0+\alpha-1} \E[\bm{Z}_\tau| \bm{Z}_{t_0}, \bm{Z}_{t_0+1}, \dots, \bm{Z}_{\tau-1}, X] \|_2$ and $\|\bm{Z}_\tau\|_2$. To get a lower bound on its minimum eigenvalue we need to lower bound $\lambda_{\min}(\frac{1}{\alpha} \sum_{\tau=t_0}^{t_0+\alpha-1} \E[\bm{Z}_\tau| \bm{Z}_{t_0}, \bm{Z}_{t_0+1}, \dots, \bm{Z}_{\tau-1}, X])$ as well.
We have
\begin{align*}
\E[\bm{Z}_\tau| \bm{Z}_{t_0}, \bm{Z}_{t_0+1}, \dots, \bm{Z}_{\tau-1}, X]
&=  \sum_{t=\tau}^{t_0+\alpha-1}  b^{2t-2\tau} \bN_t \E[\bnu_\tau \bnu_\tau'| \bm{Z}_{t_0}, \bm{Z}_{t_0+1}, \dots, \bm{Z}_{\tau-1}, X]  \bM_t \\
&= \sum_{t=\tau}^{t_0+\alpha-1}  b^{2t-2\tau} \bN_t \bm{\Sigma}_\tau \bM_t
\end{align*}
The last row follows because we condition on a function of $\{\bnu_0, \bnu_1, \dots, \bnu_{\tau-1}\}$, $\bnu_\tau$ is independent of all these and $\E[\bnu_\tau \bnu_\tau'] = \bm{\Sigma}_\tau$. Then by applying Lemma \ref{sumswitch} in reverse order, we get
\begin{align*}
\frac{1}{\alpha} \sum_{\tau=t_0}^{t_0+\alpha-1} \E [\bm{Z}_\tau| \bm{Z}_{t_0}, \bm{Z}_{t_0+1}, \dots, \bm{Z}_{\tau-1}, X]
&:= \frac{1}{\alpha} \sum_{\tau=t_0}^{t_0+\alpha-1} \sum_{t=\tau}^{t_0+\alpha-1}  b^{2t-2\tau} \bN_t  \bm{\Sigma}_\tau \bM_t  \\
&=  \frac{1}{\alpha} \sum_{t=t_0}^{t_0+\alpha-1} \sum_{\tau=t_0}^{t}  b^{2t-2\tau} \bN_t \bm{\Sigma}_\tau \bM_t
\end{align*}
Also,
\[
\|\bm{Z}_\tau\|_2 \le  (\max_\tau \sum_{t=\tau}^{t_0+\alpha-1}  b^{2t-2\tau})  \max_{t,\tau} \|\bN_t \bnu_\tau \bnu_\tau' \bM_t \|_2
\le b_{prob,term21}:= \frac{1}{(1-b^2)} \max_{t,\tau} \|\bN_t \bnu_\tau \bnu_\tau' \bM_t \|_2
\]
Thus by Azuma (Lemma \ref{azuma_norm}), conditioned on $X$,
\begin{align}
\|\mathrm{term21}\|_2 \le \|\frac{1}{\alpha} \sum_{t=t_0}^{t_0+\alpha-1} \sum_{\tau=t_0}^{t}  b^{2t-2\tau} \bN_t \bm{\Sigma}_\tau \bM_t\|_2 + \epsilon
\label{term21_upper_bnd}
\end{align}
w.p. at least $1-(2n) \exp\left(\frac{-\alpha \epsilon^2}{32 (b_{prob,term21})^2} \right)$.

Let $b_{term21}$ denote the upper bound on the first term in the RHS of \eqref{term21_upper_bnd}. Then, conditioned on $X$,
\begin{align}
\|\frac{1}{\alpha}\sum_{t=t_0}^{t_0 + \alpha-1} \bN_t \lt {\lt}' \bM_t\|_2 \le b_{term1} + b_{term21} + 4\epsilon
\label{upper_bnd}
\end{align}
with probability obtained from a union bound.

Consider the special case when $\bN_t' = \bM_t$. In this case, $\frac{1}{\alpha}\sum_{t=t_0}^{t_0 + \alpha-1} \bN_t \lt {\lt}' \bM_t$ is a symmetric matrix and so is
$\mathrm{term21}$.
We can lower bound its minimum eigenvalue using Azuma Lemma \ref{azuma_hermitian} to get that, conditioned on $X$,
\begin{align}
\lambda_{\min}(\mathrm{term21}) \ge \lambda_{\min}(\frac{1}{\alpha} \sum_{t=t_0}^{t_0+\alpha-1} \sum_{\tau=t_0}^{t}  b^{2t-2\tau} \bN_t \bm{\Sigma}_\tau \bN_t') - \epsilon
\label{term21_lower_bnd}
\end{align}
w.p. at least $1-(2n) \exp\left(\frac{-\alpha \epsilon^2}{32 (b_{prob,term21})^2} \right)$.
Let $b_{lower,term21}$ denote the lower bound on the first term in the RHS of \eqref{term21_lower_bnd}.
Then, conditioned on $X$, we can conclude that
\begin{align}
\lambda_{\min}(\frac{1}{\alpha}\sum_{t=t_0}^{t_0 + \alpha-1} \bN_t \lt {\lt}' \bM_t) \ge b_{lower,term21} - 4 \epsilon
\label{lower_bnd}
\end{align}
with probability obtained from a union bound. We get the above because of the following reason. Since $\mathrm{term1}$ is symmetric positive semi-definite, $\lambda_{\min}(\mathrm{term1}) \ge 0$.
Since $\mathrm{term3}$ is symmetric, $\lambda_{\min}(\mathrm{term3}) \ge - \|\mathrm{term3}\| \ge -\epsilon$. Since $\mathrm{term2}$ is also a symmetric matrix in this case, it follows that $\mathrm{term22}+\mathrm{term23} = \mathrm{term2} - \mathrm{term21}$ is a symmetric matrix. Thus $\lambda_{\min}(\mathrm{term22}+\mathrm{term23}) \ge - \|\mathrm{term22}+\mathrm{term23}\| \ge -\|\mathrm{term22}\| - \|\mathrm{term23}\| \ge - 2\epsilon$.

In the special case when $\bN_t' = \bM_t= \bM_0$, using Lemma \ref{sum_often}, the RHS in \eqref{term21_lower_bnd} can be lower bounded by
$\frac{1}{1-b^2}(1 - \frac{b^2}{\alpha (1-b^2)}) \min_{\tau \in [t_0, t_0+\alpha-1]} \lambda_{\min} (\bM_0'  \bm{\Sigma}_\tau \bM_0 ) - \epsilon$.


In the special case when $\bN_t = \bN_0$ and $\bM_t = \bM_0$, the RHS in \eqref{term21_upper_bnd} can be upper bounded by $\frac{1}{1-b^2} \max_{\tau \in [t_0, t_0+\alpha-1]} \| \bN_0 \bm{\Sigma}_\tau \bM_0\|_2 + \epsilon$.

In the special case when $\bN_t = \bm{\Phi}_0$ and $\bM_t = \bm{\Phi}_{k-1} \ITt \invterm \ITt'$, we can apply Cauchy-Schwartz for matrices  followed by Lemma \ref{blockdiag1} (support change lemma) to the RHS of \eqref{term21_upper_bnd} to get the final upper bound.

In the special case when $\bN_t' = \bM_t = \bm{\Phi}_{k-1} \ITt \invterm \ITt'$, we can directly apply Lemma \ref{blockdiag1} (support change lemma) to the RHS of \eqref{term21_upper_bnd} to get the upper bound.

\subsection{A general decomposition for terms containing $w_t$ }  \label{general_decomp_w}
Consider bounding $\frac{1}{\alpha}\sum_{t=t_0}^{t_0 + \alpha-1} \bN_t \lt {\wt}' \bM_t$ conditioned on $X$. Here $X$ contains $\bnu_t$'s for all $t \le t_0-1$ and contains all the $\T_t$'s.
Using the independence assumption from the theorem, 
\[
\E[\bm{Z}_t | \bm{Z}_{t-1}, \bm{Z}_{t-2}, \dots, \bm{Z}_{t_0}, X] = \E[\lt|\bm{Z}_{t-1}, \bm{Z}_{t-2}, \dots, \bm{Z}_{t_0}, X] \E[\wt'] = 0
\]
This follows by Lemma \ref{ind_exp} with $W \equiv \wt$, $Y \equiv \lt = g(\bnu_0, \bnu_1, \dots, \bnu_t)$ and $Z \equiv \{\bm{Z}_{t-1}, \bm{Z}_{t-2}, \dots, \bm{Z}_{t_0}, X\} = f(\bm{w}_{t_0},\bm{w}_{t_0+1}, \dots, \bm{w}_{t-1}, \bnu_0, \bnu_1, \dots, \bnu_{t-1}, \T_{\ttau},\ttau=1,2,\dots,t_{\max}\}$ and using the fact that $\wt$ is zero mean.
Also,
\[
\|\bN_t \lt {\wt}' \bM_t\|_2 \le b_{prob,\lt \wt}:= \max_t \|\bN_t \lt {\wt}' \bM_t\|_2
\]
Thus we can conclude by Azuma Lemma \ref{azuma_norm} that
\[
\|\frac{1}{\alpha}\sum_{t=t_0}^{t_0 + \alpha-1} \bN_t \lt {\wt}' \bM_t\|_2 \le \epsilon
\]
w.p. at least $1-(2n) \exp\left(\frac{-\alpha \epsilon^2}{32 (b_{prob, \lt \wt})^2} \right)$. 


\begin{fact} \label{wt_model_weaker_proof}
In situations where it is not practical to assume that $\wt$ is independent of $\T_t$, the assumption of Remark \ref{wt_model_weaker} can be used. With this, we can proceed as in Sec. \ref{general_decomp} above. There will be only two terms, $term1 = \frac{1}{\alpha}\sum_{t=t_0}^{t_0 + \alpha-1} \bN_t b^{t-t_0+1} \l_{t_0-1} {\wt}' \bM_t$ and $term2 = \frac{1}{\alpha}\sum_{t=t_0}^{t_0 + \alpha-1} \sum_{\tau=t_0}^{t} \bN_t b^{t-\tau} \bnu_\tau \wt'$. We can bound term1 as before by $\frac{(r_\new \zeta)^2 b}{1-b} \sqrt{r}\gamma \epsilon_w \|\bN_t\|_2 \|\bM_t\|_2$. Everywhere where we use this, $\|\bN_t\|_2 \|\bM_t\|_2 \le 1.2^2 =1.44$. With this and with using the bounds on $\epsilon_w$ and $\zeta$, this is smaller than $0.001 r_\new \zeta \lambda^- = \epsilon$.
By Lemma \ref{sumswitch}, $term2= \frac{1}{\alpha} \sum_{\tau=t_0}^{t_0 + \alpha-1} \sum_{t=\tau}^{t_0+\alpha-1} \bN_t b^{t-\tau} \bnu_\tau \wt' := \frac{1}{\alpha}\sum_{\tau=t_0}^{t_0 + \alpha-1} \bm{Z}_\tau$. Notice that $\{ \bm{Z}_{t_0},  \bm{Z}_{t_0+1}, \dots,  \bm{Z}_{\tau-1},X\} =f(\bnu_0,\bnu_1, \dots, \bnu_{\tau-1}, \bm{w}_{t_0}, \bm{w}_{t_0+1}, \dots, \bm{w}_{t_0+\alpha-1}, \T_{\ttau}, \ttau=1,2,\dots,t_{\max} )$ and so $\E[ \bm{Z}_\tau| \bm{Z}_{t_0},  \bm{Z}_{t_0+1}, \dots,  \bm{Z}_{\tau-1},X]$ is an example of of $\E[WY|Z]$ with $W$ independent of $\{Y,Z\}$ if we let $W = \bnu_\tau$, $Y = \wt$ and $Z = \{\bm{Z}_{t_0}, \bm{Z}_{t_0+1}, \dots, \bm{Z}_{\tau-1},X\}$. Hence it is equal to zero. Thus, using Azuma Lemma \ref{azuma_norm} we can bound $term2$ by $\epsilon$ whp.
With this, whenever $\|\bN_t\|_2 \|\bM_t\|_2 \le 1.2^2 =1.44$, $\|\frac{1}{\alpha}\sum_{t=t_0}^{t_0 + \alpha-1} \bN_t \lt {\wt}' \bM_t\|_2 < 2\epsilon$ (instead of $\epsilon$). 
\end{fact}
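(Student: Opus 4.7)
The plan is to follow the general decomposition template of Sec. \ref{general_decomp}, but since the $\wt$ sits outside the $\lt\lt'$ structure (the target has only one factor of $\lt$), the expansion is simpler and produces only two terms. Expanding $\lt$ via the AR recursion, $\lt = b^{t-t_0+1}\l_{t_0-1} + \sum_{\tau=t_0}^{t} b^{t-\tau}\bnu_\tau$, splits the target into $\mathrm{term1} := \frac{1}{\alpha}\sum_t b^{t-t_0+1} \bN_t\, \l_{t_0-1} \wt' \bM_t$ and $\mathrm{term2} := \frac{1}{\alpha}\sum_t \sum_{\tau=t_0}^{t} b^{t-\tau} \bN_t\, \bnu_\tau \wt' \bM_t$. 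I would bound each to at most $\epsilon$ and then triangle-inequality to get $< 2\epsilon$.

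For $\mathrm{term1}$: this is deterministic when conditioned on $X$ except through $\wt$. Applying Fact \ref{ltbnds} gives $\|\l_{t_0-1}\|_2 \le \sqrt{r}\gamma/(1-b)$, Model \ref{wt_model} gives $\|\wt\|_2 \le \epsilon_w$, and the geometric factor satisfies $\frac{1}{\alpha}\sum_{t=t_0}^{t_0+\alpha-1} b^{t-t_0+1} \le \frac{b}{\alpha(1-b)}$. With $\|\bN_t\|_2\|\bM_t\|_2 \le 1.44$ and using $\frac{1}{\alpha} \le (r_\new\zeta)^2$ from Fact \ref{d_large} and the assumption $\epsilon_w^2 \le 0.03\zeta\lambda^-$, the product is driven below $\epsilon = \frac{0.001 r_\new\zeta\lambda^-}{1-b^2}$.

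For $\mathrm{term2}$: after swapping the summation order via Lemma \ref{sumswitch}, write it as $\frac{1}{\alpha}\sum_{\tau=t_0}^{t_0+\alpha-1} \bm{Z}_\tau$ with $\bm{Z}_\tau := \sum_{t=\tau}^{t_0+\alpha-1} b^{t-\tau} \bN_t \bnu_\tau \wt' \bM_t$. The crucial step is to verify that the martingale property $\E[\bm{Z}_\tau \mid \bm{Z}_{t_0},\dots,\bm{Z}_{\tau-1},X] = 0$ still holds under the weaker independence hypothesis of Remark \ref{wt_model_weaker}. Observing that the conditioning sigma-algebra is a function only of $\{\bnu_0,\dots,\bnu_{\tau-1}, \{\wt\}_{t=1}^{t_{\max}}, \T\}$ (each $\bm{Z}_s$ with $s<\tau$ depends on $\bnu_s$ and on $\{\wt,\bM_t,\bN_t\}$, all of which are measurable w.r.t. $Q\cup\{\bnu_{s}\}_{s<\tau}\cup X$), and since $\bnu_\tau$ is by hypothesis independent of $Q = \{\T,\{\wt\}_t\}$ and of the past $\bnu$'s, Lemma \ref{ind_exp} factorizes the conditional expectation into $(\sum_t b^{t-\tau}\bN_t \E[\bnu_\tau] \E[\wt' \mid \cdot] \bM_t) = 0$. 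A deterministic bound $\|\bm{Z}_\tau\|_2 \le \frac{1}{1-b}\max_{t,\tau}\|\bN_t\bnu_\tau\wt'\bM_t\|_2$ then feeds into matrix Azuma (Lemma \ref{azuma_norm}) to yield $\|\mathrm{term2}\|_2 \le \epsilon$ with high probability.

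The main obstacle is exactly the martingale verification in step three: under the original hypothesis the mutual independence of $\T, \{\wt\}, \{\bnu_t\}$ makes this automatic, but under the weaker hypothesis $\wt$ may be $\T$-measurable, so one must carefully enumerate the variables inside the conditioning sigma-algebra and confirm none of them carries information about $\bnu_\tau$. The saving grace is that $\bnu_\tau$ remains independent of the \emph{entire} pair $Q=\{\T,\{\wt\}_t\}$, which is exactly what is required. Once this zero-mean property is established, the remainder is a routine Azuma bound and a union bound with $\mathrm{term1}$, with the only quantitative cost being the slight enlargement of the bound from $\epsilon$ to $2\epsilon$.
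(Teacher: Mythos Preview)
Your proposal is correct and follows essentially the same approach as the paper: expand $\lt$ via the AR recursion into the deterministic tail term and the $\bnu_\tau$ sum, bound $\mathrm{term1}$ directly using $\tfrac{1}{\alpha}\le(r_\new\zeta)^2$, and for $\mathrm{term2}$ swap the order of summation and apply matrix Azuma after verifying the zero conditional mean via Lemma~\ref{ind_exp} with $W=\bnu_\tau$. Your enumeration of the conditioning variables and the observation that $\bnu_\tau$ remains independent of $Q=\{\T,\{\wt\}_t\}$ and of past $\bnu$'s is exactly the key point the paper makes; you are if anything slightly more careful (e.g., you retain the $\bM_t$ factor that the paper drops in its displayed $\mathrm{term2}$, and you track the $(1-b)$ powers more precisely).
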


\subsection{Proofs of the Addition Azuma  Bounds: Lemmas \ref{Ak_cor}, \ref{Akperp_cor}, and \ref{calHk_cor}} \label{add_hoeffding}

We remove the subscript $j$ at various places in this and later sections. Thus, for example, $\bm\Phi_{(j),k-1}$ is replaced by $\bm\Phi_{k-1}$ for $k=1,2,\dots K$.

\begin{definition}
Let $X \equiv X_{k-1} \equiv X_{\uhat_j + k-1}$.
\end{definition}

\begin{fact}\label{matbnds}
Let $\bm{D}_{\new,k-1} := \bm{\Phi}_{k-1} P_\new$ and  $\bm{D}_{*,k-1} := \bm{\Phi}_{k-1} P_*$. Recall that $\bm{D}_\new =\bm{D}_{\new,0} = \bm{\Phi}_0 P_\new$.
When $X_{\uhat_j+k-1}\in \Gamma_{j,k-1}^a$ for $a=u_j$ or $a=u_j+1$,
\begin{enumerate}
\item $\|\bm{D}_{*,k-1}\|_2 \leq \zeta_{j,*}^+$ for $k = 1,\dots,K$ (this follows using Fact \ref{d_large}).
\item $\|\bm{D}_{\new,k-1}\|_2 \leq \zeta_{\new,k-1}^+$ for $k = 1,\dots,K+1$ (by definition of $\Gamma_{j,k-1}^{\uhat_j}$).
\item Recall that $\zeta_{\new,0}^+ = 1$.
\item $\lambda_{\min}(\bm{R}_{\new}{\bm{R}_{\new}}') \geq 1 -(\zeta_{*}^+)^2$ (this follows because $\|\Phat_*{}'\P_\new\|_2 = \|\Phat_*{}'(\I - \P_*{\P_*}')\P_\new\|_2  \le \zeta_*$)
\item $\bm{E}_\new{}' \bm{D}_\new = \bm{E}_\new{}' \bm{E}_\new \bm{R}_\new = \bm{R}_\new$ and $\bm{E}_{\new,\perp}{}' \bm{D}_\new = \bm{0}$.
\item $\|[ ({\bm{\Phi}_{t})_{\mathcal{T}_t}}'(\bm{\Phi}_{t})_{\mathcal{T}_t}]^{-1}\|_2 \leq \phi^+$ (using Lemma \ref{cslem_cor})
\item $\et$ satisfies (\ref{etdef0_cor}) with probability one (using Lemma \ref{cslem_cor}).
\end{enumerate}
\end{fact}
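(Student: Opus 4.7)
The plan is to verify the seven items one by one, since each is a compact algebraic consequence of earlier definitions, QR decomposition identities, or results stated (but not yet proved) in Lemma~\ref{cslem_cor} and Fact~\ref{d_large}. I would organize the proof so that items 1--5 (the ``geometric'' facts about $\bm{D}_{*,k-1}$, $\bm{D}_{\new,k-1}$, $\bm{R}_\new$, $\bm{E}_\new$) are handled together, and items 6--7 (the ``sparse recovery'' facts) are simply invocations of Lemma~\ref{cslem_cor}.

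For item 1, I would write $\bm{D}_{*,k-1} = (\I - \Phat_* \Phat_*{}')\P_* - \Phat_{(j),\new,k-1}\Phat_{(j),\new,k-1}{}' \P_*$, use the fact that $\Phat_* \perp \Phat_{(j),\new,k-1}$ (so $\Phat_{(j),\new,k-1}{}'\P_* = \Phat_{(j),\new,k-1}{}'(\I-\Phat_*\Phat_*{}')\P_*$), and then bound each piece by $\zeta_{j,*} \le \zeta_{j,*}^+$; by Fact~\ref{d_large} (item 5) the conditioning gives $\zeta_{j,*}\le\zeta_{j,*}^+$. Item~2 is immediate from the definition of $\mathrm{PPCA}_{j,k-1}^a \subseteq \Gamma_{j,k-1}^a$, which forces $\zeta_{j,\new,k-1}\le \zeta_{j,\new,k-1}^+$ and hence $\|\bm{\Phi}_{k-1}\P_\new\|_2 \le \|(\I-\Phat_*\Phat_*{}'-\Phat_{(j),\new,k-1}\Phat_{(j),\new,k-1}{}')\P_\new\|_2 = \zeta_{j,\new,k-1}$. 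Item~3 is just the chosen initialization in Definition~\ref{def_zetaknew_etc}.

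For item 4, I would use the QR-based identity $\bm{E}_\new \bm{R}_\new = (\I - \Phat_*\Phat_*{}')\P_\new$; since $\bm{E}_\new$ has orthonormal columns, the singular values of $\bm{R}_\new$ equal those of $(\I - \Phat_*\Phat_*{}')\P_\new$. By the same CS-type identity used for item~1, the smallest singular value of $(\I - \Phat_*\Phat_*{}')\P_\new$ equals $\sqrt{1 - \|\Phat_*{}'\P_\new\|_2^2} \ge \sqrt{1 - \zeta_{j,*}^2} \ge \sqrt{1-(\zeta_{j,*}^+)^2}$, which gives the desired eigenvalue bound after squaring. Item~5 follows instantly: $\bm{E}_\new{}'\bm{D}_\new = \bm{E}_\new{}'\bm{E}_\new \bm{R}_\new = \bm{R}_\new$ because $\bm{E}_\new$ is a basis matrix, and $\bm{E}_{\new,\perp}{}'\bm{E}_\new = \bm{0}$ by construction of the orthogonal complement.

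Items 6 and 7 are just citations: Lemma~\ref{cslem_cor} already shows that, conditioned on $\Gamma_{j,k-1}^a$, one has $\phi_t \le \phi^+ = 1.2$ and $\et$ has the closed form (\ref{etdef0_cor}); I would simply point out that the conditioning hypothesis of Fact~\ref{matbnds} matches the hypotheses of the relevant case of Lemma~\ref{cslem_cor} (case 1 if $t \in [t_j,(\uhat_j+1)\alpha]$, case 2 otherwise), so both conclusions carry over verbatim. No step should present a real obstacle; the only subtle point is making sure that in item~1 the ``cross'' contribution from $\Phat_{(j),\new,k-1}\Phat_{(j),\new,k-1}{}'\P_*$ is also controlled by $\zeta_{j,*}^+$ and not by a new quantity, and this is precisely where the orthogonality $\Phat_* \perp \Phat_{(j),\new,k-1}$ (noted just after Definition~\ref{def_Phat_starnew}) is needed.
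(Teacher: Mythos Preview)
Your proposal is correct and follows the same route as the paper: each item is either a direct unpacking of a definition, a QR identity, or a citation of Lemma~\ref{cslem_cor}/Fact~\ref{d_large}, exactly as the parenthetical hints in the statement indicate. One small caution on item~1: after substituting $\Phat_{(j),\new,k-1}{}'\P_* = \Phat_{(j),\new,k-1}{}'(\I-\Phat_*\Phat_*{}')\P_*$, do not ``bound each piece'' via the triangle inequality (that would give $2\zeta_{j,*}^+$); instead factor to obtain $\bm{D}_{*,k-1} = (\I - \Phat_{(j),\new,k-1}\Phat_{(j),\new,k-1}{}')(\I - \Phat_*\Phat_*{}')\P_*$ and use that the leading projector has operator norm at most~$1$.
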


\begin{proof}[Proof of Lemma \ref{Ak_cor}]
In this proof all probabilistic statements are conditioned on $X_{\uhat_j+k-1}$ for $X_{\uhat_j+k-1} \in \Gamma_{j,k-1}^{\uhat_j}$ for $\uhat_j=u_{j}$ or $u_{j}+1$.
We need a lower bound on the minimum eigenvalue of $\bm{A}_u$ for $u = \uhat_j+k$ for $k=1,2,\dots, K$ and $\uhat_j = u_j$ or $u_j+1$.
For $u = \uhat_j+k$, recall that
\[
\bm{A}_u :=  \frac{1}{\alpha} \sum_{t \in \mathcal{J}_{u}} {\bm{E}_{\rmnew}}' \bm{\Phi}_0 \bm{\ell}_t {\bm{\ell}_t}' \bm{\Phi}_0 \bm{E}_{\rmnew}
\]
Let $t_0$ be the first time instant of $\J_{\uhat_j+k}$. We proceed as in Section \ref{general_decomp} with $\bN_t' = \bM_t = \bm{\Phi}_0 \bm{E}_{j,\rmnew}$. Thus,
\begin{align*}
b_{prob,term2} & := \max(b_{prob,term21},b_{prob,term22},b_{prob,term23}) \le \frac{1}{(1-b)^2} (r\zeta \sqrt{r} \gamma + \sqrt{r_\new} \gamma_\new)^2 \\
b_{prob,term3}
& \le \frac{1}{(1-b)^2} \frac{(2 r\zeta \sqrt{r} \gamma + \sqrt{r_\new} \gamma_\new)}{1-b}(r\zeta \sqrt{r} \gamma + \sqrt{r_\new} \gamma_\new)
\le \frac{1}{(1-b)^3} (2 r\zeta \sqrt{r} \gamma + \sqrt{r_\new} \gamma_\new)^2
\end{align*}

Use $b_{prob}$ to denote an upper bound on $\max(b_{prob,term2},b_{prob,term3})$. Then
\[
b_{prob} = \frac{1}{(1-b)^3} (2 r\zeta \sqrt{r} \gamma + \sqrt{r_\new} \gamma_\new)^2
\]
Using \eqref{lower_bnd}, \eqref{term21_lower_bnd} and Lemma \ref{sum_often},
\begin{align*}
\lambda_{\min}(\bm{A}_u)
& \ge \lambda_{\min}(  \frac{1}{\alpha} \sum_{t=t_0}^{t_0+\alpha-1} \sum_{\tau=t_0}^{t}  b^{2t-2\tau}  {\bm{E}_{\rmnew}}' \bm{\Phi}_0 \bm{\Sigma}_\tau \bm{\Phi}_0 \bm{E}_{\rmnew} ) - 4 \epsilon \\
& \ge \frac{1}{1-b^2}(1 - \frac{b^2}{\alpha (1-b^2)}) \min_{\tau \in [t_0, t_0+\alpha-1]} \lambda_{\min}( {\bm{E}_{\rmnew}}' \bm{\Phi}_0 \bm{\Sigma}_\tau \bm{\Phi}_0 \bm{E}_{\rmnew} ) - 4 \epsilon
\end{align*}
w.p. at least $1-  4 \cdot (2n) \exp\left(\frac{-\alpha \epsilon^2}{32 (b_{prob})^2} \right) $.
Using Fact \ref{matbnds}, and Ostrowski's theorem, we get
\[
\lambda_{\min}( {\bm{E}_{\rmnew}}' \bm{\Phi}_0 \bm{\Sigma}_\tau \bm{\Phi}_{0}  {\bm{E}_{\rmnew}} ) \ge \lambda_{\min}(\bm{R}_\new \bm{\Lambda}_{\tau,\new} \bm{R}_\new')  \ge  \lambda_{\min}(\bm{R}_\new \bm{R}_\new')  \lambda_{\min}(\bm{\Lambda}_{\tau,\new}) \ge (1 - (\zeta_*^+)^2) \lambda_\new^-
\]
Thus, using $1/\alpha \le (r_\new \zeta)^2$,
\[
\lambda_{\min}(\bm{A}_u) \ge (1 - \frac{b^2}{\alpha (1-b^2)}) (1 - (\zeta_*^+)^2) \frac{\lambda_\new^-}{1-b^2} - 4 \epsilon
\ge b_{\bm{A}}:= \frac{1}{1-b^2}\left( (1 - (\zeta_*^+)^2)\lambda_\new^- - (r_\new \zeta)^2 \frac{b^2}{1-b^2} (1-(\zeta_*^+)^2)\lambda_\new^- \right)  - 4\epsilon
\]
w.p. at least $1- p_{\bm{A}}$ with
$p_{\bm{A}}:=4 \cdot (2n) \exp\left(\frac{-\alpha \epsilon^2}{32 (b_{prob})^2} \right)$. 
\end{proof}

\begin{proof}[Proof of Lemma \ref{Akperp_cor}]
In this proof all probabilistic statements are conditioned on $X_{\uhat_j+k-1}$ for $X_{\uhat_j+k-1} \in \Gamma_{j,k-1}^{\uhat_j}$ for $\uhat_j=u_{j}$ or $u_{j}+1$.
We need to upper bound the maximum eigenvalue of
\[
\bm{A}_{u,\perp} := \frac{1}{\alpha} \sum_{t \in \mathcal{J}_{u}} {\bm{E}_{\rmnew,\perp}}' \bm{\Phi}_0 \bm{\ell}_t {\bm{\ell}_t}' \bm{\Phi}_0  \bm{E}_{\rmnew,\perp}.
\]
Let $t_0$ be the first time instant of $\J_{\uhat_j+k}$. We proceed as in Section \ref{general_decomp} with $\bN_t' = \bM_t = \bm{\Phi}_0  \bm{E}_{\rmnew,\perp}$.
Thus,
\begin{align*}
b_{prob,term2} & = \max(b_{prob,term21},b_{prob,term22},b_{prob,term23}) = \frac{1}{(1-b)^2} (r\zeta)^2 r\gamma^2 \\
b_{prob,term3} & \le \frac{1}{(1-b)^3} (2 r\zeta \sqrt{r} \gamma)^2.
\end{align*}
Use $b_{prob}$ to denote the upper bound on $\max(b_{prob,term2},b_{prob,term3})$. Then
\[
b_{prob} = \frac{1}{(1-b)^3} (2 r\zeta \sqrt{r} \gamma)^2
\]

Using \eqref{upper_bnd}, \eqref{term21_upper_bnd} and \eqref{term1_upper_bnd}
\[
b_{term1} =
\frac{(r_\new \zeta)^2 b^2}{(1-b^2)}  \max_{t \in [t_0, t_0+\alpha-1]} \lambda_{\max}( {\bm{E}_{\rmnew,\perp}}' \bm{\Phi}_0  ( \l_{t_0-1} \l_{t_0-1}{}' )\bm{\Phi}_0  \bm{E}_{\rmnew,\perp}  )
\le  \frac{(r_\new \zeta)^2 b^2}{(1-b^2)} \frac{(r \gamma^2)}{(1-b)^2} \le \frac{0.05(r_\new \zeta) b^2 \lambda^-}{(1-b^2)(1-b)^2}
\]
(we can get a tighter bound for the above, but do not need it and hence do not pursue it)
and
\begin{align*}
\lambda_{\max}(\bm{A}_{u,\perp}) & \le  \lambda_{\max}(\frac{1}{\alpha} \sum_{t=t_0}^{t_0+\alpha-1} \sum_{\tau=t_0}^{t}  b^{2t-2\tau}  {\bm{E}_{\rmnew,\perp}}' \bm{\Phi}_0 \bm{\Sigma}_\tau \bm{\Phi}_0  \bm{E}_{\rmnew,\perp}) + b_{term1} + 4 \epsilon \\
& \le \frac{1}{1-b^2} \max_{\tau \in [t_0, t_0+\alpha-1]} \lambda_{\max}({\bm{E}_{\rmnew,\perp}}' \bm{\Phi}_0 \bm{\Sigma}_\tau \bm{\Phi}_0  \bm{E}_{\rmnew,\perp})) + b_{term1} + 4 \epsilon
\end{align*}
w.p. at least $1- 4 \cdot (2n) \exp\left(\frac{-\alpha \epsilon^2}{32 (b_{prob})^2} \right)$

Using Fact \ref{matbnds}, $\lambda_{\max}({\bm{E}_{\rmnew,\perp}}' \bm{\Phi}_0 \bm{\Sigma}_\tau \bm{\Phi}_0  \bm{E}_{\rmnew,\perp})) \le (r\zeta)^2 \lambda^+ $. Thus,
\[
\lambda_{\max}(\bm{A}_{u,\perp}) \le \frac{1}{1-b^2} (r\zeta)^2 \lambda^+ +  \frac{0.05(r_\new \zeta) b^2 \lambda^-}{(1-b^2)(1-b)^2} + 4 \epsilon
\le  b_{\bm{A},\perp}
\]
w.p. at least $1- p_{\bm{A},\perp}$ with
$p_{\bm{A},\perp}:= 4 \cdot (2n) \exp\left(\frac{-\alpha \epsilon^2}{32 (b_{prob})^2} \right)$
\end{proof}

\begin{proof}[Proof of Lemma \ref{calHk_cor}]
In this proof all probabilistic statements are conditioned on $X_{\uhat_j+k-1}$ for $X_{\uhat_j+k-1} \in \Gamma_{j,k-1}^{\uhat_j}$ for $\uhat_j=u_{j}$ or $u_{j}+1$.
Using the expression for $\bm{\mathcal{H}}_u$ given in Definition \ref{defHk}, and noting that for a basis matrix $\bm{E}$, $\bm{EE}' + \bm{E}_{\perp}{\bm{E}_{\perp}}' = \I$ we get that
\[
\bm{\mathcal{H}}_u = \frac{1}{\alpha} \sum_{t\in\J_u} \Big( \bm{\Phi}_0 \bm{e}_t {\bm{e}_t}' \bm{\Phi}_0 - (\bm{\Phi}_0\lt {\bm{e}_t}' \bm{\Phi}_0 + \bm{\Phi}_0 \bm{e}_t \lt{}'\bm{\Phi}_0)  + (\bm{F}_t + {\bm{F}_t}') \Big)
\]
where
\[
\bm{F}_t = \bm{E}_{\rmnew,\perp}{\bm{E}_{\rmnew,\perp}}' \bm{\Phi}_0 \lt  \lt' \bm{\Phi}_0 \bm{E}_{\rmnew}{\bm{E}_{\rmnew}}'.
\]
Thus,
\begin{equation}\label{add_calH1_cor}
\| \bm{\mathcal{H}}_u \|_2 \leq  2\bigg\| \frac{1}{\alpha} \sum_t  \bm{\Phi}_0 \lt {\bm{e}_t}'  \bigg\|_2 +  \bigg\| \frac{1}{\alpha} \sum_t \bm{e}_t {\bm{e}_t}' \bigg\|_2 +  2\bigg\| \frac{1}{\alpha} \sum_t \bm{F}_t \bigg\|_2
\end{equation}
Next we obtain high probability bounds on each of the three terms on the right hand side of (\ref{add_calH1_cor}) using the Azuma corollaries.

{\bf The $\lt \et'$ term. }
Consider the first term.
Using Fact \ref{matbnds} and the expression for $\et$ from \eqref{etdef0_cor},
\begin{align*}
\frac{1}{\alpha} \sum_t  \bm{\Phi}_0 \lt {\bm{e}_t}'
& = \frac{1}{\alpha} \sum_t  \bm{\Phi}_0 \lt (\lt+ \wt)' \bm{\Phi}_{k-1} \ITt \invterm \ITt' - \frac{1}{\alpha} \sum_t  \bm{\Phi}_0 \lt \wt' \\
& := \mathrm{term} + \mathrm{termw}, \ \ \text{where} \\
\mathrm{term} & := \frac{1}{\alpha} \sum_t  \bm{\Phi}_0 \lt \lt' \bm{\Phi}_{k-1} \ITt \invterm \ITt',  \\
\mathrm{termw} & := \frac{1}{\alpha} \sum_t  \bm{\Phi}_0 \lt \wt' \bm{\Phi}_{k-1} \ITt \invterm \ITt'  - \frac{1}{\alpha} \sum_t  \bm{\Phi}_0 \lt \wt'
\end{align*}
Here we use $\mathrm{termw}$ to refer to the sum of all terms containing $\wt$. 

By following the approach of Section \ref{general_decomp_w}, under the given conditioning,
\[
\|\mathrm{termw}\|_2 \le 2\epsilon
\]
w.p. at least $1-2\cdot(2n) \exp\left(\frac{-\alpha \epsilon^2}{32 (b_{prob, termw})^2} \right)$ where
\[
 b_{prob, termw} = (\phi^+)\frac{(2 r\gamma \sqrt{r}\gamma + \sqrt{r_\new} \gamma_\new) \epsilon_w}{1-b}
\]
We proceed as in Section \ref{general_decomp} for $\mathrm{term}$. In this case, $\bN_t = \bm{\Phi}_0$ and $\bM_t = \bm{\Phi}_{k-1} \ITt \invterm \ITt'$.
Thus
\begin{align*}
b_{prob,term2} &= \max(b_{prob,term21},b_{prob,term22},b_{prob,term23}) \le \frac{1}{(1-b)^2} \phi^+(\zeta_*^+ \sqrt{r}\gamma + \sqrt{r_\new} \gamma_\new)^2 \\
b_{prob,term3} & \le \frac{1}{(1-b)^3} \phi^+ (2 r\zeta \sqrt{r} \gamma + \sqrt{r_\new} \gamma_\new)^2
\end{align*}
Use $b_{prob}$ to denote the upper bound on $\max(b_{prob,term2},b_{prob,term3})$. Then
\[
b_{prob} = \frac{1}{(1-b)^3} \phi^+ (2 r\zeta \sqrt{r} \gamma + \sqrt{r_\new} \gamma_\new)^2
\]
Using \eqref{term1_upper_bnd}, \eqref{upper_bnd} and \eqref{term21_upper_bnd},
\[
b_{term1} = \frac{(r_\new \zeta)^2 b^2}{(1-b^2)} \max_t \|\bm{\Phi}_0 \l_{t_0-1} \l_{t_0-1}' \bm{\Phi}_{k-1} \ITt \invterm \ITt'\|_2
\le  \frac{(r_\new \zeta)^2 b^2}{(1-b^2)} \frac{(r \gamma^2)}{(1-b)^2} \le \frac{0.05(r_\new \zeta) b^2 \lambda^-}{(1-b^2)(1-b)^2}
\]
(we can get a tighter bound for the above, but do not need it and hence do not pursue it)
and
\[
\|\mathrm{term}\|_2
\le \|\frac{1}{\alpha} \sum_{t=t_0}^{t_0+\alpha-1} \sum_{\tau=t_0}^{t}  b^{2t-2\tau}  \bm{\Phi}_0  \bm{\Sigma}_\tau \bm{\Phi}_{k-1} \ITt \invterm \ITt' \|_2 +
b_{term1} + 4 \epsilon
\]
w.p. at least $1 -4 \cdot (2n) \exp\left(\frac{-\alpha \epsilon^2}{32 (b_{prob})^2} \right)$. 

First consider the $k=1$ case. In this case, $\bm\Phi_{k-1} = \bm\Phi_0$. By Lemma \ref{Dnew0_lem}, under the given conditioning, $\|\P_\new{}'\bm\Phi_0 \I_{\T_t}\|_2 = \|\I_{\T_t}{}' \bm\Phi_0 \P_\new\|_2 \le \kappa_{s,\new}^+ = 0.0215$. Using this and Fact \ref{matbnds},
\[
\|\frac{1}{\alpha} \sum_{t=t_0}^{t_0+\alpha-1} \sum_{\tau=t_0}^{t}  b^{2t-2\tau}  \bm{\Phi}_0  \bm{\Sigma}_\tau \bm{\Phi}_{k-1} \ITt \invterm \ITt' \|_2
\le \frac{1}{1-b^2} \phi^+ ((\zeta_*^+)^2 \lambda^+  +  \kappa_{s,\new}^+ \lambda_\new^+ )
\]
and so for $k=1$,
\[
\|\frac{1}{\alpha} \sum_t  \bm{\Phi}_0 \lt {\bm{e}_t}' \|_2 \le \frac{1}{1-b^2}  \phi^+((\zeta_*^+)^2 \lambda^+  +  \kappa_{s,\new}^+ \lambda_\new^+ ) + \frac{0.05(r_\new \zeta) b^2 \lambda^-}{(1-b^2)(1-b)^2} + 6 \epsilon
\]
w.p. at least $1 - p_{\l \e}$ with $p_{\l \e}: = 4 \cdot (2n) \exp\left(\frac{-\alpha \epsilon^2}{32 (b_{prob})^2} \right) + 2 \cdot (2n) \exp\left(\frac{-\alpha \epsilon^2}{32 (b_{prob, termw})^2} \right)$.

For $k>1$, we cannot use Lemma \ref{Dnew0_lem}. Thus, we follow a different approach - we use Lemma \ref{CSmat} (Cauchy-Schwartz for sums of matrices) followed by Lemma \ref{blockdiag1} (support change lemma).
Let $\bm{X}_t:=\sum_{\tau=t_0}^{t}  b^{2t-2\tau} \bm{\Phi}_0 \bm{\Sigma}_\tau \bm{\Phi}_{k-1}$ and $\bm{Y}_t:=\ITt \invterm \ITt'$. Then by Lemma \ref{CSmat} (Cauchy-Schwartz),
\[
\|\frac{1}{\alpha} \sum_{t=t_0}^{t_0+\alpha-1} \sum_{\tau=t_0}^{t}  b^{2t-2\tau} \bm{\Phi}_0 \bm{\Sigma}_\tau \bm{\Phi}_{k-1} \ITt \invterm \ITt'\|_2
\le \sqrt{\lambda_{\max}( \frac{1}{\alpha} \sum_{t=t_0}^{t_0+\alpha-1} \bm{X}_t \bm{X}_t')  \lambda_{\max}( \frac{1}{\alpha} \sum_{t=t_0}^{t_0+\alpha-1} \bm{Y}_t \bm{Y}_t' )}
\] 
Now,
\begin{align*}
\lambda_{\max}( \frac{1}{\alpha} \sum_{t=t_0}^{t_0+\alpha-1} \bm{X}_t \bm{X}_t')
& \le \max_t \|\bm{X}_t\|^2
 \le \left( \sum_{\tau=t_0}^{t}  b^{2t-2\tau} \max_\tau \|\bm{\Phi}_0 \bm{\Sigma}_\tau \bm{\Phi}_{k-1}\|_2  \right)^2 \le \left( \frac{1}{1-b^2} ((\zeta_*^+)^2 \lambda^+ + \zeta_{\new,k-1}^+ \lambda_\new^+ ) \right)^2
\end{align*}
By Lemma \ref{blockdiag1} (support change lemma)
\[
\lambda_{\max}( \frac{1}{\alpha} \sum_{t=t_0}^{t_0+\alpha-1} \bm{Y}_t \bm{Y}_t' ) \le \rho^2 h^+ (\phi^+)^2
\]
Thus,
\[
\|\frac{1}{\alpha} \sum_{t=t_0}^{t_0+\alpha-1} \sum_{\tau=t_0}^{t}  b^{2t-2\tau} \bm{\Phi}_0 \bm{\Sigma}_\tau \bm{\Phi}_{k-1} \ITt \invterm \ITt'\|_2
\le \sqrt{\rho^2 h^+ (\phi^+)^2} \left( \frac{1}{1-b^2} ( (\zeta_*^+)^2 \lambda^+ + \zeta_{\new,k-1}^+ \lambda_\new^+ ) \right)
\]
and so for $k>1$,
\[
\|\frac{1}{\alpha} \sum_t  \bm{\Phi}_0 \lt {\bm{e}_t}' \|_2
\le \sqrt{\rho^2 h^+ (\phi^+)^2} \frac{1}{1-b^2} \left(  (\zeta_*^+)^2 \lambda^+ + \zeta_{\new,k-1}^+ \lambda_\new^+ \right)
+ \frac{0.05(r_\new \zeta) b^2 \lambda^-}{(1-b^2)(1-b)^2} + 6 \epsilon
\]
w.p. at least $1 - p_{\l \e}$ with $p_{\l \e}: = 4 \cdot (2n) \exp\left(\frac{-\alpha \epsilon^2}{32 (b_{prob})^2} \right) + 2 \cdot (2n) \exp\left(\frac{-\alpha \epsilon^2}{32 (b_{prob, termw})^2} \right)$.


{\bf The $\et \et'$ term. }
Consider the second term. Using Fact \ref{matbnds} and the expression for $\et$ from \eqref{etdef0_cor}, 
\begin{align*}
\frac{1}{{\alpha}} \sum_t  \et \et'
& = \mathrm{term} + \mathrm{termw}, \ \text{where} \\
\mathrm{term}&:=  \frac{1}{{\alpha}} \sum_t   \ITt \invterm \ITt' \bm{\Phi}_{k-1}  (\lt \lt') \bm{\Phi}_{k-1} \ITt \invterm \ITt' \\
\mathrm{termw}&:=  \frac{1}{{\alpha}} \sum_t   \ITt \invterm \ITt' \bm{\Phi}_{k-1}  (-\wt \wt' - \lt \wt') + \wt \wt' + \\
& \frac{1}{{\alpha}} \sum_t (-\wt\wt' - \wt\lt')\bm{\Phi}_{k-1} \ITt \invterm \ITt' + \\
& \frac{1}{{\alpha}} \sum_t   \ITt \invterm \ITt' \bm{\Phi}_{k-1}  (\lt \wt' + \wt \wt' + \wt \lt') \bm{\Phi}_{k-1} \ITt \invterm \ITt'
\end{align*}
Consider the $\wt \wt'$ part of $\mathrm{termw}$. Let $\bN_t = \ITt \invterm \ITt' \bm{\Phi}_{k-1}$. Using Lemma \ref{blockdiag1} (support change lemma), the bound on $\epsilon_w^2$ and Lemma \ref{CSmat} (Cauchy-Schwartz),
\[
\|\frac{1}{{\alpha}} \sum_t \bN_t\wt\wt'\|_2 \leq \sqrt{\|\frac{1}{{\alpha}} \sum_t \bN_t\bN_t'\|_2\|\frac{1}{{\alpha}} \sum_t \wt\wt'\wt\wt'\|_2} \leq \sqrt{\rho^2 h^+ (\phi^+)^2}\epsilon_w^2
\]
Using Lemma \ref{blockdiag1} (support change lemma), we have
\[
\|\frac{1}{{\alpha}} \sum_t \bN_t \wt\wt'\bN_t'\|_2 \leq \rho^2 h^+ (\phi^+)^2\epsilon_w^2
\]
The $\lt \wt'$ in $\mathrm{termw}$ can be bounded by $\epsilon$ using the approach of Section \ref{general_decomp_w}. Thus, using the bound on $\epsilon_w^2$ from the theorem,
\[
\|\mathrm{termw}\|_2 \le (1+2\sqrt{\rho^2 h^+}\phi^+ + 2\rho^2 h^+ (\phi^+)^2) (0.03 \zeta \lambda^-) + 4\epsilon \le 2(\phi^+)^2 (0.03 \zeta \lambda^-) + 4\epsilon
\]
w.p. at least $1-4\cdot(2n) \exp\left(\frac{-\alpha \epsilon^2}{32 (b_{prob, termw})^2} \right)$. Here
\[
b_{prob, termw} = (\phi^+)^2  \frac{(2 r\zeta \sqrt{r}\gamma + \sqrt{r_\new} \gamma_\new) \epsilon_w}{1-b}.
\]

%

For $\mathrm{term}$, we proceed as in Section \ref{general_decomp} with $\bN_t' = \bM_t = \bm{\Phi}_{k-1} \ITt \invterm \ITt'$. Thus,
\begin{align*}
b_{prob,term2} &= \max(b_{prob,term21},b_{prob,term22},b_{prob,term23}) \le \frac{1}{(1-b)^2} (\phi^+)^2 (\zeta_*^+ \sqrt{r}\gamma + \sqrt{r_\new} \gamma_\new)^2 \\
b_{prob,term3} & \le \frac{1}{(1-b)^3} (\phi^+)^2 (2 r\zeta \sqrt{r} \gamma + \sqrt{r_\new} \gamma_\new)^2
\end{align*}
Use $b_{prob}$ to denote the upper bound on $\max(b_{prob,term2},b_{prob,term3})$. Then
\[
b_{prob} = \frac{1}{(1-b)^3} (\phi^+)^2 (2r \zeta \sqrt{r} \gamma + \sqrt{r_\new} \gamma_\new)^2
\]
Using \eqref{term1_upper_bnd}, \eqref{upper_bnd}, \eqref{term21_upper_bnd}, we get
\[
b_{term1}
 \le  \frac{(r_\new \zeta)^2 b^2}{(1-b^2)} \frac{r \gamma^2}{(1-b)^2} \le \frac{0.05(r_\new \zeta) b^2 \lambda^-}{(1-b^2)(1-b)^2}
%
\]
(we can get a tighter bound for the above, but do not need it and hence do not pursue it)
and
\[
\|\mathrm{term}\|_2 \le \| \frac{1}{\alpha} \sum_{t=t_0}^{t_0+\alpha-1} \ITt \left( \sum_{\tau=t_0}^{t}  b^{2t-2\tau}  \invterm \ITt' \bm{\Phi}_{k-1}  \bm{\Sigma}_\tau \bm{\Phi}_{k-1} \ITt \invterm \right) \ITt'\|_2
+ b_{term1} + 4\epsilon
\]
w.p. at least $1 - 4 \cdot (2n) \exp\left(\frac{-\alpha \epsilon^2}{32 (b_{prob})^2} \right)$.
By using Lemma \ref{blockdiag1} (support change lemma) and Lemma \ref{Dnew0_lem} for $k=1$ and by using only Lemma \ref{blockdiag1} (support change lemma) for $k>1$, we get
\[
\|\frac{1}{{\alpha}} \sum_t  \et \et'\|_2 \le b_{\bm{\e \e}}
\]
w.p. at least $1 - p_{\e \e}$ with
$ p_{\e \e}:= 4 \cdot (2n) \exp\left(\frac{-\alpha \epsilon^2}{32 (b_{prob})^2} \right) + 4\cdot(2n) \exp\left(\frac{-\alpha \epsilon^2}{32 (b_{prob, termw})^2} \right)$.


{\bf The $\bm{F}_t$ term. }
Consider the smallest term,  $\big\| \frac{1}{\alpha} \sum_t \bm{F}_t \big\|_2 = \|\bm{E}_{\rmnew,\perp} {\bm{E}_{\rmnew,\perp}}' \bm{\Phi}_{0} \lt \lt' \bm{\Phi}_{0} \bm{E}_{\rmnew}{\bm{E}_{\rmnew}}'\|_2$.
We again proceed as in Section \ref{general_decomp}. In this case $\bN_t = \bm{E}_{\rmnew,\perp} {\bm{E}_{\rmnew,\perp}}' \bm{\Phi}_{0}$ and $\bM_t = \bm{\Phi}_{0} \bm{E}_{\rmnew}{\bm{E}_{\rmnew}}'$.
Thus,
\begin{align*}
b_{prob,term2} & \le \frac{1}{(1-b)^2} (\zeta_*^+ \sqrt{r}\gamma) (2\zeta_*^+ \sqrt{r}\gamma + \sqrt{r_\new} \gamma_\new) \\
b_{prob,term3} & \le \frac{1}{(1-b)^3} (\zeta_*^+ \sqrt{r}\gamma) (2\zeta_*^+ \sqrt{r}\gamma + \sqrt{r_\new} \gamma_\new)
\end{align*}
and so
\[
b_{prob} = \frac{1}{(1-b)^3} (\zeta_*^+ \sqrt{r}\gamma) (2\zeta_*^+ \sqrt{r}\gamma + \sqrt{r_\new} \gamma_\new)
\]
\[
\| \frac{1}{\alpha} \sum_t \bm{F}_t\|_2
\le \frac{1}{1-b^2} (\zeta_*^+)^2 \lambda^+ +  \frac{0.05(r_\new \zeta) b^2 \lambda^-}{(1-b^2)(1-b)^2} + 4 \epsilon
\]
w.p. at least $1- p_{\bm{F}}$ with
$p_{\bm{F}}:= 4\cdot (2n) \exp\left(\frac{-\alpha \epsilon^2}{32 (b_{prob})^2} \right)$.
%
Combining the bounds on the three terms on the RHS of \eqref{add_calH1_cor} we get the final result of this lemma.
\end{proof}

\section{Proof of Deletion Azuma Lemmas - Lemma \ref{cluster_lem_azuma} and Lemmas \ref{Atil_cor}, \ref{Atilperp_cor}, \ref{Htil_cor}} \label{3_pfs_del}

\subsection{Proof of Lemma \ref{cluster_lem_azuma}}

\begin{proof}[Proof of Lemma \ref{cluster_lem_azuma}]
In this proof, all of the probabilistic statements are conditioned on $X_{\uhat_j+K}$ for $X_{\uhat_j+K} \in \Gamma_{j,K}^{\uhat_{j}}$ for ${\uhat_{j}}=u_{j}$ or $u_{j}+1$.

Let $t_0:= \that_{cl}$.
Using Fact \ref{d_large}, under the given conditioning, for all $t \in [t_0, t_0 + \alpha-1]$,
\begin{align} \label{eq_Sigma_j}
\E[\nut \nut'] = \bm{\Sigma}_t = \bm{\Sigma}_{(j)} :=  \P_{(j)} \Lamj \P_{(j)}{}' 
\end{align}
We need to bound
$f = \|\frac{1}{\alpha}\sum_{t=t_0}^{t_0 + \alpha-1} \lt {\lt}' - \frac{1}{1-b^2} \bm{\Sigma}_{(j)}\|_2$.
Let
\[
\epsilon: = \frac{1}{1-b^2} 0.001r_\new \zeta \lambda^-
\]

To do this we can proceed as in Section \ref{general_decomp} with $\bN_t = \bM_t' = \I$ but with one change. We include the constant term $- \frac{1}{1-b^2} \bm\Sigma_{(j)}$ in $\mathrm{term21}$.
Thus,
\begin{align*}
b_{prob,term2} & \le \frac{2}{(1-b)^2} (\sqrt{r}\gamma)^2  \\
b_{prob,term3} & \le \frac{1}{(1-b)^3} (\sqrt{r} \gamma)^2
\end{align*}

Let
\[
f_{term21}:=\|\frac{1}{\alpha} \sum_{t=t_0}^{t_0+\alpha-1} \sum_{\tau=t_0}^{t} b^{2t-2\tau} \bm{\Sigma}_{\tau}  -  \frac{1}{1-b^2} \bm{\Sigma}_{(j)}\|_2.
\]
Using \eqref{term1_upper_bnd}, \eqref{upper_bnd}, \eqref{term21_upper_bnd}, we get
\[
b_{term1} = \frac{(r_\new \zeta)^2 b^2}{(1-b^2)} \max_t \|\l_{t_0-1} \l_{t_0-1}{}'\|_2
\le  \frac{(r_\new \zeta)^2 b^2}{(1-b^2)} \frac{(r \gamma^2)}{(1-b)^2} \le \frac{0.05(r_\new \zeta) b^2 \lambda^-}{(1-b^2)(1-b)^2}
\]
and
\begin{align*}
f& \le f_{term21} +  b_{term1} + 4 \epsilon
\end{align*}
w.p. at least $1 - 3 \cdot (2n) \exp\left(\frac{-\alpha \epsilon^2}{32 (b_{prob,term2})^2} \right)  - (2n) \exp\left(\frac{-\alpha \epsilon^2}{32 (b_{prob,term3})^2} \right)$.

Since $\bm\Sigma_\tau = \bm\Sigma_{(j)}: = \P_{(j)} \Lamj \P_{(j)}{}'$ for this interval, using Lemma \ref{sum_often} and using the bound on $1/\alpha$ from Fact \ref{d_large},
\begin{align*}
f_{term21} &= \|\frac{1}{\alpha} \sum_{t=t_0}^{t_0+\alpha-1} \sum_{\tau=t_0}^{t} b^{2t-2\tau}  \bm{\Sigma}_{(j)}    -  \frac{1}{1-b^2} \bm{\Sigma}_{(j)}\|_2 \\
& = \|\frac{1}{1-b^2}(1 - \frac{1}{\alpha} \frac{b^2(1 - b^{2 \alpha})}{1-b^2}) \bm{\Sigma}_{(j)}  - \frac{1}{1-b^2} \bm{\Sigma}_{(j)}\|_2 \\
& \le  \frac{1}{\alpha} \frac{b^2}{(1-b^2)^2} \|\bm{\Sigma}_{(j)}\|_2  \le  (r_\new \zeta)^2\frac{b^2}{(1-b^2)^2}  \lambda^+  \le (r_\new \zeta) \frac{b^2}{(1-b^2)^2}  0.05\lambda^-
\end{align*}
%
Thus,
\[
f \le (r_\new \zeta)\frac{b^2}{(1-b^2)^2} 0.05 \lambda^-  + \frac{0.05(r_\new \zeta) b^2 \lambda^-}{(1-b^2)(1-b)^2} + 4 \epsilon \le \bbb
\]
w.p. at least $1 - p_{\mathrm{cl}}$ where
$p_{\mathrm{cl}}:= 4\cdot (2n) \exp\left(-\frac{\alpha \epsilon^2 (1-b)^6}{32 \cdot 4r^2 \gamma^4} \right)$.
\end{proof}


\subsection{Definitions and preliminaries for proofs of  Lemmas \ref{Atil_cor}, \ref{Atilperp_cor}, \ref{Htil_cor}}

\newcommand{\cur}{\mathrm{cur}}
\begin{definition}
Define
\begin{enumerate}
\item $\bm{G}_{j,1,\det} := [.]$ and for $k=2,3, \dots \vartheta$, $\bm{G}_{j,k,\det} := [\bm{G}_{j,1} \ \bm{G}_{j,2} \ \dots \ \bm{G}_{j,k-1}]$.
\item Define $\bm{G}_{j,k,\undet} := [\bm{G}_{j,k+1} \ \bm{G}_{j,k+2} \ \dots \ \bm{G}_{j,\vartheta_j}]$, $\bm{G}_{j,k,\mathrm{cur}} := \bm{G}_{j,k}$;

\item Define $\hat{\bm{G}}_{j,1,\det} = [.]$ and $\hat{\bm{G}}_{j,k,\det}:= [\hat{\bm{G}}_{j,1} \ \hat{\bm{G}}_{j,2} \ \dots \ \hat{\bm{G}}_{j,k-1}]$ 

\item $\bm\Psi_{j,k}:= (\I - \hat{\bmG}_{j,k,\det}\hat{\bmG}_{j,k,\det}{}')$; thus $\bm\Psi_{j,1} = \I$

\item ${\bm{D}}_{j,k,\mathrm{cur}} :=  \bm\Psi_{j,k} \bm{G}_{j,k,\cur} $,
${\bm{D}}_{j,k,\det} := \bm\Psi_{j,k} \bm{G}_{j,k,\det} $,
${\bm{D}}_{j,k,\undet} :=  \bm\Psi_{j,k} \bm{G}_{j,k,\undet} $;

\end{enumerate}
\end{definition}

\begin{definition}\label{defAHtil}
\
\begin{enumerate}
\item Let ${\bm{D}}_{j,k,\mathrm{cur}} := \bm\Psi_{j,k} \bm{G}_{j,k,\cur} \overset{\mathrm{QR}}= {\bm{E}}_{j,k,\mathrm{cur}} {\bm{R}}_{j,k,\mathrm{cur}} $ denote its reduced QR decomposition.  So ${\bm{E}}_{j,k,\mathrm{cur}}$ is a basis matrix, and ${\bm{R}}_{j,k,\mathrm{cur}}$ is upper triangular.
 Let ${\bm{E}}_{j,k,\mathrm{cur},\perp}$ be a basis matrix for the orthogonal complement of $\range({\bm{E}}_{j,k,\mathrm{cur}})$.
\item Using ${\bm{E}}_{j,k,\mathrm{cur}}$ and ${\bm{E}}_{j,k,\mathrm{cur},\perp}$ , define
\begin{align*}
\tilde{\bm{A}}_{j,k} &:= \frac{1}{{\alpha}} \sum_{t\in\Ijkt} {\bm{E}}_{j,k,\mathrm{cur}}{}' \bm\Psi_{j,k} \lt {\lt}'\bm\Psi_{j,k} {\bm{E}}_{j,k,\mathrm{cur}} \\
\tilde{\bm{A}}_{j,k,\perp} &:= \frac{1}{{\alpha}} \sum_{t\in\Ijkt} {\bm{E}}_{j,k,\mathrm{cur},\perp}{}' \bm\Psi_{j,k} \lt {\lt}' \bm\Psi_{j,k} {\bm{E}}_{j,k,\mathrm{cur},\perp}
\end{align*}
and let
\[
\tilde{\bm{\mathcal{A}}}_{j,k} := \left[ \begin{array}{cc} {\bm{E}}_{j,k,\mathrm{cur}} & {\bm{E}}_{j,k,\mathrm{cur},\perp} \\ \end{array} \right]
\left[\begin{array}{cc} \tilde{\bm{A}}_{j,k} \ & \bm{0} \ \\ \bm{0} \ & \tilde{\bm{A}}_{j,k,\perp}  \\ \end{array} \right]
\left[ \begin{array}{c} {\bm{E}}_{j,k,\mathrm{cur}}{}' \\ {\bm{E}}_{j,k,\mathrm{cur},\perp}{}' \\ \end{array} \right]
\]

\item Define
\begin{align*}
\tilde{\bm{\mathcal{H}}}_{j,k} = \frac{1}{{\alpha}} \sum_{t \in \Ijkt } \bm\Psi_{j,k} \hat{\bm{\ell}}_t \hat{\bm{\ell}}_t{}' \bm\Psi_{j,k} -  \tilde{\bm{\mathcal{A}}}_{j,k}
\end{align*}
From Algorithm \ref{reprocsdet},
\begin{align*}
\frac{1}{{\alpha}} \sum_{t \in \Ijkt } \bm\Psi_{j,k} \hat{\bm{\ell}}_t \hat{\bm{\ell}}_t{}' \bm\Psi_{j,k}
\overset{\mathrm{EVD}}{=} \left[ \begin{array}{cc} \hat{\bmG}_{j,k} & \hat{\bmG}_{j,k,\perp} \\ \end{array} \right]
\left[\begin{array}{cc} \hat{\bm{\Lambda}}_{t} \ & \bm{0} \ \\ \bm{0} \ & \ \hat{\bm{\Lambda}}_{t,\perp} \\ \end{array} \right]
\left[ \begin{array}{c} \hat{\bmG}_{j,k}{}' \\ \hat{\bmG}_{j,k,\perp}{}' \\ \end{array} \right].
\end{align*}

\end{enumerate}

\end{definition}

\newcommand{\Lamtdet}{\bm\Lambda_{t,\det}}
\newcommand{\Lamtcur}{\bm\Lambda_{t,\cur}}
\newcommand{\Lamtundet}{\bm\Lambda_{t,\undet}}


\begin{lem}\cite{rrpcp_perf} \label{matbnds_del}
When $X_{(\uhat_j+K+1)+k-1}\in \tilde{\Gamma}_{j,k-1}^a$ with  $a=u_{j}$ or $a=u_{j}+1$,
\begin{enumerate}
\item $\|{\bm{D}}_{j,k,\det}\|_2 \leq r\zeta$
\item $\ds\sqrt{1-(r)^2\zeta^2} \leq \sigma_i({\bm{R}}_{j,k,\cur})=\sigma_i({\bm{D}}_{k,\cur}) \leq 1$
\item $\ds\|{\bm{E}}_{j,k,\mathrm{cur}}{}'{\bm{D}}_{j,k,\undet}\|_2 \leq \frac{(r \zeta)^2}{\sqrt{1-(r)^2\zeta^2}}$
\item 
\[\bm\Psi_{j,k} \bm\Sigma_t \bm\Psi_{j,k} = \left[{\bm{D}}_{j,k,\det} \ {\bm{D}}_{j,k,\cur} \ {\bm{D}}_{j,k,\undet} \right] \left[\begin{array}{ccc} \Lamtdet \ & \bm{0} \ & \bm{0} \  \\ \bm{0} \ & \ \Lamtcur & \ \\ \bm{0} \ & \bm{0} \ & \Lamtundet \ \end{array} \right]  \left[ \begin{array}{c}{\bm{D}}_{j,k,\det} \\ {\bm{D}}_{j,k,\cur} \\ {\bm{D}}_{j,k,\undet} \end{array} \right]'
  \]
    with $\lambda_{\max}(\Lamtdet) \le \lambda^+$, $\lambda_{j,k}^- \le \lambda_{\min}(\Lamtcur) \le \lambda_{\max}(\Lamtcur) \le \lambda_{j,k}^+$, $\lambda_{\max}(\Lamtundet) \le \lambda_{j,k+1}^+$. 

\item Using the first four claims, it is easy to see that
\ben
\item $\|{\bm{E}}_{j,k,\cur,\perp}{}'\bm\Psi_{j,k} \bm\Sigma_t \bm\Psi_{j,k}{\bm{E}}_{j,k,\cur,\perp}\|_2 \le (r\zeta)^2 \lambda^+ + \lambda_{k+1}^+$ (when $k=1$, the first term disappears)
\item $\|{\bm{E}}_{j,k,\cur,\perp}{}'\bm\Psi_{j,k} \bm\Sigma_t \bm\Psi_{j,k}{\bm{E}}_{j,k,\cur}\|_2 \le (r\zeta)^2 \lambda^+ +  \frac{(r \zeta)^2}{\sqrt{1-(r)^2\zeta^2}}\lambda_{k+1}^+$ (when $k=1$, the bound equals zero)
\item $\|\bm\Psi_{j,k} \bm\Sigma_t \bm\Phi_{j,K}\|_2 \le ((r\zeta)\lambda^+ + \lambda_k^+) (r+r_\new) \zeta$
\item $\|\bm\Phi_{j,K} \bm\Sigma_t \bm\Phi_{j,K}\|_2 \le ( (r+r_\new) \zeta )^2 \lambda^+$
\een
\end{enumerate}
\end{lem}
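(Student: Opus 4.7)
\textbf{Proof proposal for Lemma \ref{matbnds_del}.}

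The strategy is to unpack the conditioning event $\tilde{\Gamma}_{j,k-1}^a$ and combine it with the block structure provided by Model \ref{clust_model} and the definitions of $\bm{D}_{j,k,\det}$, $\bm{D}_{j,k,\cur}$, $\bm{D}_{j,k,\undet}$. Conditioning on $\tilde{\Gamma}_{j,k-1}^a$ gives us the bound $\tilde\zeta_{j,k'} = \operatorname{dif}(\hat{\bm{G}}_{j,k',\det}\cup\hat{\bm{G}}_{j,k'},\bm{G}_{j,k'}) \le \tilde\zeta_{k'}^+$ for all $k'\le k-1$, and Lemma \ref{zetadecay} upgrades this to $\tilde\zeta_{j,k'}\le r_{j,k'}\zeta$. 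Since $\bm{G}_{j,1},\dots,\bm{G}_{j,\vartheta}$ are mutually orthogonal blocks of the basis matrix $\bm{P}_{(j)}$ (they are column-partitioned), the ``detected-cluster'' error $\|\bm\Psi_{j,k}\bm{G}_{j,k',\det}\|_2$ telescopes via the triangle inequality (applied to $\I - \hat{\bm{G}}_{j,k,\det}\hat{\bm{G}}_{j,k,\det}{}'$ block-by-block on $\bm{G}_{j,1},\dots,\bm{G}_{j,k-1}$), so that
\[
\|\bm{D}_{j,k,\det}\|_2 \;\le\; \sum_{k'=1}^{k-1}\tilde\zeta_{j,k'} \;\le\; \sum_{k'=1}^{k-1}r_{j,k'}\zeta \;\le\; r\zeta,
\]
which is Claim 1.

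Claim 2 follows from Claim 1 together with the orthogonality $\bm{G}_{j,k,\cur}{}' \bm{G}_{j,k',\det}=\bm{0}$ for $k'<k$. Writing
\[
{\bm{D}}_{j,k,\cur}{}'{\bm{D}}_{j,k,\cur} \;=\; \bm{G}_{j,k,\cur}{}'\bm{G}_{j,k,\cur} \;-\; \bm{G}_{j,k,\cur}{}'\hat{\bm{G}}_{j,k,\det}\hat{\bm{G}}_{j,k,\det}{}'\bm{G}_{j,k,\cur} \;=\; \I - \bm{M}_k{}'\bm{M}_k,
\]
where $\bm{M}_k := \hat{\bm{G}}_{j,k,\det}{}'\bm{G}_{j,k,\cur}$, and observing that $\|\bm{M}_k\|_2 = \|(\I - \bm{G}_{j,k,\det}\bm{G}_{j,k,\det}{}')\hat{\bm{G}}_{j,k,\det}{}'\bm{G}_{j,k,\cur}\|_2 \le \|\bm{D}_{j,k,\det}\|_2 \le r\zeta$, yields $1-(r\zeta)^2 \le \sigma_i^2(\bm{D}_{j,k,\cur}) \le 1$. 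The QR identity $\sigma_i(\bm{R}_{j,k,\cur})=\sigma_i(\bm{D}_{j,k,\cur})$ is standard. Claim 3 uses the same orthogonality: expand $\bm{E}_{j,k,\cur}{}'\bm{D}_{j,k,\undet} = \bm{R}_{j,k,\cur}^{-\prime}\bm{D}_{j,k,\cur}{}'\bm{D}_{j,k,\undet} = \bm{R}_{j,k,\cur}^{-\prime}\bm{M}_k{}'\bm{M}_k'$ with $\bm{M}_k'$ defined analogously for $\bm{G}_{j,k,\undet}$; using $\|\bm{R}_{j,k,\cur}^{-1}\|_2 \le 1/\sqrt{1-(r\zeta)^2}$ from Claim 2 and bounding both $\|\bm{M}_k\|_2$ and $\|\bm{M}_k'\|_2$ by $r\zeta$ delivers the stated bound $(r\zeta)^2/\sqrt{1-(r\zeta)^2}$.

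Claim 4 is purely a rearrangement: by Fact \ref{d_large} and Model \ref{clust_model}, on the cluster-PCA intervals $\Ijkt$ we have $t\in[t_j+K\alpha+1,t_j+K\alpha+d_2]$, so $\bm\Sigma_t = \bm{P}_{(j)}\bm\Lambda_{(j)}\bm{P}_{(j)}{}'$ with $\bm\Lambda_{(j)}$ diagonal and clustered. Partitioning $\bm{P}_{(j)}=[\bm{G}_{j,k,\det}\ \bm{G}_{j,k,\cur}\ \bm{G}_{j,k,\undet}]$ makes $\bm\Lambda_{(j)}$ block-diagonal with blocks $\Lamtdet,\Lamtcur,\Lamtundet$; the eigenvalue bounds $\lambda_{\max}(\Lamtdet)\le\lambda^+$, $\lambda_{j,k}^-\le\lambda_{\min}(\Lamtcur)\le\lambda_{\max}(\Lamtcur)\le\lambda_{j,k}^+$, and $\lambda_{\max}(\Lamtundet)\le\lambda_{j,k+1}^+$ then follow directly from Model \ref{clust_model}. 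Left- and right-multiplying by $\bm\Psi_{j,k}$ produces the displayed factorization. Finally, Claim 5 is a mechanical corollary: using Claim 4 and the bounds $\|\bm{D}_{j,k,\det}\|_2\le r\zeta$, $\|\bm{E}_{j,k,\cur,\perp}{}'\bm{D}_{j,k,\cur}\|_2=0$, $\|\bm{E}_{j,k,\cur,\perp}{}'\bm{D}_{j,k,\undet}\|_2\le 1$, expand each product of the form $\bm{E}^{\top}\bm\Psi_{j,k}\bm\Sigma_t\bm\Psi_{j,k}\bm{F}$ into at most three non-zero terms (one per diagonal block of $\bm\Lambda_{(j)}$) and apply the spectral norm sub-multiplicativity, and for the last two sub-claims use $\mathrm{dif}(\Phat_{(j),\add},\bm{P}_{(j)})\le\zeta_{j,\add}^+=(r+r_\new)\zeta$ (from Fact \ref{d_large}) to bound $\|\bm\Phi_{j,K}\bm{P}_{(j)}\|_2$.

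The only non-routine part is bookkeeping: keeping track of which columns of $\bm{P}_{(j)}$ are ``already estimated'' by $\hat{\bm{G}}_{j,k,\det}$ vs.\ $\hat{\bm{G}}_{j,k,\cur}$ vs.\ not yet touched, and showing that $\hat{\bm{G}}_{j,k,\det}$ does not interfere with the current cluster beyond the controlled error $r\zeta$. I expect no probabilistic argument is needed here — the entire lemma is deterministic once we have conditioned on $\tilde{\Gamma}_{j,k-1}^a$.
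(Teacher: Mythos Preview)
Your proposal is correct and follows essentially the same approach as the paper's proof. The paper's version is terser because it packages your direct computations for Claims~2 and~3 (the $\bm{M}_k$, $\bm{M}_k'$ manipulation) into invocations of Lemma~\ref{hatswitch} (items~3 and~4), and for Claim~1 it cites the conditioning event plus Lemma~\ref{zetadecay} just as you do; Claims~4 and~5 are handled identically in both.
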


\begin{proof}
Consider the first claim. When $k=1$, ${\bm{G}}_{k,\det} = [.]$ and hence ${\bm{D}}_{j,k,\det} = [.]$. Thus $\|{\bm{D}}_{j,k,\det}\|_2 = 0 \leq r\zeta$.
For $k>1$, it follows by applying Lemmas \ref{cPCA} and \ref{zetadecay} applied for $\tilde{k} = 1,2, \dots, k-1$.
The next two claims follow using Lemma \ref{hatswitch}. Notice that $\bm{D}_{k,\cur} = \bm\Psi_{j,k} \bm{G}_{j,k,\cur}$ where $\bm\Psi_{j,k} = (I - \hat{\bm{G}}_{j,k,\det} \hat{\bm{G}}_{j,k,\det}{}')$. Use item 4 of Lemma \ref{hatswitch} and the fact that $\bm{G}_{j,k,\det}{}' \bm{G}_{j,k,\cur} = \bm{0}$ to get the second claim. For the third claim, notice that ${\bm{E}}_{j,k,\mathrm{cur}}{}'{\bm{D}}_{j,k,\undet} = {\bm{R}}_{j,k,\mathrm{cur}}^{-1} \bm{G}_{j,k,\cur}{}' \bm{\Psi}_{j,k} \bm{\Psi}_{j,k} \bm{G}_{j,k,\undet}$. Use the previous claim to bound $\|{\bm{R}}_{j,k,\mathrm{cur}}^{-1}\|_2$. Use item 3 of Lemma \ref{hatswitch} and the facts that $\bm{G}_{j,k,\det}{}' \bm{G}_{j,k,\cur} = 0$ and $\bm{G}_{j,k,\det}{}' \bm{G}_{k,\undet} = 0$ to bound $\|\bm{G}_{j,k,\cur}{}' \bm{\Psi}_{j,k}\|_2$ and $\|\bm{\Psi}_{j,k} \bm{G}_{j,k,\undet}\|_2$ respectively.
When $k=1$, both the above claims follow even more easily: ${\bm{D}}_{k,\cur} = {\bm{G}}_{k,\cur}$ and so $\sigma_i({\bm{D}}_{k,\cur}) = 1$ and thus satisfies the given bounds; also, ${\bm{E}}_{k,\mathrm{cur}} = {\bm{G}}_{k,\cur}$ and ${\bm{D}}_{k,\undet} = {\bm{G}}_{k,\undet}$ and thus, $\|{\bm{E}}_{j,k,\mathrm{cur}}{}'{\bm{D}}_{j,k,\undet}\|_2 =0 \le \frac{(r \zeta)^2}{\sqrt{1-(r)^2\zeta^2}}$.

The fourth claim just uses the definitions and Model \ref{clust_model}.
\end{proof}

\subsection{Proofs of Lemmas \ref{Atil_cor}, \ref{Atilperp_cor}, \ref{Htil_cor}}
We remove the subscript $j$ at various places in this section. 

\begin{proof}[Proof of Lemma \ref{Atil_cor}]
In this proof all probabilistic statements are conditioned on $X_{(\uhat_j+K+1)+k-1}$ for all $X_{(\uhat_j+K+1)+k-1}\in \tilde{\Gamma}_{j,k-1}^a$ with  $a=u_{j}$ or $a=u_{j}+1$.

Recall that $\tilde{\bm{A}}_{k} := \frac{1}{{\alpha}} \sum_{t\in\Ijkt} {\bm{E}_{k,\cur}}' {\bm\Psi}_{k} \lt {\lt}' {\bm\Psi}_{k}\bm{E}_{k,\cur}$.

We proceed as in Section \ref{general_decomp}. In this case $\bN_t' = \bM_t =  {\bm\Psi}_{k}\bm{E}_{k,\cur}$ and $t_0$ is the first time instant of $\Ijkt$. Thus,
\begin{align*}
\lambda_{\min}(\tilde{\bm{A}}_{k})
& \ge \lambda_{\min}(  \frac{1}{\alpha} \sum_{t=t_0}^{t_0+\alpha-1} \sum_{\tau=t_0}^{t}  b^{2t-2\tau} {\bm{E}_{k,\cur}}' {\bm\Psi}_{k} \bm{\Sigma}_\tau {\bm\Psi}_{k}\bm{E}_{k,\cur}  ) - 4 \epsilon \\
& \ge \frac{1}{1-b^2}(1 - \frac{b^2}{\alpha (1-b^2)}) \min_{\tau \in [t_0, t_0+\alpha-1]} \lambda_{\min}({\bm{E}_{k,\cur}}' {\bm\Psi}_{k} \bm{\Sigma}_\tau {\bm\Psi}_{k}\bm{E}_{k,\cur}) - 4 \epsilon
\end{align*}
with probability at least $ 1-4\cdot(2n)\exp\left(-\frac{\alpha\epsilon^2}{32b_{prob}^2}\right)$, where $ b_{prob}=\frac{r\gamma^2}{(1-b)^2}$.

Finally, using Lemma \ref{matbnds_del} and Ostrowski's theorem,
\[
\lambda_{\min}(\tilde{\bm{A}}_{k}) \ge \frac{1}{1-b^2}(1 - \frac{b^2}{\alpha (1-b^2)}) (1 - (r \zeta)^2) \lambda_k^-
\]
\end{proof}

\begin{proof}[Proof of Lemma \ref{Atilperp_cor}]
In this proof all probabilistic statements are conditioned on  $X_{(\uhat_j+K+1)+k-1}$ for all $X_{(\uhat_j+K+1)+k-1}\in \tilde{\Gamma}_{j,k-1}^a$ with  $a=u_{j}$ or $a=u_{j}+1$.

Recall that $\tilde{\bm{A}}_{k,\perp} := \frac{1}{{\alpha}} \sum_{t} {\bm{E}_{k,\cur,\perp}}' {\bm\Psi}_{k} \lt {\lt}' {\bm\Psi}_{k}\bm{E}_{k,\cur,\perp}$.

We proceed as in Section \ref{general_decomp}. In this case $\bN_t' = \bM_t = {\bm\Psi}_{k}\bm{E}_{k,\cur,\perp}$. Thus,
\[
b_{term1} =
\frac{(r_\new \zeta)^2 b^2}{(1-b^2)}  \max_{t \in [t_0, t_0+\alpha-1]} \lambda_{\max}({\bm{E}_{k,\cur,\perp}}' {\bm\Psi}_{k} ( \l_{t_0-1} \l_{t_0-1}{}' ) {\bm\Psi}_{k}\bm{E}_{k,\cur,\perp})
\le  \frac{(r_\new \zeta)^2 b^2}{(1-b^2)} \frac{(r \gamma^2)}{(1-b)^2} \le \frac{0.05(r_\new \zeta) b^2 \lambda^-}{(1-b^2)(1-b)^2}
\]
(we can get a tighter bound for the above, but do not need it and hence do not pursue it)
and
\begin{align*}
\lambda_{\max}(\tilde{\bm{A}}_{k,\perp})
& \le \lambda_{\max}( \frac{1}{\alpha} \sum_{t=t_0}^{t_0+\alpha-1} \sum_{\tau=t_0}^{t}  b^{2t-2\tau} {\bm{E}_{k,\cur,\perp}}' {\bm\Psi}_{k} \bm{\Sigma}_\tau {\bm\Psi}_{k}\bm{E}_{k,\cur,\perp} ) +
b_{term1} + 4 \epsilon \\
& \le \frac{1}{1-b^2} \max_{\tau \in [t_0, t_0+\alpha-1]} \lambda_{\max}({\bm{E}_{k,\cur,\perp}}' {\bm\Psi}_{k} \bm{\Sigma}_\tau {\bm\Psi}_{k}\bm{E}_{k,\cur,\perp} ) + b_{term1} + 4 \epsilon
\end{align*}
with probability at least $ 1-4\cdot(2n)\exp\left(-\frac{\alpha\epsilon^2}{32b_{prob}^2}\right)$, where $ b_{prob}=\frac{r\gamma^2}{(1-b)^2}$.

Thus, using Lemma \ref{matbnds_del},
\[
\lambda_{\max}(\tilde{\bm{A}}_{k,\perp}) \le \frac{1}{1-b^2} ( (r\zeta)^2 \lambda^+ + \lambda_{k+1}^+ ) + \frac{0.05(r_\new \zeta) b^2 \lambda^-}{(1-b^2)(1-b)^2}  + 4 \epsilon
\]
\end{proof}

\begin{proof}[Proof of Lemma \ref{Htil_cor}]
In this proof all probabilistic statements are conditioned on  $X_{(\uhat_j+K+1)+k-1}$ for all $X_{(\uhat_j+K+1)+k-1}\in \tilde{\Gamma}_{j,k-1}^a$ with  $a=u_{j}$ or $a=u_{j}+1$. Recall that $\bm\Psi_{j,1} = \I$.

In a fashion similar to the proof of Lemma \ref{calHk_cor}, we can show that
\begin{equation}\label{calH_cor}
\|\tilde{\bm{\mathcal{H}}}_k\|_2 \leq  2\bigg\| \frac{1}{{\alpha}} \sum_t  {\bm\Psi}_{k} \lt {\bm{e}_t}'  \bigg\|_2 + \bigg\| \frac{1}{{\alpha}} \sum_t \bm{e}_t {\bm{e}_t}' \bigg\|_2 +   2\bigg\| \frac{1}{{\alpha}} \sum_t \bm{F}_t \bigg\|_2
\end{equation}
where
\[
\bm{F}_t = \bm{E}_{k,\cur} {\bm{E}_{k,\cur}}' {\bm\Psi}_{k} \lt {\lt}' {\bm\Psi}_{k} \bm{E}_{k,\cur,\perp} {\bm{E}_{k,\cur,\perp}}'.
\]
We now bound the three terms above.

Consider $\| \frac{1}{{\alpha}} \sum_t  {\bm\Psi}_{k}\lt {\bm{e}_t}'  \|_2$. Using Lemma \ref{cslem_cor}, $\et$ satisfies \eqref{etdef0_cor} with probability one under the given conditioning. Thus,
\begin{align*}
\frac{1}{{\alpha}} \sum_t  {\bm\Psi}_{k}\lt {\bm{e}_t}'
& = \frac{1}{{\alpha}} \sum_t {\bm\Psi}_{k} \lt  (\lt+ \wt)' \bm{\Phi}_{K} \ITt \invterm \ITt'  - \frac{1}{\alpha}\sum_t {\bm\Psi}_{k}\lt\wt'\\
& = \frac{1}{{\alpha}} \sum_t {\bm\Psi}_{k} \lt \lt' \bm{\Phi}_{K} \ITt \invterm \ITt'  + \frac{1}{{\alpha}} \sum_t {\bm\Psi}_{k} \lt  \wt' \bm{\Phi}_{K} \ITt \invterm \ITt' - \frac{1}{\alpha}\sum_t {\bm\Psi}_{k}\lt\wt' \\
& := \mathrm{term} + \mathrm{termw}
\end{align*}
{ Here $\mathrm{termw}$ refers to the terms containing $\wt$. By following the approach of Section \ref{general_decomp_w}, under given conditions,
$$\|\mathrm{termw}\|_2\leq 2\epsilon$$
w.p. at least $1-2\cdot(2n)\exp\left(\frac{-\alpha\epsilon^2}{32(b_{prob,termw})^2}\right)$, where
$$b_{prob,termw} = \frac{\phi^+\sqrt{r}\gamma\epsilon_w}{1-b}.$$
}

We proceed as in Section \ref{general_decomp} for $\mathrm{term}$. In this case $\bN_t = \bN_0 = {\bm\Psi}_{k}$ and $\bM_t = \bm{\Phi}_{K} \ITt \invterm \ITt'$. Thus,
\[
b_{term1}
\le  \frac{(r_\new \zeta)^2 b^2}{(1-b^2)} \frac{\phi^+(r \gamma^2)}{(1-b)^2} \le \frac{0.05(r_\new \zeta) b^2 \lambda^-}{(1-b^2)(1-b)^2}
\]
(we can get a tighter bound for the above, but do not need it and hence do not pursue it)
and
\begin{align*}
\|\mathrm{term}\|_2
& \le \| \frac{1}{\alpha} \sum_{t=t_0}^{t_0+\alpha-1} \sum_{\tau=t_0}^{t}  b^{2t-2\tau} {\bm\Psi}_{k} \bm{\Sigma}_\tau \bm{\Phi}_{K} \ITt \invterm \ITt' \|_2 +
b_{term1} + 4 \epsilon
\end{align*}
{ w.p. at least $1 -4\cdot(2n)\exp\left(\frac{-\alpha\epsilon^2}{32(b_{prob})^2}\right)$, where $$b_{prob} = \frac{\phi^+r\gamma^2(r+r_{\new})\zeta}{(1-b)^2}.$$}
Let
\[
\frac{1}{\alpha} \sum_{t=t_0}^{t_0+\alpha-1} \sum_{\tau=t_0}^{t}  b^{2t-2\tau} {\bm\Psi}_{k} \bm{\Sigma}_\tau \bm{\Phi}_{K} \ITt \invterm \ITt' :=\frac{1}{\alpha} \sum_{t=t_0}^{t_0+\alpha-1} \bm{X}_t \bm{Y}_t'
\]
where $\bm{X}_t:=\sum_{\tau=t_0}^{t}  b^{2t-2\tau} {\bm\Psi}_{k} \bm{\Sigma}_\tau \bm{\Phi}_{K}$ and $\bm{Y}_t:=\ITt \invterm \ITt'$.
By Lemma \ref{blockdiag1} (support change lemma) $\lambda_{\max}( \frac{1}{\alpha} \sum_{t=t_0}^{t_0+\alpha-1} \bm{Y}_t \bm{Y}_t' ) \le \rho^2 h^+ (\phi^+)^2$.

By Lemma \ref{matbnds_del} and the fact that $\|\Phi_K P_* \|_2 \le \zeta_{*}^+ = r\zeta$ and $\|\Phi_K P_\new\|_2 \le r_\new \zeta$,
\begin{align*}
\lambda_{\max}( \frac{1}{\alpha} \sum_{t=t_0}^{t_0+\alpha-1} \bm{X}_t \bm{X}_t')
& \le \max_t \|\bm{X}_t\|^2
 \le ( \frac{1}{1-b^2} (r + r_\new) \zeta ( (r\zeta) \lambda^+ + \lambda_k^+) )^2
\end{align*}
Thus, by Cauchy-Schwartz for matrices,
\[
\| \frac{1}{\alpha} \sum_{t=t_0}^{t_0+\alpha-1} \sum_{\tau=t_0}^{t}  b^{2t-2\tau} {\bm\Psi}_{k} \bm{\Sigma}_\tau \bm{\Phi}_{K} \ITt \invterm \ITt' \|_2
\le \sqrt{\rho^2 h^+ (\phi^+)^2}   ( \frac{1}{1-b^2} (r + r_\new) \zeta ( (r\zeta) \lambda^+ + \lambda_k^+) )
\]

Thus, with probability at least $ 1 - p_{\tilde{\bm{le}}}$,
\[
\bigg\| \frac{1}{{\alpha}} \sum_t  {\bm\Psi}_{k}\lt {\bm{e}_t}'  \bigg\|_2
\le
\sqrt{\rho^2 h^+ (\phi^+)^2}   ( \frac{1}{1-b^2} (r + r_\new) \zeta ( (r\zeta) \lambda^+ + \lambda_k^+) ) + \frac{0.05(r_\new \zeta) b^2 \lambda^-}{(1-b^2)(1-b)^2} +  6 \epsilon
\]
where $p_{\tilde{\bm{le}}} = 2\cdot(2n)\exp\left(\frac{-\alpha\epsilon^2}{32(b_{prob,termw})^2}\right) +4\cdot(2n)\exp\left(\frac{-\alpha\epsilon^2}{32(b_{prob,term})^2}\right)$.

{
Next consider the $\et \et'$ term. Recall that, under the given conditioning, $\et = \ITt \invterm \ITt' \bm{\Phi}_{K}  (\lt+ \wt) - \wt$. Thus,
\begin{align*}
\frac{1}{{\alpha}} \sum_t  \et \et'
& = \mathrm{term} + \mathrm{termw}, \ \text{where} \\
\mathrm{term}&:=  \frac{1}{{\alpha}} \sum_t   \ITt \invterm \ITt' \bm{\Phi}_{K}  (\lt \lt') \bm{\Phi}_{K} \ITt \invterm \ITt' \\
\mathrm{termw}&:=  \frac{1}{{\alpha}} \sum_t   \ITt \invterm \ITt' \bm{\Phi}_{K}  (-\wt \wt' - \lt \wt') + \wt \wt' + \\
& \frac{1}{{\alpha}} \sum_t (-\wt\wt' - \wt\lt')\bm{\Phi}_{K} \ITt \invterm \ITt' + \\
& \frac{1}{{\alpha}} \sum_t   \ITt \invterm \ITt' \bm{\Phi}_{K}  (\lt \wt' + \wt \wt' + \wt \lt') \bm{\Phi}_{K} \ITt \invterm \ITt'
\end{align*}
For the $\wt \wt'$ part of $\mathrm{termw}$, let $\bN_t = \ITt \invterm \ITt' \bm{\Phi}_{K} $. Using Lemma \ref{CSmat} (Cauchy-Schwartz),  Lemma \ref{blockdiag1} (support change lemma) and the bound on $\epsilon_w^2$, we have
\[
\|\frac{1}{{\alpha}} \sum_t \bN_t\wt\wt'\|_2 \leq \sqrt{\|\frac{1}{{\alpha}} \sum_t \bN_t\bN_t'\|_2\|\frac{1}{{\alpha}} \sum_t \wt\wt'\wt\wt'\|_2} \leq \sqrt{\rho^2 h^+ (\phi^+)^2}\epsilon_w^2
\]
Using Lemma \ref{blockdiag1} (support change lemma), we have
\[
\|\frac{1}{{\alpha}} \sum_t \bN_t \wt\wt'\bm{\Phi}_{K} \bN_t'\|_2 \leq \rho^2 h^+ (\phi^+)^2\epsilon_w^2
\]
The $\lt \wt'$ in $\mathrm{termw}$ can be bounded by $\epsilon$ using the approach of Section \ref{general_decomp_w}. Thus,
\[
\|\mathrm{termw}\|_2 \le (1+2\sqrt{\rho^2 h^+}\phi^+ + 2\rho^2 h^+ (\phi^+)^2) (0.03 \zeta \lambda^-) + 4\epsilon \le 2(\phi^+)^2 (0.03 \zeta \lambda^-)
\]
w.p. at least $1-4\cdot(2n) \exp\left(\frac{-\alpha \epsilon^2}{32 (b_{prob, termw})^2} \right)$
where
\[
b_{prob, termw} = (\phi^+)^2  \frac{(2 r\gamma \sqrt{r}\gamma + \sqrt{r_\new} \gamma_\new) \epsilon_w}{1-b}.
\]

For $\mathrm{term}$, we proceed as in Section \ref{general_decomp} with $\bN_t' = \bM_t = \bm{\Phi}_{K} \ITt \invterm \ITt'$. Thus,
\begin{align*}
b_{prob,term2} &= \max(b_{prob,term21},b_{prob,term22},b_{prob,term23}) \le \frac{1}{(1-b)^2} (\phi^+)^2 (\zeta_*^+ \sqrt{r}\gamma + \sqrt{r_\new} \gamma_\new)^2 \\
b_{prob,term3} & \le \frac{1}{(1-b)^3} (\phi^+)^2 (2 r\zeta \sqrt{r} \gamma + \sqrt{r_\new} \gamma_\new)^2
\end{align*}
Use $b_{prob}$ to denote the upper bound on $\max(b_{prob,term2},b_{prob,term3})$. Then
\[
b_{prob} = \frac{1}{(1-b)^3} (\phi^+)^2 (2r \zeta \sqrt{r} \gamma + \sqrt{r_\new} \gamma_\new)^2
\]
Using \eqref{term1_upper_bnd}, \eqref{upper_bnd}, \eqref{term21_upper_bnd}, we get
\[
b_{term1}
\le  \frac{(r_\new \zeta)^2 b^2}{(1-b^2)} \frac{(r \gamma^2) }{(1-b)^2} \le \frac{0.05(r_\new \zeta) b^2 \lambda^-}{(1-b^2)(1-b)^2}
%
\]
(we can get a tighter bound for the above, but do not need it and hence do not pursue it)
and
\[
\|\mathrm{term}\|_2 \le \| \frac{1}{\alpha} \sum_{t=t_0}^{t_0+\alpha-1} \ITt \left( \sum_{\tau=t_0}^{t}  b^{2t-2\tau}  \invterm \ITt' \bm{\Phi}_{K}  \bm{\Sigma}_\tau \bm{\Phi}_{K} \ITt \invterm \right) \ITt'\|_2
+ b_{term1} + 4\epsilon
\]
w.p. at least $1 - 4 \cdot (2n) \exp\left(\frac{-\alpha \epsilon^2}{32 (b_{prob})^2} \right)$.
By Lemma \ref{blockdiag1} (support change lemma), Lemma \ref{matbnds_del}, and the fact that $\|\Phi_K P_* \|_2 \le \zeta_{*}^+ = r\zeta$ and $\|\Phi_K P_\new\|_2 \le r_\new \zeta$,
\begin{align*}
&& \| \frac{1}{\alpha} \sum_{t=t_0}^{t_0+\alpha-1} \ITt \left( \sum_{\tau=t_0}^{t}  b^{2t-2\tau}  \invterm \ITt' \bm{\Phi}_{K}  \bm{\Sigma}_\tau \bm{\Phi}_{K} \ITt \invterm \right) \ITt'\|_2 \\
&& \le \rho^2 h^+  \ (\phi^+)^2 \frac{1}{1-b^2} ((r+r_\new) \zeta )^2 \lambda^+  
\end{align*}
Combining all the bounds from above,
\[
\|\frac{1}{{\alpha}} \sum_t  \et \et'\|_2 \le
\rho^2 h^+  \ (\phi^+)^2 \frac{1}{1-b^2}  ((r+r_\new) \zeta )^2 \lambda^+
+  \frac{0.05(r_\new \zeta) b^2 \lambda^-}{(1-b^2)(1-b)^2}
+ (\phi^+)^2 2(0.03 \zeta \lambda^-)
+ 8 \epsilon
\]
w.p. at least $1 - p_{\tilde{\e \e}}$ with
$p_{\tilde{\e \e}}:= 4 \cdot (2n) \exp\left(\frac{-\alpha \epsilon^2}{32 (b_{prob})^2} \right) + 4\cdot(2n) \exp\left(\frac{-\alpha \epsilon^2}{32 (b_{prob, termw})^2} \right)$.
}

Finally consider $\bigg\| \frac{1}{{\alpha}} \sum_t \bm{F}_t \bigg\|_2 = \|\frac{1}{{\alpha}} \sum_t  \bm{E}_{k,\cur} {\bm{E}_{k,\cur}}' {\bm\Psi}_{k} \lt {\lt}' {\bm\Psi}_{k} \bm{E}_{k,\cur,\perp} {\bm{E}_{k,\cur,\perp}}'\|_2$.
We proceed as in Section \ref{general_decomp}. Here $\bN_t = \bm{E}_{k,\cur} {\bm{E}_{k,\cur}}' {\bm\Psi}_{k}$ and $\bM_t = {\bm\Psi}_{k} \bm{E}_{k,\cur,\perp} {\bm{E}_{k,\cur,\perp}}'$. Thus, we get
\[
b_{term1}
\le  \frac{(r_\new \zeta)^2 b^2}{(1-b^2)} \frac{(r \gamma^2)}{(1-b)^2} \le \frac{0.05(r_\new \zeta) b^2 \lambda^-}{(1-b^2)(1-b)^2}
\]
(we can get a tighter bound for the above, but do not need it and hence do not pursue it)
and
\[
\|\frac{1}{{\alpha}} \sum_t \bm{F}_t \|_2
\le
\frac{1}{1-b^2} \max_{\tau \in [t_0, t_0+\alpha-1]} \| \bm{E}_{k,\cur} {\bm{E}_{k,\cur}}' {\bm\Psi}_{k} \bm{\Sigma}_\tau {\bm\Psi}_{k} \bm{E}_{k,\cur,\perp} {\bm{E}_{k,\cur,\perp}}' \|_2
+ b_{term1} + 4 \epsilon
\]
By Lemma \ref{matbnds_del},
\[
\| \bm{E}_{k,\cur} {\bm{E}_{k,\cur}}' {\bm\Psi}_{k} \bm{\Sigma}_\tau {\bm\Psi}_{k} \bm{E}_{k,\cur,\perp} {\bm{E}_{k,\cur,\perp}}' \|_2
\le (r\zeta)^2\lambda^+ + \frac{(r\zeta)^2}{\sqrt{1 - (r\zeta)^2}}\lambda_{k+1}^+
\]
Thus,
\[
\|\frac{1}{{\alpha}} \sum_t \bm{F}_t \|_2
\le
\frac{1}{1-b^2} \left( (r\zeta)^2\lambda^+ + \frac{(r\zeta)^2}{\sqrt{1 - (r\zeta)^2}}\lambda_{k+1}^+ \right)
+ \frac{0.05(r_\new \zeta) b^2 \lambda^-}{(1-b^2)(1-b)^2} + 4\epsilon
\]
with probability at least $ 1-p_{\tilde{\bm{F}}}$, where $p_{\tilde{\bm{F}}} = 4\cdot(2n)\exp\left(\frac{-\alpha\epsilon^2}{32(b_{prob})^2}\right)$, with $ b_{prob} = \frac{r\gamma^2}{(1-b)^2}$.

Combining the bounds on the three terms above, we get the final result of the lemma.
\end{proof}

\section{Automatically setting algorithm parameters and simulation experiments} \label{expts}

\subsection{Automatically setting algorithm parameters}\label{param_set}
The algorithm has five parameters. As explained in \cite{rrpcp_tsp}, one can set $\xi_t= \| \bm\Phi_t \lhat_{t-1}\|_2$. One can either set $\omega_t = 7 \xi_t$ or one can use the average image pixel intensity to set it. In \cite{rrpcp_tsp}, they used $\omega = q\sqrt{\|\mt\|_2^2/n}$ with $q=1$ when it was known that $\|\xt\|_2$ is of the same order as  $\|\lt\|_2$; and $q=0.25$ when $\|\xt\|_2$ was known to be much smaller (the case of foreground moving objects whose intensity is very similar to that of background objects).
There is no good heuristic to pick $\alpha$ except that $\alpha_\add$ should be large enough compared to $r_\new$ and $\alpha_\del$ should be large enough compared to $r$. We used $\alpha=100$ and $K=12$ in our experiments.
We need $K$ to be large enough so that the new subspace is accurately recovered at the end of $K$ projection-PCA iterations. Thus, one way to set $K$ indirectly is as follows: do projection-PCA for at least $K_{\min}$ times, but after that stop when there is not much difference between $\Phat_{j,\new,k}{}' \lhatt$  and $\Phat_{j,\new,k+1}{}' \lhatt$ \cite{rrpcp_perf,rrpcp_tsp}. This, along with imposing an upper bound on $K$ works well in practice \cite{rrpcp_tsp}. We can set $\ghatp$ as suggested in \cite{rrpcp_perf}; by applying any clustering algorithm from literature, e.g., k-means clustering or split-and-merge and then finding the maximum condition number of any cluster. This can be applied to the empirical covariance matrix used in the clustering step of cluster-PCA.


\subsection{Simulated data}
Here we used simulated data to compare performance of PCP \cite{rpca}, mod-PCP \cite{zhan_pcp_jp}, GRASTA \cite{grass_undersampled}, RSL \cite{Torre03aframework} and Automatic ReProCS-cPCA. 
We generated data as explained in Sec. \ref{models}, with $n=256$, $J=3$, $r_0=40$, $t_\train=200$, $t_{\max} = 8200$. We generated $\ell_t$ as in Model \ref{cor_model} and Model \ref{clust_model} with $r_{j,\new}=4$, $r_{j,\old}=4$, $j=1,2,3$, $t_1=700$, $t_2 = 3700$, $t_3 = 6200$, $\vartheta = 3$,  $b=0.1$. The subspace $[\bm{P}_0, \bm{P}_{t_1,\new}, \bm{P}_{t_2,\new}, \bm{P}_{t_3,\new}]$ was generated by orthonormalizing an $n\times(r_0 + r_{1,\new} + r_{2,\new} + r_{3,\new})$ matrix of iid Gaussian entries.
The coefficients $a_{t,*} := \bm{P}_{j,*}^*\nu_{t}$, were generated as follows. They were divided into three clusters. The coefficeints of the first cluster were iid uniformly distributed over $[-100, 100]$, those of the second cluster were iid uniform over $[-10, 10]$, and those of the third cluster were iid uniform over $[-1,1]$. We generated $a_{t,\new} := \bm{P}_{j,\new}^*\nu_t$ iid uniform over $[-1, 1]$ for the first $1700$ time units after the subspace change. After that, it was in one of the three intervals. The sparse matrix $\S$ was generated as in Model \ref{sbyrho} with $s=10$, $\rho = 2$. The support of $x_t$ started from the top, and moved down by $5$ indices every $\beta=25$ time instants. Once it reached the bottom, it started from the top again. We set $(x_t)_i \thicksim \text{Unif}[x_{\min}, 3x_{\min}]$ for all  $i\in \T_t$ with $x_{\min} = 20$.
%
We ran Automatic ReProCS-cPCA with $\alpha=100$, $K=12$, $\xi = \sqrt{r_\new/2}\gamma_\new$, $\omega = (x_{\min} - 14\xi)/2$. We used $\hat{\bm{P}}_0$ for modified-PCP as partial knowledge. We solved PCP and modified-PCP every $200$ frames by using the observations of the last 200 frames as the matrix $\M$.
%
In Fig. \ref{sims}, where the averaged sparse part errors over 50 Monte Carlo simulations are shown, we can see Automatic ReProCS-cPCA outperforms all the other algorithms. We can also see jumps in the Automatic ReProCS-cPCA error at the time instants at which there is a subspace change, and then decays exponentially. This is what is seen from the bounds given in Theorem \ref{thm1_cor} and Corollary \ref{thm1_cor_corol}.

\begin{figure}[t]
\centerline{
\begin{subfigure}{0.3\linewidth} \label{online}
\centering
\includegraphics[width=\linewidth]{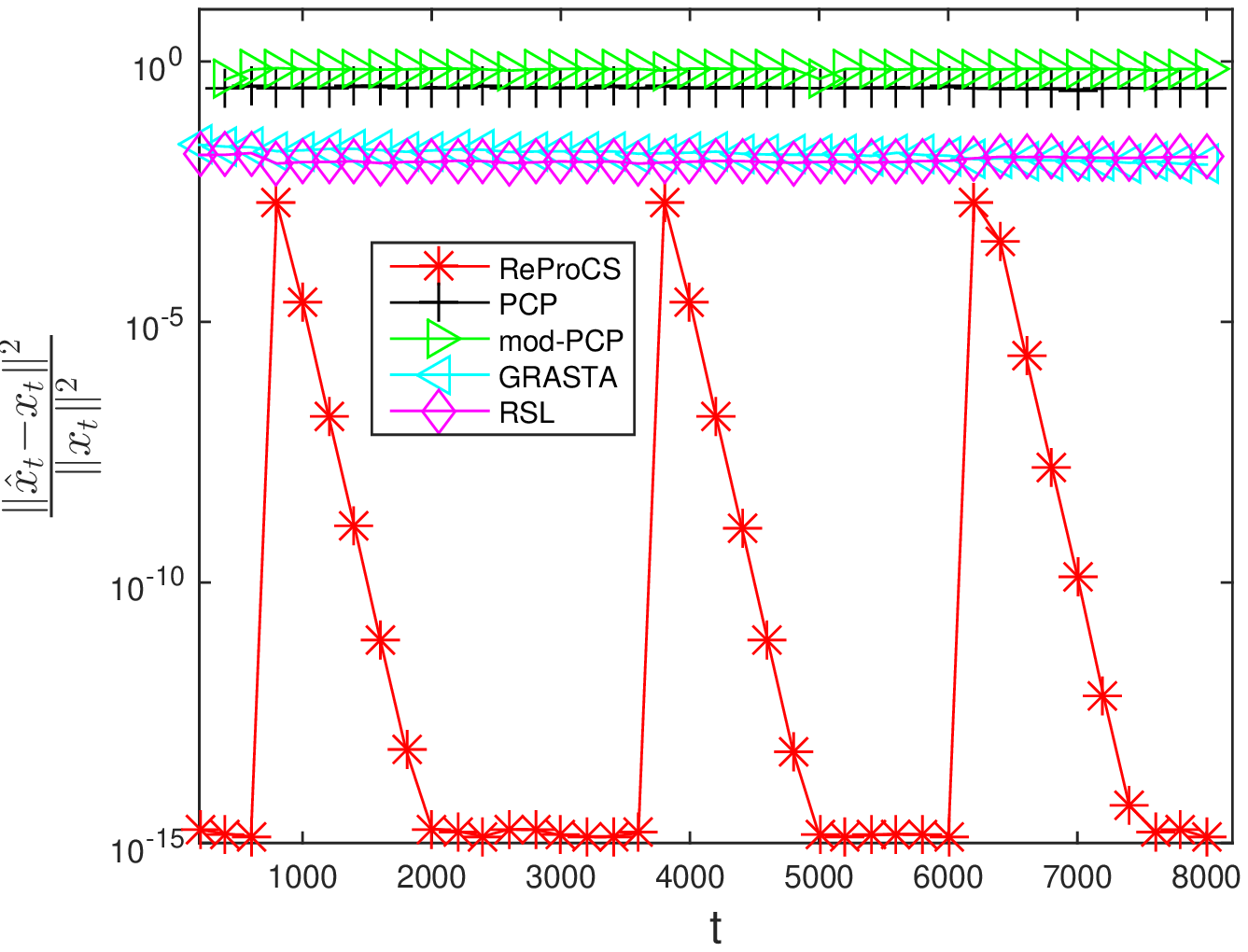}
\caption{\small{fully simulated data}}
\end{subfigure}
\begin{subfigure}{0.3\linewidth} \label{lake_nmse}
\centering
\includegraphics[width=\linewidth]{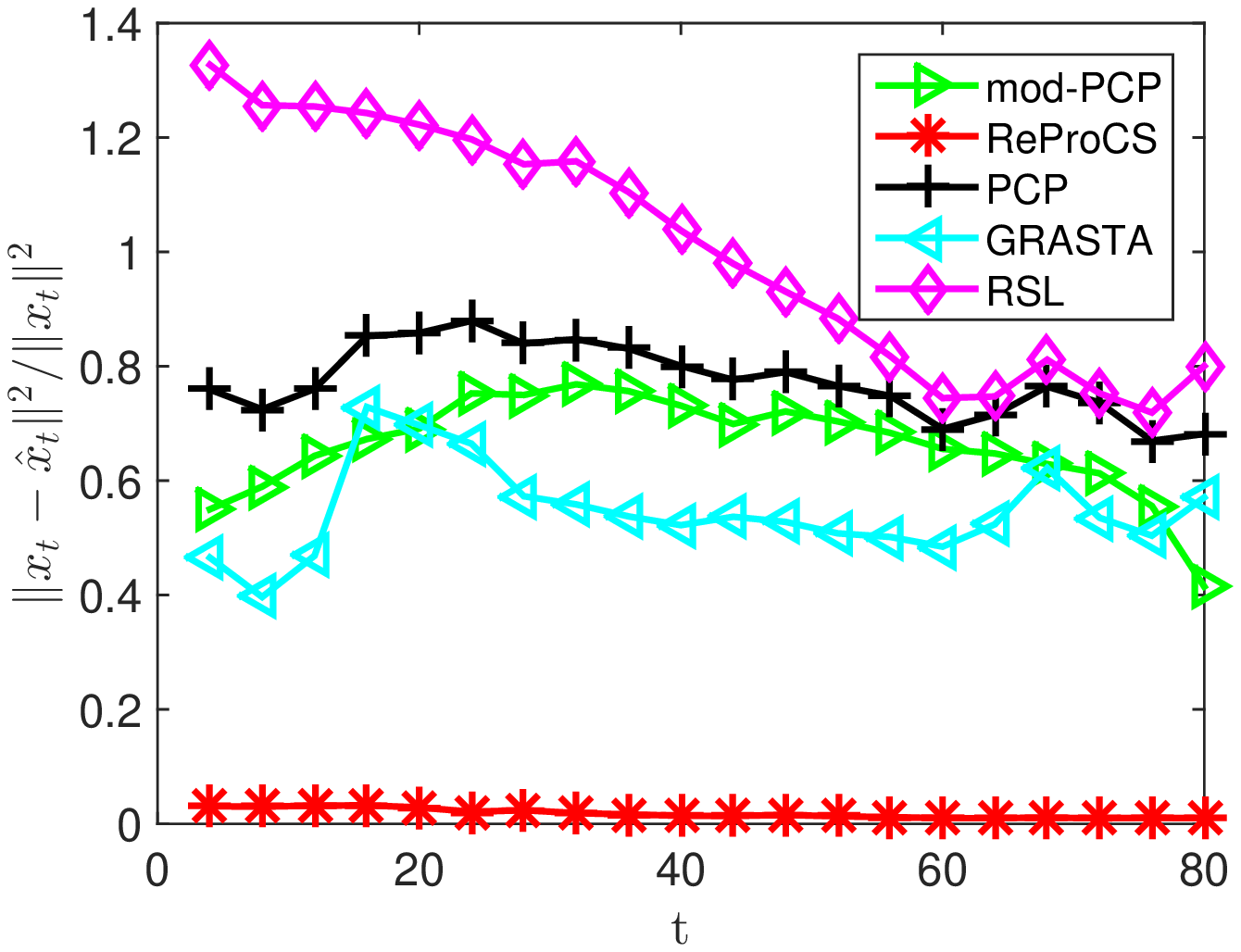}
\caption{\small{lake background}}
\end{subfigure}
}
\vspace{-0.1in}
\caption{\small{Average error comparisons for fully simulated data and for the sequence with the lake background and simulated block object}}
\label{online} \label{sims}
\end{figure}


\subsection{Lake background sequence with simulated foreground}
The lake background sequence used is the same as the one used in \cite{rrpcp_tsp}. The background consisted of a video of moving lake waters. The foreground is a simulated moving rectangular object.
The sequence is of size $72\times 90 \times 1500$, and we used the first $1420$ frames as training data (after subtracting the empirical mean of the training images). The rest 80 frames (after subtracting the same mean image) served as the background $\L$ for the test data. For the first frame of test data, we generated a rectangular foreground support with upper left vertex $(20,5+j_0)$ and lower right vertex $(40+i_1, 30+j_0)$, where $j_0 \sim \text{Unif}[0,30]$ and $i_1 \sim \text{Unif}[0,5]$, and the foreground moves to the right 1 column each time. Then we stacked each image as a long vector $\ell_t$ of size $6480\times1$. For each index $i$ belonging to the support set of foreground $x_t$, we assign $(x_t)_i = 185 - (\ell_t)_i$. We set $\bm{M} = \L + \S$.
For mod-PCP, ReProCS and GRASTA, we used the approach used in \cite{rrpcp_tsp} to estimate the initial background subspace (partial knowledge): do SVD on training data and keep the left singular vectors corresponding to $95\%$ energy as the matrix $\bm{P}_0$. 
The averaged normalized mean squared error (NMSE) of the sparse part over $50$ Monte Carlo realizations is shown in Fig. \ref{sims}. As can be seen, in this case, ReProCS performs the best. In Fig. \ref{lake_view}, we show the lake with simulated foreground at $t=20, 40, 60$, and corresponding foreground and background recovered by different algorithms, and we can see that ReProCS successfully separated foreground and background apart while others did not.

\section{Conclusions} \label{conclude}
In this work, we developed and studied the Automatic ReProCS-cPCA algorithm for incremental or recursive or dynamic or ``online" robust PCA. Our result needed the following assumptions: accurate initial subspace knowledge and a slow subspace change change assumption on the $\lt$'s; the basis vectors for its subspaces are dense (non-sparse) enough; the eigenvalues of the covariance matrix of $\lt$'s are clustered for a certain period of time (this would happen if data has variations across different scales); the outlier support sets $\T_t$ have {\em some} changes over time (as quantified in Model \ref{sbyrho} or Model \ref{general_model}); the square of the smallest outlier magnitude is large enough compared to the energy in the unstructured small noise plus the energy in the changed subspace; and the algorithm parameters are appropriately set.
Ongoing work includes studying the undersampled measurements' case, i.e., the case $\mt = A_t \xt + B_t \lt + \wt$. Besides this, we expect the cluster-PCA algorithm and the proof techniques developed here to apply to various other problems involving PCA with data and noise terms being correlated.

\appendices
\renewcommand{\thetheorem}{\thesection.\arabic{theorem}}

\section{Preliminaries} \label{prelim}

\begin{lem} \cite[Lemma 2.10]{rrpcp_perf}\label{lemma0}\label{hatswitch}
Suppose that $\bm{P}$, $\Phat$ and $\bm{Q}$ are three basis matrices. Also, $\bm{P}$ and $\Phat$ are of the same size, $\bm{Q}' \bm{P} = \bm{0}$ and $\|(\I-\Phat \Phat{}' ) \bm{P} \|_2 = \zeta_*$. Then,
\begin{enumerate}
  \item $\|(\I-\Phat\Phat{}')\bm{P}\bm{P}'\|_2 =\|( \I - \bm{P}\bm{P}' ) \Phat \Phat{}'\|_2 =  \|( \I - \bm{P} \bm{P}' ) \Phat \|_2 = \| ( \I - \Phat \Phat{}' ) \bm{P}\|_2 =  \zeta_*$
  \item $\|\bm{P} \bm{P}' - \Phat \Phat{}'\|_2 \leq 2 \|(\I-\Phat \Phat{}')\bm{P}\|_2 = 2 \zeta_*$
  \item $\|\Phat{}' \bm{Q}\|_2 \leq \zeta_*$ \label{lem_cross}
  \item $ \sqrt{1-\zeta_*^2} \leq \sigma_i\left((\I-\Phat \Phat{}')\bm{Q}\right)\leq 1 $
\end{enumerate}
\end{lem}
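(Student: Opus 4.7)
The plan is to prove the four claims in order, using the fact that the defining quantity $\zeta_* = \|(\I-\Phat\Phat{}')\bm{P}\|_2$ is essentially the sine of the largest principal angle between $\range(\bm{P})$ and $\range(\Phat)$, which is symmetric in the two subspaces when they have equal dimension.

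For the chain of equalities in Claim 1, the first step is to notice that for any basis matrix $\bm{P}$ and any matrix $A$, $\|A\bm{P}\bm{P}{}'\|_2 = \|A\bm{P}\|_2$: the inequality $\le$ follows because $\|\bm{P}\bm{P}{}'x\|_2\le\|x\|_2$, and equality is achieved by taking $x=\bm{P}y$ for the maximizing $y$ of $\|A\bm{P}\cdot\|_2$. Applying this with $A = \I-\Phat\Phat{}'$ (and its transpose form) takes care of two of the four quantities. The symmetry $\|(\I-\Phat\Phat{}')\bm{P}\|_2 = \|(\I-\bm{P}\bm{P}{}')\Phat\|_2$ is the crux: squaring gives $\|\I - \bm{P}{}'\Phat\Phat{}'\bm{P}\|_2$ on one side and $\|\I - \Phat{}'\bm{P}\bm{P}{}'\Phat\|_2$ on the other, and since $\bm{P}{}'\Phat$ and $\Phat{}'\bm{P}$ are square (same number of columns) with identical singular values, the matrices $\bm{P}{}'\Phat\Phat{}'\bm{P}$ and $\Phat{}'\bm{P}\bm{P}{}'\Phat$ share the same spectrum, so the displayed norms coincide. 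This is the only step that uses the equal-size hypothesis.

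Claim 2 follows from the algebraic identity $\bm{P}\bm{P}{}' - \Phat\Phat{}' = (\I - \Phat\Phat{}')\bm{P}\bm{P}{}' - \Phat\Phat{}'(\I - \bm{P}\bm{P}{}')$, together with the triangle inequality and two applications of Claim 1. For Claim 3, the hypothesis $\bm{Q}{}'\bm{P}=\bm{0}$ means $\bm{P}\bm{P}{}'\bm{Q}=\bm{0}$, so $\bm{Q} = (\I - \bm{P}\bm{P}{}')\bm{Q}$; hence
\[
\|\Phat{}'\bm{Q}\|_2 = \|\Phat{}'(\I-\bm{P}\bm{P}{}')\bm{Q}\|_2 \le \|(\I-\bm{P}\bm{P}{}')\Phat\|_2 \cdot \|\bm{Q}\|_2 = \zeta_*,
\]
where the last equality uses Claim 1 and the fact that $\bm{Q}$ is a basis matrix. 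For Claim 4, the upper bound is immediate from $\|\I-\Phat\Phat{}'\|_2\le 1$ and $\|\bm{Q}\|_2=1$. For the lower bound, compute
\[
\sigma_i^2\bigl((\I-\Phat\Phat{}')\bm{Q}\bigr) = \lambda_i\bigl(\bm{Q}{}'(\I-\Phat\Phat{}')\bm{Q}\bigr) = \lambda_i\bigl(\I - \bm{Q}{}'\Phat\Phat{}'\bm{Q}\bigr),
\]
using $\bm{Q}{}'\bm{Q}=\I$, and then Weyl's inequality combined with Claim 3 gives $\lambda_i(\I - \bm{Q}{}'\Phat\Phat{}'\bm{Q}) \ge 1 - \|\Phat{}'\bm{Q}\|_2^2 \ge 1 - \zeta_*^2$.

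The main (and only) subtle step is the symmetry argument in Claim 1; everything else is routine manipulation once that equality is in hand. Since the lemma is cited from \cite{rrpcp_perf}, I would keep the proof compact, highlighting the $AB$/$BA$ spectrum argument and the projection-absorption identity $\|A\bm{P}\bm{P}{}'\|_2=\|A\bm{P}\|_2$ as the two reusable building blocks.
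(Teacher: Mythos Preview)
Your proof is correct in every claim. The paper itself does not prove this lemma: it is quoted verbatim from \cite{rrpcp_perf} and stated in the preliminaries appendix without argument, so there is no ``paper's own proof'' to compare against. Your approach---the $\|A\bm{P}\bm{P}{}'\|_2=\|A\bm{P}\|_2$ absorption identity, the $AB$/$BA$ spectrum symmetry for the square matrix $\Phat{}'\bm{P}$ (which is exactly where the equal-size hypothesis enters), the algebraic split for Claim~2, and the Weyl bound for Claim~4---is the standard route and matches what one finds in the cited reference.
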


Weyl's inequality \cite{hornjohnson} (simplified version) states the following
\begin{theorem}
Given two Hermitian matrices $\mathcal{A}$ and $\mathcal{H}$,
\[
\lambda_i(\mathcal{A}) - \|\mathcal{H}\|_2 \le \lambda_i(\mathcal{A}) - \lambda_{\min}(\mathcal{H}) \le
\lambda_i(\mathcal{A} + \mathcal{H}) \le \lambda_i(\mathcal{A}) +  \lambda_{\max}(\mathcal{H}) \le  \lambda_i(\mathcal{A}) + \|\mathcal{H}\|_2
\]
\end{theorem}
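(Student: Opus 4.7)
The plan is to derive the two inner inequalities via the Courant--Fischer min--max characterization of eigenvalues of Hermitian matrices, and then to obtain the two outer inequalities from the elementary observation that, for a Hermitian matrix $\mathcal{H}$, the spectral norm satisfies $\|\mathcal{H}\|_2 = \max(|\lambda_{\max}(\mathcal{H})|, |\lambda_{\min}(\mathcal{H})|)$.

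First I would recall the Courant--Fischer theorem: for an $n \times n$ Hermitian matrix $\mathcal{M}$ with eigenvalues $\lambda_1(\mathcal{M}) \ge \lambda_2(\mathcal{M}) \ge \cdots \ge \lambda_n(\mathcal{M})$,
\[
\lambda_i(\mathcal{M}) = \max_{\substack{S \subseteq \mathbb{C}^n \\ \dim S = i}} \min_{\substack{x \in S \\ \|x\|_2 = 1}} x^{*} \mathcal{M} x.
\]
Applying this to $\mathcal{M} = \mathcal{A} + \mathcal{H}$, and using the split $x^{*}(\mathcal{A}+\mathcal{H})x = x^{*}\mathcal{A}x + x^{*}\mathcal{H}x$ together with the Rayleigh quotient bound $\lambda_{\min}(\mathcal{H}) \le x^{*}\mathcal{H}x \le \lambda_{\max}(\mathcal{H})$ (valid for every unit vector $x$ since $\mathcal{H}$ is Hermitian), I can bound the inner expression for each fixed subspace $S$, then take the min over $x \in S$ and the max over $S$. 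This simultaneously yields the inner two inequalities $\lambda_i(\mathcal{A}) - \lambda_{\min}(-\mathcal{H})^{*} \le \lambda_i(\mathcal{A}+\mathcal{H}) \le \lambda_i(\mathcal{A}) + \lambda_{\max}(\mathcal{H})$; more precisely, the lower bound comes from $x^{*}(\mathcal{A}+\mathcal{H})x \ge x^{*}\mathcal{A}x + \lambda_{\min}(\mathcal{H})$ and the upper bound from $x^{*}(\mathcal{A}+\mathcal{H})x \le x^{*}\mathcal{A}x + \lambda_{\max}(\mathcal{H})$, where the additive constant passes unchanged through both the $\min_x$ and the $\max_S$.

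For the outer inequalities, since $\mathcal{H}$ is Hermitian, $\|\mathcal{H}\|_2 = \max_i |\lambda_i(\mathcal{H})|$. Hence $\lambda_{\max}(\mathcal{H}) \le \|\mathcal{H}\|_2$ and $-\lambda_{\min}(\mathcal{H}) \le \|\mathcal{H}\|_2$, i.e.\ $\lambda_{\min}(\mathcal{H}) \ge -\|\mathcal{H}\|_2$. Substituting these two inequalities into the inner bounds immediately yields $\lambda_i(\mathcal{A}) - \|\mathcal{H}\|_2 \le \lambda_i(\mathcal{A}+\mathcal{H}) \le \lambda_i(\mathcal{A}) + \|\mathcal{H}\|_2$, which is the simplified form stated. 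No step here presents any real obstacle; the only point requiring a little care is the direction of the min--max argument (one has to match ``$\min$ over $x$, then $\max$ over $S$'' on both sides, so that the pointwise Rayleigh bound on $x^{*}\mathcal{H}x$ can be pulled out of both operations as a constant). This is a classical argument and the statement itself is just cited from Horn and Johnson, so the proposal above is essentially the only natural route.
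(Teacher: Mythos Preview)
Your approach is correct and is the standard Courant--Fischer argument for Weyl's inequality. The paper does not actually give its own proof of this statement; it merely states the result and cites Horn and Johnson, so there is nothing further to compare against. (One small remark: your argument in fact establishes the lower bound $\lambda_i(\mathcal{A}) + \lambda_{\min}(\mathcal{H}) \le \lambda_i(\mathcal{A}+\mathcal{H})$, which is the correct form of Weyl's inequality; the paper's displayed middle term $\lambda_i(\mathcal{A}) - \lambda_{\min}(\mathcal{H})$ appears to be a sign typo, as your derivation makes clear, and only the outer bounds $\lambda_i(\mathcal{A}) \pm \|\mathcal{H}\|_2$ are ever used elsewhere in the paper.)
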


Davis and Kahan's $\sin \theta$ theorem \cite{davis_kahan} studies the rotation of eigenvectors by perturbation.

\begin{theorem}[$\sin \theta$ theorem \cite{davis_kahan}] \label{sintheta}
Given two Hermitian matrices $\mathcal{A}$ and $\mathcal{H}$ and suppose that $\mathcal{A}$ satisfies
\begin{align}
 \label{sindecomp}
\mathcal{A} &= \left[ \begin{array}{cc} E & E_{\perp} \\ \end{array} \right]
\left[\begin{array}{cc} A\ & 0\ \\ 0 \ & A_{\perp} \\ \end{array} \right]
\left[ \begin{array}{c} E' \\ {E_{\perp}}' \\ \end{array} \right] \nn
\end{align}
where $[E \ E_{\perp}]$ is an orthonormal matrix. Suppose that $\mathcal{A}+\mathcal{H}$ can be decomposed as
\begin{align*}
\mathcal{A} + \mathcal{H}
&= \left[ \begin{array}{cc} F & F_{\perp} \\ \end{array} \right]
\left[\begin{array}{cc} \Lambda\ & 0\ \\ 0 \ & \Lambda_{\perp} \\ \end{array} \right]
\left[ \begin{array}{c} F' \\ {F_{\perp}}' \\ \end{array} \right] \nn
\end{align*}
where $[F\ F_{\perp}]$ is another orthonormal matrix and is such that $\rank(F)=\rank(E)$. Let $\mathcal{R} := (\mathcal{A}+\mathcal{H}) E - \mathcal{A}E = \mathcal{H} E $. If $ \lambda_{\min}(A) >\lambda_{\max}(\Lambda_{\perp})$, then
\beq
\|(I-F F')E \|_2 \leq \frac{\|\mathcal{R}\|_2}{\lambda_{\min}(A) - \lambda_{\max}(\Lambda_{\perp})} \le \frac{\|\mathcal{H}\|_2}{\lambda_{\min}(A) - \lambda_{\max}(\Lambda_{\perp})}.
\nn
\eeq
\end{theorem}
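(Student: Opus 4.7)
The plan is to reduce the bound on $\|(I-FF')E\|_2$ to a bound on the solution of a Sylvester equation whose spectrum gap is controlled by the hypothesis $\lambda_{\min}(A) > \lambda_{\max}(\Lambda_\perp)$. First I would rewrite the target quantity: since $[F\ F_\perp]$ is orthonormal we have $FF' + F_\perp F_\perp' = I$, so $(I-FF')E = F_\perp F_\perp' E$, and because $F_\perp$ has orthonormal columns, $\|(I-FF')E\|_2 = \|F_\perp' E\|_2$. So it suffices to bound the matrix $Y := F_\perp' E$ in operator norm.

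Next I would derive a Sylvester-type identity for $Y$ by computing $F_\perp'(\mathcal{A}+\mathcal{H})E$ in two different ways. On one hand, the eigen-decomposition of $\mathcal{A}+\mathcal{H}$ gives $(\mathcal{A}+\mathcal{H})F_\perp = F_\perp \Lambda_\perp$, hence $F_\perp'(\mathcal{A}+\mathcal{H}) = \Lambda_\perp F_\perp'$ and so $F_\perp'(\mathcal{A}+\mathcal{H})E = \Lambda_\perp F_\perp' E = \Lambda_\perp Y$. On the other hand, the decomposition of $\mathcal{A}$ gives $\mathcal{A}E = EA$, so $F_\perp'(\mathcal{A}+\mathcal{H})E = F_\perp' EA + F_\perp' \mathcal{H}E = YA + F_\perp'\mathcal{R}$. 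Equating and rearranging,
\begin{equation*}
YA - \Lambda_\perp Y \;=\; -\,F_\perp'\mathcal{R}.
\end{equation*}
This is a Sylvester equation in $Y$ with coefficient matrices $A$ and $\Lambda_\perp$, whose spectra are separated by the positive gap $\delta := \lambda_{\min}(A) - \lambda_{\max}(\Lambda_\perp)$.

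The key step is the operator-norm bound for such a Sylvester equation: whenever $A,B$ are Hermitian with $\lambda_{\min}(A) > \lambda_{\max}(B)$, any solution of $YA - BY = C$ satisfies $\|Y\|_2 \le \|C\|_2/(\lambda_{\min}(A)-\lambda_{\max}(B))$. I would prove this by shifting: pick $c \in (-\lambda_{\min}(A),-\lambda_{\max}(B))$ so that $A'=A+cI$ is positive definite and $B'=B+cI$ is negative definite, the equation being unchanged. Then the integral representation $Y = -\int_0^\infty e^{tB'}Ce^{-tA'}\,dt$ (which can be verified by differentiating under the integral and using the shifted equation) converges because $\|e^{tB'}\|_2\|e^{-tA'}\|_2 \le e^{-t\delta}$. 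This yields $\|Y\|_2 \le \|C\|_2\int_0^\infty e^{-t\delta}dt = \|C\|_2/\delta$. Applying this with $C = -F_\perp'\mathcal{R}$ gives $\|Y\|_2 \le \|F_\perp'\mathcal{R}\|_2/\delta \le \|\mathcal{R}\|_2/\delta$, since $F_\perp$ has orthonormal columns so $\|F_\perp'\mathcal{R}\|_2 \le \|\mathcal{R}\|_2$.

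Finally, the last inequality of the statement, $\|\mathcal{R}\|_2 \le \|\mathcal{H}\|_2$, is immediate from $\mathcal{R}=\mathcal{H}E$ together with $\|E\|_2 = 1$ (since $E$ has orthonormal columns). The main obstacle is really step three, the Sylvester norm inequality; one can alternatively prove it by simultaneously diagonalizing $A$ and $\Lambda_\perp$, which reduces the equation entrywise to $\tilde Y_{ij} = \tilde C_{ij}/(\sigma_j^A - \sigma_i^\perp)$ with $\sigma_j^A - \sigma_i^\perp \ge \delta$, and then leveraging unitary invariance of the operator norm together with a Schur-product-style argument. The integral approach seems cleaner since it avoids worrying about whether the pointwise division bound carries over from Frobenius to spectral norm.
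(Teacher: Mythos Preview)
The paper does not give a proof of this statement; it is quoted as a classical result from Davis and Kahan and is used as a black box (see the citation \cite{davis_kahan} in the theorem header and the absence of any proof in Appendix~\ref{prelim}). So there is no ``paper's own proof'' to compare against.

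Your argument is correct and is essentially the standard proof. The reduction $\|(I-FF')E\|_2 = \|F_\perp'E\|_2$, the derivation of the Sylvester equation $YA - \Lambda_\perp Y = -F_\perp'\mathcal{R}$, and the operator-norm bound for its solution via the exponential integral are all sound. One small slip: the integral representation should read $Y = +\int_0^\infty e^{tB'}Ce^{-tA'}\,dt$ rather than with a minus sign (differentiate $g(t)=e^{tB'}Ce^{-tA'}$, integrate $g'$ from $0$ to $\infty$, and use $g(\infty)=0$, $g(0)=C$); this does not affect the norm estimate. Your caution about the diagonalization route is also warranted: the entrywise bound gives the Frobenius-norm inequality immediately, but lifting it to the spectral norm needs an extra Schur-multiplier argument, so the integral approach is indeed cleaner here.
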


\begin{remark}
In the above theorem, let $r = \rank(F)$. If the decomposition of $\mathcal{A} + \mathcal{H}$ is obtained by EVD, then $\lambda_{\max}(\Lambda_{\perp}) = \lambda_{r+1}(\mathcal{A} + \mathcal{H}) \le
\lambda_{r+1}(\mathcal{A}) + \|\mathcal{H}\|_2$. The inequality follows using Weyl. Moreover, if $\lambda_{\min}(A) > \lambda_{\max}(A_\perp)$, then $\lambda_{r+1}(\mathcal{A}) = \lambda_{\max}(A_\perp)$. 
Thus a useful corollary of the above result is the following. If $ \lambda_{\min}(A)  - \lambda_{\max}(A_\perp) - \|\mathcal{H}\|_2 > 0$, then
\[
\|(I-F F')E \|_2 \leq \frac{\|\mathcal{H}\|_2}{\lambda_{\min}(A) - \lambda_{\max}(A_\perp) - \|\mathcal{H}\|_2}.
\]
\end{remark}

\begin{lem}[Cauchy-Schwarz for a sum of vectors]\label{CSsum}
For vectors $\bm{x}_t$ and $\bm{y}_t$,
\[
\left(\sum_{t=1}^{\alpha} {\bm{x}_t}'\bm{y}_t\right)^2 \leq \left( \sum_t \|\bm{x}_t\|_2^2 \right) \left( \sum_t \|\bm{y}_t\|_2^2 \right)
\]
\end{lem}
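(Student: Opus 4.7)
The statement is simply the standard Cauchy--Schwarz inequality lifted from $\mathbb{R}^n$ to the product space $\mathbb{R}^{n\alpha}$, so the plan is short and essentially a bookkeeping exercise. I would prove it in one of two equivalent ways, and would probably write the first since it is cleanest.

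The first approach is to stack. Define $\bm{X} := [\bm{x}_1{}'\ \bm{x}_2{}'\ \cdots\ \bm{x}_\alpha{}']'$ and $\bm{Y} := [\bm{y}_1{}'\ \bm{y}_2{}'\ \cdots\ \bm{y}_\alpha{}']'$ in $\mathbb{R}^{n\alpha}$ (padding with zeros if the $\bm{x}_t,\bm{y}_t$ do not all have the same dimension). Then by definition of the stacked inner product, $\bm{X}'\bm{Y}=\sum_{t=1}^{\alpha}\bm{x}_t{}'\bm{y}_t$ and $\|\bm{X}\|_2^2=\sum_t\|\bm{x}_t\|_2^2$, $\|\bm{Y}\|_2^2=\sum_t\|\bm{y}_t\|_2^2$. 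Applying the ordinary (finite-dimensional) Cauchy--Schwarz inequality $(\bm{X}'\bm{Y})^2\le \|\bm{X}\|_2^2\|\bm{Y}\|_2^2$ gives the result immediately.

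The second approach, if one prefers to avoid stacking, is to apply Cauchy--Schwarz twice: first to each summand to get $\bm{x}_t{}'\bm{y}_t\le \|\bm{x}_t\|_2\,\|\bm{y}_t\|_2$, and then to the scalar sequences $u_t=\|\bm{x}_t\|_2$, $v_t=\|\bm{y}_t\|_2$ to conclude $\bigl(\sum_t u_tv_t\bigr)^2\le \bigl(\sum_t u_t^2\bigr)\bigl(\sum_t v_t^2\bigr)$. Chaining the two bounds and squaring gives the claim. Either argument is routine and there is no real obstacle; the only care needed is to observe that the inequality $\bm{x}_t{}'\bm{y}_t\le |\bm{x}_t{}'\bm{y}_t|\le \|\bm{x}_t\|_2\|\bm{y}_t\|_2$ must be applied before squaring so that the sign of the inner product is handled correctly. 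I would pick the stacking proof since it makes the statement a literal instance of Cauchy--Schwarz rather than a two-step deduction.
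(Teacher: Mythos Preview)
Your proposal is correct. The paper actually states Lemma~\ref{CSsum} without proof, treating it as a standard fact (it is only used to prove the matrix version, Lemma~\ref{CSmat}); your stacking argument is exactly the natural justification, since it exhibits the inequality as the ordinary Cauchy--Schwarz inequality for the concatenated vectors $\bm{X},\bm{Y}\in\mathbb{R}^{n\alpha}$.
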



\begin{lem}[Cauchy-Schwarz for a sum of matrices]\label{CSmat}
For matrices $\bm{X}_t$ and $\bm{Y}_t$,
\[
\left\|\frac{1}{\alpha} \sum_{t=1}^{\alpha} \bm{X}_t {\bm{Y}_t}'\right\|_2^2 \leq \lambda_{\max}\left(\frac{1}{\alpha} \sum_{t=1}^{\alpha} \bm{X}_t {\bm{X}_t}'\right)
\lambda_{\max}\left(\frac{1}{\alpha} \sum_{t=1}^{\alpha} \bm{Y}_t {\bm{Y}_t}'\right)
\]
\end{lem}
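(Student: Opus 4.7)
The plan is to reduce the matrix inequality to the vector Cauchy--Schwarz inequality (Lemma \ref{CSsum}) via the variational characterization of the spectral norm. Recall that for any matrix $\bm{M}$,
\[
\|\bm{M}\|_2 = \sup_{\|\bm{u}\|_2 = \|\bm{v}\|_2 = 1} \bm{u}' \bm{M} \bm{v}.
\]
So fix arbitrary unit vectors $\bm{u}$ and $\bm{v}$ of the appropriate dimensions, and set $\bm{x}_t := \bm{X}_t{}' \bm{u}$ and $\bm{y}_t := \bm{Y}_t{}' \bm{v}$. Then
\[
\bm{u}' \left( \frac{1}{\alpha} \sum_{t=1}^{\alpha} \bm{X}_t \bm{Y}_t{}' \right) \bm{v} = \frac{1}{\alpha} \sum_{t=1}^{\alpha} \bm{x}_t{}' \bm{y}_t.
\]

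Next, I would apply Lemma \ref{CSsum} (vector Cauchy--Schwarz) to the right-hand side to obtain
\[
\left( \frac{1}{\alpha} \sum_{t=1}^{\alpha} \bm{x}_t{}' \bm{y}_t \right)^2 \leq \left( \frac{1}{\alpha} \sum_{t=1}^{\alpha} \|\bm{x}_t\|_2^2 \right) \left( \frac{1}{\alpha} \sum_{t=1}^{\alpha} \|\bm{y}_t\|_2^2 \right),
\]
where I have distributed the $1/\alpha$ factor to each sum. Substituting back, $\|\bm{x}_t\|_2^2 = \bm{u}' \bm{X}_t \bm{X}_t{}' \bm{u}$ and similarly for $\bm{y}_t$, so the right-hand side equals
\[
\left( \bm{u}' \Big(\frac{1}{\alpha} \sum_t \bm{X}_t \bm{X}_t{}' \Big) \bm{u} \right) \left( \bm{v}' \Big(\frac{1}{\alpha} \sum_t \bm{Y}_t \bm{Y}_t{}' \Big) \bm{v} \right).
\]

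Since both $\frac{1}{\alpha}\sum_t \bm{X}_t \bm{X}_t{}'$ and $\frac{1}{\alpha}\sum_t \bm{Y}_t \bm{Y}_t{}'$ are Hermitian positive semidefinite, the Rayleigh quotient bound gives $\bm{u}' \bm{A} \bm{u} \leq \lambda_{\max}(\bm{A})$ for any unit vector $\bm{u}$ and PSD $\bm{A}$. Applying this to each factor yields
\[
\left( \bm{u}' \Big(\frac{1}{\alpha} \sum_t \bm{X}_t \bm{Y}_t{}' \Big) \bm{v} \right)^2 \leq \lambda_{\max}\!\left(\frac{1}{\alpha} \sum_t \bm{X}_t \bm{X}_t{}'\right) \lambda_{\max}\!\left(\frac{1}{\alpha} \sum_t \bm{Y}_t \bm{Y}_t{}'\right).
\]
Finally, since $\bm{u}, \bm{v}$ were arbitrary unit vectors, taking the supremum on the left yields $\|\frac{1}{\alpha}\sum_t \bm{X}_t \bm{Y}_t{}'\|_2^2$ on the left-hand side, which is the desired inequality.

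There is no real obstacle here: the only subtlety is making sure the supremum over $\bm{u}, \bm{v}$ passes through the squaring correctly, which is fine because both sides of the pre-supremum inequality are nonnegative. The proof is genuinely just a two-line application of vector Cauchy--Schwarz combined with the Rayleigh quotient bound.
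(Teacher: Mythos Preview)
Your proof is correct and follows essentially the same approach as the paper: variational characterization of the spectral norm, reduction to the vector Cauchy--Schwarz inequality (Lemma~\ref{CSsum}) via $\bm{x}_t=\bm{X}_t{}'\bm{u}$ and $\bm{y}_t=\bm{Y}_t{}'\bm{v}$, and then the Rayleigh quotient bound. The only cosmetic differences are that the paper carries the max over unit vectors through each line and pulls out the $1/\alpha^2$ factor at the end, whereas you fix $\bm{u},\bm{v}$ first and take the supremum last.
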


\begin{proof}[Proof of Lemma \ref{CSmat}]
\begin{align*}
\left\| \sum_{t=1}^{\alpha} \bm{X}_t {\bm{Y}_t}'\right\|_2^2 &=
\max_{\substack{\|\bm{x}\|=1\\\|\bm{y}\|=1}} \left| \bm{x}'\left(\sum_{t}\bm{X}_t {\bm{Y}_t}'\right)\bm{y} \right|^2 \\
&= \max_{\substack{\|\bm{x}\|=1\\\|\bm{y}\|=1}} \left| \sum_{t=1}^{\alpha} ({\bm{X}_t}'\bm{x})'({\bm{Y}_t}'\bm{y})  \right|^2 \\
&\leq \max_{\substack{\|\bm{x}\|=1\\\|\bm{y}\|=1}} \left( \sum_{t=1}^{\alpha} \left\| {\bm{X}_t}'\bm{x} \right\|_2^2\right) \left( \sum_{t=1}^{\alpha} \left\| {\bm{Y}_t}'\bm{y} \right\|_2^2\right) \\
&= \max_{\|\bm{x}\|=1}  \bm{x}' \sum_{t=1}^{\alpha} \bm{X}_t{\bm{X}_t}' \ \bm{x}  \ \cdot \ \max_{\|\bm{y}\|=1} \bm{y}' \sum_{t=1}^{\alpha} \bm{Y}_t{\bm{Y}_t}' \ \bm{y}\\
&= \lambda_{\max}\left( \sum_{t=1}^{\alpha} \bm{X}_t{\bm{X}_t}' \right)\lambda_{\max}\left( \sum_{t=1}^{\alpha} \bm{Y}_t{\bm{Y}_t}' \right)
\end{align*}
The inequality is by Lemma \ref{CSsum}. The penultimate line is because $\|\bm{x}\|_2^2 = {\bm{x}'\bm{x}}$.
Multiplying both sides by $\left(\frac{1}{\alpha}\right)^2$ gives the desired result.
\end{proof}

\begin{lem}[Exchanging the order of a double sum]\label{sumswitch_0}
\[
\sum_{t=0}^{\alpha-1} \sum_{\tau=0}^{t} f_{t,\tau} = \sum_{\tau=0}^{\alpha-1} \sum_{t=\tau}^{\alpha-1} f_{t,\tau}
\]
\end{lem}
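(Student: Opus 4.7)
The plan is to prove the identity by showing that both sides of the equation are sums of $f_{t,\tau}$ over the same index set, simply enumerated in two different orders. Since the sum of finitely many scalars is independent of the order of summation, equality will follow immediately.

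First, I would define the triangular index set
\[
S := \{(t,\tau) \in \mathbb{Z}^2 : 0 \le \tau \le t \le \alpha-1\}.
\]
Then I would verify that the left-hand side corresponds to enumerating $S$ by fixing $t \in \{0,1,\dots,\alpha-1\}$ and, for each such $t$, letting $\tau$ range over $\{0,1,\dots,t\}$. This exactly traces out $S$ with each pair $(t,\tau)$ appearing once.

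Next, I would verify that the right-hand side corresponds to enumerating $S$ by fixing $\tau \in \{0,1,\dots,\alpha-1\}$ and, for each such $\tau$, letting $t$ range over $\{\tau,\tau+1,\dots,\alpha-1\}$. This also traces out $S$ with each pair $(t,\tau)$ appearing once, since the condition $\tau \le t \le \alpha-1$ is equivalent to $0 \le \tau \le t \le \alpha-1$ when $\tau \ge 0$.

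Since both iterated sums equal $\sum_{(t,\tau) \in S} f_{t,\tau}$, they are equal. There is no real obstacle here; the only thing to be careful about is the bookkeeping of the inequalities defining $S$ so that no pair is counted twice or omitted. The result is a purely combinatorial reindexing argument that holds for any scalar-valued (or matrix-valued, by linearity) function $f_{t,\tau}$.
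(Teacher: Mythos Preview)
Your proposal is correct and takes essentially the same approach as the paper: both arguments identify the common triangular index set $\{(t,\tau): 0 \le \tau \le t \le \alpha-1\}$ and observe that the two iterated sums are just two different enumerations of it. The paper expresses this with Iverson bracket notation, writing the sum as $\sum_{t,\tau}[0 \le \tau \le t \le \alpha-1]f_{t,\tau}$ and then splitting the indicator the other way, but the underlying reasoning is identical to yours.
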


\begin{proof}
Define
$[\mathrm{statement}]$ to be the Boolean value of $\mathrm{statement}$
\begin{align*}
\sum_{t=0}^{\alpha-1} \sum_{\tau=0}^{t} f_{t,\tau} &= \sum_{t,\tau}[0\leq \tau \leq t][0\leq t \leq \alpha-1] f_{t,\tau} \\
&= \sum_{t,\tau} [0 \leq \tau \leq t \leq \alpha-1] f_{t,\tau} \\
&= \sum_{t,\tau} [0 \leq \tau \leq \alpha -1][ \tau \leq t \leq \alpha-1] f_{t,\tau} \\
&= \sum_{\tau=0}^{\alpha-1} \sum_{t=\tau}^{\alpha-1} f_{t,\tau}
\end{align*}
\end{proof}

The following lemma follows in an exactly analogous fashion.
\begin{lem}[Exchanging the order of a double sum]\label{sumswitch}
\[
\sum_{t=t_0}^{t_0+\alpha-1} \sum_{\tau=t_0}^{t} f_{t,\tau} = \sum_{\tau=t_0}^{t_0+\alpha-1} \sum_{t=\tau}^{t_0+\alpha-1} f_{t,\tau}
\]
\end{lem}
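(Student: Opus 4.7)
The claim is an immediate generalization of Lemma \ref{sumswitch_0}: the only difference is that the index range is shifted from $[0,\alpha-1]$ to $[t_0,t_0+\alpha-1]$. So there are two natural ways to prove it, and either one is routine.

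\textbf{Approach 1: Reduction to Lemma \ref{sumswitch_0} by reindexing.} The plan is to make the substitution $t' := t - t_0$ and $\tau' := \tau - t_0$. Then $t$ ranges over $[t_0,t_0+\alpha-1]$ iff $t'$ ranges over $[0,\alpha-1]$, and similarly for $\tau'$. Setting $g_{t',\tau'} := f_{t'+t_0,\,\tau'+t_0}$, the left-hand side becomes $\sum_{t'=0}^{\alpha-1}\sum_{\tau'=0}^{t'} g_{t',\tau'}$, to which Lemma \ref{sumswitch_0} applies directly; reindexing back by $t = t'+t_0$, $\tau=\tau'+t_0$ yields the right-hand side.

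\textbf{Approach 2: Direct Iverson-bracket argument.} Alternatively, I would mimic the proof of Lemma \ref{sumswitch_0} verbatim: write
\[
\sum_{t=t_0}^{t_0+\alpha-1}\sum_{\tau=t_0}^{t} f_{t,\tau} \;=\; \sum_{t,\tau}[t_0 \le \tau \le t][t_0 \le t \le t_0+\alpha-1]\, f_{t,\tau},
\]
observe that the two bracket conditions are equivalent to the single condition $t_0 \le \tau \le t \le t_0+\alpha-1$, and then regroup this as $[t_0 \le \tau \le t_0+\alpha-1][\tau \le t \le t_0+\alpha-1]$ to obtain the right-hand side.

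\textbf{Obstacle.} There is essentially none: the identity is a purely combinatorial rearrangement of a triangular index set and requires neither the probabilistic structure of the paper nor any of the earlier lemmas beyond Lemma \ref{sumswitch_0} itself. The only point to be careful about is making sure the endpoints of the inner sum are included correctly after reindexing (that is, $\tau \le t$ becomes $t \ge \tau$ with the upper bound $t_0+\alpha-1$ preserved), which both approaches handle transparently.
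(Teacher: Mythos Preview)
Your proposal is correct and matches the paper's approach: the paper simply states that the lemma ``follows in an exactly analogous fashion'' to Lemma~\ref{sumswitch_0}, which is precisely your Approach~2 (and your Approach~1 is an equally valid trivial reduction).
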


\begin{lem}[A summation used very often] \label{sum_often}
We have
\[
\frac{1}{\alpha} \sum_{t=t_0}^{t_0+\alpha-1} \sum_{\tau=t_0}^t  b^{2(t-\tau)} =  \frac{1}{1-b^2}(1 - \frac{1}{\alpha} \frac{b^2(1 - b^{2 \alpha})}{1-b^2})
\]
Thus
\[
 \frac{1}{1-b^2}(1 - \frac{1}{\alpha} \frac{b^2}{1-b^2}) \le  \frac{1}{\alpha} \sum_{t=t_0}^{t_0+\alpha-1} \sum_{\tau=t_0}^t  b^{2(t-\tau)} \le \frac{1}{1-b^2}
\]
\end{lem}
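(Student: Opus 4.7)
\textbf{Proof proposal for Lemma \ref{sum_often}.} This is a direct computation using the finite geometric series formula; there is no probabilistic or structural obstacle. The plan is to first evaluate the inner sum (over $\tau$) in closed form for each fixed $t$, then evaluate the resulting outer sum (over $t$), and finally read off the two-sided bound from monotonicity.

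\emph{Step 1 (inner sum).} For each fixed $t \in [t_0, t_0+\alpha-1]$, substitute $k = t-\tau$ so that as $\tau$ ranges over $\{t_0,\dots,t\}$, $k$ ranges over $\{0,1,\dots,t-t_0\}$. By the finite geometric series formula,
\[
\sum_{\tau=t_0}^{t} b^{2(t-\tau)} \;=\; \sum_{k=0}^{t-t_0} b^{2k} \;=\; \frac{1 - b^{2(t-t_0+1)}}{1-b^2}.
\]

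\emph{Step 2 (outer sum).} Summing over $t$ and substituting $j = t-t_0+1$,
\[
\sum_{t=t_0}^{t_0+\alpha-1} \frac{1 - b^{2(t-t_0+1)}}{1-b^2} \;=\; \frac{1}{1-b^2}\Bigl(\alpha \;-\; \sum_{j=1}^{\alpha} b^{2j}\Bigr) \;=\; \frac{1}{1-b^2}\Bigl(\alpha \;-\; \frac{b^2(1-b^{2\alpha})}{1-b^2}\Bigr),
\]
again by the geometric series formula applied to $\sum_{j=1}^{\alpha} b^{2j} = b^2(1-b^{2\alpha})/(1-b^2)$. Dividing by $\alpha$ gives the claimed identity
\[
\frac{1}{\alpha}\sum_{t=t_0}^{t_0+\alpha-1}\sum_{\tau=t_0}^{t} b^{2(t-\tau)} \;=\; \frac{1}{1-b^2}\Bigl(1 - \frac{1}{\alpha}\,\frac{b^2(1-b^{2\alpha})}{1-b^2}\Bigr).
\]

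\emph{Step 3 (two-sided bound).} Since $0 \le b < 1$, we have $0 \le 1 - b^{2\alpha} \le 1$, so $0 \le b^2(1-b^{2\alpha}) \le b^2$. The upper bound $\frac{1}{1-b^2}$ follows by dropping the (nonnegative) correction term; the lower bound $\frac{1}{1-b^2}\bigl(1 - \frac{1}{\alpha}\frac{b^2}{1-b^2}\bigr)$ follows by replacing $b^2(1-b^{2\alpha})$ with its upper bound $b^2$. This completes the proof; the only potential ``obstacle'' is bookkeeping on the index shifts in Steps 1--2, which is routine.
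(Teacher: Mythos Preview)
Your proof is correct and follows essentially the same approach as the paper's one-line proof: compute the inner geometric sum to get $\frac{1}{1-b^2}(1-b^{2(t-t_0+1)})$, then sum over $t$ to obtain $\frac{1}{1-b^2}\bigl(\alpha - \frac{b^2(1-b^{2\alpha})}{1-b^2}\bigr)$. You additionally spell out the index substitutions and the derivation of the two-sided bound, which the paper leaves implicit.
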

Proof: $\sum_{\tau=t_0}^t  b^{2(t-\tau)} = \frac{1}{1-b^2}(1-b^{2(t-t_0+1)})$. And $\sum_{t=t_0}^{t_0+\alpha-1}  \frac{1}{1-b^2}(1-b^{2(t-t_0+1)}) = \frac{1}{1-b^2}(\alpha- \frac{b^2(1-b^{2\alpha})}{1-b^2})$

\begin{lem}\label{ind_exp}
Let $W$, $Y$, and $Z$ be random variables.  Assume that $W$ is independent of $\{Y,Z\}$.  Then
\[
\E[WY|Z] = \E[W]\E[Y|Z]
\]
\end{lem}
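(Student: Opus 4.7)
The plan is to establish the identity by conditioning first on the larger $\sigma$-algebra generated by $(Y,Z)$ and then collapsing back down to $Z$ using the tower property of conditional expectation. The key observation is that once we condition on both $Y$ and $Z$, the factor $Y$ becomes measurable with respect to the conditioning information, and the independence of $W$ from the pair $\{Y,Z\}$ lets us pull $W$ out of the inner conditional expectation.

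Concretely, I would proceed in three short steps. First, apply the tower property to write
\[
\E[WY \mid Z] = \E\bigl[\E[WY \mid Y,Z] \,\bigm|\, Z\bigr].
\]
Second, since $Y$ is $\sigma(Y,Z)$-measurable, pull it out of the inner expectation to get $\E[WY \mid Y,Z] = Y \cdot \E[W \mid Y,Z]$. Third, invoke the independence hypothesis: because $W$ is independent of the pair $\{Y,Z\}$, the conditional expectation $\E[W \mid Y,Z]$ reduces to the unconditional mean $\E[W]$. Substituting back yields
\[
\E[WY \mid Z] = \E\bigl[Y\,\E[W]\,\bigm|\, Z\bigr] = \E[W]\,\E[Y \mid Z],
\]
where in the last step $\E[W]$ is a constant and comes outside the conditional expectation.

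There is no real obstacle here: each step is a textbook property of conditional expectation (tower property, taking out what is known, independence $\Rightarrow$ conditional expectation equals unconditional expectation). The only thing to be slightly careful about is justifying ``$W$ independent of $(Y,Z)$ implies $\E[W\mid Y,Z]=\E[W]$,'' which follows because for any bounded measurable $g$, $\E[Wg(Y,Z)] = \E[W]\E[g(Y,Z)]$ by independence, and this is exactly the defining property of $\E[W]$ being a version of $\E[W\mid \sigma(Y,Z)]$. The lemma is used in the paper specifically in the decompositions of Section \ref{general_decomp} and \ref{general_decomp_w}, where $W$ plays the role of a fresh innovation $\bnu_\tau$ (or noise $\wt$) that is independent of all previously generated randomness, and $Y,Z$ are built from earlier innovations and the support sequence; the lemma then produces the zero-mean martingale-difference structure needed to apply matrix Azuma.
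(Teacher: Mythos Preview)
Your proof is correct. The paper takes a slightly different, more concrete route: it works directly with conditional densities, writing $f_{W,Y|Z}(w,y|z) = f_{W|Y,Z}(w|y,z)\, f_{Y|Z}(y|z)$ via the chain rule and then using independence to replace $f_{W|Y,Z}(w|y,z)$ by $f_W(w)$, from which the factorization of the conditional expectation follows by integrating. Your argument via the tower property and ``taking out what is known'' is the measure-theoretic version of the same idea and is in fact more general, since it does not require the random variables to admit densities. Either approach is perfectly adequate here.
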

\begin{proof}
By the chain rule, $f_{W,Y|Z}(w,y|z) = f_{W|Y,Z}(w|y,z) f_{Y|Z}(y|z)$.  Because $W$ is independent of both $Y$ and $Z$, $f_{W|Y,Z}(w|y,z) = f_W(w)$.
\end{proof}

\begin{lem}\label{event}
For an event $\mathcal{E}$ and random variable $X$, $\mathbb{P}(\mathcal{E}|X) \geq p$ for all $X \in \mathcal{C}$ implies that $\mathbb{P}(\mathcal{E}|X\in \mathcal{C}) \geq p$.
\end{lem}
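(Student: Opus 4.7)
The plan is to reduce this to a routine application of the law of total probability (in its integrated/tower form). The statement is essentially the observation that if a conditional probability is bounded below pointwise on a set, then conditioning on membership in that set preserves the bound, because the conditional probability given membership is a weighted average (with respect to the distribution of $X$ restricted to $\mathcal{C}$) of the pointwise conditional probabilities.

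First I would write
\[
\Pr(\mathcal{E} \mid X \in \mathcal{C}) \;=\; \frac{\Pr(\mathcal{E} \cap \{X \in \mathcal{C}\})}{\Pr(X \in \mathcal{C})},
\]
assuming $\Pr(X \in \mathcal{C}) > 0$ (otherwise the statement is vacuous). Next, I would expand the numerator via the tower property:
\[
\Pr(\mathcal{E} \cap \{X \in \mathcal{C}\}) \;=\; \mathbb{E}\bigl[\,\mathbf{1}_{\{X \in \mathcal{C}\}}\,\Pr(\mathcal{E} \mid X)\,\bigr] \;=\; \int_{\mathcal{C}} \Pr(\mathcal{E} \mid X = x)\,dP_X(x),
\]
where $P_X$ denotes the distribution of $X$. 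By hypothesis, $\Pr(\mathcal{E} \mid X = x) \geq p$ for every $x \in \mathcal{C}$, so the integrand is pointwise bounded below by $p$ on the domain of integration.

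Then I would bound
\[
\int_{\mathcal{C}} \Pr(\mathcal{E} \mid X = x)\,dP_X(x) \;\geq\; p \int_{\mathcal{C}} dP_X(x) \;=\; p\,\Pr(X \in \mathcal{C}),
\]
and divide by $\Pr(X \in \mathcal{C})$ to obtain $\Pr(\mathcal{E} \mid X \in \mathcal{C}) \geq p$, as desired.

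There is no real obstacle here; the only mild subtlety is handling the measure-theoretic definition of ``$\Pr(\mathcal{E} \mid X)$'' if $X$ is continuous, since the hypothesis ``$\Pr(\mathcal{E} \mid X) \geq p$ for all $X \in \mathcal{C}$'' is formally a statement about a version of the regular conditional probability holding pointwise on $\mathcal{C}$. Throughout the paper, however, the hypothesis is only ever applied with $X$ taking values in a discrete (or at worst product-of-bounded-random-variables) set, so one can also give the one-line discrete proof $\Pr(\mathcal{E} \cap \{X \in \mathcal{C}\}) = \sum_{x \in \mathcal{C}} \Pr(\mathcal{E} \mid X = x)\Pr(X = x) \geq p\,\Pr(X \in \mathcal{C})$ and be done.
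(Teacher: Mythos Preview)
Your proof is correct; the paper states this lemma without proof as a standard preliminary fact, and your tower-property argument is exactly the routine justification one would supply.
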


\begin{theorem}[Matrix Azuma]\cite[Theorem 7.1]{tail_bound}\label{azuma}
Consider a finite adapted sequence $\bm{Z}_t$, $t=1,2,\dots \alpha$, of $n\times n$ Hermitian matrices, and a fixed sequence $\bm{A}_t$ of Hermitian matrices that satisfy
\[
\E[\bm{Z}_t|\bm{Z}_1, \bm{Z}_2, \dots, \bm{Z}_{t-1}] = \bm{0} \quad \text{and} \quad {\bm{Z}_t}^2 \preceq {\bm{A}_t}^2 \quad \text{with probability 1.}
\]
Define the variance parameter
\[
sigma^2 := \Big\| \sum_t {\bm{A}_t}^2 \Big\|_2.
\]
Then, for all $\epsilon > 0$,
\[
\Pr\left( \lambda_{\max}\left(\sum_t \bm{Z}_t \right) \geq \epsilon \right) \leq n\exp\left(\frac{-\epsilon^2}{8\sigma^2}\right)
\]
\end{theorem}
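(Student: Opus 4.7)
\bigskip

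\noindent\textbf{Proof proposal for Theorem \ref{azuma} (Matrix Azuma).}

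The plan is to adapt the classical scalar Laplace transform argument for Azuma--Hoeffding to the matrix setting. Let $\bm{S}_\alpha := \sum_{t=1}^{\alpha} \bm{Z}_t$. The starting point is the matrix Chernoff / Markov bound
\begin{equation*}
\Pr\!\left(\lambda_{\max}(\bm{S}_\alpha) \geq \epsilon\right) \;\leq\; \inf_{\theta>0}\; e^{-\theta\epsilon}\,\E\,\operatorname{tr}\exp(\theta\,\bm{S}_\alpha),
\end{equation*}
which follows by monotonicity of $\lambda_{\max}$ composed with the exponential and by $\lambda_{\max}(\bm{A}) \leq \operatorname{tr}(\bm{A})$ for Hermitian positive-semidefinite $\bm{A}$. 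The whole proof therefore reduces to bounding the trace-exponential of the partial sum.

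Next I would peel off one term at a time using the tower property of conditional expectation together with the key non-commutative tool, Lieb's concavity theorem, which asserts that $\bm{A}\mapsto \operatorname{tr}\exp(\bm{H}+\log \bm{A})$ is concave on the positive-definite cone. Concretely, writing $\bm{S}_\alpha = \bm{S}_{\alpha-1} + \bm{Z}_\alpha$ and conditioning on $\bm{Z}_1,\dots,\bm{Z}_{\alpha-1}$, Jensen's inequality applied via Lieb gives
\begin{equation*}
\E_{\bm{Z}_\alpha}\,\operatorname{tr}\exp\!\left(\theta \bm{S}_{\alpha-1} + \theta \bm{Z}_\alpha\right) \;\leq\; \operatorname{tr}\exp\!\left(\theta \bm{S}_{\alpha-1} + \log\E_{\bm{Z}_\alpha}\!\left[e^{\theta \bm{Z}_\alpha}\right]\right),
\end{equation*}
after which I would iterate this inequality down to $t=1$ using the martingale difference property $\E[\bm{Z}_t\mid\bm{Z}_1,\dots,\bm{Z}_{t-1}]=\bm{0}$, each step producing an extra summand $\log\E[e^{\theta \bm{Z}_t}\mid\mathcal{F}_{t-1}]$ inside the exponential.

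The third ingredient is a matrix analog of Hoeffding's lemma, which bounds each single-step conditional moment generating function by $\E[e^{\theta \bm{Z}_t}\mid\mathcal{F}_{t-1}] \preceq \exp(\theta^2 \bm{A}_t^2/2)$ whenever $\bm{Z}_t^2 \preceq \bm{A}_t^2$ a.s. This can be obtained from the symmetrization trick and the scalar estimate $\cosh(\theta z) \leq \exp(\theta^2 z^2/2)$, applied through the transfer rule for operator-monotone functions. Substituting back and using monotonicity of the trace exponential under $\preceq$ yields
\begin{equation*}
\E\,\operatorname{tr}\exp(\theta \bm{S}_\alpha) \;\leq\; n\,\exp\!\left(\tfrac{\theta^2}{2}\,\Big\|\sum_t \bm{A}_t^2\Big\|_2\right) \;=\; n\,\exp\!\left(\tfrac{\theta^2 \sigma^2}{2}\right).
\end{equation*}
Finally I would optimize in $\theta$: plugging the right-hand side into the Chernoff bound and choosing $\theta = \epsilon/\sigma^2$ gives the advertised tail $n\exp(-\epsilon^2/(8\sigma^2))$ (the factor of $8$ rather than $2$ in the denominator comes from a slightly looser but cleaner version of the matrix Hoeffding lemma that absorbs constants from the symmetrization step).

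The main obstacle is genuinely the iteration step: commuting exponentials of non-commuting matrices is false in general, so the naive scalar proof collapses. Circumventing this requires Lieb's concavity theorem, which is the deep input and the only non-elementary tool; once it is in hand, the rest of the argument is a mechanical transcription of the scalar Azuma proof. A secondary technical point is the matrix Hoeffding lemma itself---one must be careful that the symmetrization argument respects the semidefinite order rather than just the scalar order, which is where the hypothesis $\bm{Z}_t^2 \preceq \bm{A}_t^2$ (as opposed to a mere spectral-norm bound) is used.
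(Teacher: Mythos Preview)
The paper does not prove this theorem at all: it is quoted verbatim as \cite[Theorem 7.1]{tail_bound} (Tropp's matrix concentration survey) and is used as a black box in the subsequent corollaries. Your outline is essentially the proof that appears in that reference---the Laplace-transform/Markov bound, the iterated application of Lieb's concavity to subadditivize the cumulant generating function, and the single-step matrix Hoeffding lemma followed by optimization in $\theta$---so there is nothing to compare against in the present paper, and your sketch is consistent with the cited source.
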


The following corollary extends the above result to the case where the conditional expectation is not zero and when we also condition on another random variable.

\begin{corollary}[Matrix Azuma conditioned on another random variable for a nonzero mean Hermitian matrix]\label{azuma_hermitian}
Consider an $\alpha$-length sequence $\{\bm{Z}_t\}_{t=1,2,\dots, \alpha}$ of random Hermitian matrices of size $n\times n$ and a random variable $X$ that we condition on. Assume that, for all $X\in\mathcal{C}$, (i) $\Pr(b_1 \I \preceq \bm{Z}_t \preceq b_2 \I|X) = 1$, for $1\leq t\leq \alpha$ and (ii) $b_3 \I \preceq \frac{1}{\alpha}\sum_{t=1}^{\alpha} \E[\bm{Z}_t| \bm{Z}_1,\bm{Z}_2, \dots, \bm{Z}_{t-1},X] \preceq b_4 \I $. Then for all $\epsilon > 0$,
\begin{align*}
&\Pr \left(\lambda_{\max}\left(\frac{1}{\alpha}\sum_{t=1}^{\alpha} \bm{Z}_t  \right) \leq b_4 + \epsilon \Big| X \right) \geq 1- n \exp\left(\frac{-\alpha \epsilon^2}{8(b_2-b_1)^2}\right) \\
&\Pr \left(\lambda_{\min}\left(\frac{1}{\alpha}\sum_{t=1}^{\alpha} \bm{Z}_t  \right) \geq b_3 -\epsilon \Big| X \right) \geq  1- n \exp\left(\frac{-\alpha \epsilon^2}{8(b_2-b_1)^2}\right)
\end{align*}
\end{corollary}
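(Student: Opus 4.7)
\section*{Proof proposal for Corollary~\ref{azuma_hermitian}}

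The plan is to reduce to Theorem~\ref{azuma} by centering the sequence $\{\bm{Z}_t\}$ about its conditional means and then transferring the bound to $\lambda_{\max}$ and $\lambda_{\min}$ of $\frac{1}{\alpha}\sum_t \bm{Z}_t$ via Weyl's inequality.

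First, I would work throughout under the conditional law $\Pr(\cdot \mid X)$ for an arbitrary $X \in \mathcal{C}$; the hypotheses $(i)$ and $(ii)$ are stated to hold pointwise on $\mathcal{C}$, so the bounds we derive will hold for every such $X$ and thus, by Lemma~\ref{event}, for $X \in \mathcal{C}$. Define the centered sequence
\[
\bm{Y}_t := \bm{Z}_t - \E\!\left[\bm{Z}_t \,\big|\, \bm{Z}_1,\dots,\bm{Z}_{t-1}, X\right], \qquad t=1,2,\dots,\alpha.
\]
By the tower property, $\E[\bm{Y}_t \mid \bm{Y}_1,\dots,\bm{Y}_{t-1},X]=\bm{0}$, so $\{\bm{Y}_t\}$ is an adapted Hermitian martingale difference sequence under $\Pr(\cdot \mid X)$. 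Next I would derive a deterministic bound on $\bm{Y}_t^2$: from $(i)$ we have $b_1 \I \preceq \bm{Z}_t \preceq b_2\I$ almost surely, and taking conditional expectation preserves these semidefinite bounds, so $b_1 \I \preceq \E[\bm{Z}_t|\cdot] \preceq b_2\I$ as well. Subtracting gives $(b_1-b_2)\I \preceq \bm{Y}_t \preceq (b_2-b_1)\I$, hence $\bm{Y}_t^2 \preceq (b_2-b_1)^2 \I =: \bm{A}_t^2$ almost surely.

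With these ingredients, Theorem~\ref{azuma} applied under $\Pr(\cdot \mid X)$ yields, with $\sigma^2 = \|\sum_{t=1}^{\alpha}\bm{A}_t^2\|_2 = \alpha(b_2-b_1)^2$,
\[
\Pr\!\left(\lambda_{\max}\!\left(\tfrac{1}{\alpha}\sum_{t}\bm{Y}_t\right) \geq \epsilon \,\Big|\, X\right)
\leq n\exp\!\left(\frac{-\alpha \epsilon^2}{8(b_2-b_1)^2}\right),
\]
and, by applying the same bound to the sequence $\{-\bm{Y}_t\}$ (which inherits the same hypotheses), the analogous bound on $\lambda_{\min}(\frac{1}{\alpha}\sum_t \bm{Y}_t) = -\lambda_{\max}(-\frac{1}{\alpha}\sum_t\bm{Y}_t)$.

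Finally I would transfer these bounds to $\bm{Z}_t$ using the decomposition $\frac{1}{\alpha}\sum_t \bm{Z}_t = \frac{1}{\alpha}\sum_t \bm{Y}_t + \frac{1}{\alpha}\sum_t \E[\bm{Z}_t \mid \bm{Z}_1,\dots,\bm{Z}_{t-1},X]$. By Weyl's inequality and hypothesis $(ii)$,
\[
\lambda_{\max}\!\left(\tfrac{1}{\alpha}\sum_t \bm{Z}_t\right) \leq \lambda_{\max}\!\left(\tfrac{1}{\alpha}\sum_t \bm{Y}_t\right) + b_4, \qquad
\lambda_{\min}\!\left(\tfrac{1}{\alpha}\sum_t \bm{Z}_t\right) \geq \lambda_{\min}\!\left(\tfrac{1}{\alpha}\sum_t \bm{Y}_t\right) + b_3,
\]
which combine with the previous two tail bounds to give the two claimed probability statements.

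The only subtle step is the reduction to a conditional version of Theorem~\ref{azuma}: one must be careful that the matrix Azuma inequality, as stated, treats $\bm{A}_t$ as a \emph{fixed} sequence, and that conditioning on $X$ does not interfere either with the martingale difference property or with the uniform bound $\bm{Y}_t^2 \preceq (b_2-b_1)^2\I$. Since both properties hold pointwise for every $X \in \mathcal{C}$, applying Theorem~\ref{azuma} under the regular conditional distribution given $X$ is justified; this, rather than any arithmetic, is the main conceptual obstacle.
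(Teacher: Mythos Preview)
Your proposal is correct and follows essentially the same approach as the paper's proof: center the sequence by subtracting the conditional expectation, bound $\bm{Y}_t^2$ using the two-sided Loewner bound on $\bm{Z}_t$, apply Theorem~\ref{azuma} under $\Pr(\cdot\mid X)$, and then transfer back via Weyl's inequality together with hypothesis~(ii). The only cosmetic difference is that the paper handles the $\lambda_{\min}$ bound by redefining $\bm{Y}_t = \E_{t-1}[\bm{Z}_t\mid X] - \bm{Z}_t$ and repeating the argument, whereas you apply the $\lambda_{\max}$ bound to $\{-\bm{Y}_t\}$; these are equivalent.
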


\begin{proof}
At certain places, where the meaning is clear, we use $\E_{t-1}[\bm{Z}_t|X]$ to refer to $\E[\bm{Z}_t| \bm{Z}_1,\bm{Z}_2, \dots, \bm{Z}_{t-1},X]$
\begin{enumerate}
\item
Let $\bm{Y}_t := \bm{Z}_t - \E_{t-1}(\bm{Z}_t|X)$.
Clearly $\E_{t-1}(\bm{Y}_t|X) = \bm{0}$. Since for all $X \in \calc$, $\Pr(b_1 \I \preceq \bm{Z}_t \preceq b_2 \I|X)=1$ and since for an Hermitian matrix, $\lambda_{\max}(.)$ is a convex function, and $\lambda_{\min}(.)$ is a concave function, $b_1 \I \preceq \E_{t-1}(\bm{Z}_t|X) \preceq b_2 \I$ for all $X \in \calc$. Therefore, $\Pr({\bm{Y}_t}^2 \preceq (b_2 -b_1)^2 \I|X) = 1$ for all $X \in \calc$.
Thus, for Theorem \ref{azuma}, $\sigma^2 = \|\sum_{t=1}^{\alpha} (b_2 - b_1)^2 \I\|_2 = \alpha (b_2-b_1)^2$. For any $X \in \calc$, applying Theorem \ref{azuma} for $\{\bm{Y}_t\}_{t = 1,\dots,\alpha}$ conditioned on $X$, we get that, for any $\epsilon > 0$,
\[
\Pr\left( \lambda_{\max} \left(\frac{1}{\alpha} \sum_{t=1}^{\alpha} \bm{Y}_t \right) \leq \epsilon\Big|X\right) > 1- n\exp\left(\frac{-\alpha \epsilon^2}{8  (b_2 -b_1)^2}\right) \ \text{for all} \ X \in \calc
\]
By Weyl's inequality, $\lambda_{\max} (\frac{1}{\alpha} \sum_{t=1}^{\alpha} \bm{Y}_t) = \lambda_{\max} (\frac{1}{\alpha} \sum_{t=1}^{\alpha} (\bm{Z}_t - \E_{t-1}(\bm{Z}_t|X))
\geq \lambda_{\max} (\frac{1}{\alpha} \sum_{t=1}^{\alpha} \bm{Z}_t) +  \lambda_{\min} (\frac{1}{\alpha} \sum_{t=1}^{\alpha} - \E_{t-1}(\bm{Z}_t|X))$.

 Since $\lambda_{\min} (\frac{1}{\alpha} \sum_{t=1}^{\alpha} -\E_{t-1}(\bm{Z}_t|X)) = - \lambda_{\max} (\frac{1}{\alpha} \sum_{t=1}^{\alpha} \E_{t-1}(\bm{Z}_t|X))\geq -b_4$, thus $ \lambda_{\max} (\frac{1}{\alpha} \sum_{t=1}^{\alpha} \bm{Y}_t) \geq \lambda_{\max} (\frac{1}{\alpha} \sum_{t=1}^{\alpha} \bm{Z}_t) - b_4$.
Therefore,
\[
\Pr\left( \lambda_{\max} \left(\frac{1}{\alpha} \sum_{t=1}^{\alpha} \bm{Z}_t\right) \leq b_4 + \epsilon \Big| X \right) > 1- n\exp\left(\frac{-\alpha \epsilon^2}{8  (b_2 -b_1)^2}\right) \ \text{for all} \ X \in \calc
\]

\item Now let $\bm{Y}_t = \E_{t-1}(\bm{Z}_t|X) - \bm{Z}_t$. As before, $\E_{t-1}(\bm{Y}_t|X) = 0$ and conditioned on any $X \in {\cal C}$, $\mathbf{P}(\bm{Y}_t^2 \preceq (b_2 -b_1)^2 \I|X) = 1$.  As before, applying Theorem \ref{azuma}, we get that for any $\epsilon >0$,
\[
\Pr \left( \lambda_{\max} \left(\frac{1}{\alpha} \sum_{t=1}^{\alpha} \bm{Y}_t\right) \leq \epsilon \Big| X \right) > 1- n\exp\left(\frac{-\alpha \epsilon^2}{8  (b_2 -b_1)^2}\right) \ \text{for all} \ X \in \calc
\]
By Weyl's inequality, $\lambda_{\max}(\frac{1}{\alpha}\sum_{t=1}^{\alpha} \bm{Y}_t) = \lambda_{\max}(\frac{1}{\alpha} \sum_{t=1}^{\alpha}(\E_{t-1}(\bm{Z}_t|X) - \bm{Z}_t)) \geq \lambda_{\min} (\frac{1}{\alpha} \sum_{t=1}^{\alpha} \E_{t-1}(\bm{Z}_t|X)) + \lambda_{\max} (\frac{1}{\alpha} \sum_{t=1}^{\alpha} - \bm{Z}_t) = \lambda_{\min} (\frac{1}{\alpha} \sum_{t=1}^{\alpha} \E_{t-1}(\bm{Z}_t|X)) - \lambda_{\min} (\frac{1}{\alpha} \sum_{t=1}^{\alpha} \bm{Z}_t) \geq b_3 - \lambda_{\min} (\frac{1}{\alpha} \sum_{t=1}^{\alpha} \bm{Z}_t)$
Therefore, for any $\epsilon >0$,
\[
\Pr \left(\lambda_{\min}\left(\frac{1}{\alpha}\sum_{t=1}^{\alpha} \bm{Z}_t\right) \geq b_3 -\epsilon \Big| X \right) \geq  1- n \exp\left(\frac{-\alpha \epsilon^2}{8(b_2-b_1)^2}\right) \ \text{for all} \ X \in \calc
\]
\end{enumerate}
\end{proof}

We can further extend this to the case of a matrix which is not necessarily Hermitian.

\begin{corollary}[Matrix Azuma conditioned on another random variable for an arbitrary matrix]\label{azuma_norm}
Consider an $\alpha$-length adapted sequence $\{\bm{Z}_t\}_{t=1,2,\dots, \alpha}$ of random matrices of size $n_1\times n_2$ and a random variable $X$ that we condition on. Assume that, for all $X\in\mathcal{C}$, (i) $\Pr(\|\bm{Z}_t\|_2 \le b_1|X) = 1$ and (ii) $\|\frac{1}{\alpha}\sum_{t=1}^\alpha \E[\bm{Z}_t| \bm{Z}_1,\bm{Z}_2, \dots, \bm{Z}_{t-1},X] \|_2 \leq b_2$. Then, for all $\epsilon >0$,
\[
\Pr \left(\Big\|\frac{1}{\alpha}\sum_{t=1}^{\alpha} \bm{Z}_t \Big\|_2 \leq b_2 + \epsilon \Big| X \right) \geq 1-(n_1+n_2) \exp\left(\frac{-\alpha \epsilon^2}{8 (2b_1)^2}\right)
\]
\end{corollary}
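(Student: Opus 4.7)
My plan is to reduce the non-Hermitian case to the Hermitian case handled in Corollary \ref{azuma_hermitian} via the standard Hermitian dilation. For any $n_1\times n_2$ matrix $\bm{Z}$, define the $(n_1+n_2)\times(n_1+n_2)$ Hermitian matrix
\[
\mathcal{D}(\bm{Z}) := \begin{bmatrix} \bm{0} & \bm{Z} \\ \bm{Z}' & \bm{0} \end{bmatrix}.
\]
The key facts I will use are that $\mathcal{D}$ is linear (so it commutes with conditional expectation), that the eigenvalues of $\mathcal{D}(\bm{Z})$ are exactly $\pm\sigma_i(\bm{Z})$, and therefore $\lambda_{\max}(\mathcal{D}(\bm{Z})) = \|\bm{Z}\|_2$ and $\|\mathcal{D}(\bm{Z})\|_2 = \|\bm{Z}\|_2$.

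Next I would verify that the dilated sequence $\{\mathcal{D}(\bm{Z}_t)\}_{t=1,\dots,\alpha}$ satisfies the hypotheses of Corollary \ref{azuma_hermitian} (conditioned on $X\in\mathcal{C}$) with the appropriate constants. From $\|\bm{Z}_t\|_2 \le b_1$ almost surely, I get $-b_1 \bm{I} \preceq \mathcal{D}(\bm{Z}_t) \preceq b_1 \bm{I}$ almost surely, so I can take the pointwise bounds to be $-b_1$ and $b_1$. By linearity of $\mathcal{D}$ and of conditional expectation,
\[
\frac{1}{\alpha}\sum_{t=1}^\alpha \E[\mathcal{D}(\bm{Z}_t)\mid \bm{Z}_1,\dots,\bm{Z}_{t-1},X] \;=\; \mathcal{D}\!\left(\frac{1}{\alpha}\sum_{t=1}^\alpha \E[\bm{Z}_t\mid \bm{Z}_1,\dots,\bm{Z}_{t-1},X]\right),
\]
and the operator-norm hypothesis on the right-hand side gives $-b_2 \bm{I} \preceq$ (this dilation) $\preceq b_2 \bm{I}$, so the bounds on the average conditional expectation are $-b_2$ and $b_2$.

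I would then apply Corollary \ref{azuma_hermitian} to $\{\mathcal{D}(\bm{Z}_t)\}$ with $(b_1,b_2,b_3,b_4)$ replaced by $(-b_1,b_1,-b_2,b_2)$ and ambient dimension $n_1+n_2$. This yields, for every $\epsilon>0$,
\[
\Pr\!\left(\lambda_{\max}\!\left(\tfrac{1}{\alpha}\sum_t \mathcal{D}(\bm{Z}_t)\right) \le b_2+\epsilon \,\Big|\, X\right) \;\ge\; 1 - (n_1+n_2)\exp\!\left(\tfrac{-\alpha\epsilon^2}{8(2b_1)^2}\right).
\]
Finally, using linearity of $\mathcal{D}$ once more and the identity $\lambda_{\max}(\mathcal{D}(M)) = \|M\|_2$ with $M = \tfrac{1}{\alpha}\sum_t \bm{Z}_t$ translates this event back into $\|\tfrac{1}{\alpha}\sum_t \bm{Z}_t\|_2 \le b_2+\epsilon$, which is the desired bound.

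There is essentially no hard step; the only things to be careful about are (i) that the almost-sure bound on $\|\bm{Z}_t\|_2$ gives a genuine pointwise Loewner bound on $\mathcal{D}(\bm{Z}_t)$ (using $\|\mathcal{D}(\bm{Z}_t)\|_2 = \|\bm{Z}_t\|_2$), (ii) that the filtration and conditioning on $X$ are preserved under dilation (immediate since $\mathcal{D}$ is deterministic and linear), and (iii) that I invoke Corollary \ref{azuma_hermitian} with the correct ambient dimension $n_1+n_2$ so that the prefactor in the tail bound comes out as stated.
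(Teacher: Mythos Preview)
Your proposal is correct and follows essentially the same approach as the paper: reduce to the Hermitian case via the Hermitian dilation, verify that the pointwise and conditional-expectation bounds transfer to the dilated sequence with constants $(-b_1,b_1,-b_2,b_2)$ and ambient dimension $n_1+n_2$, apply Corollary~\ref{azuma_hermitian}, and translate back using $\lambda_{\max}(\mathcal{D}(M))=\|M\|_2$. The only cosmetic difference is that the paper places $\bm{Z}'$ in the upper-right block and $\bm{Z}$ in the lower-left, but this is immaterial.
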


\begin{proof}
At certain places, where the meaning is clear, we use $\E_{t-1}[\bm{Z}_t|X]$ to refer to $\E[\bm{Z}_t| \bm{Z}_1,\bm{Z}_2, \dots, \bm{Z}_{t-1},X]$

Define the dilation of an $n_1 \times n_2$ matrix $\bm{M}$ as $\operatorname{dilation} (\bm{M}) := \left[\begin{array}{cc} \bm{0} & {\bm{M}}' \\ \bm{M} & \bm{0} \\\end{array} \right]$. Notice that this is an $(n_1+n_2) \times (n_1 +n_2)$ Hermitian matrix \cite{tail_bound} . As shown in \cite[equation 2.12]{tail_bound},
\begin{align}\label{dilM}
\lambda_{\max}\big(\operatorname{dilation}(\bm{M})\big) = \|\operatorname{dilation} (\bm{M})\|_2 = \|\bm{M}\|_2
\end{align}
Thus, the corollary assumptions imply that $\mathbf{P}(\|\operatorname{dilation} (\bm{Z}_t)\|_2 \leq b_1 |X) = 1$ for all $X \in \calc$.
By \eqref{dilM} and the definition of $\operatorname{dilation}$,
\[
\frac{1}{\alpha}\sum_t \E_{t-1}[\operatorname{dilation}(\bm{Z}_t ) | X] =
\operatorname{dilation}\left(  \frac{1}{\alpha}\sum_t \E_{t-1}[\bm{Z}_t | X] \right) \preceq
b_2\I
\]
Thus, applying Corollary \ref{azuma_hermitian} to the sequence $\{\operatorname{dilation} (\bm{Z}_t)\}_{t=1,\dots,\alpha}$, we get that,
\[
\Pr \left(\lambda_{\max}\left(\frac{1}{\alpha}\sum_{t=1}^{\alpha} \operatorname{dilation}(\bm{Z}_t)\right)\leq b_2 + \epsilon \Big| X\right) \geq 1- (n_1+n_2) \exp\left(\frac{-\alpha \epsilon^2}{32 b_1^2}\right) \ \text{for all} \ X \in \calc
\]
Using \eqref{dilM}, $\lambda_{\max}\big(\frac{1}{\alpha}\sum_{t=1}^{\alpha} \operatorname{dilation}(\bm{Z}_t) \big) = \lambda_{\max}\big(\operatorname{dilation}(\frac{1}{\alpha}\sum_{t=1}^{\alpha} \bm{Z}_t)\big)  = \|\frac{1}{\alpha}\sum_{t=1}^{\alpha} \bm{Z}_t\|_2$ gives the final result.
\end{proof}

\section{Proof of Lemma \ref{initsub_cor} (initial subspace is accurately recovered)} \label{proof_initsub_cor}
\begin{proof}[Proof of Lemma \ref{initsub_cor}]
Define $\mathcal{M}:= \frac{1}{t_{\train}}\sum_{t=1}^{t_{\train}} \mt{\mt}'$,  $\mathcal{A}:=\frac{1}{t_{\train}}\sum_{t=1}^{t_{\train}} \lt{\lt}'$
and
$\mathrm{perturb}:=  \mathcal{M} - \mathcal{A}$.

Using Theorem \ref{sintheta} (sin theta theorem) followed by Weyl's inequality for $\lambda_{\max}(\Lambda_{\perp})= \lambda_{\max}(\mathcal{M})$, if $\lambda_{r_0}(\mathcal{A}) - \lambda_{r_0+1}(\mathcal{A}) - \| \mathrm{perturb} \|> 0$, then
\begin{equation}\label{se_bnd}
\mathrm{dif}(\Phat_\train, \P_\train)
\le \frac{\| \mathrm{perturb} \|_2}{\lambda_{r_0}(\mathcal{A}) - \lambda_{r_0+1}(\mathcal{A}) - \| \mathrm{perturb} \|_2}
\end{equation}
We will use Azuma to lower and upper bound $\lambda_{r_0}(\mathcal{A})$, to upper bound $\lambda_{r_0+1}(\mathcal{A})$ and to upper bound $\|perturb\|_2$.
Let
\[
\epsilon = \frac{1}{1-b^2}0.001 r_\new \zeta \lambda^-
\]
To get the first three bounds, we need to bound $\lambda_{\max}(\mathcal{A}-\frac{1}{t_{\train}} \sum_{t=1}^{t_\train} \sum_{\tau=0}^t  b^{2(t-\tau)} \bm{\Sigma}_\tau)$ and then use Weyl's inequality.
Now $\mathcal{A} = \frac{1}{t_{\train}} \sum_{t=1}^{t_\train} \sum_{\tau=0}^t \sum_{\ttau=0}^t  b^{2t-\tau-\ttau} \bnu_\tau \bnu_\ttau{}'$. We proceed as in Section \ref{general_decomp} but with the difference that we include $-\frac{1}{t_{\train}} \sum_{t=1}^{t_\train} \sum_{\tau=0}^t  b^{2(t-\tau)} \bm{\Sigma}_\tau$ into $\mathrm{term21}$. Another difference is that $t_0=1$ and so $\l_{t_0-1} = 0$ (and so $\mathrm{term1}=0$ and $\mathrm{term3}=0$). Thus we get
\[
-3\epsilon \le \lambda_{\max}(\mathcal{A}-\frac{1}{t_{\train}} \sum_{t=1}^{t_\train} \sum_{\tau=0}^t  b^{2(t-\tau)} \bm{\Sigma}_\tau) \le 3\epsilon
\]
with probability $1 - 3\cdot(2n) \exp\left(\frac{-t_\train \epsilon^2 (1-b^2)^2(1-b)^2}{32 (2r \gamma^2)^2}\right)$.
Thus, with the above probability, using Weyl's inequality and Lemma \ref{sum_often},
\begin{align*}
\lambda_{r_0}(\mathcal{A}) & \ge \lambda_{r_0}(\frac{1}{t_{\train}} \sum_{t=1}^{t_\train} \sum_{\tau=0}^t  b^{2(t-\tau)} \bm{\Sigma}_\tau) - 3 \epsilon \ge \frac{1}{1-b^2}(1 - \frac{b^2}{t_{\train}(1-b^2)}) \lambda^- - 3\epsilon \\
\lambda_{r_0}(\mathcal{A})  &  \le \lambda_{r_0}(\frac{1}{t_{\train}} \sum_{t=1}^{t_\train} \sum_{\tau=0}^t  b^{2(t-\tau)} \bm{\Sigma}_\tau) + 3 \epsilon \le \frac{1}{1-b^2} \lambda^- +  3\epsilon \\
\lambda_{r_0+1}(\mathcal{A}) &  \le \lambda_{r_0+1}(\frac{1}{t_{\train}} \sum_{t=1}^{t_\train} \sum_{\tau=0}^t  b^{2(t-\tau)} \bm{\Sigma}_\tau) + 3 \epsilon  = 0 + 3\epsilon
\end{align*}
(the above follows because $\bm{\Sigma}_\tau$ has rank $r_0$ for all $t \le t_\train$).
Next consider $\|\mathrm{perturb}\|_2$. It is easy to see that
\[
\|\mathrm{perturb}\|_2 \le 2\|\frac{1}{t_\train} \sum_t \lt \wt'\| +  \|\frac{1}{t_\train} \sum_t \wt \wt'\|_2
\]
Proceeding as in Section \ref{general_decomp_w} for the first term and using the deterministic bound of $0.03 \zeta \lambda^-$ for the second term, we get
\[
\|\mathrm{perturb}\|_2 \le 0.03 \zeta \lambda^- + 2 \epsilon
\]
with probability $1 - (2n) \exp\left(\frac{-t_\train \epsilon^2 (1-b)^2}{32 \cdot r \gamma^2 \epsilon_w^2}\right)$.
Using the above bounds and Weyl's inequality, we can conclude that
\begin{align*}
\lamtrain:=\lambda_{r_0}(\mathcal{M}) & \le \lambda_{r_0}(\mathcal{A}) + \|\mathrm{perturb}\|_2 \le \frac{1}{1-b^2} \lambda^- + 0.08 r \zeta \lambda^- \\
\lamtrain:=\lambda_{r_0}(\mathcal{M}) & \ge \lambda_{r_0}(\mathcal{A}) - \|\mathrm{perturb}\|_2 \ge \frac{1}{1-b^2}(1 - \frac{b^2}{t_{\train}(1-b^2)}) \lambda^- - 0.08r \zeta \lambda^-
\end{align*}
w.p. at least $1 - 3\cdot(2n) \exp\left(\frac{-t_\train \epsilon^2 (1-b^2)^2(1-b)^2}{32(2r \gamma^2)^2}\right) - (2n) \exp\left(\frac{-t_\train \epsilon^2 (1-b)^2}{32r \gamma^2 \epsilon_w^2}\right) \ge
1 - 4 \cdot (2n) \exp\left(\frac{-t_\train \epsilon^2 (1-b^2)^2(1-b)^2}{32 (2r \gamma^2)^2}\right) \ge 1- n^{-10}$.
The last inequality follows because $t_\train \ge \frac{128 (r \gamma^2)^2}{ (1-b)^2 (0.001 r_\new\zeta \lambda^-)^2} (11 \log n + \log 8)$.

Thus, using the fact that $1/t_\train < (r\zeta)^2$, $(1 - \frac{b^2}{t_{\train}(1-b^2)}) \ge (1 - \frac{(r\zeta)^2 b^2}{(1-b^2)})$ and so we get: with probability at least $1- n^{-10}$,
\ben
\item  $\lamtrain \le  \frac{1}{1-b^2} \lambda^- + 0.08r \zeta \lambda^-  < 1.2 \frac{\lambda^-}{1-b^2}$

\item $\lamtrain \ge \frac{1}{1-b^2}(1 - \frac{(r\zeta)^2 b^2}{(1-b^2)}) \lambda^- - 0.08r \zeta \lambda^-  \ge 0.8 \frac{\lambda^-}{1-b^2}$
\item and
\[
\mathrm{dif}(\Phat_\train, \P_\train) \le \frac{0.03 \zeta \lambda^- + 2 \epsilon}{\frac{1}{1-b^2}(1 - \frac{b^2}{t_{\train}(1-b^2)}) \lambda^- - 0.08r \zeta \lambda^-} \le  0.031r_\new\zeta \le r_0 \zeta
\]
\een
\end{proof}

\section{Proof of Lemma \ref{zetadecay} (bounds on $\zeta_{j,\new,k}^+$ and $\tilde\zeta_{j,k}^+$)} 
\label{zetatil_bnd}

\begin{proof}[Proof of Lemma \ref{zetadecay}]
%

Proof of item 1 of the lemma: This follows directly from the bounds for $b_{\bm{A}}$, $b_{\bm{A},\perp}$, $b_{\bm{\mathcal{H}},k}$ in Fact \ref{ppca_bound}, and by using Lemma \ref{initsub_cor}.

Proof of item 2 of the lemma:
Recall that  $\ds\zeta_{j,\new,k}^+ := \frac{b_{\bm{\mathcal{H}},k}}{b_{\bm{A}} - b_{\bm{A},\perp} - b_{\bm{\mathcal{H}},k}} $ with the terms on the RHS defined in Lemmas \ref{Ak_cor}, \ref{Akperp_cor}, \ref{calHk_cor}.
The proof approach is similar to that of \cite[Lemma 6.1]{rrpcp_perf} and almost exactly the same as that of \cite[Lemma 6.14]{rrpcp_isit15}.
The proof is as follows. With the bound in Fact \ref{ppca_bound}, and since $\zeta_{\new, k}^+$ is an increasing function of $b_{\bm{A},\perp}$ and $b_{\bm{\mathcal{H}},k}$, and a decreasing function of $b_{\bm{A}}$, we have
\[
\zeta_{\new,1}^+ \leq \frac{0.156 + 0.1r_\new\zeta}{0.9999 - 0.005r_\new\zeta - (0.156 + 0.12r_\new\zeta)} < 0.19 \ \text{ because $r_\new\zeta \leq 10^{-4}$.}
\]
For $k\geq 2$, we have
\[
\zeta_{\new,k}^+ \leq \frac{0.073\zeta_{\new, k-1}^+ + 0.1r_\new\zeta}{0.9999 - 0.005r_\new\zeta - (0.073\zeta_{\new, k-1}^+ + 0.12r_\new\zeta)}
\]
Using $\zeta_{\new,1}^+  \le 0.19$ and $r_\new\zeta \leq 10^{-4}$ in the above, it is easy to see that $\zeta_{\new,k}^+ \le 0.19$ for all $k\ge1$. Using this and $r_\new\zeta \leq 10^{-4}$ to bound the denominator,  we can get
\[
\zeta_{\new, k}^+ \leq 0.1\zeta_{\new,k-1}^+ + 0.15r_\new\zeta.
\]

Proof of item 3 of the lemma:
Recall that
\[
\tilde{\zeta}_k^+ := \frac{b_{\tilde{\bm{\mathcal{H}}},k}}{b_{\tilde{\bm{A}},k}-b_{\tilde{\bm{A}},k,\perp}-b_{\tilde{\bm{\mathcal{H}}},k}}.
\]
Substituting in the bounds for $b_{\tilde{\bm{\mathcal{H}}},k}$, $b_{\tilde{\bm{A}},k}$, and $b_{\tilde{\bm{A}},k,\perp}$ from Fact \ref{del_bounds} gives
\[
\tilde{\zeta}_k^+ \leq \frac{0.072(r+r_\new)\zeta + 0.19r_\new\zeta}{0.9999 - (0.2 + 0.265r_\new\zeta + 0.072(r+r_\new)\zeta)} \leq 0.09(r+r_\new)\zeta + 0.119r_\new\zeta < 0.15(r+r_\new)\zeta
\]
where we use $r_\new < r$ to get the last inequality.
Using the theorem's assumption $r_{j,k}:= |\mathcal{G}_{j,k}| \ge 0.15 (r+r_\new)$, the claim follows.
\end{proof}

\section{Proof of Lemma \ref{cslem_cor} (Compressed Sensing Lemma)} \label{pf_cslem}
This proof's approach is similar to that of \cite[Lemma 6.4]{rrpcp_perf}. 
The proof uses the denseness assumption and subspace error bounds $\zeta_{j,*} \leq \zeta_{j,*}^{+}$ and $\zeta_{j,\new,k-1} \leq \zeta_{j,\new,k-1}^{+}$, that hold when $X_{\hat{u}_j+k-1}\in\Gamma_{j,k-1}^{\hat{u}_j}$ for $\hat{u}_j=u_j$ or $\hat{u}_j = u_j+1$, to obtain bounds on the restricted isometry constant (RIC) of the sparse recovery matrix $\bm{\Phi}_t$ and the sparse recovery error $\| \bm{b}_t \|_2$.
Applying the noisy compressed sensing (CS) result from \cite{candes_rip} and the assumed bounds on $\zeta$ and $\gamma$, the lemma follows.


\begin{lem}[{Bounding the RIC of $\bm{\Phi}_{t}$ \cite[Lemma 6.6]{rrpcp_perf}, \cite{rrpcp_isit15}}] \label{RIC_bnd}
Recall that $\zeta_{j,*}:= \|(\I-\Phat_{(j),*}\Phat_{(j),*}{}')\bm{P}_{(j),*}\|_2$.
\begin{enumerate}
\item Suppose that a basis matrix $\bm{P}$ can be split as $\bm{P} = [\bm{P}_1 \ \bm{P}_2]$ where $\bm{P}_1$ and $\bm{P}_2$ are also basis matrices. Then $\kappa_s^2 (\bm{P}) = \max_{\mathcal{T}: |\mathcal{T}| \leq s} \|{\I_{\mathcal{T}}}'\bm{P}\|_2^2 \le \kappa_s^2 (\bm{P}_1) + \kappa_s^2 (\bm{P}_2)$.
\item $\kappa_s^2(\Phat_{(j),*}) \leq (\kappa_{s,*})^2 + 2\zeta_*$ for all $j$
\item $\kappa_s (\Phat_{(j),\rmnew,k}) \leq \kappa_{s,\rmnew} + \zeta_{j,\new,k} + \zeta_{j,*}$ for all $j$ and $k$.
\item For $t\in[(u_{j-1}+K)\alpha+1,(\hat{u}_j + 1)\alpha)$, $\delta_{s} (\bm{\Phi}_{t}) = \kappa_s^2 (\Phat_{(j),*}) \leq  (\kappa_{s,*})^2 + 2 \zeta_{j,*}$.
\item For $k=1,\dots,K-1$, for $t\in[(\hat{u}_j+k)\alpha+1,(\hat{u}_{j}+k+1)\alpha] $ $\delta_{s}(\bm{\Phi}_{t})  = \kappa_s^2 ([\Phat_{(j),*} \ \Phat_{(j),\rmnew,k}]) \leq \kappa_s^2 (\Phat_{(j),*}) + \kappa_s^2 (\Phat_{(j),\rmnew,k}) \leq (\kappa_{s,*})^2 + 2\zeta_{j,*} + (\kappa_{s,\rmnew} +  \zeta_{j,\new,k} + \zeta_{j,*})^2$.
\end{enumerate}
\end{lem}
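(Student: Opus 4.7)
The overall plan is to prove the five claims in order: claims 1, 2, 3 are purely deterministic bounds on the denseness coefficient $\kappa_s$, and claims 4, 5 then follow immediately from Lemma \ref{kappadelta} (which converts incoherence into RIC) combined with the identification of the current subspace estimate $\Phat_t$ on each phase of the algorithm. The main obstacle is claim 3; the other parts are short.

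For claim 1, I would observe that since $\bm{P}=[\bm{P}_1\ \bm{P}_2]$ is a basis matrix, one has $\bm{P}_1'\bm{P}_2=\bm{0}$ and hence the orthogonal projector $\bm{P}\bm{P}'$ splits as $\bm{P}_1\bm{P}_1' + \bm{P}_2\bm{P}_2'$. Writing $\kappa_s^2(\bm{P})=\max_{|\mathcal{T}|\le s}\lambda_{\max}(\I_{\mathcal{T}}'\bm{P}\bm{P}'\I_{\mathcal{T}})$ and using subadditivity of $\lambda_{\max}$ on a sum of two PSD matrices yields the inequality. For claim 2, fix $\mathcal{T}$ with $|\mathcal{T}|\le s$ and use the identity $\|\I_{\mathcal{T}}'\Phat_{(j),*}\|_2^2 = \lambda_{\max}(\I_{\mathcal{T}}'(\bm{P}_{(j),*}\bm{P}_{(j),*}')\I_{\mathcal{T}} + \I_{\mathcal{T}}'(\Phat_{(j),*}\Phat_{(j),*}' - \bm{P}_{(j),*}\bm{P}_{(j),*}')\I_{\mathcal{T}})$; by Weyl and Lemma \ref{hatswitch} item 2, the second piece has operator norm at most $2\zeta_{j,*}$, so the claim follows by taking the max over $\mathcal{T}$.

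Claim 3 is where real work is required. My plan is to write, for unit $x$, $\Phat_{(j),\rmnew,k}x = \bm{P}_{(j),\rmnew}\bm{P}_{(j),\rmnew}'\Phat_{(j),\rmnew,k}x + (\I-\bm{P}_{(j),\rmnew}\bm{P}_{(j),\rmnew}')\Phat_{(j),\rmnew,k}x$ and then bound $\|\I_{\mathcal{T}}'\cdot\|_2$ of each piece separately: the first by $\kappa_s(\bm{P}_{(j),\rmnew})\le\kappa_{s,\rmnew}$ after pulling out $\|\I_{\mathcal{T}}'\bm{P}_{(j),\rmnew}\|_2$, and the second by the operator norm $\|(\I-\bm{P}_{(j),\rmnew}\bm{P}_{(j),\rmnew}')\Phat_{(j),\rmnew,k}\|_2$. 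The hard step is showing that this last operator norm is at most $\zeta_{j,\new,k}+\zeta_{j,*}$. To do this I would further decompose it through $\bm{P}_{(j),*}\bm{P}_{(j),*}'$: since $\bm{P}_{(j),*}'\bm{P}_{(j),\rmnew}=\bm{0}$, one gets $(\I-\bm{P}_{(j),\rmnew}\bm{P}_{(j),\rmnew}')\Phat_{(j),\rmnew,k} = (\I-\bm{P}_{(j),\add}\bm{P}_{(j),\add}')\Phat_{(j),\rmnew,k} + \bm{P}_{(j),*}\bm{P}_{(j),*}'\Phat_{(j),\rmnew,k}$ with $\bm{P}_{(j),\add}:=[\bm{P}_{(j),*}\ \bm{P}_{(j),\rmnew}]$. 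The second summand is bounded by $\zeta_{j,*}$ by the cross-bound argument of Lemma \ref{hatswitch} item 3 (swapping roles of true and estimate via item 1, using $\Phat_{(j),*}'\Phat_{(j),\rmnew,k}=\bm{0}$). For the first summand, I would note that $\range(\Phat_{(j),\rmnew,k})\subseteq\range(\Phat_{(j),\add})$ with $\Phat_{(j),\add}:=[\Phat_{(j),*}\ \Phat_{(j),\rmnew,k}]$ and $\bm{P}_{(j),\add},\Phat_{(j),\add}$ have the same size, so Lemma \ref{hatswitch} item 1 gives $\|(\I-\bm{P}_{(j),\add}\bm{P}_{(j),\add}')\Phat_{(j),\add}\|_2 = \|(\I-\Phat_{(j),\add}\Phat_{(j),\add}')\bm{P}_{(j),\add}\|_2$. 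The latter is at most $\sqrt{\zeta_{j,*}^2+\zeta_{j,\new,k}^2}$ (using $\|[A\ B]\|_2^2\le\|A\|_2^2+\|B\|_2^2$, which follows from $\|[A\ B][A\ B]'\|_2=\|AA'+BB'\|_2\le\|A\|_2^2+\|B\|_2^2$) after applying Lemma \ref{hatswitch} item 1 a second time and monotonicity of operator norms in the projector ordering. Summing gives the stated bound.

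Claim 4 is immediate: on the relevant interval the algorithm sets $\Phat_t=\Phat_{(j),*}$, so $\bm{\Phi}_t=\I-\Phat_{(j),*}\Phat_{(j),*}'$, and Lemma \ref{kappadelta} together with claim 2 yields $\delta_s(\bm{\Phi}_t)=\kappa_s^2(\Phat_{(j),*})\le \kappa_{s,*}^2+2\zeta_{j,*}$. Claim 5 is the same idea but with $\Phat_t=[\Phat_{(j),*}\ \Phat_{(j),\rmnew,k}]$: Lemma \ref{kappadelta} reduces it to $\kappa_s^2([\Phat_{(j),*}\ \Phat_{(j),\rmnew,k}])$, claim 1 splits it across the two blocks, and claims 2 and 3 bound each block, giving exactly the stated inequality. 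Throughout, everything is deterministic — there is no probability to control — so the only genuine difficulty is the careful subspace decomposition in claim 3.
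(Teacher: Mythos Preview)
Your proofs of claims 1, 2, 4, and 5 are fine and match the standard approach (the paper itself does not reprove this lemma; it is cited from prior work). The issue is in claim 3: your decomposition does not deliver the stated bound.

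You correctly reduce to bounding $\|(\I-\bm{P}_{(j),\rmnew}\bm{P}_{(j),\rmnew}{}')\Phat_{(j),\rmnew,k}\|_2$, and then split this as $\|(\I-\bm{P}_{(j),\add}\bm{P}_{(j),\add}{}')\Phat_{(j),\rmnew,k}\|_2 + \|\bm{P}_{(j),*}\bm{P}_{(j),*}{}'\Phat_{(j),\rmnew,k}\|_2$. The second piece is indeed $\le\zeta_{j,*}$. But for the first piece, your route through $\|(\I-\Phat_{(j),\add}\Phat_{(j),\add}{}')\bm{P}_{(j),\add}\|_2$ only yields $\sqrt{\zeta_{j,*}^2+\zeta_{j,\new,k}^2}$, so your total is $\kappa_{s,\rmnew} + \zeta_{j,*} + \sqrt{\zeta_{j,*}^2+\zeta_{j,\new,k}^2}$, which strictly exceeds $\kappa_{s,\rmnew}+\zeta_{j,\new,k}+\zeta_{j,*}$ whenever $\zeta_{j,*}>0$.

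The fix is to swap earlier and decompose on the estimated side instead. Assuming $\rank(\Phat_{(j),\rmnew,k})=r_{j,\rmnew}$ (which holds on the events where the bound is ever invoked), Lemma~\ref{hatswitch} item~1 gives
\[
\|(\I-\bm{P}_{(j),\rmnew}\bm{P}_{(j),\rmnew}{}')\Phat_{(j),\rmnew,k}\|_2 = \|(\I-\Phat_{(j),\rmnew,k}\Phat_{(j),\rmnew,k}{}')\bm{P}_{(j),\rmnew}\|_2.
\]
Now use $\Phat_{(j),*}\perp\Phat_{(j),\rmnew,k}$ to write $\I-\Phat_{(j),\rmnew,k}\Phat_{(j),\rmnew,k}{}' = (\I-\Phat_{(j),\add}\Phat_{(j),\add}{}') + \Phat_{(j),*}\Phat_{(j),*}{}'$, so the right-hand side is at most
\[
\|(\I-\Phat_{(j),\add}\Phat_{(j),\add}{}')\bm{P}_{(j),\rmnew}\|_2 + \|\Phat_{(j),*}{}'\bm{P}_{(j),\rmnew}\|_2 \le \zeta_{j,\new,k} + \zeta_{j,*},
\]
where the second term is bounded by Lemma~\ref{hatswitch} item~3 (with $\bm{Q}=\bm{P}_{(j),\rmnew}$, since $\bm{P}_{(j),*}{}'\bm{P}_{(j),\rmnew}=\bm{0}$). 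This gives exactly the claimed bound. The difference from your argument is that the decomposition through $\Phat_{(j),*}$ lands directly on the definition of $\zeta_{j,\new,k}$, whereas decomposing through $\bm{P}_{(j),*}$ and then swapping at the $\bm{P}_{(j),\add}$ level forces you to control both columns of $\bm{P}_{(j),\add}$ at once and double-counts $\zeta_{j,*}$.
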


\begin{corollary}\label{RICnumbnd}
\
\begin{enumerate}
\item Conditioned on $\Gamma_{j-1,\rmend}$, for $t\in[t_j,(\uhat_j + 1)\alpha]$, $\delta_s(\bm{\Phi}_{t}) \leq \delta_{2s}(\bm{\Phi}_{t})  \leq (\kappa_{2s,*})^2 + 2\zeta_{j,*}^+ < 0.1 < 0.1479$, and $\|  [ ({\bm{\Phi}_{t})_{\mathcal{T}_t}}'(\bm{\Phi}_{t})_{\mathcal{T}_t}]^{-1} \|_2 \le \frac{1}{1-\delta_s(\bm{\Phi}_{t})} < 1.2 := \phi^+$.

\item For $k=2,\dots,K$ and $\hat{u}_j = u_j$ or $\hat{u}_j = u_j + 1$, conditioned on $\Gamma_{j,k-1}^{\hat{u}_j}$, for $t\in[(\hat{u}_{j}+k-1)\alpha+1,(\hat{u}_j + k)\alpha]$, $\delta_s(\bm{\Phi}_{t}) \leq \delta_{2s}(\bm{\Phi}_{t}) \leq (\kappa_{2s,*})^2 + 2\zeta_{j,*}^+ + (\kappa_{2s,\rmnew} +  \zeta_{j,\new,k-1}^+ + \zeta_{j,*}^+)^2 < 0.1479$, and $\|  [ ({\bm{\Phi}_{t})_{\mathcal{T}_t}}'(\bm{\Phi}_{t})_{\mathcal{T}_t}]^{-1} \|_2 \le \frac{1}{1-\delta_s(\bm{\Phi}_{t})} < 1.2 := \phi^+$.

\item For $\hat{u}_j = u_j$ or $\hat{u}_j = u_j + 1$, conditioned on $\Gamma_{j,K}^{\hat{u}_j}$, for $t\in[(\hat{u}_j+K)\alpha+1,t_{j+1}-1]$,  $\delta_s(\bm{\Phi}_{t}) \leq \delta_{2s}(\bm{\Phi}_{t})  \leq (\kappa_{2s,*})^2 + 2\zeta_{j,*}^+ < 0.1 < 0.1479$, and $\|  [ ({\bm{\Phi}_{t})_{\mathcal{T}_t}}'(\bm{\Phi}_{t})_{\mathcal{T}_t}]^{-1} \|_2 \le \frac{1}{1-\delta_s(\bm{\Phi}_{t})} < 1.2 := \phi^+$.
\end{enumerate}
\end{corollary}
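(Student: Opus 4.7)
The plan is to deduce Corollary \ref{RICnumbnd} as a direct numerical consequence of Lemma \ref{RIC_bnd} (the RIC bound) combined with the denseness assumption (Model \ref{dense_model_general}) and the high-probability subspace error bounds collected in Fact \ref{d_large}. The work is really just careful bookkeeping: match each of the three conditioning events to the correct item of Lemma \ref{RIC_bnd}, then plug in constants.

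First I would handle case (1). Under the conditioning event $\Gamma_{j-1,\rmend}$, Fact \ref{d_large} (items 3--5) gives $\zeta_{j,*} \le \zeta_{j,*}^+ = r\zeta$, and for $t \in [t_j,(\uhat_j+1)\alpha]$ the algorithm has not yet performed any projection-PCA for the $j$-th change, so $\Phat_t = \Phat_{(j),*}$ and $\bm{\Phi}_t = \I - \Phat_{(j),*}\Phat_{(j),*}{}'$. Applying item 4 of Lemma \ref{RIC_bnd} with $s$ replaced by $2s$ yields $\delta_{2s}(\bm{\Phi}_t) \le \kappa_{2s,*}^2 + 2\zeta_{j,*}^+ \le (0.3)^2 + 2r\zeta$, which using $r\zeta \le 10^{-4}$ is well below $0.1 < 0.1479$. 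Case (3) is identical in form: under $\Gamma_{j,K}^{\uhat_j}$, the cluster-PCA step has been completed, so again $\bm{\Phi}_t = \I - \Phat_{(j+1),*}\Phat_{(j+1),*}{}'$ with $\zeta_{j+1,*} \le \zeta_{j+1,*}^+ = r\zeta$ by Fact \ref{d_large}, and the same numerical calculation applies.

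Case (2) is the slightly more substantial one. Under $\Gamma_{j,k-1}^{\uhat_j}$, for $t \in \J_{\uhat_j+k}$, we have $\Phat_t = [\Phat_{(j),*}\ \Phat_{(j),\new,k-1}]$, so by item 5 of Lemma \ref{RIC_bnd} (again with $s \to 2s$),
\[
\delta_{2s}(\bm{\Phi}_t) \le \kappa_{2s,*}^2 + 2\zeta_{j,*}^+ + (\kappa_{2s,\new} + \zeta_{j,\new,k-1}^+ + \zeta_{j,*}^+)^2.
\]
Using $\kappa_{2s,*} \le 0.3$, $\kappa_{2s,\new} \le 0.02$ from Model \ref{dense_model_general}, $\zeta_{j,*}^+ = r\zeta$, and the uniform bound $\zeta_{j,\new,k-1}^+ \le 1$ (in fact $\le 0.19$ for $k \ge 2$ by Lemma \ref{zetadecay}, and $\le 1$ for $k = 1$), plus the bound $r\zeta \le 10^{-4}$, one directly computes $(0.3)^2 + 2r\zeta + (0.02 + 0.19 + r\zeta)^2 < 0.1479$. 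The choice of the constant $0.1479$ is precisely what makes the final step work: since $\delta_s(\bm{\Phi}_t) \le \delta_{2s}(\bm{\Phi}_t)$, we get $\frac{1}{1-\delta_s(\bm{\Phi}_t)} < \frac{1}{1-0.1479} < 1.2 = \phi^+$.

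The last ingredient is the operator-norm bound on $[(\bm{\Phi}_t)_{\mathcal{T}_t}'(\bm{\Phi}_t)_{\mathcal{T}_t}]^{-1}$. This is standard: by definition of the RIC, for any $|\T|\le s$, $\lambda_{\min}((\bm{\Phi}_t)_\T{}'(\bm{\Phi}_t)_\T) \ge 1 - \delta_s(\bm{\Phi}_t)$, so $\|[(\bm{\Phi}_t)_\T{}'(\bm{\Phi}_t)_\T]^{-1}\|_2 \le (1-\delta_s(\bm{\Phi}_t))^{-1}$. Since $|\T_t|\le s$ by Model \ref{sbyrho}, this applies with $\T = \T_t$. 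No real obstacle arises here; the only place one must be careful is in matching the algorithm state to the correct item of Lemma \ref{RIC_bnd} in each of the three conditioning regimes, which is handled by the explicit description of $\Phat_t$ summarized right after Definition \ref{def_Phat_starnew}.
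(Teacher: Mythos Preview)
Your approach is essentially identical to the paper's own proof, which simply reads: ``This follows using Lemma \ref{RIC_bnd}, the definitions of $\Gamma_{j-1,\rmend}$ and $\Gamma_{j,k}^{\hat{u}_j}$, and Fact \ref{d_large}.'' You have supplied the bookkeeping that the paper leaves implicit.

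One small slip in your handling of case (3): under $\Gamma_{j,K}^{\uhat_j}$ the cluster-PCA step has \emph{not} been completed --- that is the event $\tilde\Gamma_{j,\vartheta}^{\uhat_j}$. For $t \in [\that_j+K\alpha+1,\that_j+K\alpha+(\vartheta+1)\alpha]$ one still has $\Phat_t = [\Phat_{(j),*}\ \Phat_{(j),\new,K}]$, so item 5 (not item 4) of Lemma \ref{RIC_bnd} applies, giving an extra $(\kappa_{2s,\new}+\zeta_{j,\new,K}^+ +\zeta_{j,*}^+)^2$ term. Since $\zeta_{j,\new,K}^+ \le r_\new\zeta$ this term is at most $(0.02+2\cdot 10^{-4})^2 < 0.0005$, and the numerical conclusion $<0.1$ is unaffected. (The paper's own corollary statement is itself slightly loose on this point.)
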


\begin{proof}
This follows using Lemma \ref{RIC_bnd}, the definitions of $\Gamma_{j-1,\rmend}$ and $\Gamma_{j,k}^{\hat{u}_j}$, and Fact \ref{d_large}. 
\end{proof}



\begin{proof}[Proof of Lemma \ref{cslem_cor}]
We will prove claim 2).  The others are done in the same way.

By Fact \ref{d_large}, $\Gamma_{j,k-1}^{\hat{u}_j}$ implies that $\zeta_{j,*} \leq \zeta_{j,*}^+$ and $\zeta_{j,\new,k-1}\leq \zeta_{j,\new,k-1}^+$.

\begin{enumerate}[a)]
\item For $t \in [(\uhat_j+k-1)\alpha+1, (\hat{u}_j+k)\alpha]$, $\bm{b}_t := ( \I - \Phat_{t-1} \Phat_{t-1} {}') (\lt + \wt)$. Thus, using Fact \ref{ltbnds}, 
\begin{align*}
\|\bm{b}_t\|_2 & \leq \xi_{\text{cor}} = \xi 
\end{align*}

\item By Corollary \ref{RICnumbnd}, $\delta_{2s} (\bm{\Phi}_{t}) < 0.15 < \sqrt{2}-1$. Given $|\mathcal{T}_t| \leq s$, $\|\bm{b}_t\|_2 \leq \xi$, by  \cite[Theorem 1.1]{candes_rip}, the CS error satisfies
\[
\|\hat{\bm{x}}_{t,\cs} - \xt \|_2 \leq  \frac{4\sqrt{1+\delta_{2s}(\bm{\Phi}_{t})}}{1-(\sqrt{2}+1)\delta_{2s}(\bm{\Phi}_{t})} \xi < 7 \xi.
\]

\item Using the above, $\|\hat{\bm{x}}_{t,\cs} - \xt\|_{\infty} \leq 7  \xi$. Since $\min_{i\in \mathcal{T}_t} |(\xt)_{i}| \geq x_{\min}$ and $(\xt)_{\mathcal{T}_t^c} = 0$, $\min_{i\in \mathcal{T}_t} |(\hat{\bm{x}}_{t,\cs})_i| \geq x_{\min} - 7 \xi$ and $\max_{i \in \bar{\mathcal{T}_t}} |(\hat{\bm{x}}_{t,\cs})_i| \leq 7 \xi$. If $\omega \leq x_{\min} - 7 \xi$, then $\hat{\mathcal{T}}_t \supseteq \mathcal{T}_t$. On the other hand, if $\omega \geq 7 \xi$, then $\hat{\mathcal{T}}_t \subseteq \mathcal{T}_t$. Since $\omega$ satisfies $7 \xi \leq \omega \leq x_{\min} -7 \xi$, the support of $\xt$ is exactly recovered, i.e. $\hat{\mathcal{T}}_t = \mathcal{T}_t$.

\item
Given $\hat{\mathcal{T}}_t = \mathcal{T}_t$, the least squares estimate of $\xt$ satisfies $(\hat{\bm{x}}_t)_{\mathcal{T}_t} = [(\bm{\Phi}_{t})_{\mathcal{T}_t}]^{\dag} \bm{y}_t = [ (\bm{\Phi}_{t})_{\mathcal{T}_t}]^{\dag} (\bm{\Phi}_{t} \xt + \bm{\Phi}_{t} \lt + \bm{\Phi}_{t} \wt )$ and $(\hat{\bm{x}}_t)_{\bar{\mathcal{T}_t}} = \bm{0}$.
Also,  ${(\bm{\Phi}_{t})_{\mathcal{T}_t}}' \bm{\Phi}_{t} = {\I_{\mathcal{T}_t}}' \bm{\Phi}_{t}$ (this follows since $(\bm{\Phi}_{t})_{\mathcal{T}_t} = \bm{\Phi}_{t} \I_{\mathcal{T}_t}$ and ${\bm{\Phi}_{t}}'\bm{\Phi}_{t} = \bm{\Phi}_{t}$).
Using this, the error $\et := \hat{\bm{x}}_t - \xt$ satisfies (\ref{etdef0_cor}).

\item Using Fact \ref{ltbnds} we get the bound on $\|\et\|_2$. 
\end{enumerate}
\end{proof}

\section{Proof of Lemmas \ref{falsedet_del}, \ref{det}, \ref{pPCA}} \label{old_main_lemmas}

\begin{proof} [Proof of Lemma \ref{falsedet_del}] This proof is similar to that of Lemma 6.16 of \cite{rrpcp_isit15}.

Notice that
$\Pr(\mathrm{NODETS}_j^{\uhat_j} \ | \ \tilde\Gamma_{j,\vartheta}^{\uhat_j} )
= \Pr\Big(\lambda_{\max} \left( \frac{1}{\alpha}\bm{\mathcal{D}}_u \bm{\mathcal{D}}_u{}' \right) < \thresh  \text{ for all } u \in [\uhat_j + K + (\vartheta+1) +1, u_{j+1}-1] \ | \ \tilde\Gamma_{j,\vartheta}^{\uhat_j}\Big)$
for $\uhat_j = u_j$ or $\uhat_j = u_j + 1$.

Recall that $\Gamma_{j,\rmend}:= \Big( \tilde\Gamma_{j,\vartheta}^{u_j} \cap \mathrm{NODETS}_j^{u_j}\Big) \cup \left( \tilde\Gamma_{j,\vartheta}^{u_j+1} \cap \mathrm{NODETS}_j^{u_j+1}\right)$. Recall from Fact \ref{d_large} that $\Gamma_{j,\rmend}$ implies that $\mathrm{dif}(\Phat_{(j),*},\bm{P}_{(j),*}) \leq r\zeta$.


Also, for $u\in  [\uhat_j + K + (\vartheta+1) +1, u_{j+1}-1]$, $\Phat_{u\alpha-1,*} =\Phat_{(j+1),*}$ and for all  $t \in \J_u$ for these $u$'s, $\bnu_t= \bm{P}_{(j)} \at = \bm{P}_{(j+1),*} \at$.

Using Lemma \ref{cslem_cor}, under the given conditioning, $\|\bm{e}_t\|_2 \leq  \frac{\phi^+}{1-b} (2\zeta_{j,*}^+ \sqrt{r}\gamma +  2\epsilon_w ) $ for times $t \in \J_u$ for all these $u$'s.
Therefore,
\begin{align*}
\lambda_{\max} \left( \frac{1}{\alpha}  \bm{\mathcal{D}}_u \bm{\mathcal{D}}_u{}' \right) &= \lambda_{\max}\left(\frac{1}{\alpha}\sum_{t\in\mathcal{J}_u} (\I - \hat{\bm{P}}_{u\alpha-1,*}\hat{\bm{P}}_{u\alpha-1,*}{}') \lhat_t\lhat_t{}'(\I - \hat{\bm{P}}_{u\alpha-1,*}\hat{\bm{P}}_{u\alpha-1,*}{}')\right)\\
&= \lambda_{\max}\left( \frac{1}{\alpha}\sum_{t\in\mathcal{J}_u} (\I - \Phat_{(j+1),*}\Phat_{(j+1),*}{}') (\lt-\et)(\lt-\et){}'(\I - \Phat_{(j+1),*}\Phat_{(j+1),*}{}') \right) \\
&\leq \frac{(2\zeta_{j,*}^+)^2r\gamma^2}{(1-b)^2} + 2\phi^+ (2\zeta_{j,*}^+ \sqrt{r}\gamma +  2\epsilon_w )\frac{2\zeta_{j,*}^+\sqrt{r}\gamma}{(1-b)^2}  +  \frac{( \phi^+ (2\zeta_{j,*}^+ \sqrt{r}\gamma +  2\epsilon_w ) )^2}{(1-b)^2}\\
&\leq \frac{0.05\zeta\lambda^-}{0.81} + \frac{2.4\cdot(0.05+\sqrt{0.006})\zeta\lambda^-}{0.81} + \frac{1.44\cdot(\sqrt{0.05}+\sqrt{0.03})^2\zeta\lambda^-}{0.81}
< 0.5 \lamtrain = \thresh  
\end{align*}
The first inequality uses  the bound on $\|(\I - \Phat_{(j+1),*}\Phat_{(j+1),*}{}') \lt\|_2 = \|\bm\Phi_{(j+1),0}\lt\|_2 $ from Fact \ref{ltbnds} and the bound on $\|\et\|_2$ from Lemma \ref{cslem_cor}.
The second inequality uses the bound on $\epsilon_w$ from Model \ref{wt_model} and the Theorem; the bound $\zeta\leq\frac{0.05\lammin}{(r+r_{\new})^3\gamma^2}$ from the Theorem and the lower bound on $\lamtrain$ from Lemma \ref{initsub_cor}.
\end{proof}

\begin{proof}[Proof of Lemma \ref{det}] This proof is similar to that of the corresponding lemma from \cite{rrpcp_isit15}.
We will prove that $\Pr\left(\mathrm{DET}^{u_j+1} \ |\ X_{u_j} \right) > p_{\det,1}$ for all $X_{u_j} \in \Gamma_{j-1,\rmend}$. In particular, this will imply that $\Pr(\mathrm{DET}^{u_j+1} \ |\ X_{u_j}) > p_{\det,1}$ for all $X_{u_j} \in \Gamma_{j-1,\rmend} \cap \overline{\mathrm{DET}^{u_j}}$ and so, by Lemma \ref{event}, we can conclude that  $\Pr(\mathrm{DET}^{u_j+1} \ |\ \Gamma_{j-1,\rmend}, \overline{\mathrm{DET}^{u_j}}) > p_{\det,1}$.

The following claim is a direct corollary of Lemmas \ref{Ak_cor} and \ref{calHk_cor}. It follows exactly as the proof of these lemmas for the $k=1$ case but with using $u = u_j+1$ instead of $u = \uhat_j+1$.
\begin{align*}
& \Pr\left( \lambda_{\min} \left(\bm{A}_{u_j + 1} \right)\geq b_{\bm{A}} \ \big| \ X_{u_j} \right) \geq 1 -  p_{\bm{A}}, \\
&\Pr \left(\|\bm{\mathcal{H}}_{u_j+1}\|_2 \leq b_{\bm{\mathcal{H}},1}  \ \big| \ X_{u_j}\right)\geq 1 - p_{\bm{\mathcal{H}}}
\end{align*}
for all $X_{u_j}\in\Gamma_{j-1,\rmend}$.
By Lemma \ref{bounds_b_A}, $b_{\bm{A}} - b_{\bm{\mathcal{H}},1}  \geq \thresh$.

From the algorithm, notice that, $\bm{\mathcal{M}}_u = \frac{1}{\alpha}\bm{\mathcal{D}}_u{\bm{\mathcal{D}}_u}'$. Thus, 
\begin{align*}
\Pr\left(\mathrm{DET}^{u_j+1} \ |\ X_{u_j} \right) &= \Pr\left( \lambda_{\max}(\M_{u_{j}+1}) > \thresh \ | \ X_{u_j} \right)
\end{align*}
By Weyl's inequality and the above,
\begin{align*}
\lambda_{\max}(\bm{\mathcal{M}}_{u_{j}+1}) 
&\geq \lambda_{\max}(\bm{A}_{u_{j}+1}) - \|\bm{\mathcal{H}}_{u_{j}+1}\|_2 \\
&\geq \lambda_{\min}(\bm{A}_{u_{j}+1}) - \|\bm{\mathcal{H}}_{u_{j}+1}\|_2 \\
&\geq b_{\bm{A}} - b_{\bm{\mathcal{H}},1} \geq \thresh
\end{align*}
with probability at least $1-p_{\bm{A}}-p_{\bm{\mathcal{H}}} = p_{\mathrm{det},1}$, whenever $X_{u_{j}} \in \Gamma_{j-1,\rmend}$. Thus the result follows.
%
\end{proof}

\begin{proof}[Proof of Lemma \ref{pPCA} (p-PCA lemma)]  This proof is similar to that of the corresponding lemma from \cite{rrpcp_isit15}.
To prove this lemma we need to show two things.  First, conditioned on $\Gamma_{j,k-1}^{\hat{u}_j}$, the $k^{\text{th}}$ estimate of the number of new directions is correct.  That is: $\hat{r}_{j,\new,k} = r_{j,\new}$.  Second, we must show $\zeta_{j,\new,k}\leq\zeta_{j,\new,k}^+$, again conditioned on $\Gamma_{j,k-1}^{\uhat_j}$.

Notice that $\hat{r}_{j,\new,k} = \rank(\Phat_{(j),\new,k})$.  To show that $\rank(\Phat_{(j),\new,k}) = r_{j,\new}$, we need to show that for $u = \hat{u}_j+k$, $k=1,\dots,K$, $\lambda_{r_{j,\new}}(\M_u)>\thresh$ and $\lambda_{r_{j,\new}+1}(\M_u)<\thresh$.
Observe that, $\bm{\mathcal{M}}_u = \bm{\mathcal{A}}_u + \bm{\mathcal{H}}_u$.
By Lemma \ref{bounds_b_A}, Lemmas \ref{Ak_cor} and \ref{Akperp_cor} followed by Lemma \ref{event}, $\lambda_{\min}(\bm{A}_u) \ge b_{\bm{A}} > b_{\bm{A},\perp} \ge \lambda_{\max}(\bm{A}_{u,\perp})$ with probability at least $1-p_{\bm{A}}-p_{\bm{A},\perp}$ under the given conditioning.
Since $\bm{A}_u$ is of size $r_{j,\new}\times r_{j,\new}$, this means that $\lambda_{r_{j,\new}}(\bm{\mathcal{A}}_u) = \lambda_{\min}(\bm{A}_u)$ and $\lambda_{r_{j,\new}+1}(\bm{\mathcal{A}}_u) = \lambda_{\max}(\bm{A}_{u,\perp})$. Using these facts, Weyl's inequality,  Lemmas \ref{Ak_cor}, \ref{Akperp_cor} and \ref{calHk_cor}, and the bounds from Lemma \ref{zetadecay}, we can conclude that
with probability at least $p_{\mathrm{ppca}}$, under the given conditioning,
\begin{align*}
\lambda_{r_{j,\new}}(\bm{\mathcal{M}}_u) 
&\geq \lambda_{r_{j,\new}}(\bm{\mathcal{A}}_u) - \|\bm{\mathcal{H}}_u\|_2 \\
&= \lambda_{\min}(\bm{A}_u) - \|\bm{\mathcal{H}}_u\|_2 \ge b_{\bm{A}} - b_{\bm{\mathcal{H}},k} \ge \thresh
\end{align*}
and
\begin{align*}
\lambda_{r_{j,\new}+1}(\bm{\mathcal{M}}_u) 
&\leq \lambda_{r_{j,\new}+1}(\bm{\mathcal{A}}_u) + \|\bm{\mathcal{H}}_u\|_2 \\
&= \lambda_{\max}(\bm{A}_{u,\perp}) + \|\bm{\mathcal{H}}_u\|_2 \le b_{\bm{A},\perp} + b_{\bm{\mathcal{H}},k} <  \thresh
\end{align*}
Therefore $\rank(\Phat_{(j),\new,k})=r_{j,\new}$ with probability greater than $p_{\mathrm{ppca}}$ under the given conditioning.

To show that $\zeta_{j,\new,k} \leq \zeta_{j,\new,k}^+$, we use Lemmas \ref{Ak_cor}, \ref{Akperp_cor}, and \ref{calHk_cor}. Using $\rank(\Phat_{(j),\new,k})=r_{j,\new}$ and applying Lemma \ref{zetakbnd} with these bounds; using $\lambda_\new^- \ge \lambda^-$; and finally using Lemma \ref{event} gives the desired result.
\end{proof}


\section{Proof of Theorem \ref{thm1_cor_nodel}} \label{proof_thm1_cor_nodel}
The proof follows with the following re-definitions.
Redefine $\Gamma_{j,\rmend}$ as
\[
\ds \Gamma_{j,\rmend} := \Big( \Gamma_{j,K}^{u_j} \cap \mathrm{NODETS}_j^{u_j} \Big) \cup \left(  \Gamma_{j,K}^{u_j +1} \cap \mathrm{NODETS}_j^{u_j +1} \right).
\]
%
%
We get Corollary \ref{main_cor} then by just combining Lemmas \ref{falsedet_del}, \ref{det}, \ref{pPCA}. The rest of the argument needed is the same as that used to prove Theorem \ref{thm1_cor} in Sec. \ref{pf_thm}.
%
Proofs of Lemmas \ref{falsedet_del}, \ref{det}, \ref{pPCA} follow using the following redefinitions. Re-define
\ben
\item $\P_{(j),*}:= [\P_0, \P_{t_1,\new}, \P_{t_2,\new}, \dots \P_{t_{j-1},\new}]$. By the assumption given in the theorem, $\P_{(j),*}$ is orthogonal to $\P_{(j),\new}$. 
\item $\Phat_{(j+1),*}:= \Phat_{\that_j+K \alpha}$. Thus, given all subspace change times are correctly detected, $\Phat_{(j+1),*} = [\Phat_{(j),*}, \Phat_{(j),\new,K}]$. Thus, $\Gamma_{j,\rmend}^a$ implies $\zeta_{j+1,*} \le \zeta_{j,*} + \zeta_{j,\new,K}$. Use this fact to replace the third item of Fact \ref{d_large}.

\item $\zeta_{j,*}^+: = (r_0+(j-1)r_\new) \zeta$ and $\zeta_{j,\add}^+: = (r_0+ jr_\new) \zeta$. Thus, $\zeta_{j+1,*} \le \zeta_{j,*} + \zeta_{j,\new,K} \le  \zeta_{j+1,*}^+$.
\een

\bibliographystyle{IEEEbib} 
\bibliography{../bib/tipnewpfmt_kfcsfullpap}
\end{document}